\chardef\@x10\chardef\@xv60
\def\tcitime{
\def\@time{%
  \@minute\time\@hour\@minute\divide\@hour\@xv
  \ifnum\@hour<\@x 0\fi\the\@hour:%
  \multiply\@hour\@xv\advance\@minute-\@hour
  \ifnum\@minute<\@x 0\fi\the\@minute
  }}%
\def\QCTOpt[#1]#2{%
  \def\QCTOptB{#1}
  \def\QCTOptA{#2}
}
\def\QCTNOpt#1{%
  \def\QCTOptA{#1}
  \let\QCTOptB\empty
}
\def\Qct{%
  \@ifnextchar[{%
    \QCTOpt}{\QCTNOpt}
}
\def\QCBOpt[#1]#2{%
  \def\QCBOptB{#1}
  \def\QCBOptA{#2}
}
\def\QCBNOpt#1{%
  \def\QCBOptA{#1}
  \let\QCBOptB\empty
}
\def\Qcb{%
  \@ifnextchar[{%
    \QCBOpt}{\QCBNOpt}
}
\def\PrepCapArgs{%
  \ifx\QCBOptA\empty
    \ifx\QCTOptA\empty
      {}%
    \else
      \ifx\QCTOptB\empty
        {\QCTOptA}%
      \else
        [\QCTOptB]{\QCTOptA}%
      \fi
    \fi
  \else
    \ifx\QCBOptA\empty
      {}%
    \else
      \ifx\QCBOptB\empty
        {\QCBOptA}%
      \else
        [\QCBOptB]{\QCBOptA}%
      \fi
    \fi
  \fi
}
\def\GRAPHICSPS#1{%
 \ifcase\GRAPHICSTYPE%\GRAPHICSTYPE=0
   \special{ps: #1}%
 \or%\GRAPHICSTYPE=1
   \special{language "PS", include "#1"}%
%%%\or%\GRAPHICSTYPE=2
%%%  #1%
 \fi
}%
\def\graffile#1#2#3#4{%
%%% \ifnum\GRAPHICSTYPE=\tw@
%%%  %Following if using psfig
%%%  \@ifundefined{psfig}{\input psfig.tex}{}%
%%%  \psfig{file=#1, height=#3, width=#2}%
%%% \else
  %Following for all others
  % JCS - added BOXTHEFRAME, see below
    \leavevmode
    \raise -#4 \BOXTHEFRAME{%
        \hbox to #2{\raise #3\hbox to #2{\null #1\hfil}}}%
}%
\def\draftbox#1#2#3#4{%
 \leavevmode\raise -#4 \hbox{%
  \frame{\rlap{\protect\tiny #1}\hbox to #2%
   {\vrule height#3 width\z@ depth\z@\hfil}%
  }%
 }%
}%
\newif\ifwasdraft
\def\GRAPHIC#1#2#3#4#5{%
 \ifnum\draft=\@ne\draftbox{#2}{#3}{#4}{#5}%
  \else\graffile{#1}{#3}{#4}{#5}%
  \fi
 }%
\def\addtoLaTeXparams#1{%
    \edef\LaTeXparams{\LaTeXparams #1}}%
\newif\ifBoxFrame \BoxFramefalse
\newif\ifOverFrame \OverFramefalse
\newif\ifUnderFrame \UnderFramefalse
\def\BOXTHEFRAME#1{%
   \hbox{%
      \ifBoxFrame
         \frame{#1}%
      \else
         {#1}%
      \fi
   }%
}
\def\doFRAMEparams#1{\BoxFramefalse\OverFramefalse\UnderFramefalse\readFRAMEparams#1\end}%
\def\readFRAMEparams#1{%
 \ifx#1\end%
  \let\next=\relax
  \else
  \ifx#1i\dispkind=\z@\fi
  \ifx#1d\dispkind=\@ne\fi
  \ifx#1f\dispkind=\tw@\fi
  \ifx#1t\addtoLaTeXparams{t}\fi
  \ifx#1b\addtoLaTeXparams{b}\fi
  \ifx#1p\addtoLaTeXparams{p}\fi
  \ifx#1h\addtoLaTeXparams{h}\fi
  \ifx#1X\BoxFrametrue\fi
  \ifx#1O\OverFrametrue\fi
  \ifx#1U\UnderFrametrue\fi
  \ifx#1w
    \ifnum\draft=1\wasdrafttrue\else\wasdraftfalse\fi
    \draft=\@ne
  \fi
  \let\next=\readFRAMEparams
  \fi
 \next
 }%
\def\IFRAME#1#2#3#4#5#6{%
      \bgroup
      \let\QCTOptA\empty
      \let\QCTOptB\empty
      \let\QCBOptA\empty
      \let\QCBOptB\empty
      #6%
      \parindent=0pt%
      \leftskip=0pt
      \rightskip=0pt
      \setbox0 = \hbox{\QCBOptA}%
      \@tempdima = #1\relax
      \ifOverFrame
          % Do this later
          \typeout{This is not implemented yet}%
          \show\HELP
      \else
         \ifdim\wd0>\@tempdima
            \advance\@tempdima by \@tempdima
            \ifdim\wd0 >\@tempdima
               \textwidth=\@tempdima
               \setbox1 =\vbox{%
                  \noindent\hbox to \@tempdima{\hfill\GRAPHIC{#5}{#4}{#1}{#2}{#3}\hfill}\\%
                  \noindent\hbox to \@tempdima{\parbox[b]{\@tempdima}{\QCBOptA}}%
               }%
               \wd1=\@tempdima
            \else
               \textwidth=\wd0
               \setbox1 =\vbox{%
                 \noindent\hbox to \wd0{\hfill\GRAPHIC{#5}{#4}{#1}{#2}{#3}\hfill}\\%
                 \noindent\hbox{\QCBOptA}%
               }%
               \wd1=\wd0
            \fi
         \else
            %\show\BBB
            \ifdim\wd0>0pt
              \hsize=\@tempdima
              \setbox1 =\vbox{%
                \unskip\GRAPHIC{#5}{#4}{#1}{#2}{0pt}%
                \break
                \unskip\hbox to \@tempdima{\hfill \QCBOptA\hfill}%
              }%
              \wd1=\@tempdima
           \else
              \hsize=\@tempdima
              \setbox1 =\vbox{%
                \unskip\GRAPHIC{#5}{#4}{#1}{#2}{0pt}%
              }%
              \wd1=\@tempdima
           \fi
         \fi
         \@tempdimb=\ht1
         \advance\@tempdimb by \dp1
         \advance\@tempdimb by -#2%
         \advance\@tempdimb by #3%
         \leavevmode
         \raise -\@tempdimb \hbox{\box1}%
      \fi
      \egroup%
}%
\def\DFRAME#1#2#3#4#5{%
 \begin{center}
     \let\QCTOptA\empty
     \let\QCTOptB\empty
     \let\QCBOptA\empty
     \let\QCBOptB\empty
     \ifOverFrame
        #5\QCTOptA\par
     \fi
     \GRAPHIC{#4}{#3}{#1}{#2}{\z@}
     \ifUnderFrame
        \nobreak\par #5\QCBOptA
     \fi
 \end{center}%
 }%
\def\FFRAME#1#2#3#4#5#6#7{%
 \begin{figure}[#1]%
  \let\QCTOptA\empty
  \let\QCTOptB\empty
  \let\QCBOptA\empty
  \let\QCBOptB\empty
  \ifOverFrame
    #4
    \ifx\QCTOptA\empty
    \else
      \ifx\QCTOptB\empty
        \caption{\QCTOptA}%
      \else
        \caption[\QCTOptB]{\QCTOptA}%
      \fi
    \fi
    \ifUnderFrame\else
      \label{#5}%
    \fi
  \else
    \UnderFrametrue%
  \fi
  \begin{center}\GRAPHIC{#7}{#6}{#2}{#3}{\z@}\end{center}%
  \ifUnderFrame
    #4
    \ifx\QCBOptA\empty
      \caption{}%
    \else
      \ifx\QCBOptB\empty
        \caption{\QCBOptA}%
      \else
        \caption[\QCBOptB]{\QCBOptA}%
      \fi
    \fi
    \label{#5}%
  \fi
  \end{figure}%
 }%
\def\makeactives{
  \catcode`\"=\active
  \catcode`\;=\active
  \catcode`\:=\active
  \catcode`\'=\active
  \catcode`\~=\active
}
   \gdef\activesoff{%
      \def"{\string"}
      \def;{\string;}
      \def:{\string:}
      \def'{\string'}
      \def~{\string~}
      %\bbl@deactivate{"}%
      %\bbl@deactivate{;}%
      %\bbl@deactivate{:}%
      %\bbl@deactivate{'}%
    }
\def\FRAME#1#2#3#4#5#6#7#8{%
 \bgroup
 \@ifundefined{bbl@deactivate}{}{\activesoff}
 \ifnum\draft=\@ne
   \wasdrafttrue
 \else
   \wasdraftfalse%
 \fi
 \def\LaTeXparams{}%
 \dispkind=\z@
 \def\LaTeXparams{}%
 \doFRAMEparams{#1}%
 \ifnum\dispkind=\z@\IFRAME{#2}{#3}{#4}{#7}{#8}{#5}\else
  \ifnum\dispkind=\@ne\DFRAME{#2}{#3}{#7}{#8}{#5}\else
   \ifnum\dispkind=\tw@
    \edef\@tempa{\noexpand\FFRAME{\LaTeXparams}}%
    \@tempa{#2}{#3}{#5}{#6}{#7}{#8}%
    \fi
   \fi
  \fi
  \ifwasdraft\draft=1\else\draft=0\fi{}%
  \egroup
 }%
\def\TEXUX#1{"texux"}
\long\def\QQQ#1#2{%
     \long\expandafter\def\csname#1\endcsname{#2}}%
\long\def\QQA#1#2{}%
\def\QTR#1#2{{\csname#1\endcsname #2}}%(gp) Is this the best?
\def\EXPAND#1[#2]#3{}%
\def\NOEXPAND#1[#2]#3{}%
\def\LaTeXparent#1{}%
\def\ChildStyles#1{}%
\def\ChildDefaults#1{}%
\def\QTagDef#1#2#3{}%
\def\QQfnmark#1{\footnotemark}
\def\makeatletter\input gnuindex.sty\makeatother\makeindex{\makeatletter\input gnuindex.sty\makeatother\makeindex}%	
\def\initial#1{\bigbreak{\raggedright\large\bf #1}\kern 2\p@\penalty3000}}%
 \def\abstract{%
  \if@twocolumn
   \section*{Abstract (Not appropriate in this style!)}%
   \else \small
   \begin{center}{\bf Abstract\vspace{-.5em}\vspace{\z@}}\end{center}%
   \quotation
   \fi
  }%
   \def\registered{\relax\ifmmode{}\r@gistered
                    \else$\m@th\r@gistered$\fi}%
 \def\r@gistered{^{\ooalign
  {\hfil\raise.07ex\hbox{$\scriptstyle\rm\text{R}$}\hfil\crcr
  \mathhexbox20D}}}}{}%
\newdimen\theight
\def\Column{%
 \vadjust{\setbox\z@=\hbox{\scriptsize\quad\quad tcol}%
  \theight=\ht\z@\advance\theight by \dp\z@\advance\theight by \lineskip
  \kern -\theight \vbox to \theight{%
   \rightline{\rlap{\box\z@}}%
   \vss
   }%
  }%
 }%
\def\qed{%
 \ifhmode\unskip\nobreak\fi\ifmmode\ifinner\else\hskip5\p@\fi\fi
 \hbox{\hskip5\p@\vrule width4\p@ height6\p@ depth1.5\p@\hskip\p@}%
 }%
\def\miss{\hbox{\vrule height2\p@ width 2\p@ depth\z@}}%
\def\tcol#1{{\baselineskip=6\p@ \vcenter{#1}} \Column}  %
\def\newfmtname{LaTeX2e}
\def\chkcompat{%
   \if@compatibility
   \else
     \usepackage{latexsym}
   \fi
}
  \DeclareOldFontCommand{\rm}{\normalfont\rmfamily}{\mathrm}
  \DeclareOldFontCommand{\sf}{\normalfont\sffamily}{\mathsf}
  \DeclareOldFontCommand{\tt}{\normalfont\ttfamily}{\mathtt}
  \DeclareOldFontCommand{\bf}{\normalfont\bfseries}{\mathbf}
  \DeclareOldFontCommand{\it}{\normalfont\itshape}{\mathit}
  \DeclareOldFontCommand{\sl}{\normalfont\slshape}{\@nomath\sl}
  \DeclareOldFontCommand{\sc}{\normalfont\scshape}{\@nomath\sc}
\def\alpha{{\Greekmath 010B}}%
\def\beta{{\Greekmath 010C}}%
\def\gamma{{\Greekmath 010D}}%
\def\delta{{\Greekmath 010E}}%
\def\epsilon{{\Greekmath 010F}}%
\def\zeta{{\Greekmath 0110}}%
\def\eta{{\Greekmath 0111}}%
\def\theta{{\Greekmath 0112}}%
\def\iota{{\Greekmath 0113}}%
\def\kappa{{\Greekmath 0114}}%
\def\lambda{{\Greekmath 0115}}%
\def\mu{{\Greekmath 0116}}%
\def\nu{{\Greekmath 0117}}%
\def\xi{{\Greekmath 0118}}%
\def\pi{{\Greekmath 0119}}%
\def\rho{{\Greekmath 011A}}%
\def\sigma{{\Greekmath 011B}}%
\def\tau{{\Greekmath 011C}}%
\def\upsilon{{\Greekmath 011D}}%
\def\phi{{\Greekmath 011E}}%
\def\chi{{\Greekmath 011F}}%
\def\psi{{\Greekmath 0120}}%
\def\omega{{\Greekmath 0121}}%
\def\varepsilon{{\Greekmath 0122}}%
\def\vartheta{{\Greekmath 0123}}%
\def\varpi{{\Greekmath 0124}}%
\def\varrho{{\Greekmath 0125}}%
\def\varsigma{{\Greekmath 0126}}%
\def\varphi{{\Greekmath 0127}}%
\def\nabla{{\Greekmath 0272}}
\def\FindBoldGroup{%
   {\setbox0=\hbox{$\mathbf{x\global\edef\theboldgroup{\the\mathgroup}}$}}%
}
\def\Greekmath#1#2#3#4{%
    \if@compatibility
        \ifnum\mathgroup=\symbold
           \mathchoice{\mbox{\boldmath$\displaystyle\mathchar"#1#2#3#4$}}%
                      {\mbox{\boldmath$\textstyle\mathchar"#1#2#3#4$}}%
                      {\mbox{\boldmath$\scriptstyle\mathchar"#1#2#3#4$}}%
                      {\mbox{\boldmath$\scriptscriptstyle\mathchar"#1#2#3#4$}}%
        \else
           \mathchar"#1#2#3#4%
        \fi
    \else
        \FindBoldGroup
        \ifnum\mathgroup=\theboldgroup % For 2e
           \mathchoice{\mbox{\boldmath$\displaystyle\mathchar"#1#2#3#4$}}%
                      {\mbox{\boldmath$\textstyle\mathchar"#1#2#3#4$}}%
                      {\mbox{\boldmath$\scriptstyle\mathchar"#1#2#3#4$}}%
                      {\mbox{\boldmath$\scriptscriptstyle\mathchar"#1#2#3#4$}}%
        \else
           \mathchar"#1#2#3#4%
        \fi     	
	  \fi}
\newif\ifGreekBold  \GreekBoldfalse
\let\SAVEPBF=\pbf
\def\pbf{\GreekBoldtrue\SAVEPBF}%
  \newcounter{equationnumber}
  \def\mathletters{%
     \addtocounter{equation}{1}
     \edef\@currentlabel{\theequation}%
     \setcounter{equationnumber}{\c@equation}
     \setcounter{equation}{0}%
     \edef\theequation{\@currentlabel\noexpand\alph{equation}}%
  }
    \def\BibTeX{{\rm B\kern-.05em{\sc i\kern-.025em b}\kern-.08em
                 T\kern-.1667em\lower.7ex\hbox{E}\kern-.125emX}}}{}%
\def\AmS{{\protect\usefont{OMS}{cmsy}{m}{n}%
                A\kern-.1667em\lower.5ex\hbox{M}\kern-.125emS}}}{}%
\let\DOTSI\relax
\def\RIfM@{\relax\ifmmode}%
\def\FN@{\futurelet\next}%
\def\iint{\DOTSI\intno@\tw@\FN@\ints@}%
\def\iiint{\DOTSI\intno@\thr@@\FN@\ints@}%
\def\iiiint{\DOTSI\intno@4 \FN@\ints@}%
\def\idotsint{\DOTSI\intno@\z@\FN@\ints@}%
\def\ints@{\findlimits@\ints@@}%
\newif\iflimtoken@
\newif\iflimits@
\def\findlimits@{\limtoken@true\ifx\next\limits\limits@true
 \else\ifx\next\nolimits\limits@false\else
 \limtoken@false\ifx\ilimits@\nolimits\limits@false\else
 \ifinner\limits@false\else\limits@true\fi\fi\fi\fi}%
\def\multint@{\int\ifnum\intno@=\z@\intdots@                          %1
 \else\intkern@\fi                                                    %2
 \ifnum\intno@>\tw@\int\intkern@\fi                                   %3
 \ifnum\intno@>\thr@@\int\intkern@\fi                                 %4
 \int}%                                                               %5
\def\multintlimits@{\intop\ifnum\intno@=\z@\intdots@\else\intkern@\fi
 \ifnum\intno@>\tw@\intop\intkern@\fi
 \ifnum\intno@>\thr@@\intop\intkern@\fi\intop}%
\def\intic@{%
    \mathchoice{\hskip.5em}{\hskip.4em}{\hskip.4em}{\hskip.4em}}%
\def\negintic@{\mathchoice
 {\hskip-.5em}{\hskip-.4em}{\hskip-.4em}{\hskip-.4em}}%
\def\ints@@{\iflimtoken@                                              %1
 \def\ints@@@{\iflimits@\negintic@
   \mathop{\intic@\multintlimits@}\limits                             %2
  \else\multint@\nolimits\fi                                          %3
  \eat@}%                                                             %4
 \else                                                                %5
 \def\ints@@@{\iflimits@\negintic@
  \mathop{\intic@\multintlimits@}\limits\else
  \multint@\nolimits\fi}\fi\ints@@@}%
\def\intkern@{\mathchoice{\!\!\!}{\!\!}{\!\!}{\!\!}}%
\def\plaincdots@{\mathinner{\cdotp\cdotp\cdotp}}%
\def\intdots@{\mathchoice{\plaincdots@}%
 {{\cdotp}\mkern1.5mu{\cdotp}\mkern1.5mu{\cdotp}}%
 {{\cdotp}\mkern1mu{\cdotp}\mkern1mu{\cdotp}}%
 {{\cdotp}\mkern1mu{\cdotp}\mkern1mu{\cdotp}}}%
\def\RIfM@{\relax\protect\ifmmode}
\def\text{\RIfM@\expandafter\text@\else\expandafter\mbox\fi}
\let\nfss@text\text
\def\text@#1{\mathchoice
   {\textdef@\displaystyle\f@size{#1}}%
   {\textdef@\textstyle\tf@size{\firstchoice@false #1}}%
   {\textdef@\textstyle\sf@size{\firstchoice@false #1}}%
   {\textdef@\textstyle \ssf@size{\firstchoice@false #1}}%
   \glb@settings}
\def\textdef@#1#2#3{\hbox{{%
                    \everymath{#1}%
                    \let\f@size#2\selectfont
                    #3}}}
\newif\iffirstchoice@
\def\Let@{\relax\iffalse{\fi\let\\=\cr\iffalse}\fi}%
\def\vspace@{\def\vspace##1{\crcr\noalign{\vskip##1\relax}}}%
\def\multilimits@{\bgroup\vspace@\Let@
 \baselineskip\fontdimen10 \scriptfont\tw@
 \advance\baselineskip\fontdimen12 \scriptfont\tw@
 \lineskip\thr@@\fontdimen8 \scriptfont\thr@@
 \lineskiplimit\lineskip
 \vbox\bgroup\ialign\bgroup\hfil$\m@th\scriptstyle{##}$\hfil\crcr}%
\def\Sb{_\multilimits@}%
\def\endSb{\crcr\egroup\egroup\egroup}%
\def\Sp{^\multilimits@}%
\newdimen\ex@
\def\rightarrowfill@#1{$#1\m@th\mathord-\mkern-6mu\cleaders
 \hbox{$#1\mkern-2mu\mathord-\mkern-2mu$}\hfill
 \mkern-6mu\mathord\rightarrow$}%
\def\leftarrowfill@#1{$#1\m@th\mathord\leftarrow\mkern-6mu\cleaders
 \hbox{$#1\mkern-2mu\mathord-\mkern-2mu$}\hfill\mkern-6mu\mathord-$}%
\def\leftrightarrowfill@#1{$#1\m@th\mathord\leftarrow
\mkern-6mu\cleaders
 \hbox{$#1\mkern-2mu\mathord-\mkern-2mu$}\hfill
 \mkern-6mu\mathord\rightarrow$}%
\def\overrightarrow{\mathpalette\overrightarrow@}%
\def\overrightarrow@#1#2{\vbox{\ialign{##\crcr\rightarrowfill@#1\crcr
 \noalign{\kern-\ex@\nointerlineskip}$\m@th\hfil#1#2\hfil$\crcr}}}%
\def\overleftarrow{\mathpalette\overleftarrow@}%
\def\overleftarrow@#1#2{\vbox{\ialign{##\crcr\leftarrowfill@#1\crcr
 \noalign{\kern-\ex@\nointerlineskip}$\m@th\hfil#1#2\hfil$\crcr}}}%
\def\overleftrightarrow{\mathpalette\overleftrightarrow@}%
\def\overleftrightarrow@#1#2{\vbox{\ialign{##\crcr
   \leftrightarrowfill@#1\crcr
 \noalign{\kern-\ex@\nointerlineskip}$\m@th\hfil#1#2\hfil$\crcr}}}%
\def\underrightarrow{\mathpalette\underrightarrow@}%
\def\underrightarrow@#1#2{\vtop{\ialign{##\crcr$\m@th\hfil#1#2\hfil
  $\crcr\noalign{\nointerlineskip}\rightarrowfill@#1\crcr}}}%
\def\underleftarrow{\mathpalette\underleftarrow@}%
\def\underleftarrow@#1#2{\vtop{\ialign{##\crcr$\m@th\hfil#1#2\hfil
  $\crcr\noalign{\nointerlineskip}\leftarrowfill@#1\crcr}}}%
\def\underleftrightarrow{\mathpalette\underleftrightarrow@}%
\def\underleftrightarrow@#1#2{\vtop{\ialign{##\crcr$\m@th
  \hfil#1#2\hfil$\crcr
 \noalign{\nointerlineskip}\leftrightarrowfill@#1\crcr}}}%
\def\qopnamewl@#1{\mathop{\operator@font#1}\nlimits@}
\let\nlimits@\displaylimits
\def\setboxz@h{\setbox\z@\hbox}
\def\varlim@#1#2{\mathop{\vtop{\ialign{##\crcr
 \hfil$#1\m@th\operator@font lim$\hfil\crcr
 \noalign{\nointerlineskip}#2#1\crcr
 \noalign{\nointerlineskip\kern-\ex@}\crcr}}}}
 \def\rightarrowfill@#1{\m@th\setboxz@h{$#1-$}\ht\z@\z@
  $#1\copy\z@\mkern-6mu\cleaders
  \hbox{$#1\mkern-2mu\box\z@\mkern-2mu$}\hfill
  \mkern-6mu\mathord\rightarrow$}
\def\leftarrowfill@#1{\m@th\setboxz@h{$#1-$}\ht\z@\z@
  $#1\mathord\leftarrow\mkern-6mu\cleaders
  \hbox{$#1\mkern-2mu\copy\z@\mkern-2mu$}\hfill
  \mkern-6mu\box\z@$}
\def\projlim{\qopnamewl@{proj\,lim}}
\def\injlim{\qopnamewl@{inj\,lim}}
\def\varinjlim{\mathpalette\varlim@\rightarrowfill@}
\def\varprojlim{\mathpalette\varlim@\leftarrowfill@}
\def\varliminf{\mathpalette\varliminf@{}}
\def\varliminf@#1{\mathop{\underline{\vrule\@depth.2\ex@\@width\z@
   \hbox{$#1\m@th\operator@font lim$}}}}
\def\varlimsup{\mathpalette\varlimsup@{}}
\def\varlimsup@#1{\mathop{\overline
  {\hbox{$#1\m@th\operator@font lim$}}}}
\def\dbigcup{\mathop{\displaystyle \bigcup }}%
\def\align{\@verbatim \frenchspacing\@vobeyspaces \@alignverbatim
You are using the "align" environment in a style in which it is not defined.}
\let\csname endalign*\endcsname =\endtrivlist
\def\alignat{\@verbatim \frenchspacing\@vobeyspaces \@alignatverbatim
You are using the "alignat" environment in a style in which it is not defined.}
\let\csname endalignat*\endcsname =\endtrivlist
\def\xalignat{\@verbatim \frenchspacing\@vobeyspaces \@xalignatverbatim
You are using the "xalignat" environment in a style in which it is not defined.}
\let\csname endxalignat*\endcsname =\endtrivlist
\def\gather{\@verbatim \frenchspacing\@vobeyspaces \@gatherverbatim
You are using the "gather" environment in a style in which it is not defined.}
\let\csname endgather*\endcsname =\endtrivlist
\def\multiline{\@verbatim \frenchspacing\@vobeyspaces \@multilineverbatim
You are using the "multiline" environment in a style in which it is not defined.}
\let\csname endmultiline*\endcsname =\endtrivlist
\def\arrax{\@verbatim \frenchspacing\@vobeyspaces \@arraxverbatim
You are using a type of "array" construct that is only allowed in AmS-LaTeX.}
\def\tabulax{\@verbatim \frenchspacing\@vobeyspaces \@tabulaxverbatim
You are using a type of "tabular" construct that is only allowed in AmS-LaTeX.}
\let\csname endarrax*\endcsname =\endtrivlist
\let\csname endtabulax*\endcsname =\endtrivlist
\def\@@eqncr{\let\@tempa\relax
    \ifcase\@eqcnt \def\@tempa{& & &}\or \def\@tempa{& &}%
      \else \def\@tempa{&}\fi
     \@tempa
     \if@eqnsw
        \iftag@
           \@taggnum
        \else
           \@eqnnum\stepcounter{equation}%
        \fi
     \fi
     \global\tag@false
     \global\@eqnswtrue
     \global\@eqcnt\z@\cr}
 \def\endequation{%
     \ifmmode\ifinner % FLEQN hack
      \iftag@
        \addtocounter{equation}{-1} % undo the increment made in the begin part
        $\hfil
           \displaywidth\linewidth\@taggnum\egroup \endtrivlist
        \global\tag@false
        \global\@ignoretrue
      \else
        $\hfil
           \displaywidth\linewidth\@eqnnum\egroup \endtrivlist
        \global\tag@false
        \global\@ignoretrue
      \fi
     \else
      \iftag@
        \addtocounter{equation}{-1} % undo the increment made in the begin part
        \eqno \hbox{\@taggnum}
        \global\tag@false%
        $$\global\@ignoretrue
      \else
        \eqno \hbox{\@eqnnum}% $$ BRACE MATCHING HACK
        $$\global\@ignoretrue
      \fi
     \fi\fi
 }
 \newif\iftag@ \tag@false
 \def\tag{\@ifnextchar*{\@tagstar}{\@tag}}
 \def\@tag#1{%
     \global\tag@true
     \global\def\@taggnum{(#1)}}
 \def\@tagstar*#1{%
     \global\tag@true
     \global\def\@taggnum{#1}%
}
\newtheorem{satz}{Theorem}[section]
\newtheorem{definition}[satz]{Definition}
\newtheorem{lemma}[satz]{Lemma}
\newtheorem{koro}[satz]{Corollary}
\newtheorem{bemerkung}[satz]{Remark}
\newtheorem{proposition}[satz]{Proposition}
\newtheorem{notation}[satz]{Notation}
\newenvironment{proof}{\par\noindent {\it Proof:} \hspace{7pt}}{\hfill\hbox{\vrule width 7pt depth 0pt height 7pt}
\par\vspace{10pt}}
\newcommand{\dbigcup}{\mathop{\displaystyle \bigcup }}
\begin{document}

\title{Heat Production of Non--Interacting Fermions Subjected to Electric
Fields}
\author{J.-B. Bru \and W. de Siqueira Pedra \and C. Hertling}

\maketitle

\begin{abstract}
Electric resistance in conducting media is related to \emph{heat} (or \emph{%
entropy}) production in presence of electric fields. In this paper, by using
Araki's relative entropy for states, we mathematically define and analyze
the heat production of free fermions within external potentials. More
precisely, we investigate the heat production of the non-autonomous $C^{\ast
}$--dynamical system obtained from the fermionic second quantization of a
discrete Schr\"{o}%
%TCIMACRO{\TeXButton{\-}{\-}}%
%BeginExpansion
\-%
%EndExpansion
dinger operator with bounded static potential in presence of an electric
field that is time-- and space--dependent. It is a first preliminary step
towards a mathematical description of transport properties of fermions from
thermal considerations. This program will be carried out in several papers.
The regime of small and slowly varying in space electric fields is important
in this context, and is studied the present paper. We use tree--decay bounds
of the $n$--point, $n\in 2\mathbb{N}$, correlations of the many--fermion
system to analyze this regime. We verify below the 1st law of thermodynamics
for the system under consideration. The latter implies, for systems doing no
work, that the heat produced by the electromagnetic field is exactly the
increase of the internal energy resulting from the modification of the
(infinite volume) state of the fermion system. The identification of heat
production with an energy increment is, among other things, technically
convenient. We initially focus our study on non--interacting (or free)
fermions, but our approach will be later applied to weakly interacting
fermions.
\end{abstract}

%TCIMACRO{\TeXButton{\tableofcontents }{\tableofcontents}}%
%BeginExpansion
\tableofcontents%
%EndExpansion

\section{Introduction}

Ohm and Joule's laws, respectively derived in 1827 and 1840, are among the
most resilient laws of (classical) electricity theory. In standard
textbooks, the microscopic theory presented to explain Ohm's law is based on
the Drude model proposed in 1900, before the emergence of quantum mechanics.
In this model, the motion of electrons and ions is treated classically and
the interaction between these two species is modeled by perfectly elastic
random collisions. This quite elementary model explains very well DC-- and
AC--conductivities in metals, qualitatively. There are also improvements of
the Drude model taking into account quantum corrections. Nevertheless, to
our knowledge, there is no rigorous microscopic (complete) description of
the phenomenon of linear conductivity from first principles of quantum
mechanics. It is a highly non--trivial question. Indeed, problems are in
this case doubled because the electric resistance of conductors results from
both the presence of disorder in the host material and interactions between
charge carriers.

Rigorous quantum many--body theory is a notoriously difficult subject. The
hurdles that have to be overcome in order to arrive at important new
mathematical results involve many different fields of mathematics such as
probability theory, operator algebras, differential equations or functional
analysis. Disorder leads us to consider random Schr\"{o}dinger operators
like the celebrated Anderson model. It is an advanced and relatively mature
branch of mathematics. For instance, it is known that, in general, the
one--dimensional Anderson model only has purely point spectrum with a
complete set of localized eigenstates (Anderson localization) and it is thus
believed that no steady current can exist in this case. For more details,
see, e.g., \cite{Werner Kirsch}. Nevertheless, even in absence of
interactions, there are, to our knowledge, only few mathematical results on
transport properties of such models that yield Ohm's law in some form.

Indeed, Klein, Lenoble and M\"{u}ller introduced for the first time in \cite%
{Annale} the concept of a \textquotedblleft conduc%
%TCIMACRO{\TeXButton{\-}{\-}}%
%BeginExpansion
\-%
%EndExpansion
tivity measure\textquotedblright\ for a system of non--interacting fermions
subjected to a random potential. More precisely, the authors considered the
Anderson tight--binding model in presence of a time--dependent spatially
homogeneous electric field that is adiabatically switched on. See also \cite%
{jfa} for further details on linear response theory of such a model. The
fermionic nature of charge carriers -- electrons or holes in crystals -- was
implemented by choosing the Fermi--Dirac distribution as the initial%
\footnote{%
This corresponds to $t\rightarrow -\infty $ in their approach.} density
matrix of particles. In \cite{Annale} only systems at zero temperature with
Fermi energy lying in the localization regime are considered, but it is
shown in \cite{JMP-autre} that a conductivity measure can also be defined
without the localization assumption and at any positive temperature. Their
study can thus be seen as a mathematical derivation of Ohm's law for
space--homogeneous electric fields having a specific time behavior. \cite%
{Cornean} is another mathematical result on free fermions proving Ohm's law
for graphene--like materials subjected to space--homogeneous and
time--periodic electric fields. Observe however that Joule's law and heat
production are not considered in \cite{Annale,JMP-autre,Cornean}.

We propose in a companion paper a different approach to the conducti%
%TCIMACRO{\TeXButton{\-}{\-}}%
%BeginExpansion
\-%
%EndExpansion
vity measure\ based on a natural thermodynamic principle, the positivity of
the heat (or entropy) production, together with the Bochner--Schwartz
theorem \cite[Theorem IX.10]{ReedSimonII}. Our aim is to derive both Ohm and
Joule's laws for the Fourier components of time--dependent electric fields
from the analysis of the heat production in a realistic many--fermion
system. We first focus our study on \emph{non--interacting} (or free)
fermions in presence of disorder, here a static external potential, while
keeping in mind its possible extension to interacting fermions. Indeed, the
possibility of naturally extending results to systems with interaction is
one of the main advantages and motivations of the approach we propose here.
This will be discussed in more details in subsequent papers. Therefore,
although there is no interaction between fermions, we do \emph{not} restrict
our analyses to the one--particle Hilbert space to study transport
properties. Instead, our approach is based on the algebraic formulation of
many--fermion systems on lattices.

As observed by J. P. Joule in its original paper \cite{Joulesup}, the electric resistance is
associated with a heat production in the conducting system. Therefore, the
first step is to rigorously define and analyze the concept of \emph{heat}
production induced by electric fields on the fermion system. This study is
the main subject of the present paper. At constant temperature, the heat
production is, by definition, a quantity that is proportional to the \emph{%
entropy} production. The proportionality coefficient is the temperature of
the system. In order to give a precise mathematical definition of this
quantity, we use in Section \ref{Section Heat Production as Entropy} Araki's
relative entropy \cite{Araki1,Araki2,OhyaPetz} which, in our case, turns out
to be finite for all times. The latter uses the concept of spacial
derivative operators \cite{Connes}, see Section \ref{Section Quantum
Relative Entropy}. Part of the paper is devoted to recover the 1st law of
thermodynamics for the system under consideration, implying that the heat
production generated by the electromagnetic field is exactly the increase of
the \emph{internal} energy resulting from the modification of the (infinite
volume) state of the system. An increment of internal energy of the system
is defined here as being the increase of total energy minus the increase of
potential energy associated with the external electric field. See Sections %
\ref{Section Linear Response}. The 1st law of thermodynamics is an important
outcome in our context because it leads to more explicit expressions for the
heat production. Moreover, the increase of \emph{total} energy (i.e.,
internal plus potential energy) of the \emph{infinite} system obeys a
principle of conservation and is exactly the work performed by the electric
field on the charged particles. See Section \ref{Section Linear Response}.
This is well--known for dynamics on $C^{\ast }$--algebras generated by\
time--dependent bounded symmetric derivations. See for instance discussions
in \cite[Section 5.4.4.]{BratteliRobinson}. Here, we prove a version of that
result for our particular \emph{unbounded} case.

Note that Ohm's law corresponds to a linear\emph{\ }response to electric
fields. We thus rescale the strength of the electromagnetic potential by a
real parameter $\eta \in \mathbb{R}$ and will eventually take the limit $%
\eta \rightarrow 0$ (in a subsequent paper). Understanding the behavior of
the heat production as a function of $\eta $ is a necessary step in order to
obtain Ohm and Joule's laws. By using the fact mentioned above that the heat
production can be expressed in terms of an energy increment (Section \ref%
{Section Linear Response}), it can be shown that the heat production is a
real analytic function of the scaling parameter $\eta $. The coefficients of
the (absolutely convergent) power series in $\eta $ for the heat production
have the following important property: They behave, at any order $k\in
\mathbb{N}$, like the volume of the support (in space) of the applied
electric field, as physically expected. Such a behavior permits us, in
particular, to define densities (like heat production per unit volume).
Remark that naive bounds only predict that the $k$--the coefficient of the
power series for the heat production should behave like the $k$--power of
the volume of the support of the applied electric field. However, the heat
production is proven to behave like $\eta ^{2}$ times the volume of the
support of the applied electric field, provided $|\eta |$ is sufficiently
small. This is done in Section \ref{section Energy Increments as Power
Series}. See also Section \ref{Section Heat Production as Power series}.
Moreover, this result makes possible the study of non--quadratic (resp.
non--linear) corrections to Joule's law (resp. Ohm's law).

To obtain the properties described above for the power series in $\eta $
representing the heat production, we use a pivotal ingredient, namely \emph{%
tree--decay bounds} on multi--commutators. These bounds are derived in
Section \ref{section Tree--decay Bounds} and are useful to analyze
multi--commutators of monomials in annihilation and creation operators. They
will also be necessary in subsequent papers.

To conclude, our main assertions are Theorems \ref{main 1 copy(1)} and \ref%
{Thm Heat production as power series}, and Corollary \ref{tree bound main
copy(1)}. This paper is organized as follows:

\begin{itemize}
\item In Section \ref{Section main results} we describe non--autonomous $%
C^{\ast }$--dynamical systems for (free) fermions associated to discrete Schr%
\"{o}dinger operators with bounded (static) potentials in presence of an
electric field that is time-- and space--dependent.

\item Section \ref{Section Heat Production} introduces the concept of heat
production and discusses its main properties.

\item Section \ref{section Tree--decay Bounds} is devoted to tree--decay
bounds for expectation values of multi--commutators.

\item All technical proofs related to Section \ref{Section Heat Production}
are postponed to Section \ref{sect technical proofs}.

\item Section \ref{Section appendix} is an appendix containing two parts:
Section \ref{Section Quantum Relative Entropy} is a concise overview on the
quantum relative entropy \cite{Araki1,Araki2,OhyaPetz}. In Section \ref%
{Section finite volume system} it is shown that all properties of the
infinite system we use here result from the corresponding ones of finite
systems, at large volume. Note that Section \ref{Section finite volume
system} is not really used in other sections and has a supplementary
character, only.
\end{itemize}

\begin{notation}[Generic constants]
\label{remark constant}\mbox{
}\newline
To simplify notation, we denote by $D$ any generic positive and finite
constant. These constants do not need to be the same from one statement to
another.
\end{notation}

\section{$C^{\ast }$--Dynamical Systems for Free Fermions\label{sect 2.1
copy(1)}\label{Section main results}}

\subsection{CAR $C^{\ast }$--Algebra\label{sect 2.1}}

The host material for conducting fermions is assumed to be a cubic crystal.
Other crystal families could also be studied in the same way, but, for
simplicity, we refrain from considering them. The unit of length is chosen
such that the lattice spacing is exactly $1$. We thus use the $d$%
--dimensional cubic lattice $\mathfrak{L}:=\mathbb{Z}^{d}$ ($d\in \mathbb{N}$%
) to represent the crystal and we define $\mathcal{P}_{f}(\mathfrak{L}%
)\subset 2^{\mathfrak{L}}$ to be the set of all \emph{finite} subsets of $%
\mathfrak{L}$.

Within this framework, we consider an \emph{infinite} system of charged
fermions. To simplify notation we only consider spinless fermions with \emph{%
negative} charge. The cases of particles with spin and/or positively charged
particles can be treated by exactly the same methods. We denote by $\mathcal{%
U}$ the CAR algebra of the infinite system. More precisely, the (separable) $%
C^{\ast }$--algebra $\mathcal{U}$ is the inductive limit of the finite
dimensional $C^{\ast }$--algebras $\{\mathcal{U}_{\Lambda }\}_{\Lambda \in
\mathcal{P}_{f}(\mathfrak{L})}$ with identity $\mathbf{1}$ and generators $%
\{a_{x}\}_{x\in \Lambda }$ satisfying the canonical anti--commutation
relations: For any $x,y\in \mathfrak{L}$,
\begin{equation}
a_{x}a_{y}+a_{y}a_{x}=0\ ,\qquad a_{x}a_{y}^{\ast }+a_{y}^{\ast
}a_{x}=\delta _{x,y}\mathbf{1}\ .  \label{CAR}
\end{equation}

\subsection{Dynamics in Presence of Static External Potentials\label{Section
dynamics}}

It is widely accepted that electric resistance of conductors results from
both the presence of disorder in the host material and interactions between
charge carriers. Here, we only consider effects of disorder for
non--interacting fermions. That means physically that the particles obey the
Pauli exclusion principle but do not interact with each other via some
mutual force. This setup corresponds for example to the case of low electron
densities. Our approach can be applied to weakly interacting fermions on the
lattice, but the analysis would be -- from the technical point of view --
much more demanding of course.

Disorder in the crystal will be modeled in subsequent papers by a random
external potential coming from a probability space $(\Omega ,\mathfrak{A}%
_{\Omega },\mathfrak{a}_{\Omega })$ with $\Omega :=[-1,1]^{\mathfrak{L}}$.
In the present work, however, all studies are performed at any fixed $\omega
\in \Omega $ and all the results will be uniform with respect to (w.r.t.)
the choice of $\omega \in \Omega $. Note that, for any $\omega \in \Omega $,
$V_{\omega }\in \mathcal{B}(\ell ^{2}(\mathfrak{L}))$ is by definition the
self--adjoint multiplication operator with the function $\omega :\mathfrak{L}%
\rightarrow \lbrack -1,1]$. The static external potential $V_{\omega }$ is
of order $\mathcal{O}(1)$ and we rescale its strength by an additional
parameter $\lambda \in \mathbb{R}_{0}^{+}$ (i.e., $\lambda \geq 0$), see (%
\ref{rescaled}).

For any function $\omega \in \Omega $, we define the dynamics of the lattice
fermion system via a strongly continuous (quasi--free) group of
automorphisms of the $C^{\ast }$--algebra $\mathcal{U}$. To set up this time
evolution, we first define annihilation and creation operators of (spinless)
fermions with wave functions $\psi \in \ell ^{2}(\mathfrak{L})$ by
\begin{equation}
a(\psi ):=\sum\limits_{x\in \mathfrak{L}}\overline{\psi (x)}a_{x}\in
\mathcal{U}\ ,\quad a^{\ast }(\psi ):=\sum\limits_{x\in \mathfrak{L}}\psi
(x)a_{x}^{\ast }\in \mathcal{U}\ .  \label{creation operators}
\end{equation}%
These operators are well--defined because of (\ref{CAR}). Indeed,
\begin{equation}
\Vert a(\psi )\Vert ^{2},\Vert a^{\ast }(\psi )\Vert ^{2}=\left\Vert \psi
\right\Vert _{2}^{2}\ ,\qquad \psi \in \ell ^{2}(\mathfrak{L})\ ,
\label{norm cont of a}
\end{equation}%
and thus, the anti--linear (resp. linear) map $\psi \mapsto a(\psi )$ (resp.
$\psi \mapsto a^{\ast }(\psi )$) from $\ell ^{2}(\mathfrak{L})$\ to $%
\mathcal{U}$ is norm--continuous. Clearly, $a^{\ast }(\psi )=a(\psi )^{\ast
} $ for all $\psi \in \ell ^{2}(\mathfrak{L})$.

Now, for any function $\omega \in \Omega $ and strength $\lambda \in \mathbb{%
R}_{0}^{+}$ of the static (external) potential, we define the free dynamics
via the unitary group $\{\mathrm{U}_{t}^{(\omega ,\lambda )}\}_{t\in \mathbb{%
R}}$, where%
\begin{equation}
\mathrm{U}_{t}^{(\omega ,\lambda )}:=\exp (-it(\Delta _{\mathrm{d}}+\lambda
V_{\omega }))\in \mathcal{B}(\ell ^{2}(\mathfrak{L}))\ .  \label{rescaled}
\end{equation}%
Here, $\Delta _{\mathrm{d}}\in \mathcal{B}(\ell ^{2}(\mathfrak{L}))$ is (up
to a minus sign) the usual $d$--dimensional discrete Laplacian:%
\begin{equation}
\lbrack \Delta _{\mathrm{d}}(\psi )](x):=2d\psi (x)-\sum\limits_{z\in
\mathfrak{L},\text{ }|z|=1}\psi (x+z)\ ,\text{\qquad }x\in \mathfrak{L},\
\psi \in \ell ^{2}(\mathfrak{L})\ .  \label{discrete laplacian}
\end{equation}%
In particular, for an independent identically distributed (i.i.d.) random
potential $V_{\omega }$, $(\Delta _{\mathrm{d}}+\lambda V_{\omega })$ is the
Anderson tight--binding model acting on the Hilbert space $\ell ^{2}(%
\mathfrak{L})$. [Note that we could add some constant (chemical) potential
to the discrete Laplacian without changing our proofs.]

For all $\omega \in \Omega $ and $\lambda \in \mathbb{R}_{0}^{+}$, the
condition%
\begin{equation}
\tau _{t}^{(\omega ,\lambda )}(a(\psi ))=a((\mathrm{U}_{t}^{(\omega ,\lambda
)})^{\ast }\psi )\ ,\text{\qquad }t\in \mathbb{R},\ \psi \in \ell ^{2}(%
\mathfrak{L})\ ,  \label{rescaledbis}
\end{equation}%
uniquely defines a family $\tau ^{(\omega ,\lambda )}:=\{\tau _{t}^{(\omega
,\lambda )}\}_{t\in {\mathbb{R}}}$ of (Bogoliubov) automorphisms of $%
\mathcal{U}$, see \cite[Theorem 5.2.5]{BratteliRobinson}. The one--parameter
group $\tau ^{(\omega ,\lambda )}$ is strongly continuous and we denote its
(unbounded) generator by $\delta ^{(\omega ,\lambda )}$.

\subsection{Electromagnetic Fields\label{Section Electromagnetic Fields}}

The electromagnetic potential is defined by a compactly supported
time--dependent vector potential $\mathbf{A}\in \mathbf{C}_{0}^{\infty }$,
where%
%TCIMACRO{%
%\TeXButton{Co}{\begin{eqnarray*}
%\mathbf{C}_{0}^{\infty } &:=&\underset{l\in \mathbb{R}^{+}}{\dbigcup }\Big \{\mathbf{A}:\mathbb{R}\times {\mathbb{R}}^{d}\mapsto ({\mathbb{R}}^{d})^{\ast }\ |\ \exists \mathbf{B}\in
%C_{0}^{\infty }(\mathbb{R}\times {\mathbb{R}}^{d};({\mathbb{R}}^{d})^{\ast })
%\\
%&&\qquad \qquad \qquad \qquad \qquad \text{with }\mathbf{A}(t,x)=\mathbf{B}(t,x)\mathbf{1}\big [x\in \lbrack -l,l^{d}] \big] \Big\}\ .
%\end{eqnarray*}}}%
%BeginExpansion
\begin{eqnarray*}
\mathbf{C}_{0}^{\infty } &:=&\underset{l\in \mathbb{R}^{+}}{\dbigcup }\Big \{\mathbf{A}:\mathbb{R}\times {\mathbb{R}}^{d}\mapsto ({\mathbb{R}}^{d})^{\ast }\ |\ \exists \mathbf{B}\in
C_{0}^{\infty }(\mathbb{R}\times {\mathbb{R}}^{d};({\mathbb{R}}^{d})^{\ast })
\\
&&\qquad \qquad \qquad \qquad \qquad \text{with }\mathbf{A}(t,x)=\mathbf{B}(t,x)\mathbf{1}\big [x\in \lbrack -l,l^{d}] \big] \Big\}\ .
\end{eqnarray*}%
%EndExpansion
Here, $({\mathbb{R}}^{d})^{\ast }$ is the set of one--forms\footnote{%
In a strict sense, one should take the dual space of the tangent spaces $T({%
\mathbb{R}}^{d})_{x}$, $x\in {\mathbb{R}}^{d}$.} on ${\mathbb{R}}^{d}$ that
take values in $\mathbb{R}$. In other words, as $\left[ -l,l\right] ^{d}$ is
a compact subset of ${\mathbb{R}}^{d}$, $\mathbf{C}_{0}^{\infty }$ is the
union%
\begin{equation*}
\mathbf{C}_{0}^{\infty }=\underset{l\in \mathbb{R}^{+}}{\mathop{%
\displaystyle \bigcup } }C_{0}^{\infty }(\mathbb{R}\times \left[ -l,l\right]
^{d};({\mathbb{R}}^{d})^{\ast })
\end{equation*}
of the space of smooth compactly supported functions $\mathbb{R}\times \left[
-l,l\right] ^{d}\rightarrow ({\mathbb{R}}^{d})^{\ast }$ for $l\in \mathbb{R}%
^{+}$. The smoothness of $\mathbf{A}$ is not really necessary at this stage
but will be technically convenient in subsequent papers. Here, only the
continuously differentiability of the map $t\mapsto \mathbf{A}(t,\cdot )$ is
really crucial to define below the electric field and the non--autonomous
dynamics.

Since $\mathbf{A}\in \mathbf{C}_{0}^{\infty }$, $\mathbf{A}(t,x)=0$ for all $%
t\leq t_{0}$, where $t_{0}\in \mathbb{R}$ is some initial time. We use the
Weyl gauge (also named temporal gauge) for the electromagnetic field and as
a consequence,%
\begin{equation}
E_{\mathbf{A}}(t,x):=-\partial _{t}\mathbf{A}(t,x)\ ,\quad t\in \mathbb{R},\
x\in \mathbb{R}^{d}\ ,  \label{V bar 0}
\end{equation}%
is the electric field associated with $\mathbf{A}$.

Note that the time $t_{1}\geq t_{0}$ when the electric field is turned off
can be chosen as arbitrarily large and one recovers the DC--regime by taking
$t_{1}>>1$. However, for electric fields slowly varying in time, charge
carriers have time to move and significantly change the charge density,
producing an additional, self--generated, internal electric field. This
contribution is not taken into account in our model.

Finally, observe that space--dependent electromagnetic potentials imply
magnetic fields which interact with fermion spins. We neglect this
contribution because such a term will become negligible for electromagnetic
potentials slowly varying in space. This justifies the assumption of
fermions with zero--spin. In any case, our study can be performed for
non--zero fermion spins exactly in the same way. We omit this generalization
for simplicity.

\subsection{Dynamics in Presence of Time-Dependent Electromagnetic Fields
\label{section Dynamics}}

Recall that we only consider \emph{negatively} charged fermions. We choose
units such that the charge of fermions is $-1$. The (minimal) coupling of
the vector potential $\mathbf{A}\in \mathbf{C}_{0}^{\infty }$ to the fermion
system is achieved through a redefinition of the discrete Laplacian. Indeed,
we define the time--dependent self--adjoint operator $\Delta _{\mathrm{d}}^{(%
\mathbf{A})}\in \mathcal{B}(\ell ^{2}(\mathfrak{L}))$ by
\begin{equation}
\langle \mathfrak{e}_{x},\Delta _{\mathrm{d}}^{(\mathbf{A})}\mathfrak{e}%
_{y}\rangle =\exp \left( -i\int\nolimits_{0}^{1}\left[ \mathbf{A}(t,\alpha
y+(1-\alpha )x)\right] (y-x)\mathrm{d}\alpha \right) \langle \mathfrak{e}%
_{x},\Delta _{\mathrm{d}}\mathfrak{e}_{y}\rangle  \label{eq discrete lapla A}
\end{equation}%
for all $x,y\in \mathfrak{L}$, where $\langle \cdot ,\cdot \rangle $ is here
the canonical scalar product in $\ell ^{2}(\mathfrak{L})$ and $\left\{ \mathfrak{e}%
_{x}\right\} _{x\in \mathfrak{L}}$ is the canonical orthonormal basis $%
\mathfrak{e}_{x}(y)\equiv \delta _{x,y}$ of $\ell ^{2}(\mathfrak{L})$. In
Equation (\ref{eq discrete lapla A}), $\alpha y+(1-\alpha )x$ and $y-x$ are
seen as vectors in ${\mathbb{R}}^{d}$.

Observe that there is $l_{0}\in \mathbb{R}^{+}$ such that%
\begin{equation*}
\Delta _{\mathrm{d}}^{(\mathbf{A})}-\Delta _{\mathrm{d}}\in \mathcal{B}(\ell
^{2}([-l_{0},l_{0}]^{d}\cap \mathfrak{L}))\subset \mathcal{B}(\ell ^{2}(%
\mathfrak{L}))
\end{equation*}%
for all times $t\in \mathbb{R}$, because $\mathbf{A}$ is by definition
compactly supported. Note also that, for simplicity, the time dependence is
often omitted in the notation
\begin{equation*}
\Delta _{\mathrm{d}}^{(\mathbf{A})}\equiv \Delta _{\mathrm{d}}^{(\mathbf{A}%
(t,\cdot ))}\ ,\qquad t\in \mathbb{R}\ ,
\end{equation*}%
but one has to keep in mind that the dynamics is \emph{non--autonomous}.

Indeed, the Schr\"{o}dinger equation on the one--particle Hilbert space $%
\ell ^{2}(\mathfrak{L})$ with time--dependent Hamiltonian $(\Delta _{\mathrm{%
d}}^{(\mathbf{A})}+\lambda V_{\omega })$ and initial value $\psi \in \ell
^{2}(\mathfrak{L})$ at $t=t_{0}$ has a unique solution $\mathrm{U}%
_{t,t_{0}}^{(\omega ,\lambda ,\mathbf{A})}\psi $ for any $t\geq t_{0}$.
Here, for any $\omega \in \Omega $, $\lambda \in \mathbb{R}_{0}^{+}$ and $%
\mathbf{A}\in \mathbf{C}_{0}^{\infty }$,
\begin{equation*}
\{\mathrm{U}_{t,s}^{(\omega ,\lambda ,\mathbf{A})}\}_{t\geq s}\subset
\mathcal{B}(\ell ^{2}(\mathfrak{L}))
\end{equation*}%
is the two--parameter group of unitary operators on $\ell ^{2}(\mathfrak{L})$
generated by the (anti--self--adjoint) operator $-i(\Delta _{\mathrm{d}}^{(%
\mathbf{A})}+\lambda V_{\omega })$:
\begin{equation}
\forall s,t\in {\mathbb{R}},\ t\geq s:\quad \partial _{t}\mathrm{U}%
_{t,s}^{(\omega ,\lambda ,\mathbf{A})}=-i(\Delta _{\mathrm{d}}^{(\mathbf{A}%
(t,\cdot ))}+\lambda V_{\omega })\mathrm{U}_{t,s}^{(\omega ,\lambda ,\mathbf{%
A})}\ ,\quad \mathrm{U}_{s,s}^{(\omega ,\lambda ,\mathbf{A})}:=\mathbf{1}\ .
\label{time evolution one-particle}
\end{equation}%
Since the map
\begin{equation}
t\mapsto (\Delta _{\mathrm{d}}^{(\mathbf{A}(t,\cdot ))}+\lambda V_{\omega
})\in \mathcal{B}(\ell ^{2}(\mathfrak{L}))
\label{Anderson Model with Electric Field}
\end{equation}%
from $\mathbb{R}$ to the space $\mathcal{B}(\ell ^{2}(\mathfrak{L}))$ of
bounded operators acting on $\ell ^{2}(\mathfrak{L})$ is continuously
differentiable for every $\mathbf{A}\in \mathbf{C}_{0}^{\infty }$, $\{%
\mathrm{U}_{t,s}^{(\omega )}\}_{t\geq s}$ is a norm--continuous
two--parameter group of unitary operators. For more details, see Section \ref%
{Section existence dynamics}.

Therefore, for all $\omega \in \Omega $, $\lambda \in \mathbb{R}_{0}^{+}$
and $\mathbf{A}\in \mathbf{C}_{0}^{\infty }$, the condition%
\begin{equation}
\tau _{t,s}^{(\omega ,\lambda ,\mathbf{A})}(a(\psi ))=a((\mathrm{U}%
_{t,s}^{(\omega ,\lambda ,\mathbf{A})})^{\ast }\psi )\ ,\text{\qquad }t\geq
s,\ \psi \in \ell ^{2}(\mathfrak{L})\ ,  \label{Cauchy problem 0}
\end{equation}%
uniquely defines a family $\{\tau _{t,s}^{(\omega ,\lambda ,\mathbf{A}%
)}\}_{t\geq s}$ of Bogoliubov automorphisms of the $C^{\ast }$--algebra $%
\mathcal{U}$, see \cite[Theorem 5.2.5]{BratteliRobinson}. It is a strongly
continuous two--parameter family which obeys the non--autonomous\ evolution
equation%
\begin{equation}
\forall s,t\in {\mathbb{R}},\ t\geq s:\quad \partial _{t}\tau
_{t,s}^{(\omega ,\lambda ,\mathbf{A})}=\tau _{t,s}^{(\omega ,\lambda ,%
\mathbf{A})}\circ \delta _{t}^{(\omega ,\lambda ,\mathbf{A})},\quad \tau
_{s,s}^{(\omega ,\lambda ,\mathbf{A})}:=\mathbf{1}\ ,
\label{Cauchy problem 1}
\end{equation}%
with $\mathbf{1}$ being the identity of $\mathcal{U}$. Here, at any \emph{%
fixed} time $t\in {\mathbb{R}}$, $\delta _{t}^{(\omega ,\lambda ,\mathbf{A}%
)} $ is the infinitesimal generator of the (Bogoliubov) group $\{\tau
_{s}^{(\omega ,\lambda ,\mathbf{A})}\}_{s\in {\mathbb{R}}}\equiv \{\tau
_{s}^{(\omega ,\lambda ,\mathbf{A}(t,\cdot ))}\}_{s\in {\mathbb{R}}}$ of
automorphisms defined by replacing $\Delta _{\mathrm{d}}$ with $\Delta _{%
\mathrm{d}}^{(\mathbf{A})}$ in (\ref{rescaled}), see (\ref{explicit delta}).
For more details on the properties of $\{\tau _{t,s}^{(\omega ,\lambda ,%
\mathbf{A})}\}_{t\geq s}$, see also Sections \ref{Section existence dynamics}%
--\ref{Section interaction picture}.

Observe that one can equivalently use either (\ref{Cauchy problem 0}) or (%
\ref{Cauchy problem 1}) to define the dynamics, see also Proposition \ref%
{bound incr 1 Lemma copy(8)}. However, only the second formulation (\ref%
{Cauchy problem 1}) is appropriate to study transport properties of systems
of interacting fermions on the lattice in its algebraic formulation.

\begin{bemerkung}[Heisenberg picture]
\label{Heisenberg Picture remark}\mbox{
}\newline
The initial value problem (\ref{Cauchy problem 1}) can easily be understood
in the Heisenberg picture. The time--evolution of any observable $B_{s}\in
\mathcal{B}(\ell ^{2}(\mathfrak{L}))$ at initial time $t=s\in {\mathbb{R}}$
equals $B_{t}=(\mathrm{U}_{t,s}^{(\omega ,\lambda ,\mathbf{A})})^{\ast }B_{s}%
\mathrm{U}_{t,s}^{(\omega ,\lambda ,\mathbf{A})}$ for $t\geq s$, which yields%
\begin{equation*}
\forall t\geq s:\quad \partial _{t}B_{t}=(\mathrm{U}_{t,s}^{(\omega ,\lambda
,\mathbf{A})})^{\ast }i[\Delta _{\mathrm{d}}^{(\mathbf{A})}+\lambda
V_{\omega },B_{s}]\mathrm{U}_{t,s}^{(\omega ,\lambda ,\mathbf{A})}\ .
\end{equation*}%
The action of the symmetric derivation $\delta _{t}^{(\omega ,\lambda ,%
\mathbf{A})}$ in (\ref{Cauchy problem 1})\ is related to the above
commutator whereas the map $B\mapsto (\mathrm{U}_{t,s}^{(\omega ,\lambda ,%
\mathbf{A})})^{\ast }B\mathrm{U}_{t,s}^{(\omega ,\lambda ,\mathbf{A})}$
leads to the family $\{\tau _{t,s}^{(\omega ,\lambda ,\mathbf{A})}\}_{t\geq
s}$ in the second quantization. See also Theorem \ref{Theo int pict}.
\end{bemerkung}

\subsection{Time--Dependent State of the System\label{Section initia states}}

States on the $C^{\ast }$--algebra $\mathcal{U}$ are, by definition,
continuous linear functionals $\rho \in \mathcal{U}^{\ast }$ which are
normalized and positive, i.e., $\rho (\mathbf{1})=1$ and $\rho (A^{\ast
}A)\geq 0$ for all $A\in \mathcal{U}$.

It is well--known that, at finite volume, the thermodynamic equilibrium of
the system is described by the corresponding Gibbs state, which is the
unique state minimizing the free--energy. It is stationary and satisfies the
so--called KMS condition. The latter also makes sense in infinite volume and
is thus used to define the thermodynamic equilibrium of the infinite system.
See, e.g., Section \ref{Section finite volume system}, in particular Theorem %
\ref{conv Gibbs}.

Therefore, we assume that, for any function $\omega \in \Omega $ and
strength $\lambda \in \mathbb{R}_{0}^{+}$ of the static potential, the state
of the system before the electric field is switched on is the unique $(\tau
^{(\omega ,\lambda )},\beta )$--KMS state $\varrho ^{(\beta ,\omega ,\lambda
)}$, see \cite[Example 5.3.2.]{BratteliRobinson} or \cite[Theorem 5.9]%
{AttalJoyePillet2006a}. Here, $\beta \in \mathbb{R}^{+}$ (i.e., $\beta >0$)
is the inverse temperature of the fermion system at equilibrium.

Since $\mathbf{A}(t,x)=0$ for all $t\leq t_{0}$, the time evolution of the
state of the system thus equals%
\begin{equation}
\rho _{t}^{(\beta ,\omega ,\lambda ,\mathbf{A})}:=\left\{
\begin{array}{lll}
\varrho ^{(\beta ,\omega ,\lambda )} & , & \qquad t\leq t_{0}\ , \\
\varrho ^{(\beta ,\omega ,\lambda )}\circ \tau _{t,t_{0}}^{(\omega ,\lambda ,%
\mathbf{A})} & , & \qquad t\geq t_{0}\ .%
\end{array}%
\right.  \label{time dependent state}
\end{equation}%
Remark that the definition does not depend on the particular choice of
initial time $t_{0}$ because of the stationarity of the KMS state $\varrho
^{(\beta ,\omega ,\lambda )}$ w.r.t. the unperturbed dynamics (cf. (\ref%
{stationary})). The state $\rho _{t}^{(\beta ,\omega ,\lambda ,\mathbf{A})}$
is, by construction, a quasi--free state.

\section{Heat Production\label{Section Heat Production}}

\subsection{Heat Production as Quantum Relative Entropy\label{Section Heat
Production as Entropy}}

Joule's law describes the rate at which resistance converts electric energy
into \emph{heat}. That quantity of heat is not characterized here by a \emph{%
local }increase of temperature, but it is proportional to an \emph{entropy}
production. The proportionality coefficient is of course the temperature $%
\beta ^{-1}\in \mathbb{R}^{+}$, which is is seen as a \emph{global}
parameter of the \emph{infinite} system. The heat production is thus, by
definition, a relative quantity w.r.t. the reference state of the system,
that is, the thermal (or equilibrium) state $\varrho ^{(\beta ,\omega
,\lambda )}$ for $\beta \in \mathbb{R}^{+}$, $\omega \in \Omega $ and $%
\lambda \in \mathbb{R}_{0}^{+}$. Its mathematical formulation requires
Araki's notion of \emph{relative entropy} \cite{Araki1,Araki2,OhyaPetz}.

The latter takes a simple form for finite dimensional $C^{\ast }$--algebras
like the local fermion algebras $\{\mathcal{U}_{\Lambda }\}_{\Lambda \in
\mathcal{P}_{f}(\mathfrak{L})}$: Let $\Lambda \in \mathcal{P}_{f}(\mathfrak{L%
})$ and denote by $\mathrm{tr}$ the normalized trace on $\mathcal{U}%
_{\Lambda }$, also named the tracial state of $\mathcal{U}_{\Lambda }$. By
\cite[Lemma 3.1 (i)]{Araki-Moriya}, for any state $\rho \in \mathcal{U}%
_{\Lambda }^{\ast }$, there is a unique adjusted density matrix $\mathrm{d}%
_{\rho }\in \mathcal{U}$, that is, $\mathrm{d}_{\rho }\geq 0$, $\mathrm{tr}%
\left( \mathrm{d}_{\rho }\right) =1$ and $\rho (A)=\mathrm{tr}\left( \mathrm{%
d}_{\rho }A\right) $ for all $A\in \mathcal{U}_{\Lambda }$. We define by $%
\mathrm{supp}\left( \rho \right) $ the smallest projection $\mathrm{P}\in
\mathcal{U}_{\Lambda }$ such that $\rho (\mathrm{P})=1$. Then, the relative
entropy of a state $\rho _{1}\in \mathcal{U}_{\Lambda }^{\ast }$ w.r.t. $%
\rho _{2}\in \mathcal{U}_{\Lambda }^{\ast }$ is defined by (\ref{relative
entropy general}) for $\mathcal{X}=\mathcal{U}_{\Lambda }$ and, by finite
dimensionality, it equals%
\begin{equation}
S_{\mathcal{U}_{\Lambda }}\left( \rho _{1}|\rho _{2}\right) =\left\{
\begin{array}{lll}
\rho _{1}\left( \ln \mathrm{d}_{\rho _{1}}-\ln \mathrm{d}_{\rho _{2}}\right)
\in \mathbb{R}_{0}^{+} & , & \qquad \text{if }\mathrm{supp}\left( \rho
_{2}\right) \geq \mathrm{supp}\left( \rho _{1}\right) \ , \\
+\infty & , & \qquad \text{otherwise}\ ,%
\end{array}%
\right.  \label{relative entropy finite dimensional}
\end{equation}%
under the convention $x\ln x|_{x=0}:=0$, see Lemma \ref{local
AC-conductivity lemma copy(2)}. It is always a non--negative quantity. See
for instance \cite[Eq. (1.3) and Proposition 1.1]{OhyaPetz}.

For more general $C^{\ast }$--algebras like the CAR $C^{\ast }$--algebra $%
\mathcal{U}$ of the infinite system, Araki's definition of relative entropy
\cite{Araki1,Araki2,OhyaPetz} invokes the modular theory. This definition is
rather abstract, albeit standard, and for the reader's convenience we thus
postpone it until Section \ref{Section Quantum Relative Entropy}. Indeed,
using the boxes%
\begin{equation}
\Lambda _{L}:=\{(x_{1},\ldots ,x_{d})\in \mathfrak{L}\,:\,|x_{1}|,\ldots
,|x_{d}|\leq L\}\in \mathcal{P}_{f}(\mathfrak{L})  \label{eq:def lambda n}
\end{equation}%
for any $L\in \mathbb{R}^{+}$, we observe that $\{\mathcal{U}_{\Lambda
_{L}}\}_{L\in \mathbb{R}^{+}}$ is an increasing net of $C^{\ast }$%
--subalgebras of the $C^{\ast }$--algebra $\mathcal{U}$. Moreover, the $\ast
$--algebra%
\begin{equation}
\mathcal{U}_{0}:=\underset{L\in \mathbb{R}^{+}}{\bigcup }\mathcal{U}%
_{\Lambda _{L}}\subset \mathcal{U}  \label{simple}
\end{equation}%
of local elements is, by construction, dense in $\mathcal{U}$. (Indeed, $%
\mathcal{U}$ is by definition the completion of the normed $\ast $--algebra $%
\mathcal{U}_{0}$.) We thus define the relative entropy of any state $\rho
_{1}\in \mathcal{U}^{\ast }$ w.r.t. $\rho _{2}\in \mathcal{U}^{\ast }$ by
\begin{equation}
\mathrm{S}\left( \rho _{1}|\rho _{2}\right) :=\underset{L\rightarrow \infty }%
{\lim }S_{\mathcal{U}_{\Lambda _{L}}}\left( \rho _{1,\Lambda _{L}}|\rho
_{2,\Lambda _{L}}\right) =\underset{L\in \mathbb{R}^{+}}{\sup }S_{\mathcal{U}%
_{\Lambda _{L}}}\left( \rho _{1,\Lambda _{L}}|\rho _{2,\Lambda _{L}}\right)
\in \left[ 0,\infty \right]  \label{relative entropy general0}
\end{equation}%
with $\rho _{1,\Lambda _{L}}$ and $\rho _{2,\Lambda _{L}}$ being the
restrictions to $\mathcal{U}_{\Lambda _{L}}$ of the states $\rho _{1}$ and $%
\rho _{2}$, respectively. By \cite[Proposition 5.23 (vi)]{OhyaPetz}, this
limit exists and equals Araki's relative entropy, i.e., $\mathrm{S}\left(
\rho _{1}|\rho _{2}\right) =S_{\mathcal{U}}\left( \rho _{1}|\rho _{2}\right)
$ with $S_{\mathcal{U}}$ defined by (\ref{relative entropy general}) for $%
\mathcal{X}=\mathcal{U}$. In particular, it is a non--negative (possibly
infinite) quantity. Since $\mathrm{S}=S_{\mathcal{U}}$, note that the second
equality in (\ref{relative entropy general0}) follows from \cite[Proposition
5.23 (iv)]{OhyaPetz}, which in turn results from the Uhlmann monotonicity
theorem for Schwarz mappings \cite[Proposition 5.3]{OhyaPetz}.

Therefore, the \emph{heat production} is defined from (\ref{time dependent
state}) and (\ref{relative entropy general0}) as follows:

\begin{definition}[Heat production]
\label{Heat production definition}\mbox{ }\newline
For any $\beta \in \mathbb{R}^{+}$, $\omega \in \Omega $, $\lambda \in
\mathbb{R}_{0}^{+}$ and $\mathbf{A}\in \mathbf{C}_{0}^{\infty }$, $\mathbf{Q}%
^{(\omega ,\mathbf{A})}\equiv \mathbf{Q}^{(\beta ,\omega ,\lambda ,\mathbf{A}%
)}$ is defined as a map from $\mathbb{R}$ to $\overline{\mathbb{R}}$ by%
\begin{equation*}
\mathbf{Q}^{(\omega ,\mathbf{A})}\left( t\right) :=\beta ^{-1}\mathrm{S}%
(\rho _{t}^{(\beta ,\omega ,\lambda ,\mathbf{A})}|\varrho ^{(\beta ,\omega
,\lambda )})\in \left[ 0,\infty \right] \ .
\end{equation*}
\end{definition}

\noindent The heat production $\mathbf{Q}^{(\omega ,\mathbf{A})}\left(
t\right) $ may a priori be infinite for some time $t\in \mathbb{R}$. We
prove in the next section that $\mathbf{Q}^{(\omega ,\mathbf{A})}$ is finite
for all times. In particular, the states $\varrho ^{(\beta ,\omega ,\lambda
)}$ and $\rho _{t}^{(\beta ,\omega ,\lambda ,\mathbf{A})}$ are globally
similar.

\subsection{Heat Production and 1st Law of Thermodynamics\label{Section
Linear Response}}

\noindent \textit{In a thermodynamic process of a closed system, the
increment in the internal energy is equal to the difference between the
increment of heat accumulated by the system and the increment of work done
by it}.\smallskip

\hfill \lbrack Clausius, English translation, 1850]\bigskip

\noindent This is the celebrated \emph{1st law of thermodynamics}, see \cite%
{thermo0}. For an historical and mathematical account on thermodynamics,
see, e.g., \cite{Thermo1}. See also \cite{thermo2} for an interesting
derivation of this law from quantum statistical mechanics.

In the system considered here, the increment of \emph{total} energy follows
from the interaction between electromagnetic fields and charged fermions.
Part of this increment results from the change of internal state of
fermions. It is interpreted below as an increment of \emph{internal} energy
of the system. The other part is an electromagnetic energy that is generally
non--vanishing even if the internal state of fermions would stay at
equilibrium. By this reason, this part is seen below as an increase of
electromagnetic\emph{\ potential} energy of charged particles within the
electromagnetic field. As the system under consideration does not interact
with surroundings and thus can neither perform work nor exchange heat, all
the increase of internal energy is expected to be converted into heat, by
the 1st law of thermodynamics. Therefore, the heat production $\mathbf{Q}%
^{(\omega ,\mathbf{A})}$ should be related to the increment of the internal
energy of the system. This is far from being explicit in Definition \ref%
{Heat production definition}. We show that it is indeed the case for the
fermion system considered here.

To this end, we first need to give precise definitions of the increments of
\emph{total}, \emph{internal} and (electromagnetic)\emph{\ potential}
energies. In quantum mechanics, these energies should be associated with
total, internal and potential energy observables, that is in our case,
self--adjoint elements of $\mathcal{U}$. They are defined as follows: For
any $L\in \mathbb{R}^{+}$, $[L]\in \mathbb{N}$ being its integer part, the
\emph{internal} energy observable in the box $\Lambda _{L}$ (\ref{eq:def
lambda n}) of side length $2[L]+1$ is defined by%
\begin{equation}
H_{L}^{(\omega ,\lambda )}:=\sum\limits_{x,y\in \Lambda _{L}}\langle
\mathfrak{e}_{x},(\Delta _{\mathrm{d}}+\lambda V_{\omega })\mathfrak{e}%
_{y}\rangle a_{x}^{\ast }a_{y}\in \mathcal{U}\ .  \label{def H loc}
\end{equation}%
It is the second quantization of the one--particle operator $\Delta _{%
\mathrm{d}}+\lambda V_{\omega }$ restricted to the subspace $\ell
^{2}(\Lambda _{L})\subset \ell ^{2}(\mathfrak{L})$. When the electromagnetic
field is switched on, i.e., for $t\geq t_{0}$, the (time--dependant) \emph{%
total} energy observable in the box $\Lambda _{L}$ is then equal to $%
H_{L}^{(\omega ,\lambda )}+W_{t}^{\mathbf{A}}$, where, for any $\mathbf{A}%
\in \mathbf{C}_{0}^{\infty }$ and $t\in \mathbb{R}$,
\begin{equation}
W_{t}^{\mathbf{A}}:=\sum\limits_{x,y\in \Lambda _{L}}\langle \mathfrak{e}%
_{x},(\Delta _{\mathrm{d}}^{(\mathbf{A})}-\Delta _{\mathrm{d}})\mathfrak{e}%
_{y}\rangle a_{x}^{\ast }a_{y}\in \mathcal{U}  \label{eq def W}
\end{equation}%
is the electromagnetic\emph{\ potential} energy observable.

As a consequence, for any $\beta \in \mathbb{R}^{+}$, $\omega \in \Omega $, $%
\lambda \in \mathbb{R}_{0}^{+}$, $\mathbf{A}\in \mathbf{C}_{0}^{\infty }$
and $t\in \mathbb{R}$, the \emph{total} energy increment engendered by the
interaction with the electromagnetic field equals
\begin{equation}
\lim_{L\rightarrow \infty }\left\{ \rho _{t}^{(\beta ,\omega ,\lambda ,%
\mathbf{A})}(H_{L}^{(\omega ,\lambda )}+W_{t}^{\mathbf{A}})-\varrho ^{(\beta
,\omega ,\lambda )}(H_{L}^{(\omega ,\lambda )})\right\} =\mathbf{S}^{(\omega
,\mathbf{A})}\left( t\right) +\mathbf{P}^{(\omega ,\mathbf{A})}\left(
t\right) \ .  \label{lim_en_incr full}
\end{equation}%
Here, $\mathbf{S}^{(\omega ,\mathbf{A})}\equiv \mathbf{S}^{(\beta ,\omega
,\lambda ,\mathbf{A})}$ is the \emph{internal} energy increment defined as a
map from $\mathbb{R}$ to $\overline{\mathbb{R}}$ by%
\begin{equation}
\mathbf{S}^{(\omega ,\mathbf{A})}\left( t\right) :=\lim_{L\rightarrow \infty
}\left\{ \rho _{t}^{(\beta ,\omega ,\lambda ,\mathbf{A})}(H_{L}^{(\omega
,\lambda )})-\varrho ^{(\beta ,\omega ,\lambda )}(H_{L}^{(\omega ,\lambda
)})\right\} \ ,  \label{entropic energy increment}
\end{equation}%
whereas the electromagnetic \emph{potential} energy (increment) $\mathbf{P}%
^{(\omega ,\mathbf{A})}\equiv \mathbf{P}^{(\beta ,\omega ,\lambda ,\mathbf{A}%
)}$ is defined as a map from $\mathbb{R}$ to $\mathbb{R}$ by%
\begin{equation}
\mathbf{P}^{(\omega ,\mathbf{A})}\left( t\right) :=\rho _{t}^{(\beta ,\omega
,\lambda ,\mathbf{A})}(W_{t}^{\mathbf{A}})=\rho _{t}^{(\beta ,\omega
,\lambda ,\mathbf{A})}(W_{t}^{\mathbf{A}})-\varrho ^{(\beta ,\omega ,\lambda
)}(W_{t_{0}}^{\mathbf{A}})  \label{electro free energy}
\end{equation}%
for any $\beta \in \mathbb{R}^{+}$, $\omega \in \Omega $, $\lambda \in
\mathbb{R}_{0}^{+}$ and $\mathbf{A}\in \mathbf{C}_{0}^{\infty }$. In
particular, $\mathbf{S}^{(\omega ,\mathbf{A})}$ is only non--vanishing if
the state of the fermion system changes, whereas $\mathbf{P}^{(\omega ,%
\mathbf{A})}$ vanishes in absence of external electromagnetic potential.

Remark that
\begin{equation}
\mathbf{P}^{(\omega ,\mathbf{A})}\left( t\right) =\left\{ \rho _{t}^{(\beta
,\omega ,\lambda ,\mathbf{A})}(W_{t}^{\mathbf{A}})-\varrho ^{(\beta ,\omega
,\lambda )}(W_{t}^{\mathbf{A}})\right\} +\varrho ^{(\beta ,\omega ,\lambda
)}(W_{t}^{\mathbf{A}})\ .  \label{electro free energybis}
\end{equation}%
The last part is the raw\emph{\ }electromagnetic energy given to the system
at equilibrium. It is the so--called \emph{diamagnetic} energy, which will
be studied in subsequent papers. The energy increment between brackets in
the right hand side (r.h.s.) of (\ref{electro free energybis}) will also be
analyzed in detail later and is part of a so--called \emph{paramagnetic}
energy increment. It is the amount of electromagnetic potential energy
absorbed or released by\ the fermion system\ to change its internal state.

It is not a priori obvious that the limits (\ref{lim_en_incr full}) and (\ref%
{entropic energy increment}) exist because, in general,%
\begin{equation*}
\rho _{t}^{(\beta ,\omega ,\lambda ,\mathbf{A})}(H_{L}^{(\omega ,\lambda )})=%
\mathcal{O}(L^{d})\ .
\end{equation*}%
We show below that these limits have nevertheless finite real--values.
Indeed, we infer from Theorem \ref{coro heat production2} that, for any $%
\beta \in \mathbb{R}^{+}$, $\omega \in \Omega $, $\lambda \in \mathbb{R}%
_{0}^{+}$ and $\mathbf{A}\in \mathbf{C}_{0}^{\infty }$, the energy sum (\ref%
{lim_en_incr full}) is the \emph{work} performed on the system by the
electromagnetic field at time $t\geq t_{0}$:%
\begin{equation}
\mathbf{S}^{(\omega ,\mathbf{A})}\left( t\right) +\mathbf{P}^{(\omega ,%
\mathbf{A})}\left( t\right) =\int_{t_{0}}^{t}\rho _{s}^{(\beta ,\omega
,\lambda ,\mathbf{A})}\left( \partial _{s}W_{s}^{\mathbf{A}}\right) \mathrm{d%
}s\ .  \label{work}
\end{equation}%
Here, $\rho _{t}^{(\beta ,\omega ,\lambda ,\mathbf{A})}(\partial _{t}W_{t}^{%
\mathbf{A}})$ is interpreted as the infinitesimal work of the
electromagnetic field at time $t\in \mathbb{R}$. See for instance
discussions in \cite[Section 5.4.4.]{BratteliRobinson}. Note that this
conservation law is not completely obvious in our case because the
considered system is \emph{infinitely extended}.

We derive now the 1st law of thermodynamics:

\begin{satz}[1st law of thermodynamics]
\label{main 1 copy(1)}\mbox{
}\newline
For any $\beta \in \mathbb{R}^{+}$, $\omega \in \Omega $, $\lambda \in
\mathbb{R}_{0}^{+}$, $\mathbf{A}\in \mathbf{C}_{0}^{\infty }$ and $t\in
\mathbb{R}$,%
\begin{equation*}
\mathbf{Q}^{(\omega ,\mathbf{A})}\left( t\right) =\mathbf{S}^{(\omega ,%
\mathbf{A})}\left( t\right) \in \mathbb{R}_{0}^{+}\ .
\end{equation*}%
In particular, the maps $\mathbf{Q}^{(\omega ,\mathbf{A})}$ and $\mathbf{S}%
^{(\omega ,\mathbf{A})}$ respectively defined by Definition \ref{Heat
production definition} and (\ref{entropic energy increment}) take always
positive and finite values for all times.
\end{satz}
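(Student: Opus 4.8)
The plan is to prove the identity $\mathbf{Q}^{(\omega ,\mathbf{A})}(t)=\mathbf{S}^{(\omega ,\mathbf{A})}(t)$ by reducing it to a statement about the finite boxes $\Lambda _{L}$, where the relative entropy is given by the explicit finite--dimensional formula (\ref{relative entropy finite dimensional}) and every quantity is an honest trace. The algebraic mechanism I want to exploit is that, against a \emph{Gibbs} reference state, the relative entropy splits into an energy increment minus a von Neumann entropy increment, and that the entropy increment vanishes because the dynamics is, at finite volume, \emph{inner} and therefore preserves the spectrum of the density matrix.

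First I would fix $L$ large enough that the support of $\mathbf{A}$ lies inside $\Lambda _{L}$ and introduce the finite--volume Gibbs state $g_{L}$ on $\mathcal{U}_{\Lambda _{L}}$ attached to $H_{L}^{(\omega ,\lambda )}$, i.e.\ the quasi--free state with adjusted density matrix $\mathrm{d}_{g_{L}}=\mathrm{e}^{-\beta H_{L}^{(\omega ,\lambda )}}/\mathrm{tr}(\mathrm{e}^{-\beta H_{L}^{(\omega ,\lambda )}})$, so that $\ln \mathrm{d}_{g_{L}}=-\beta H_{L}^{(\omega ,\lambda )}-\ln Z_{L}$. Let $\tau ^{(L)}$ be the inner dynamics on $\mathcal{U}_{\Lambda _{L}}$ generated by $H_{L}^{(\omega ,\lambda )}+W_{t}^{\mathbf{A}}$ and set $\rho _{t}^{(L)}:=g_{L}\circ \tau _{t,t_{0}}^{(L)}$. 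Since $\tau _{t,t_{0}}^{(L)}$ is implemented by a unitary of $\mathcal{U}_{\Lambda _{L}}$, the density matrix $\mathrm{d}_{\rho _{t}^{(L)}}$ is unitarily conjugate to $\mathrm{d}_{g_{L}}$, hence isospectral to it, so their von Neumann entropies agree. Inserting $\ln \mathrm{d}_{g_{L}}$ into (\ref{relative entropy finite dimensional}), using $\rho \mapsto \rho (\ln \mathrm{d}_{\rho })=-S_{\mathrm{vN}}(\rho )$, and cancelling the entropy and $\ln Z_{L}$ contributions via $S_{\mathcal{U}_{\Lambda _{L}}}(g_{L}|g_{L})=0$, I obtain the clean finite--volume identity
\[
\beta ^{-1}S_{\mathcal{U}_{\Lambda _{L}}}\!\left( \rho _{t}^{(L)}\,|\,g_{L}\right) =\rho _{t}^{(L)}(H_{L}^{(\omega ,\lambda )})-g_{L}(H_{L}^{(\omega ,\lambda )})\ ,
\]
which is precisely a finite--volume avatar of $\mathbf{Q}=\mathbf{S}$.

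The second step is to pass to $L\rightarrow \infty $ and replace $g_{L},\rho _{t}^{(L)}$ by the infinite--volume states $\varrho ^{(\beta ,\omega ,\lambda )}$ and $\rho _{t}^{(\beta ,\omega ,\lambda ,\mathbf{A})}$. For the reference state I invoke Theorem \ref{conv Gibbs} to get $g_{L}\rightarrow \varrho ^{(\beta ,\omega ,\lambda )}$ locally; since the one--particle generator of $\tau ^{(\omega ,\lambda ,\mathbf{A})}$ differs from that of $\tau ^{(L)}$ only by boundary hopping near $\partial \Lambda _{L}$ and the propagators $\mathrm{U}_{t,t_{0}}^{(L)}$ converge strongly to $\mathrm{U}_{t,t_{0}}^{(\omega ,\lambda ,\mathbf{A})}$, one gets $\rho _{t}^{(L)}\rightarrow \rho _{t}^{(\beta ,\omega ,\lambda ,\mathbf{A})}$ in the same local sense. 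Using the $\sup$/limit characterisation (\ref{relative entropy general0}) of the infinite--volume relative entropy together with lower semicontinuity then identifies the left--hand side with $\mathbf{Q}^{(\omega ,\mathbf{A})}(t)$, while the right--hand side converges to the increment (\ref{entropic energy increment}), i.e.\ to $\mathbf{S}^{(\omega ,\mathbf{A})}(t)$. As both states are quasi--free, each limit reduces to the one--particle level: writing $\mathsf{d}=(\mathbf{1}+\mathrm{e}^{\beta (\Delta _{\mathrm{d}}+\lambda V_{\omega })})^{-1}$ for the Fermi--Dirac symbol (up to the usual conventions) and $\mathsf{d}_{t}=\mathrm{U}_{t,t_{0}}^{(\omega ,\lambda ,\mathbf{A})}\,\mathsf{d}\,(\mathrm{U}_{t,t_{0}}^{(\omega ,\lambda ,\mathbf{A})})^{\ast }$, the increment is $\mathbf{S}^{(\omega ,\mathbf{A})}(t)=\mathrm{Tr}_{\ell ^{2}(\mathfrak{L})}[(\mathsf{d}_{t}-\mathsf{d})(\Delta _{\mathrm{d}}+\lambda V_{\omega })]$, and the one--particle identity $\ln \mathsf{d}-\ln (\mathbf{1}-\mathsf{d})=-\beta (\Delta _{\mathrm{d}}+\lambda V_{\omega })$ is exactly what makes the entropy terms drop and reproduces the same formula directly in infinite volume, as an independent cross--check.

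Finally, the qualitative conclusions are immediate: positivity of Araki's relative entropy gives $\mathbf{Q}^{(\omega ,\mathbf{A})}(t)\geq 0$, hence $\mathbf{S}^{(\omega ,\mathbf{A})}(t)\geq 0$, and finiteness of $\mathbf{S}^{(\omega ,\mathbf{A})}(t)$ — equivalently the trace--class property of $\mathsf{d}_{t}-\mathsf{d}$, which is already secured through Theorem \ref{coro heat production2} — makes both maps real--valued for all $t$. The step I expect to be the main obstacle is the interchange of $L\rightarrow \infty $ with the relative entropy: one must control the mismatch between $g_{L}$ and the restriction of $\varrho ^{(\beta ,\omega ,\lambda )}$ to $\mathcal{U}_{\Lambda _{L}}$ (where Theorem \ref{conv Gibbs} enters) and ensure the boundary terms created by truncating the dynamics and the Hamiltonian to $\Lambda _{L}$ do not survive in the limit. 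This is exactly where the locality of $W_{t}^{\mathbf{A}}$ (compact support of $\mathbf{A}\in \mathbf{C}_{0}^{\infty }$) and the decay estimates of Section \ref{section Tree--decay Bounds} are needed, to guarantee that $\mathsf{d}_{t}-\mathsf{d}$ is trace class with a bound uniform in $L$.
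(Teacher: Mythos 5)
Your finite--volume step is sound and is in fact exactly the paper's Theorem \ref{main 1 copy(1)bis} in the appendix: since $\tau _{t,t_{0}}^{(\omega ,\lambda ,\mathbf{A},L)}$ is inner, the von Neumann entropy of the evolved Gibbs state is unchanged, and inserting $\ln \mathrm{d}_{g_{L}}=-\beta H_{L}^{(\omega ,\lambda )}-\ln Z_{L}$ into (\ref{relative entropy finite dimensional}) gives $\beta ^{-1}S_{\mathcal{U}_{\Lambda _{L}}}(\rho _{t}^{(L)}|g_{L})=\rho _{t}^{(L)}(H_{L}^{(\omega ,\lambda )})-g_{L}(H_{L}^{(\omega ,\lambda )})$. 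The gap is in the second step, and it sits precisely where you yourself flag ``the main obstacle'': you need
\begin{equation*}
\mathrm{S}\bigl(\rho _{t}^{(\beta ,\omega ,\lambda ,\mathbf{A})}\,\big|\,\varrho ^{(\beta ,\omega ,\lambda )}\bigr)=\lim_{L\rightarrow \infty }S_{\mathcal{U}_{\Lambda _{L}}}\bigl(g_{L}\circ \tau _{t,t_{0}}^{(\omega ,\lambda ,\mathbf{A},L)}\,\big|\,g_{L}\bigr)\ ,
\end{equation*}
but the two tools you invoke do not deliver it. The $\sup$/limit characterisation (\ref{relative entropy general0}) concerns the restrictions of the \emph{fixed} infinite--volume states $\rho _{t}$ and $\varrho $ to $\mathcal{U}_{\Lambda _{L}}$, which are not the same objects as the finite--volume Gibbs state $g_{L}$ evolved by the truncated dynamics; and lower semicontinuity of the relative entropy under weak$^{\ast }$ convergence of both arguments yields only $\mathrm{S}(\rho _{t}|\varrho )\leq \liminf_{L}S_{\mathcal{U}_{\Lambda _{L}}}(\rho _{t}^{(L)}|g_{L})$, hence only $\mathbf{Q}^{(\omega ,\mathbf{A})}(t)\leq \mathbf{S}^{(\omega ,\mathbf{A})}(t)$. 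The reverse inequality (an upper--semicontinuity statement for relative entropies along a sequence of \emph{varying} reference states) is the hard part and is not addressed; controlling the mismatch between $g_{L}$ and $\varrho _{\Lambda _{L}}^{(\beta ,\omega ,\lambda )}$ and the boundary terms of the truncated propagator inside a relative entropy is not something the tree--decay bounds of Section \ref{section Tree--decay Bounds} give you, since those control correlation functions, not entropies.

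The paper avoids this issue entirely by never taking a thermodynamic limit of relative entropies. It works directly in infinite volume: Theorem \ref{Theo int pict} writes $\rho _{t}^{(\beta ,\omega ,\lambda ,\mathbf{A})}(\cdot )=\varrho ^{(\beta ,\omega ,\lambda )}(\mathfrak{U}_{t}^{\ast }\,\cdot \,\mathfrak{U}_{t})$ with a unitary $\mathfrak{U}_{t}\in \mathrm{Dom}(\delta ^{(\omega ,\lambda )})$, Theorem \ref{theo exist incr sympa} identifies $\mathbf{S}^{(\omega ,\mathbf{A})}(t)=-i\varrho ^{(\beta ,\omega ,\lambda )}(\mathfrak{U}_{t}^{\ast }\delta ^{(\omega ,\lambda )}(\mathfrak{U}_{t}))$, and then the entropy--balance formula of Jaksi\'{c}--Pillet (\cite[Theorem 1.1]{JaksicPillet}) equates this quantity with $\beta ^{-1}\mathrm{S}(\rho _{t}|\varrho )$, while the Pusz--Woronowicz passivity of KMS states gives positivity. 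Indeed, the logical order in the paper is the opposite of yours: Corollary \ref{coro Thm limit heat prioduction} (heat production as a thermodynamic limit) is \emph{deduced from} Theorem \ref{main 1 copy(1)} together with the finite--volume identity, not used to prove it. If you want to keep your route, the missing ingredient is a substitute for the Jaksi\'{c}--Pillet formula at the level of the limit $L\rightarrow \infty $, and that is essentially as hard as the theorem itself.
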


\begin{proof}
All arguments are given in Section \ref{Section Existence}, see Theorem \ref%
{theo exist incr sympa}\ and Corollaries \ref{coro heat production1}--\ref%
{Theorem entropy production}. Note also that, by definition,
\begin{equation*}
\mathbf{P}^{(\omega ,\mathbf{A})}\left( t\right) =\mathbf{S}^{(\omega ,%
\mathbf{A})}\left( t\right) =\mathbf{Q}^{(\omega ,\mathbf{A})}\left(
t\right) =0
\end{equation*}%
whenever $t\leq t_{0}$.
\end{proof}

Observe that the state $\rho _{t}^{(\beta ,\omega ,\lambda ,\mathbf{A})}$ of
the fermion system still evolves for $t\geq t_{1}$ when the electromagnetic
field is turned off. Indeed, for any $\beta \in \mathbb{R}^{+}$, $\omega \in
\Omega $, $\lambda \in \mathbb{R}_{0}^{+}$, $\mathbf{A}\in \mathbf{C}%
_{0}^{\infty }$ and $t\geq t_{1}$,
\begin{equation*}
\rho _{t}^{(\beta ,\omega ,\lambda ,\mathbf{A})}=\rho _{t_{1}}^{(\beta
,\omega ,\lambda ,\mathbf{A})}\circ \tau _{t-t_{1}}^{(\omega ,\lambda ,%
\mathbf{A})}\ .
\end{equation*}%
Despite that, the total heat created by the electromagnetic field stays
\emph{constant} as soon as the electromagnetic field is turned off: By
Theorem \ref{main 1 copy(1)}, $\mathbf{S}^{(\omega ,\mathbf{A})}$\ is the
heat production due to the interaction with the electromagnetic field and
from (\ref{work}) we deduce that, for all $\beta \in \mathbb{R}^{+}$, $%
\omega \in \Omega $, $\lambda \in \mathbb{R}_{0}^{+}$, $\mathbf{A}\in
\mathbf{C}_{0}^{\infty }$ and $t\geq t_{1}$,
\begin{equation*}
\mathbf{Q}^{(\omega ,\mathbf{A})}\left( t\right) =\mathbf{S}^{(\omega ,%
\mathbf{A})}\left( t\right) =\int_{t_{0}}^{t_{1}}\rho _{s}^{(\beta ,\omega
,\lambda ,\mathbf{A})}\left( \partial _{s}W_{s}^{\mathbf{A}}\right) \mathrm{d%
}s=\mathbf{S}^{(\omega ,\mathbf{A})}\left( t_{1}\right) =\mathbf{Q}^{(\omega
,\mathbf{A})}\left( t_{1}\right) \ .
\end{equation*}%
If
\begin{equation*}
\mathbf{Q}^{(\omega ,\mathbf{A})}\left( t\right) =\int_{t_{0}}^{t_{1}}\rho
_{s}^{(\beta ,\omega ,\lambda ,\mathbf{A})}\left( \partial _{s}W_{s}^{%
\mathbf{A}}\right) \mathrm{d}s>0
\end{equation*}%
for any $t\geq t_{1}$, a strictly positive amount of electromagnetic work is
absorbed by the infinite volume fermion system. We will show in a subsequent
paper that this situation (almost surely) appears for $\lambda >0$, as
expected from\ Joule's law.

For specific static potentials $V_{\omega }$ like constant ones, the heat
conduction in the infinite system still implies a dissipation of energy, or
thermalization, in the sense that, for any \emph{fixed} $L\in \mathbb{R}^{+}$%
,
\begin{equation}
\lim_{t\rightarrow \infty }\left\{ \rho _{t}^{(\beta ,\omega ,\lambda ,%
\mathbf{A})}(H_{L}^{(\omega ,\lambda )})-\varrho ^{(\beta ,\omega ,\lambda
)}(H_{L}^{(\omega ,\lambda )})\right\} =0\ .  \label{heat conduction}
\end{equation}%
The latter can be verified by explicit computations. Beside the special case
of constant potentials $V_{\omega }$, the situation is more complicated.
Indeed, the self--adjoint operator $\Delta _{\mathrm{d}}+\lambda V_{\omega }$
acting on $\ell ^{2}(\mathfrak{L})$ can have eigenvalues. In particular, the
energy $\mathbf{Q}^{(\omega ,\mathbf{A})}\left( t_{1}\right) $ for $t\geq
t_{1}$ could be stored in bound states, in contrast with the perfect
conducting case (\ref{heat conduction}). As a consequence, we can only hope
for an asymptotic version of the above result:
\begin{equation*}
\underset{\lambda \rightarrow 0}{\lim \sup }\lim_{t\rightarrow \infty
}\left\{ \rho _{t}^{(\beta ,\omega ,\lambda ,\mathbf{A})}(H_{L}^{(\omega
,\lambda )})-\varrho ^{(\beta ,\omega ,\lambda )}(H_{L}^{(\omega ,\lambda
)})\right\} =0
\end{equation*}%
for any $\beta \in \mathbb{R}^{+}$, $\omega \in \Omega $, $\mathbf{A}\in
\mathbf{C}_{0}^{\infty }$ and each $L\in \mathbb{R}^{+}$.

\begin{bemerkung}[Internal energies]
\label{Remark incr one-part-new copy(1)}\mbox{
}\newline
The internal energy as defined in \cite[Eq. (15)]{thermo2} rather
corresponds in our case to the total energy increment. Then, (\ref{work})
is, in Salem--Fr\"{o}hlich's interpretation, the expression of the 1st law
of thermodynamics. Indeed, we have a closed system which cannot exchange
heat energy with its surrounding like in \cite[Eq. (16)]{thermo2}. In their
view point, $\mathbf{P}^{(\omega ,\mathbf{A})}$ should be seen as a
Helmholtz free--energy, i.e., the available energy which can perform work.
In fact, the authors in \cite[Eq. (16)]{thermo2} focus on the heat exchanged
with the surrounding, whereas we do not consider it and concentrate our
study on the heat production within the fermion system.
\end{bemerkung}

\subsection{Heat Production at Small Electromagnetic Fields\label{Section
Heat Production as Power series}}

The physical situation we will use to investigate Joule and Ohm's laws is as
follows: We start with a macroscopic bulk containing conducting fermions.
This is idealized by taking an infinite system of non--interacting fermions
as explained above. Then, the heat production or the conductivity is
measured in a region which is very small w.r.t. the size of the bulk, but
very large w.r.t. the lattice spacing of the crystal.

We implement this hierarchy of space scales by rescaling vector potentials.
That means, for any $l\in \mathbb{R}^{+}$ and $\mathbf{A}\in \mathbf{C}%
_{0}^{\infty }$, we consider the space--rescaled vector potential
\begin{equation}
\mathbf{A}_{l}(t,x):=\mathbf{A}(t,l^{-1}x)\ ,\quad t\in \mathbb{R},\ x\in
\mathbb{R}^{d}\ .  \label{rescaled vector potential}
\end{equation}%
Then, to ensure that a macroscopic number of lattice sites is involved, we
eventually perform the limit $l\rightarrow \infty $. Indeed, the scaling
factor $l^{-1}$ used in (\ref{rescaled vector potential}) means, at fixed $l$%
, that the space scale of the electric field (\ref{V bar 0}) is
infinitesimal w.r.t. the macroscopic bulk (which is the whole space),
whereas the lattice spacing gets infinitesimal w.r.t. the space scale of the
vector potential when $l\rightarrow \infty $.

Furthermore, Ohm's law is a linear\emph{\ }response to electric fields.
Therefore, we also rescale the strength of the electromagnetic potential $%
\mathbf{A}_{l}$ by a real parameter $\eta \in \mathbb{R}$ and will
eventually take the limit $\eta \rightarrow 0$ in a subsequent paper.

In the limit $(\eta ,l^{-1})\rightarrow (0,0)$ it turns out that the heat
production $\mathbf{Q}^{(\omega ,\eta \mathbf{A}_{l})}$ or, equivalently,
the internal energy increment $\mathbf{S}^{(\omega ,\eta \mathbf{A}_{l})}$,
respectively defined by Definition \ref{Heat production definition} and (\ref%
{entropic energy increment}), are of order $\mathcal{O}\left( \eta
^{2}l^{d}\right) $. This can be understood in a physical sense by the fact that the
energy contained in the electromagnetic field, that is, its $L^{2}$--norm,
is also of order $\mathcal{O}\left( \eta ^{2}l^{d}\right) $, by classical
electrodynamics. Then, in order to get Joule and Ohm's laws, we need to give
an explicit expression for the term of order $\mathcal{O}(\eta ^{2}l^{d})$
of $\mathbf{Q}^{(\omega ,\eta \mathbf{A}_{l})}$, uniformly w.r.t. some
parameters. This is performed in Section \ref{section Energy Increments as
Power Series} by using two important tools, also used several times in
subsequent papers:

\begin{itemize}
\item A Dyson--Phillips expansion in terms of multi--commutators of the
strongly continuous two--parameter family $\{\tau _{t,s}^{(\omega ,\lambda ,%
\mathbf{A})}\}_{t\geq s}$ defined by (\ref{Cauchy problem 0}). See Section %
\ref{Section existence dynamics}.

\item Tree--decay bounds on multi--commutators as explained in Section \ref%
{section Tree--decay Bounds}.
\end{itemize}

\noindent Recall that multi--commutators are defined by induction as follows:%
\begin{equation}
{[}B_{1},B_{2}{]}^{(2)}:=[B_{1},B_{2}]:=B_{1}B_{2}-B_{2}B_{1}\ ,\qquad
B_{1},B_{2}\in \mathcal{U}\ ,  \label{multi1-0}
\end{equation}%
and, for all integers $k>2$,
\begin{equation}
{[}B_{1},B_{2},\ldots ,B_{k+1}{]}^{(k+1)}:={[}B_{1},{[}B_{2},\ldots ,B_{k+1}{%
]}^{(k)}{]}\ ,\quad B_{1},\ldots ,B_{k+1}\in \mathcal{U}\ .  \label{multi2-0}
\end{equation}

In fact, provided $\eta \in \mathbb{R}$ is sufficiently small, we get in
Section \ref{section Energy Increments as Power Series} a representation of $%
\mathbf{S}^{(\omega ,\eta \mathbf{A}_{l})}$ as a power series in $\eta $
such that all $k$--order terms in $\eta $ are of order $\mathcal{O}(l^{d})$,
as $l\rightarrow \infty $, i.e., they behave as the volume of the support of
the electromagnetic field.

\begin{satz}[Heat production at small fields]
\label{Thm Heat production as power series}\mbox{
}\newline
Let $\mathbf{A}\in \mathbf{C}_{0}^{\infty }$. Then the heat production has
the following properties: \newline
\emph{(i)} Multi--commutator series. There exists $\eta _{0}\equiv \eta _{0,%
\mathbf{A}}\in \mathbb{R}^{+}$ such that, for all $|\eta |\in \lbrack 0,\eta
_{0}]$, $l,\beta \in \mathbb{R}^{+}$, $\omega \in \Omega $, $\lambda \in
\mathbb{R}_{0}^{+}$ and $t\geq t_{0}$,%
\begin{eqnarray}
\mathbf{Q}^{(\omega ,\eta \mathbf{A}_{l})}\left( t\right)
&=&\sum\limits_{k\in {\mathbb{N}}}\sum\limits_{x,z\in \mathfrak{L},|z|\leq
1}i^{k}\langle \mathfrak{e}_{x},\left( \Delta _{\mathrm{d}}+\lambda
V_{\omega }\right) \mathfrak{e}_{x+z}\rangle \int_{t_{0}}^{t}\mathrm{d}%
s_{1}\cdots \int_{t_{0}}^{s_{k-1}}\mathrm{d}s_{k}  \notag \\
&&\varrho ^{(\beta ,\omega ,\lambda )}\left( [W_{s_{k}-t_{0},s_{k}}^{\eta
\mathbf{A}_{l}},\ldots ,W_{s_{1}-t_{0},s_{1}}^{\eta \mathbf{A}_{l}},\tau
_{t-t_{0}}^{(\omega ,\lambda )}(a_{x}^{\ast }a_{x+z})]^{(k+1)}\right)
\label{series naive}
\end{eqnarray}%
with $W_{t,s}^{\eta \mathbf{A}_{l}}:=\tau _{t}^{(\omega ,\lambda
)}(W_{s}^{\eta \mathbf{A}_{l}})\in \mathcal{U}$ for any $t,s\in \mathbb{R}$.
The above sum is absolutely convergent. \newline
\emph{(ii)} Uniform analyticity at $\eta =0$. The function $\eta \mapsto
\mathbf{Q}^{(\omega ,\eta \mathbf{A}_{l})}$ is real analytic on $\mathbb{R}$
and there exist $\eta _{1}\equiv \eta _{1,\mathbf{A}}\in \mathbb{R}^{+}$ and
$D\equiv D_{\mathbf{A}}\in \mathbb{R}^{+}$ such that, for all $l,\beta \in
\mathbb{R}^{+}$, $\omega \in \Omega $, $\lambda \in \mathbb{R}_{0}^{+}$, $%
t\geq t_{0}$ and $m\in \mathbb{N}$,
\begin{equation}
\left\vert \partial _{\eta }^{m}\mathbf{Q}^{(\omega ,\eta \mathbf{A}%
_{l})}\left( t\right) |_{\eta =0}\right\vert \leq Dl^{d}\left( \eta
_{1}^{-m}m!\right) \ .  \label{analytic bounds}
\end{equation}%
In particular, the Taylor series in $\eta $ of $l^{-d}\mathbf{Q}^{(\omega
,\eta \mathbf{A}_{l})}$ is absolutely convergent in a neighborhood of $\eta
=0$, uniformly in the parameters $l,\beta \in \mathbb{R}^{+}$, $\omega \in
\Omega $, $\lambda \in \mathbb{R}_{0}^{+}$ and $t\geq t_{0}$.
\end{satz}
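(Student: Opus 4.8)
The plan is to convert the relative--entropy quantity into an energy increment and then expand the latter as a multi--commutator (Dyson--Phillips) series that is controlled by the tree--decay bounds of Section \ref{section Tree--decay Bounds}. First I would invoke the 1st law of thermodynamics (Theorem \ref{main 1 copy(1)}) to replace $\mathbf{Q}^{(\omega ,\eta \mathbf{A}_l)}$ by the internal energy increment $\mathbf{S}^{(\omega ,\eta \mathbf{A}_l)}$, which is explicit. Writing $\rho_t=\varrho^{(\beta,\omega,\lambda)}\circ\tau_{t,t_0}^{(\omega,\lambda,\eta\mathbf{A}_l)}$, the starting point is
\[
\mathbf{S}^{(\omega ,\eta \mathbf{A}_l)}(t)=\lim_{L\to\infty}\varrho^{(\beta,\omega,\lambda)}\big(\tau_{t,t_0}^{(\omega,\lambda,\eta\mathbf{A}_l)}(H_L^{(\omega,\lambda)})-H_L^{(\omega,\lambda)}\big)\ ,
\]
as in (\ref{entropic energy increment}) and (\ref{def H loc}). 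The generator $\delta_t^{(\omega,\lambda,\eta\mathbf{A}_l)}$ of (\ref{Cauchy problem 1}) differs from the free generator by the inner derivation $i[W_t^{\eta\mathbf{A}_l},\,\cdot\,]$ attached to the potential energy observable (\ref{eq def W}), so the Dyson--Phillips expansion of Section \ref{Section existence dynamics} yields, at order $k$, a $(k+1)$--fold nested commutator of the interaction--picture potentials $W_{s_j-t_0,s_j}^{\eta\mathbf{A}_l}=\tau_{s_j-t_0}^{(\omega,\lambda)}(W_{s_j}^{\eta\mathbf{A}_l})$ against the freely evolved observable, integrated over the simplex $t_0\le s_k\le\cdots\le s_1\le t$. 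The zeroth--order term $\varrho^{(\beta,\omega,\lambda)}(\tau_{t-t_0}^{(\omega,\lambda)}(H_L^{(\omega,\lambda)})-H_L^{(\omega,\lambda)})$ vanishes by stationarity of the KMS state under the free dynamics (cf. (\ref{stationary})); inserting the nearest--neighbour matrix elements of $\Delta_{\mathrm d}+\lambda V_\omega$, which force $y=x+z$ with $|z|\le 1$, and passing to $L\to\infty$, reproduces exactly (\ref{series naive}).

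Everything then rests on the tree--decay bounds of Section \ref{section Tree--decay Bounds}, in particular Corollary \ref{tree bound main copy(1)}, applied to $\varrho^{(\beta,\omega,\lambda)}([W_{s_k-t_0,s_k}^{\eta\mathbf{A}_l},\ldots,\tau_{t-t_0}^{(\omega,\lambda)}(a_x^\ast a_{x+z})]^{(k+1)})$. Each $W_s^{\eta\mathbf{A}_l}$ is a sum of $\mathcal{O}(l^d)$ monomials $a_u^\ast a_v$ with coefficients of size $\mathcal{O}(|\eta|)$ (since $|\exp(-i\eta\int\cdots)-1|=\mathcal{O}(|\eta|)$), supported in the box of side $\sim l$. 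A naive norm estimate would bound the sum over the $k+1$ spatial arguments by $l^{(k+1)d}$, which is useless; the whole point of the tree--decay bound is that the nested commutator links the spatial arguments through a spanning tree whose edges carry summable decay, so that a single anchor sum survives unconstrained and yields one volume factor $l^d$, while the remaining $k$ spatial sums converge to $\mathcal{O}(1)$. Combined with the factor $|\eta|^k$ per order and the simplex volume $(t-t_0)^k/k!$ from the time integrals, I expect a per--order bound $D\,l^d(D|\eta|)^k/k!$ times the tree combinatorics, which after cancellation is geometric in $k$ and summable for $|\eta|\le\eta_0$, uniformly in $\beta,\lambda,\omega,t$. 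This uniform domination simultaneously proves the absolute convergence in part (i) and justifies interchanging $\lim_{L\to\infty}$ with the sum over $k$.

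For part (ii) I would argue analyticity in two steps. At fixed $l$, $W_s^{\eta\mathbf{A}_l}$ is a \emph{bounded} local observable that is entire in $\eta$ (its matrix elements are entire), so each finite--volume quantity above is entire in $\eta$; the locally uniform convergence of the $L\to\infty$ limit, again furnished by the tree--decay bounds, transfers real--analyticity to $\mathbf{Q}^{(\omega ,\eta \mathbf{A}_l)}$ on all of $\mathbb{R}$. The quantitative estimate (\ref{analytic bounds}) then follows from a Cauchy estimate: the tree--decay analysis above in fact bounds the series for complex $\eta$ in a disk $|\eta|\le\eta_1$ by $D\,l^d$, uniformly in $\beta,\lambda,\omega$ and $t\ge t_0$, whence holomorphy on that disk and $|\partial_\eta^m\mathbf{Q}^{(\omega ,\eta \mathbf{A}_l)}(t)|_{\eta=0}|\le D\,l^d\,\eta_1^{-m}\,m!$ by Cauchy's integral formula.

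The main obstacle, and the crux of the whole argument, is the spatial bookkeeping in the tree--decay step: proving that the $(k+1)$--fold commutator produces \emph{one} volume factor rather than $k+1$, with all auxiliary spatial sums controlled by summable decay, \emph{and} doing so with constants uniform in $k$ so that the geometric/factorial structure $(D|\eta|)^k/k!$ genuinely survives the tree combinatorics of Corollary \ref{tree bound main copy(1)}. Controlling the factorial growth from the number of spanning trees against the $1/k!$ from the time simplex, uniformly in $\beta,\lambda,\omega,t$, is precisely what fixes the radii $\eta_0,\eta_1$ and ultimately delivers the single--volume, quadratic--in--field behaviour anticipated in the introduction.
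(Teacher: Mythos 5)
For part (i) your route coincides with the paper's: Theorem \ref{main 1 copy(1)} reduces $\mathbf{Q}^{(\omega ,\eta \mathbf{A}_{l})}$ to the internal energy increment $\mathbf{S}^{(\omega ,\eta \mathbf{A}_{l})}$, the Dyson--Phillips expansion (\ref{Dyson tau 1}) applied to $H_{L}^{(\omega ,\lambda )}$ gives (\ref{incr lim 3}), and Lemma \ref{bound incr 1 Lemma} (built on Corollary \ref{tree bound main copy(1)}) supplies the uniform-in-$L$ tail control that justifies the limit $L\rightarrow \infty $ and the absolute convergence, yielding (\ref{energy increment bis}). You also correctly identify the decisive cancellation: $|\mathcal{T}_{k+1}|=k!$ against the $1/k!$ of the time simplex, which is exactly what the paper exploits in (\ref{tototree2}) and what forces $\eta _{0}$ to be small.

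For part (ii) you deviate in two places. The bound (\ref{analytic bounds}) you obtain by complexifying $\eta $ and invoking Cauchy's formula on a disk where the tree-controlled series is bounded by $Dl^{d}$; the paper instead differentiates the series term by term (Lemma \ref{bound incr 1 Lemma copy(1)}, using that $\partial _{\eta }$ is closed w.r.t.\ uniform convergence) and obtains a slightly stronger statement. Both are legitimate. The genuine gap is your argument for real analyticity on \emph{all} of $\mathbb{R}$: you propose to transfer entireness from finite volume via locally uniform convergence ``furnished by the tree-decay bounds'', but the tree-decay estimate only makes the multi-commutator series converge when the effective expansion parameter satisfies $CK_{\eta }(t_{1}-t_{0})<1$; since $K_{\eta }$ from (\ref{eq:W sumbisbis}) saturates at $2\Vert \Delta _{\mathrm{d}}\Vert _{\mathrm{op}}$ for large $|\eta |$, neither the series representation nor the uniform tail control is available outside a small disk, so this route only gives analyticity near $\eta =0$. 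The paper circumvents this by deriving global analyticity from the work identity of Theorem \ref{coro heat production2} (see Corollary \ref{coro heat production1 copy(1)}): $\mathbf{S}+\mathbf{P}$ equals $\int_{t_{0}}^{t}\rho _{s}(\partial _{s}W_{s}^{\eta \mathbf{A}})\,\mathrm{d}s$, whose Dyson--Phillips expansion in the \emph{bounded} local perturbation converges for every $\eta $ at fixed $l$ (with $l$-dependent constants), and $\eta \mapsto W_{s}^{\eta \mathbf{A}}$ is entire. You should route the global analyticity claim through that identity and reserve the tree-decay machinery for the $l^{d}$-uniform estimates at $\eta =0$.
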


\begin{proof}
To prove (i), combine Theorem \ref{main 1 copy(1)} with Equation (\ref%
{energy increment bis}). See also Lemma \ref{bound incr 1 Lemma}. The second
assertion (ii) is a direct consequence of Corollary \ref{coro heat
production1 copy(1)} and Lemma \ref{bound incr 1 Lemma copy(1)} together
with Theorem \ref{main 1 copy(1)}. Note that Lemma \ref{bound incr 1 Lemma
copy(1)} shows slightly stronger bounds than (\ref{analytic bounds}).
\end{proof}

Note that $\mathbf{Q}^{(\omega ,0)}\left( t\right) =0$ and thus, (\ref{work}%
) directly gives the estimate
\begin{equation*}
\mathbf{Q}^{(\omega ,\eta \mathbf{A}_{l})}\left( t\right) -\mathbf{Q}%
^{(\omega ,0)}\left( t\right) =\mathcal{O}(\left\vert \eta \right\vert l^{d})
\end{equation*}%
for the rest of order one of the Taylor expansion of $\mathbf{Q}^{(\omega
,\eta \mathbf{A}_{l})}$. This is a special case of Theorem \ref{Thm Heat
production as power series} (ii) which implies, for all $M\in \mathbb{N}$
and $\eta \in \lbrack 0,\eta _{1}]$, that%
\begin{equation}
\mathbf{Q}^{(\omega ,\eta \mathbf{A}_{l})}\left( t\right) -\overset{M}{%
\sum\limits_{m=1}}\frac{\eta ^{m}}{m!}\left( \partial _{\eta }^{m}\mathbf{Q}%
^{(\omega ,\eta \mathbf{A}_{l})}\left( t\right) |_{\eta =0}\right) =\mathcal{%
O}(\left\vert \eta \right\vert ^{M+1}l^{d})\ .  \label{truc}
\end{equation}

By explicit computations, the Taylor coefficients of order zero and one of
the function $\eta \mapsto \mathbf{Q}^{(\omega ,\eta \mathbf{A}_{l})}\left(
t\right) $ always vanish. Hence, using Theorem \ref{Thm Heat production as
power series} (ii), one shows that%
\begin{equation}
l^{-d}\mathbf{Q}^{(\omega ,\eta \mathbf{A}_{l})}(t)=\mathcal{O}(\eta ^{2})+%
\mathcal{O}(\left\vert \eta \right\vert ^{3})\ .  \label{this}
\end{equation}%
The term $\mathcal{O}\left( \eta ^{2}\right) $ can be made explicit whereas
the correction term of order $\mathcal{O}(\eta ^{3})$ is uniformly bounded
in $l,\beta \in \mathbb{R}^{+}$, $\omega \in \Omega $, $\lambda \in \mathbb{R%
}_{0}^{+}$ and $t\geq t_{0}$. The detailed analysis of the leading term $%
\mathcal{O}(\eta ^{2})$ is postponed to a subsequent paper.

As a consequence, for any $\beta \in \mathbb{R}^{+}$, $\omega \in \Omega $, $%
\lambda \in \mathbb{R}_{0}^{+}$, $\mathbf{A}\in \mathbf{C}_{0}^{\infty }$
and $t\in \mathbb{R}$, one can analyze the density $\mathbf{q}\equiv \mathbf{%
q}^{(\beta ,\omega ,\lambda ,\mathbf{A})}$ of heat production by the limits%
\begin{multline*}
\mathbf{q}\left( t\right) :=\underset{(\eta ,l^{-1})\rightarrow (0,0)}{\lim }%
\left\{ \left( \eta ^{2}l^{d}\right) ^{-1}\mathbf{Q}^{(\omega ,\eta \mathbf{A%
}_{l})}\left( t\right) \right\} \\
=\underset{(\eta ,l^{-1})\rightarrow (0,0)}{\lim }\left\{ \left( \eta
^{2}l^{d}\right) ^{-1}\mathbf{S}^{(\omega ,\eta \mathbf{A}_{l})}\left(
t\right) \right\} :=\mathbf{s}\left( t\right) \ ,
\end{multline*}%
see Theorem \ref{main 1 copy(1)}. This study will lead to Joule's law, which
describes the rate at which resistance in the fermion system converts
electric energy into heat energy. The details of such a study, like for
instance the existence of the above limits, are the subject of a companion
paper.

By (\ref{truc}), the density of heat production should be a real analytic
function at $\eta =0$. Hence, Theorem \ref{Thm Heat production as power
series} makes also possible the study of non--quadratic (resp. non--linear)
corrections to Joule's law (resp. Ohm's law).

\section{Tree--Decay Bounds\label{section Tree--decay Bounds}}

Remark that
\begin{equation*}
W_{t,s}^{\eta \mathbf{A}_{l}}:=\tau _{t}^{(\omega ,\lambda )}(W_{s}^{\eta
\mathbf{A}_{l}})=\mathcal{O}\left( \left\vert \eta \right\vert l^{d}\right)
\end{equation*}%
for any $t,s\in \mathbb{R}$ and $\mathbf{A}\in \mathbf{C}_{0}^{\infty }$,
see also (\ref{rescaled vector potential}). Thus, using Equation (\ref%
{series naive}), naive bounds on its r.h.s. predict that, for some constant $%
D>1$,
\begin{equation*}
\mathbf{Q}^{(\omega ,\eta \mathbf{A}_{l})}\left( t\right) =\mathcal{O}%
(D^{\left\vert \eta \right\vert l^{d}})\ .
\end{equation*}%
To obtain the much more accurate estimate
\begin{equation}
\mathbf{Q}^{(\omega ,\eta \mathbf{A}_{l})}\left( t\right) =\mathcal{O}\left(
\eta ^{2}l^{d}\right)  \label{eq la}
\end{equation}%
and to prove Theorem \ref{Thm Heat production as power series}, we need good
bounds on the multi--commu%
%TCIMACRO{\TeXButton{\-}{\-}}%
%BeginExpansion
\-%
%EndExpansion
tators in the series (\ref{series naive}). This is achieved by using the
so--called \emph{tree--decay bounds} on the expectation of such multi--commu%
%TCIMACRO{\TeXButton{\-}{\-}}%
%BeginExpansion
\-%
%EndExpansion
tators. Indeed, tree--decay bounds we derive here are a useful tool to
control multi--commu%
%TCIMACRO{\TeXButton{\-}{\-}}%
%BeginExpansion
\-%
%EndExpansion
tators of products of annihilation and creation operators. This technique
will also be used many times in subsequent papers in order to derive Joule
and Ohm's laws.

Observe that (\ref{eq la}) implies thermodynamic behavior of the heat
production w.r.t. $l\in \mathbb{R}^{+}$, i.e., $\mathbf{Q}^{(\omega ,\eta
\mathbf{A}_{l})}$ is proportional to the volume $l^{d}$. This kind of issue
is well--known in statistical physics of interacting systems where cluster
or graph expansions are used to obtain such a behavior for quantities like
the free--energy or the ground--state energy at large volumes. In the
langage of construtive physics, the main result of the present section, that
is, Corollary \ref{tree bound main copy(1)}, yields the convergence of a
tree--expansion for the heat production.

The proof of Corollary \ref{tree bound main copy(1)} uses Theorem~\ref{tree
bound main} as an important ingredient. The latter is a tree--expansion for
multi--commu%
%TCIMACRO{\TeXButton{\-}{\-}}%
%BeginExpansion
\-%
%EndExpansion
tators of monomials in annihilation and creation operators. Such kind of
combinatorial result was already used before, for instance in \cite%
{FroehlichMerkliUeltschi}. In fact, Theorem~\ref{tree bound main} is very
similar to arguments used in \cite[Section 4]{FroehlichMerkliUeltschi}.

Before going into details, let us first illustrate what will be proven in
Theorem~\ref{tree bound main}. The aim is to simplify $N$--fold multi--commu%
%TCIMACRO{\TeXButton{\-}{\-}}%
%BeginExpansion
\-%
%EndExpansion
tators of monomials in annihilation and creation operators, as for example
\begin{equation}
\lbrack a^{\ast }(\psi _{1})a(\psi _{2})a^{\ast }(\psi _{3})a^{\ast }(\psi
_{4}),a^{\ast }(\psi _{5})a(\psi _{6}),\ldots ]^{(N)}  \label{example}
\end{equation}%
with $\psi _{1},\psi _{2},\ldots \in \ell ^{2}(\mathfrak{L})$. See (\ref%
{multi1-0})--(\ref{multi2-0}) for the precise definition of
multi--commutators. At a first glance one expects sums over monomials
containing all occurring annihilation and creation operators. Because of the
structure of the multi--commutator, there are certain terms that can be
summed up, getting then monomials containing all occurring annihilation and
creation operators except two, times the anti--commutator of those two, see (%
\ref{eq:FroehlichMerkliUeltschi}). This is useful because the
anti--commutator is a multiple of the identity, c.f.~\eqref{CAR}. This
procedure can be iterated $N-1$ times in order to reduce the number of
annihilation and creation operators in the remaining monomials. As one might
expect, only pairs of creation and annihilation operators that come from
\emph{different} entries of the multi--commutator can be removed. This is
why we consider in the following a family of trees, similar to \cite%
{FroehlichMerkliUeltschi}. The $N-1$ edges (bonds) of those trees
(containing $N$ vertices) represent the contractions of annihilation and
creation operators into anti--commutators. The vertices of such trees stand
for the $N$ entries of the $N$--fold multi--commutator.

Now, we need to introduce some notation to express the monomials in
annihilation and creation operators in a convenient way, before formulating
Theorem~\ref{tree bound main}. Each of the entries of the $N$--fold
multi--commutator is a product of annihilation and creation operators, which
we characterize by certain finite index sets $\bar{\Lambda}_{1},\Lambda
_{1},\ldots ,\bar{\Lambda}_{N},\Lambda _{N}\subset \mathbb{N}$, where the
set $\bar{\Lambda}_{i}$ refers to creation operators in entry $i$ and $%
\Lambda _{i}$ to annihilation operators in the same entry. For example, we
choose for (\ref{example}) the sets%
\begin{equation}
\bar{\Lambda}_{1}=\{1,3,4\},\quad \Lambda _{1}=\{2\},\quad \bar{\Lambda}%
_{2}=\{5\},\quad \Lambda _{2}=\{6\},\ \ldots  \label{examplebis}
\end{equation}%
The kind of products we are interested in allows us to restrict our
considerations to index sets $\bar{\Lambda}_{1},\Lambda _{1},\ldots ,\bar{%
\Lambda}_{N},\Lambda _{N}\subset \mathbb{N}$ that are non--empty, mutually
disjoint and such that
\begin{equation*}
\left\vert \bar{\Lambda}_{j}\right\vert +\left\vert \Lambda _{j}\right\vert
:=2n_{j}\in 2{\mathbb{N}}\ ,
\end{equation*}%
for all $j\in \{1,\ldots ,N\}$. Hence, each entry in the multi--commutator
contains an even number of annihilation and creation operators. To shorten
the notation we set
\begin{equation*}
\Omega _{j}:=(\{+\}\times \bar{\Lambda}_{j})\cup (\{-\}\times \Lambda _{j})\
,
\end{equation*}%
for all $j\in \{1,\ldots ,N\}$. To determine the position of annihilation
and creation operators in the monomial of the $j$th entry we choose a
numbering of $\Omega _{j}$, that is, a bijective map%
\begin{equation}
\pi _{j}:\{1,\ldots ,2n_{j}\}\rightarrow \Omega _{j}\ .  \label{property pij}
\end{equation}%
In the example (\ref{example})--(\ref{examplebis}),
\begin{equation*}
\Omega _{1}=\{(+,1),(+,3),(+,4),(-,2)\}
\end{equation*}%
and its numbering is defined by
\begin{equation*}
\pi _{1}\left( 1\right) =(+,1),\ \pi _{1}\left( 2\right) =(-,2),\ \pi
_{1}\left( 3\right) =(+,3),\ \pi _{1}\left( 4\right) =(+,4)\ .
\end{equation*}%
Furthermore, for all $x\in \bigcup_{j=1}^{N}\bar{\Lambda}_{j}\cup \Lambda
_{j}$, let $\psi _{x}\in \ell ^{2}(\mathfrak{L})$ be the corresponding wave
function and denote (only in this section) the annihilation and creation operators respectively by
\begin{equation*}
a(-,x):=a(\psi _{x})\qquad \text{and}\qquad a(+,x):=a^{\ast }(\psi _{x})\ .
\end{equation*}%
Using this notation,
we then define the monomials
\begin{equation}
\mathfrak{p}_{j}:=\prod_{k=1}^{2n_{j}}a(\pi _{j}(k))  \label{pj def}
\end{equation}%
in $a(\pm ,x)$ for all $j\in \{1,\ldots ,N\}$. Recall that $\mathfrak{p}_{j}$
is the $j$th entry in the $N$--fold multi--commutator.

To formulate Theorem~\ref{tree bound main}, we need two more things. Recall
that a tree is a connected graph that has no loops. Here, we have a finite
number of labeled vertices, denoted by $1,\ldots ,N$, and (non--oriented)
bonds between these vertices. For example, the bond connecting vertices $i$
and $j$ is denoted by $\{i,j\}=\{j,i\}$. A tree is characterized by the set
of its $N-1$ bonds. The family of trees we use is defined as follows: Let $%
\mathcal{T}_{2}$ be the set of all trees with exactly two vertices. This set
contains a unique tree $T=\{\{1,2\}\}$ which, in turn, contains the unique
bond $\{1,2\}$, i.e., $\mathcal{T}_{2}:=\left\{ \left\{ \{1,2\}\right\}
\right\} $. Then, for each integer $N\geq 3$, we recursively define the set $%
\mathcal{T}_{N}$ of trees with $N$ vertices by
\begin{equation}
\mathcal{T}_{N}:=\Big\{\{\{k,N\}\}\cup T\text{ }:\text{ }k=1,\ldots
,N-1,\quad T\in \mathcal{T}_{N-1}\Big\}\ .  \label{def.tree}
\end{equation}%
In other words, $\mathcal{T}_{N}$ is the set of all trees with vertex set $%
\mathcal{V}_{N}:=\{1,\ldots N\}$ for which $N\in \mathcal{V}_{N}$ is a leaf,
and if the leaf $N$ is removed, the vertex $N-1$ is a leaf in the remaining
tree and so on.

Now, for every tree $T\in \mathcal{T}_{N}$, we define maps $\mathbf{x},%
\mathbf{y}:T\rightarrow \bigcup_{j=1}^{N}\Omega _{j}$ that choose, for each
bond $\{i,j\}\in T$, a point in the set $\Omega _{i}$ and one point in the
set $\Omega _{j}$, respectively. More precisely, we assume for $i<j$ that $%
\mathbf{x}(\{i,j\})\in \Omega _{i}$ and $\mathbf{y}(\{i,j\})\in \Omega _{j}$%
. The induced orientation of the bond is completely arbitrary, because of
the symmetry of anti--commutators. The set of all those maps is given by%
%TCIMACRO{%
%\TeXButton{K tilte}{\begin{eqnarray*}
%\mathcal{K}_{T} &:=&\left\{ \left( \mathbf{x},\mathbf{y}\right) \text{ }|\text{ }\mathbf{x},\mathbf{y}:T\rightarrow \cup _{j=1}^{N}\Omega _{j}\text{ }\right. \\
%&&\text{ }\left. \text{with }\mathbf{x}(b)\in \Omega _{i},\ \mathbf{y}(b)\in
%\Omega _{j}\text{ for }b=\left\{ i,j\right\} \in T,\ i<j\right\} \ .
%\end{eqnarray*}}}%
%BeginExpansion
\begin{eqnarray*}
\mathcal{K}_{T} &:=&\left\{ \left( \mathbf{x},\mathbf{y}\right) \text{ }|\text{ }\mathbf{x},\mathbf{y}:T\rightarrow \cup _{j=1}^{N}\Omega _{j}\text{ }\right. \\
&&\text{ }\left. \text{with }\mathbf{x}(b)\in \Omega _{i},\ \mathbf{y}(b)\in
\Omega _{j}\text{ for }b=\left\{ i,j\right\} \in T,\ i<j\right\} \ .
\end{eqnarray*}%
%EndExpansion

We are finally ready to express a $N$--fold multi--commutators of products
of annihilation and creation operators as a sum over trees $T\in \mathcal{T}%
_{N}$ of monomials in annihilation and creation operators:

\begin{satz}[Multi--commutators as sum over trees]
\label{tree bound main}\mbox{}\newline
Let $N\geq 2$. Then, for all $T\in \mathcal{T}_{N}$ and $(\mathbf{x},\mathbf{%
y})\in \mathcal{K}_{T}$, there are constants
\begin{equation*}
\mathrm{m}_{T}(\mathbf{x},\mathbf{y})\in \{-1,0,1\}
\end{equation*}%
and injective maps
\begin{equation*}
\pi _{T}\left( \mathbf{x},\mathbf{y}\right) :\left\{ 1,2,\ldots ,2\overline{%
\mathrm{N}}\right\} \rightarrow \bigcup_{j=1}^{N}\Omega _{j}\backslash
\left( \mathbf{x}(T)\cup \mathbf{y}(T)\right)
\end{equation*}%
where $\overline{\mathrm{N}}:=\sum_{j=1}^{N}n_{j}-(N-1)\geq 1$, such that
\begin{eqnarray}
\left[ \mathfrak{p}_{N},\ldots ,\mathfrak{p}_{1}\right] ^{(N)} &=&\sum_{T\in
\mathcal{T}_{N}}\sum_{(\mathbf{x},\mathbf{y})\in \mathcal{K}_{T}}\mathrm{m}%
_{T}\left( \mathbf{x},\mathbf{y}\right) \mathfrak{p}_{T}\left( \mathbf{x},%
\mathbf{y}\right) \prod_{b\in T}\left\{ a\left( \mathbf{x}(b)\right)
,a\left( \mathbf{y}(b)\right) \right\} \ ,  \notag \\
&&  \label{196 bis}
\end{eqnarray}%
with $\{B_{1},B_{2}\}:=B_{1}B_{2}+B_{2}B_{1}$ being the anti--commutator of $%
B_{1},B_{2}\in \mathcal{U}$ and%
\begin{equation*}
\mathfrak{p}_{T}(\mathbf{x},\mathbf{y}):=\prod_{k=1}^{2\overline{\mathrm{N}}%
}a(\pi _{T}(\mathbf{x},\mathbf{y})(k))\ .
\end{equation*}
\end{satz}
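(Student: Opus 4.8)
The plan is to argue by induction on the number $N$ of entries, exploiting the recursive definition (\ref{def.tree}) of the tree family $\mathcal{T}_N$, in which the last vertex $N$ is always a leaf. The combinatorial heart of the matter is that each nesting of the multi--commutator forces exactly one new contraction between the freshly adjoined entry and one of the entries already present, so that the contraction pattern of every surviving term is precisely a spanning tree on the $N$ vertices.

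For the base case $N=2$ I would expand the ordinary commutator $[\mathfrak{p}_2,\mathfrak{p}_1]=\mathfrak{p}_2\mathfrak{p}_1-\mathfrak{p}_1\mathfrak{p}_2$ by repeatedly moving the operators of one monomial through those of the other by means of the canonical anti--commutation relations (\ref{CAR}). Since each elementary swap produces a scalar anti--commutator $\{a(\mathbf{x}),a(\mathbf{y})\}$, which vanishes unless the two operators are of opposite type, the difference of the two orderings telescopes into a sum over single contractions: one operator chosen from $\Omega_1$ and one from $\Omega_2$, the associated anti--commutator, and the product of the remaining operators, each term carrying a sign determined by the positions of the two contracted operators. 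This is exactly the single--bond case $\mathcal{T}_2=\{\{\{1,2\}\}\}$, and it is the identity underlying the Fr\"{o}hlich--Merkli--Ueltschi scheme \cite{FroehlichMerkliUeltschi}.

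For the inductive step I would write
\[
[\mathfrak{p}_N,\ldots,\mathfrak{p}_1]^{(N)}=\big[\mathfrak{p}_N,\,[\mathfrak{p}_{N-1},\ldots,\mathfrak{p}_1]^{(N-1)}\big]
\]
using (\ref{multi2-0}), apply the induction hypothesis to the inner $(N-1)$--fold multi--commutator, and then commute $\mathfrak{p}_N$ into each resulting term. The crucial simplification is that the anti--commutator factors $\prod_{b\in T'}\{a(\mathbf{x}(b)),a(\mathbf{y}(b))\}$ produced by the induction hypothesis are scalars, hence central, so they pass through $\mathfrak{p}_N$ untouched; only the commutator of $\mathfrak{p}_N$ with the remaining monomial $\mathfrak{p}_{T'}(\mathbf{x},\mathbf{y})$ has to be evaluated, and for this I apply the base case again. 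Each term of that commutator contracts one operator of entry $N$ with one free operator belonging to some entry $k\in\{1,\ldots,N-1\}$; adjoining the bond $\{k,N\}$ to the tree $T'\in\mathcal{T}_{N-1}$ yields precisely a tree in $\mathcal{T}_N$ by (\ref{def.tree}), and after relabeling the sums over $T'$ and over the contracted operator regroup into the claimed $\sum_{T\in\mathcal{T}_N}\sum_{(\mathbf{x},\mathbf{y})\in\mathcal{K}_T}$. An operator count confirms consistency: the inner expansion leaves $2(\sum_{j=1}^{N-1}n_j-(N-2))$ free operators, and adjoining the $2n_N$ operators of $\mathfrak{p}_N$ minus the two consumed by the new contraction gives $2(\sum_{j=1}^{N}n_j-(N-1))=2\overline{\mathrm{N}}$, matching the stated domain of $\pi_T(\mathbf{x},\mathbf{y})$; injectivity of $\pi_T$ and well--definedness of $\mathfrak{p}_T$ follow from the mutual disjointness of the index sets $\bar{\Lambda}_j,\Lambda_j$.

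The step I expect to be the main obstacle is the sign bookkeeping needed to show $\mathrm{m}_T(\mathbf{x},\mathbf{y})\in\{-1,0,1\}$. At each level of the recursion one accumulates a reordering sign from permuting fermionic operators past the newly contracted pair, and one must verify that these signs combine, over the whole recursion, into a single well--defined constant that is independent of the order in which the bonds are extracted. The structural claim, that only spanning trees survive, is immediate from the recursion; it is this sign coherence, together with the fact that same--type contractions contribute the value $0$ through the vanishing anti--commutators, that requires the careful, if essentially routine, combinatorial accounting.
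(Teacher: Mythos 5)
Your proposal is correct and follows essentially the same route as the paper: the base case is the Fr\"{o}hlich--Merkli--Ueltschi single--contraction identity for the commutator of two even monomials, and the inductive step exploits that the accumulated anti--commutators are scalars, so that commuting the new entry $\mathfrak{p}_N$ against $\mathfrak{p}_{T'}(\mathbf{x},\mathbf{y})$ produces exactly one new bond $\{k,N\}$, matching the recursive definition of $\mathcal{T}_N$. The sign bookkeeping you flag is handled in the paper simply by recording a factor $(-1)^{k_1+1}$ at each level and absorbing repeated choices of an already--contracted operator into the value $\mathrm{m}_T(\mathbf{x},\mathbf{y})=0$.
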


\begin{proof}
We first observe that, for any integers $n_{1},n_{2}\in \mathbb{N}$ and all
elements $B_{1},\ldots ,B_{2n_{2}}\in \mathcal{U}$ and $\tilde{B}_{1},\ldots
,\tilde{B}_{2n_{1}}\in \mathcal{U}$,
\begin{eqnarray}
&&\left[ B_{1}\ldots B_{2n_{2}},\tilde{B}_{1}\ldots \tilde{B}_{2n_{1}}\right]
\label{eq:FroehlichMerkliUeltschi} \\
&=&\underset{{1\leq k_{1}\leq }2n_{1}}{\underset{{1\leq k_{2}\leq }2n_{2}}{%
\sum }}(-1)^{k_{1}+1}B_{1}\ldots B_{k_{2}-1}\tilde{B}_{1}\ldots \tilde{B}%
_{k_{1}-1}  \notag \\
&&\hspace{1cm}\hspace{1cm}\quad \times \{B_{k_{2}},\tilde{B}_{k_{1}}\}\tilde{%
B}_{k_{1}+1}\ldots \tilde{B}_{2n_{1}}B_{k_{2}+1}\ldots B_{2n_{2}}\ ,  \notag
\end{eqnarray}%
see \cite[Eq. (4.18)]{FroehlichMerkliUeltschi}. Note also that, for ${k_{2}=1%
}$, one obtains
\begin{equation*}
(-1)^{k_{1}+1}B_{1}\ldots B_{1-1}\tilde{B}_{1}\ldots \tilde{B}%
_{k_{1}-1}\{B_{1},\tilde{B}_{k_{1}}\}\tilde{B}_{k_{1}+1}\ldots \tilde{B}%
_{2n_{1}}B_{2}\ldots B_{2n_{2}}\ .
\end{equation*}%
This has to be understood of course as
\begin{equation*}
(-1)^{k_{1}+1}\tilde{B}_{1}\ldots \tilde{B}_{k_{1}-1}\{B_{1},\tilde{B}%
_{k_{1}}\}\tilde{B}_{k_{1}+1}\ldots \tilde{B}_{2n_{1}}B_{2}\ldots
B_{2n_{2}}\ .
\end{equation*}%
Similar remarks can be done for the cases ${k_{1}=1,}2n_{1}$ and ${k_{2}=}%
2n_{2}$. We now prove the assertion by induction.

For $N=2$, the set $\mathcal{T}_{2}:=\left\{ \{\{1,2\}\}\right\} $ consists
of only one tree $T=\{\{1,2\}\}$. Using (\ref{pj def}) and (\ref%
{eq:FroehlichMerkliUeltschi}) we get%
\begin{align}
\left[ \mathfrak{p}_{2},\mathfrak{p}_{1}\right] & =\underset{{1\leq
k_{1}\leq }2n_{1}}{\underset{{1\leq k_{2}\leq }2n_{2}}{\sum }}%
(-1)^{k_{1}+1}a(\pi _{2}(1))\ldots a(\pi _{2}(k_{2}-1))a(\pi _{1}(1))\ldots
a(\pi _{1}(k_{1}-1))  \notag \\
& \hspace{1cm}\hspace{1cm}\quad \times \{a(\pi _{2}(k_{2})),a(\pi
_{1}(k_{1}))\}a(\pi _{1}(k_{1}+1))\ldots a(\pi _{1}(2n_{1}))  \notag \\
& \hspace{1cm}\hspace{1cm}\hspace{1cm}\hspace{1cm}\quad \times a(\pi
_{2}(k_{2}+1))\ldots a(\pi _{2}(2n_{2}))\ .  \label{197 bis}
\end{align}%
Note that $\{a(\pi _{2}(k_{2})),a(\pi _{1}(k_{1}))\}$ is always a multiple
of the identity in $\mathcal{U}$, see (\ref{CAR}) and (\ref{creation
operators}). Therefore, the assertion for $N=2$ directly follows from the
previous equality by observing that the sum over $k_{1}$ and $k_{2}$ in (\ref%
{197 bis}) corresponds to the sum over $(\mathbf{x},\mathbf{y})\in \mathcal{K%
}_{\{\{1,2\}\}}$ in (\ref{196 bis}) by choosing%
%TCIMACRO{%
%\TeXButton{pt}{\begin{eqnarray}
%\mathfrak{p}_{\{\{1,2\}\}}(\mathbf{x},\mathbf{y}) &:=&a(\pi _{2}(1))\ldots
%a(\pi _{2}(k_{2}-1))a(\pi _{1}(1))\ldots a(\pi _{1}(k_{1}-1))  \label{pt} \\
%&&\quad \times \ a(\pi _{1}(k_{1}+1))\ldots a(\pi _{1}(2n_{1}))a(\pi
%_{2}(k_{2}+1))\ldots a(\pi _{2}(2n_{2}))\   \notag
%\end{eqnarray}}}%
%BeginExpansion
\begin{eqnarray}
\mathfrak{p}_{\{\{1,2\}\}}(\mathbf{x},\mathbf{y}) &:=&a(\pi _{2}(1))\ldots
a(\pi _{2}(k_{2}-1))a(\pi _{1}(1))\ldots a(\pi _{1}(k_{1}-1))  \label{pt} \\
&&\quad \times \ a(\pi _{1}(k_{1}+1))\ldots a(\pi _{1}(2n_{1}))a(\pi
_{2}(k_{2}+1))\ldots a(\pi _{2}(2n_{2}))\   \notag
\end{eqnarray}%
%EndExpansion
for%
\begin{eqnarray*}
\mathbf{x}(\{1,2\}) &=&\pi _{1}(k_{1})\in \Omega _{1}\ ,\qquad \ {k_{1}\in }%
\left\{ 1,\ldots ,2n_{1}\right\} \ , \\
\mathbf{y}(\{1,2\}) &=&\pi _{2}(k_{2})\in \Omega _{2}\ ,\qquad \ {k_{2}\in }%
\left\{ 1,\ldots ,2n_{2}\right\} \ .
\end{eqnarray*}%
Indeed, for $\left( \mathbf{x},\mathbf{y}\right) \in \mathcal{K}%
_{\{\{1,2\}\}}$ as above, the constant $\mathrm{m}_{\{\{1,2\}\}}(\mathbf{x},%
\mathbf{y})$ equals $(-1)^{k_{1}+1}\in \{-1,1\}\subset \{-1,0,1\}$, whereas
the associated map
\begin{equation*}
\pi _{\{\{1,2\}\}}\left( \mathbf{x},\mathbf{y}\right) :\left\{ 1,2,\ldots ,2%
\overline{\mathrm{N}}\right\} \rightarrow \Omega _{1}\cup \Omega
_{2}\backslash \left( \mathbf{x}(\{\{1,2\}\})\cup \mathbf{y}%
(\{\{1,2\}\})\right)
\end{equation*}%
with
\begin{equation*}
\overline{\mathrm{N}}:=\left( n_{1}+n_{2}\right) -1\geq 1
\end{equation*}%
depends on the order of the factors in the r.h.s. of (\ref{pt}):%
\begin{equation*}
\pi _{\{\{1,2\}\}}\left( \mathbf{x},\mathbf{y}\right) \left( k\right)
:=\left\{
\begin{array}{lll}
\pi _{2}(k) & , & {k\in }\left\{ 1,2,\ldots ,k_{2}-1\right\} \ . \\
\pi _{1}(k-k_{2}+1) & , & {k\in }\left\{ k_{2},\ldots ,k_{2}+k_{1}-2\right\}
\ . \\
\pi _{1}(k-k_{2}+2) & , & {k\in }\left\{ k_{2}+k_{1}-1,\ldots
,2n_{1}-2+k_{2}\right\} \ . \\
\pi _{2}(k-2n_{1}+2) & , & {k\in }\left\{ 2n_{1}-2+k_{2}+1,\ldots ,2%
\overline{\mathrm{N}}\right\} \ .%
\end{array}%
\right.
\end{equation*}

We assume now that the assertion holds for some fixed integer $N\geq 2$.
Recall that $N$--fold multi--commutators are defined by (\ref{multi1-0})--(%
\ref{multi2-0}). In particular,
\begin{equation*}
{[\mathfrak{p}_{N+1},\ldots ,\mathfrak{p}_{1}]}^{(N+1)}={[\mathfrak{p}_{N+1}}%
,{[\mathfrak{p}_{N}},\ldots ,\mathfrak{p}_{1}{]}^{(N)}{]}
\end{equation*}%
where, by assumption,
\begin{equation*}
{[\mathfrak{p}_{N}},\ldots ,\mathfrak{p}_{1}{]}^{(N)}=\sum\limits_{T\in
\mathcal{T}_{N}}\sum\limits_{(\mathbf{x},\mathbf{y})\in \mathcal{K}_{T}}%
\mathrm{m}_{T}\left( \mathbf{x},\mathbf{y}\right) \mathfrak{p}_{T}\left(
\mathbf{x},\mathbf{y}\right) \prod\limits_{b\in T}\left\{ a\left( \mathbf{x}%
(b)\right) ,a\left( \mathbf{y}(b)\right) \right\} \ ,
\end{equation*}%
as stated in the theorem. Therefore,%
\begin{eqnarray}
{[\mathfrak{p}_{N+1},\ldots ,\mathfrak{p}_{1}]}^{(N+1)} &=&\sum\limits_{T\in
\mathcal{T}_{N}}\sum\limits_{(\mathbf{x},\mathbf{y})\in \mathcal{K}_{T}}%
\mathrm{m}_{T}\left( \mathbf{x},\mathbf{y}\right) {[\mathfrak{p}_{N+1}},%
\mathfrak{p}_{T}\left( \mathbf{x},\mathbf{y}\right) ]  \notag \\
&&\hspace{1cm}\hspace{1cm}\quad \times \prod\limits_{b\in T}\left\{ a\left(
\mathbf{x}(b)\right) ,a\left( \mathbf{y}(b)\right) \right\} \ ,
\label{inductionN1}
\end{eqnarray}%
whereas, using again (\ref{eq:FroehlichMerkliUeltschi}),%
\begin{align}
{[\mathfrak{p}_{N+1}},\mathfrak{p}_{T}\left( \mathbf{x},\mathbf{y}\right) ]&
=\underset{{1\leq k_{1}\leq 2\overline{\mathrm{N}}}}{\underset{{1\leq
k_{2}\leq }2n_{N+1}}{\sum }}(-1)^{k_{1}+1}a(\pi _{N+1}(1))\cdots a(\pi
_{N+1}(k_{2}-1))  \notag \\
& \hspace{1cm}\hspace{1cm}\quad \times a(\pi _{T}(1))\cdots a(\pi
_{T}(k_{1}-1))  \notag \\
& \hspace{1cm}\hspace{1cm}\quad \times a(\pi _{T}(k_{1}+1))\cdots a(\pi _{T}(%
{2\overline{\mathrm{N}}}))  \notag \\
& \hspace{1cm}\hspace{1cm}\quad \times a(\pi _{N+1}(k_{2}+1))\cdots a(\pi
_{N+1}(2n_{N+1}))\   \notag \\
& \hspace{1cm}\hspace{1cm}\quad \times \{a(\pi _{N+1}(k_{2})),a(\pi
_{T}(k_{1}))\}\ .  \label{inductionN2}
\end{align}%
Note that, for simplicity, we sometimes use (as above) the notation $\pi
_{T}\equiv \pi _{T}\left( \mathbf{x},\mathbf{y}\right) $. To get now the
assertion for $\left( N+1\right) $--fold multi--commutators, for any $(%
\mathbf{x},\mathbf{y})\in \mathcal{K}_{T}$, we define:%
%TCIMACRO{%
%\TeXButton{equation X, Y}{\begin{eqnarray*}
%X &:=&\pi _{T}(k_{1})\in \bigcup\limits_{j=1}^{N}\Omega _{j}\backslash
%\left( \mathbf{x}(T)\cup \mathbf{y}(T)\right) \ ,\qquad \ {k_{1}\in }\left\{
%1,\ldots ,{2\overline{\mathrm{N}}}\right\} \ , \\
%Y &:=&\pi _{N+1}(k_{2})\in \Omega _{N+1}\ ,\qquad \qquad \ \ \ \ \ \ \ \ \ \
%\ \ \ \ \ \ {k_{2}\in }\left\{ 1,\ldots ,2n_{N+1}\right\} \ ,
%\end{eqnarray*}}}%
%BeginExpansion
\begin{eqnarray*}
X &:=&\pi _{T}(k_{1})\in \bigcup\limits_{j=1}^{N}\Omega _{j}\backslash
\left( \mathbf{x}(T)\cup \mathbf{y}(T)\right) \ ,\qquad \ {k_{1}\in }\left\{
1,\ldots ,{2\overline{\mathrm{N}}}\right\} \ , \\
Y &:=&\pi _{N+1}(k_{2})\in \Omega _{N+1}\ ,\qquad \qquad \ \ \ \ \ \ \ \ \ \
\ \ \ \ \ \ {k_{2}\in }\left\{ 1,\ldots ,2n_{N+1}\right\} \ ,
\end{eqnarray*}%
%EndExpansion
as well as%
\begin{equation*}
\mathrm{\tilde{m}}_{T}\left( X,Y\right) :=(-1)^{k_{1}+1}
\end{equation*}%
and%
%TCIMACRO{%
%\TeXButton{equation P tilde T}{\begin{eqnarray*}
%\mathfrak{\tilde{p}}_{T}(\mathbf{x},\mathbf{y},X,Y) &:=&a(\pi
%_{N+1}(1))\cdots a(\pi _{N+1}(k_{2}-1)) \\
%&&\quad \times \ a(\pi _{T}(1))\cdots a(\pi _{T}(k_{1}-1))a(\pi
%_{T}(k_{1}+1))\cdots a(\pi _{T}({2\overline{\mathrm{N}}})) \\
%&&\quad \times \ a(\pi _{N+1}(k_{2}+1))\cdots a(\pi _{N+1}(2n_{N+1}))\ .
%\end{eqnarray*}}}%
%BeginExpansion
\begin{eqnarray*}
\mathfrak{\tilde{p}}_{T}(\mathbf{x},\mathbf{y},X,Y) &:=&a(\pi
_{N+1}(1))\cdots a(\pi _{N+1}(k_{2}-1)) \\
&&\quad \times \ a(\pi _{T}(1))\cdots a(\pi _{T}(k_{1}-1))a(\pi
_{T}(k_{1}+1))\cdots a(\pi _{T}({2\overline{\mathrm{N}}})) \\
&&\quad \times \ a(\pi _{N+1}(k_{2}+1))\cdots a(\pi _{N+1}(2n_{N+1}))\ .
\end{eqnarray*}%
%EndExpansion
Then, by (\ref{inductionN1})--(\ref{inductionN2}), one has%
\begin{multline*}
{[\mathfrak{p}_{N+1},\ldots ,\mathfrak{p}_{1}]}^{(N+1)}=\sum\limits_{T\in
\mathcal{T}_{N}}\sum\limits_{(\mathbf{x},\mathbf{y})\in \mathcal{K}_{T}}%
\underset{X\in \left( \Omega _{1}\cup \cdots \cup \Omega _{N}\right)
\backslash \left( \mathbf{x}(T)\cup \mathbf{y}(T)\right) }{\sum }\ \underset{%
Y\in \Omega _{N+1}}{\sum } \\
\mathrm{m}_{T}\left( \mathbf{x},\mathbf{y}\right) \mathrm{\tilde{m}}%
_{T}\left( X,Y\right) \mathfrak{\tilde{p}}_{T}(\mathbf{x},\mathbf{y}%
,X,Y)\{a(X),a(Y)\}\prod\limits_{b\in T}\left\{ a\left( \mathbf{x}(b)\right)
,a\left( \mathbf{y}(b)\right) \right\} \ .
\end{multline*}%
This last equation can clearly be rewritten as
\begin{align}
& {[\mathfrak{p}_{N+1},\ldots ,\mathfrak{p}_{1}]}^{(N+1)}
\label{inductionN3} \\
& =\sum\limits_{T\in \mathcal{T}_{N}}\sum\limits_{(\mathbf{x},\mathbf{y})\in
\mathcal{K}_{T}}\underset{k\in \left\{ 1,{\ldots ,N}\right\} }{\sum }\
\underset{X_{{\{k,N+1\}}}\in \Omega _{k}}{\sum }\ \underset{Y_{{\{k,N+1\}}%
}\in \Omega _{N+1}}{\sum }  \notag \\
& \quad \quad \mathbf{1}\left[ X_{{\{k,N+1\}}}\notin \left( \mathbf{x}%
(T)\cup \mathbf{y}(T)\right) \right] \mathrm{m}_{T}\left( \mathbf{x},\mathbf{%
y}\right) \mathrm{\tilde{m}}_{T}\left( X_{{\{k,N+1\}}},Y_{{\{k,N+1\}}}\right)
\notag \\
& \quad \times \ \mathfrak{\tilde{p}}_{T}(\mathbf{x},\mathbf{y},X_{{\{k,N+1\}%
}},Y_{{\{k,N+1\}}})  \notag \\
& \quad \times \ \{a(X_{{\{k,N+1\}}}),a(Y_{{\{k,N+1\}}})\}\prod\limits_{b\in
T}\left\{ a\left( \mathbf{x}(b)\right) ,a\left( \mathbf{y}(b)\right)
\right\} \ .  \notag
\end{align}%
Since $\Omega _{j}$, $j\in \{1,\ldots ,N\}$, are, by definition, mutually
disjoint sets, the latter yields the assertion for the $\left( N+1\right) $%
--fold multi--commutator. Indeed, one only needs to define, for any tree $%
T\in \mathcal{T}_{N+1}$ with $N+1$ vertices and fixed $(\mathbf{x},\mathbf{y}%
)\in \mathcal{K}_{T}$, an appropriate constant $\mathrm{m}_{T}(\mathbf{x},%
\mathbf{y})\in \{-1,0,1\}$ and map $\pi _{T}\left( \mathbf{x},\mathbf{y}%
\right) $. This can directly be deduced from (\ref{def.tree}) and (\ref%
{inductionN3}) and we omit the details.
\end{proof}

Because of (\ref{inductionN3}) note that, for any $N\geq 2$, all $T\in
\mathcal{T}_{N}$ and $(\mathbf{x},\mathbf{y})\in \mathcal{K}_{T}$, the
constants $\mathrm{m}_{T}(\mathbf{x},\mathbf{y})$ of Theorem \ref{tree bound
main} satisfy $\mathrm{m}_{T}\left( \mathbf{x},\mathbf{y}\right) =0$
whenever
\begin{equation*}
\left\vert \mathbf{x}(T)\right\vert +\left\vert \mathbf{y}(T)\right\vert
<2(N-1)\ .
\end{equation*}%
Similar to $\{\pi _{j}\}_{j\in \{1,\ldots ,N\}}$ (see (\ref{property pij})),
the maps $\pi _{T}(\mathbf{x},\mathbf{y})$ are (injective) numberings:%
\begin{equation*}
\begin{array}{c}
%TCIMACRO{\TeXButton{\big \{}{\big \{}}%
%BeginExpansion
\big \{%
%EndExpansion
x\;:\;\pi _{T}(\mathbf{x},\mathbf{y})(k)=(+,x),\ \ \text{for }k\in
\{1,\ldots ,2\overline{\mathrm{N}}\}%
%TCIMACRO{\TeXButton{\big \}}{\big \}}}%
%BeginExpansion
\big \}%
%EndExpansion
=\bigcup_{j=1}^{N}\bar{\Lambda}_{j}\backslash \bar{\Lambda}_{\mathbf{x},%
\mathbf{y}}\ , \\
%TCIMACRO{\TeXButton{\big \{}{\big \{}}%
%BeginExpansion
\big \{%
%EndExpansion
x\;:\;\pi _{T}(\mathbf{x},\mathbf{y})(k)=(-,x),\ \ \text{for }k\in
\{1,\ldots ,2\overline{\mathrm{N}}\}%
%TCIMACRO{\TeXButton{\big \}}{\big \}}}%
%BeginExpansion
\big \}%
%EndExpansion
=\bigcup_{j=1}^{N}\Lambda _{j}\backslash \Lambda _{\mathbf{x},\mathbf{y}}\ ,%
\end{array}%
\end{equation*}%
where, for any $T\in \mathcal{T}_{N}$ and $(\mathbf{x},\mathbf{y})\in
\mathcal{K}_{T}$,%
\begin{equation*}
\begin{array}{c}
\Lambda _{\mathbf{x},\mathbf{y}}:=%
%TCIMACRO{\TeXButton{\big \{}{\big \{}}%
%BeginExpansion
\big \{%
%EndExpansion
z\,\in \mathfrak{L}\;:\;(-,z)\in \{\mathbf{x}(b),\mathbf{y}(b)\}\;\ \text{%
for some}\;b\in T%
%TCIMACRO{\TeXButton{\big \}}{\big \}}}%
%BeginExpansion
\big \}%
%EndExpansion
\ , \\
\bar{\Lambda}_{\mathbf{x},\mathbf{y}}:=%
%TCIMACRO{\TeXButton{\big \{}{\big \{}}%
%BeginExpansion
\big \{%
%EndExpansion
z\,\in \mathfrak{L}\;:\;(+,z)\in \{\mathbf{x}(b),\mathbf{y}(b)\}\;\ \text{%
for some}\;b\in T%
%TCIMACRO{\TeXButton{\big \}}{\big \}}}%
%BeginExpansion
\big \}%
%EndExpansion
\ .%
\end{array}%
\end{equation*}

We conclude this section by the notion of tree--decay bounds: Let $\rho \in
\mathcal{U}^{\ast }$\ be any state and $\tau \equiv \{\tau _{t}\}_{t\in {%
\mathbb{R}}}$ be any one--parameter group of automorphisms on the $C^{\ast }$%
--algebra $\mathcal{U}$. We say that $(\rho ,\tau )$ satisfies \emph{%
tree--decay bounds} with parameters $\epsilon \in \mathbb{R}^{+}$ and $%
t_{0},t\in \mathbb{R}$, $t_{0}<t$, if there is a finite constant $D\in
\mathbb{R}^{+}$ such that, for any integer $N\geq 2$, $s_{1},\ldots
,s_{N}\in \lbrack t_{0},t]$, $x_{1},\ldots ,x_{N}\in \mathfrak{L}$ and all $%
z_{1},\ldots ,z_{N}\in \mathfrak{L}$ satisfying $|z_{i}|=1$ for $i\in
\{1,\ldots ,N\}$,%
\begin{equation}
\left\vert \rho \left( \left[ \tau _{s_{1}}(a_{x_{1}}^{\ast
}a_{x_{1}+z_{1}}),\ldots ,\tau _{s_{N}}(a_{x_{N}}^{\ast }a_{x_{N}+z_{N}})%
\right] ^{(N)}\right) \right\vert \leq D^{N-1}\mathbf{v}_{N}^{(\epsilon
)}\left( x_{1},\ldots ,x_{N}\right) \ ,  \label{tree bound 2}
\end{equation}%
where%
\begin{equation*}
\mathbf{v}_{N}^{(\epsilon )}\left( x_{1},\ldots ,x_{N}\right)
=\sum\limits_{T\in \mathcal{T}_{N}}\prod\limits_{\{k,l\}\in T}\frac{1}{%
1+|x_{k}-x_{l}|^{d+\epsilon }}\ ,\qquad x_{1},\ldots ,x_{N}\in \mathfrak{L}\
.
\end{equation*}%
(Recall that $\mathfrak{L}:=\mathbb{Z}^{d}$ with $d\in \mathbb{N}$.)

Such a property is used in Section \ref{section Energy Increments as Power
Series} and will be exploited many times in the subsequent papers for $\tau
=\tau ^{(\omega ,\lambda )}$ and $\rho =\varrho ^{(\beta ,\omega ,\lambda )}$
with $\beta \in \mathbb{R}^{+}$, $\omega \in \Omega $ and $\lambda \in
\mathbb{R}_{0}^{+}$. In fact, using Theorem \ref{tree bound main} we show
below that the one--parameter Bogoliubov group $\tau ^{(\omega ,\lambda )}$
of automorphisms defined by (\ref{rescaledbis}) and any state $\rho $
satisfy tree--decay bounds. Indeed, observe first the following elementary
lemma:

\begin{lemma}[Correlation decays]
\label{bound anticomm}\mbox{
}\newline
For any $T,\epsilon \in \mathbb{R}^{+}$, there is a finite constant $D\in
\mathbb{R}^{+}$ such that%
\begin{equation*}
\left\vert \left\langle \mathfrak{e}_{x},\mathrm{e}^{it(\Delta _{\mathrm{d}%
}+\lambda V_{\omega })}\mathfrak{e}_{y}\right\rangle \right\vert \leq \frac{D%
}{1+|x-y|^{d+\epsilon }}
\end{equation*}%
for all $\omega \in \Omega $, $\lambda \in \mathbb{R}_{0}^{+}$, $t\in
\lbrack -T,T\mathbb{]}$ and $x,y\in \mathfrak{L}$. Recall that $\left\{
\mathfrak{e}_{x}\right\} _{x\in \mathfrak{L}}$ is the canonical orthonormal
basis of $\ell ^{2}(\mathfrak{L})$ defined by $\mathfrak{e}_{x}(y)\equiv
\delta _{x,y}$ for all $x,y\in \mathfrak{L}$.
\end{lemma}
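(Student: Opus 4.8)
The plan is to exploit two structural features of the one--particle operator $\Delta_{\mathrm{d}}+\lambda V_{\omega}$: it has only nearest--neighbour hopping, and the potential $\lambda V_{\omega}$ is \emph{diagonal}, hence contributes to the propagator only phases of modulus one. This is exactly what makes the estimate uniform in the (possibly large) coupling $\lambda\in\mathbb{R}_{0}^{+}$; the naive bound $\Vert(\Delta_{\mathrm{d}}+\lambda V_{\omega})^{n}\Vert\leq(4d+\lambda)^{n}$ is useless. Concretely, I would split the Hamiltonian as $\Delta_{\mathrm{d}}+\lambda V_{\omega}=T+P$, where $P:=2d\,\mathbf{1}+\lambda V_{\omega}$ is diagonal and $T:=\Delta_{\mathrm{d}}-2d\,\mathbf{1}$ is the purely off--diagonal nearest--neighbour hopping operator, so that by (\ref{discrete laplacian}) one has $\langle\mathfrak{e}_{x},T\mathfrak{e}_{y}\rangle=-\mathbf{1}[\,|x-y|=1\,]$ and $\Vert T\Vert\leq 2d$.

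Next I would write the interaction--picture (Dyson) series for the unitary group generated by $H=T+P$, namely
\[
\mathrm{e}^{it(\Delta_{\mathrm{d}}+\lambda V_{\omega})}=\mathrm{e}^{itP}\sum_{n=0}^{\infty}i^{n}\int_{0}^{t}\mathrm{d}s_{1}\int_{0}^{s_{1}}\mathrm{d}s_{2}\cdots\int_{0}^{s_{n-1}}\mathrm{d}s_{n}\;T(s_{1})\cdots T(s_{n})\ ,
\]
with $T(s):=\mathrm{e}^{-isP}T\mathrm{e}^{isP}$. This series converges in operator norm since $\Vert T(s)\Vert=\Vert T\Vert\leq 2d$. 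Because $P$ is diagonal, both the prefactor $\mathrm{e}^{itP}$ and the conjugating factors $\mathrm{e}^{\pm isP}$ act diagonally with entries of modulus one; hence each $T(s)$ has the same nearest--neighbour support as $T$ and matrix elements of modulus at most one. Inserting resolutions of the identity between the $n$ factors, the $(x,y)$ matrix element of the $n$--th term is a sum over nearest--neighbour paths $y=z_{0},z_{1},\ldots,z_{n}=x$. Such a path exists only if the graph distance $\Vert x-y\Vert_{1}:=\sum_{i}|x_{i}-y_{i}|$ is at most $n$, and there are at most $(2d)^{n}$ of them; bounding each amplitude by one and the time simplex by $|t|^{n}/n!$ yields
\[
\big|\langle\mathfrak{e}_{x},\mathrm{e}^{it(\Delta_{\mathrm{d}}+\lambda V_{\omega})}\mathfrak{e}_{y}\rangle\big|\leq\sum_{n\geq\Vert x-y\Vert_{1}}\frac{(2d|t|)^{n}}{n!}\leq\mathrm{e}^{2dT}\,\frac{(2dT)^{\Vert x-y\Vert_{1}}}{\Vert x-y\Vert_{1}!}\ ,\qquad |t|\leq T\ .
\]

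To conclude, I would note that for fixed $c:=2dT$ the quantity $c^{m}/m!$ decays faster than any power of $m$ (by Stirling), and that $\Vert x-y\Vert_{1}$ and $|x-y|$ are comparable norms on $\mathbb{Z}^{d}$; hence the right--hand side is bounded by $D/(1+|x-y|^{d+\epsilon})$ for a constant $D\equiv D(T,\epsilon,d)$, uniformly in $\omega\in\Omega$, $\lambda\in\mathbb{R}_{0}^{+}$ and $t\in[-T,T]$, which is the claim. The only genuinely delicate point is the uniformity in $\lambda$, and this is precisely why the diagonal potential must be \emph{kept exactly} inside $\mathrm{e}^{itP}$ and the conjugations, rather than expanded in the Dyson series; once this is arranged, the estimate is a bounded--time light--cone (Lieb--Robinson--type) bound and the remaining ingredients—norm convergence of the series, the path count $(2d)^{n}$, and the super--polynomial tail estimate—are routine.
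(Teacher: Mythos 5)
Your proof is correct, and it reaches the stated bound by a genuinely different route than the paper. Both arguments turn on the same structural observation --- the potential $\lambda V_{\omega }$ is diagonal, so it can contribute only phases of modulus one, which is exactly what makes the constant uniform in $\lambda \in \mathbb{R}_{0}^{+}$ --- but they implement it differently. The paper uses the Trotter--Kato formula to factor the propagator into alternating free and potential steps, discards the potential phases, dominates each free factor entrywise by the positive--entry semigroup $\mathrm{e}^{-|t|\Delta _{\mathrm{d}}/m}$ at the cost of a factor $\mathrm{e}^{4d|t|/m}$, and then reads off the spatial decay from the explicit Fourier representation of $\langle \mathfrak{e}_{x},\mathrm{e}^{-|t|\Delta _{\mathrm{d}}}\mathfrak{e}_{y}\rangle $. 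You instead keep the diagonal part exactly inside an interaction picture and expand only the off--diagonal nearest--neighbour hopping in a Dyson series, so that the decay comes from a finite--speed--of--propagation count of paths: the $(x,y)$ entry of the $n$--th term vanishes unless $n\geq \Vert x-y\Vert _{1}$, there are at most $(2d)^{n}$ contributing paths with unit--modulus amplitudes, and the tail estimate $\sum_{n\geq m}c^{n}/n!\leq \mathrm{e}^{c}c^{m}/m!$ with $c=2dT$ gives super--polynomial decay in $\Vert x-y\Vert _{1}$, hence in $|x-y|$. All the steps you flag as routine indeed are (norm convergence since $\Vert T(s)\Vert =\Vert T\Vert \leq 2d$, comparability of the norms on $\mathbb{Z}^{d}$, Stirling for the tail), and your treatment of the delicate point --- never expanding the diagonal part --- is exactly right. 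Your version is somewhat more self--contained, avoiding the Fourier computation, and makes the mechanism of the $\lambda $--uniformity more transparent; the paper's version produces an explicit comparison with the lattice heat kernel. Both deliver far more than the required polynomial bound $D/(1+|x-y|^{d+\epsilon })$.
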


\begin{proof}
Let $\omega \in \Omega $, $\lambda \in \mathbb{R}_{0}^{+}$, $t\in \mathbb{R}$
and $x,y\in \mathfrak{L}$. Using the Trotter--Kato formula and the canonical
orthonormal basis $\left\{ \mathfrak{e}_{x}\right\} _{x\in \mathfrak{L}}$ of
$\ell ^{2}(\mathfrak{L})$ we first observe that
\begin{eqnarray}
\left\langle \mathfrak{e}_{x},\mathrm{e}^{it(\Delta _{\mathrm{d}}+\lambda
V_{\omega })}\mathfrak{e}_{y}\right\rangle &=&\lim_{m\rightarrow \infty
}\left\langle \mathfrak{e}_{x},\left[ \mathrm{e}^{\frac{it}{m}\Delta _{%
\mathrm{d}}}\mathrm{e}^{\frac{it}{m}\lambda V_{\omega }}\right] ^{m}%
\mathfrak{e}_{y}\right\rangle  \label{eq 1 ptt lemme} \\
&=&\lim_{m\rightarrow \infty }\lim_{L\rightarrow \infty
}\sum\limits_{x_{1},\ldots ,x_{m-1}\in \Lambda _{L}}\left\langle \mathfrak{e}%
_{x},\mathrm{e}^{\frac{it}{m}\Delta _{\mathrm{d}}}\mathfrak{e}%
_{x_{1}}\right\rangle \cdots  \notag \\
&&\left\langle \mathfrak{e}_{x_{m-1}},\mathrm{e}^{%
\frac{it}{m}\Delta _{\mathrm{d}}}\mathfrak{e}_{y}\right\rangle \times \mathrm{e}^{\frac{it}{m}\lambda V_{\omega }\left( x_{1}\right)
}\times \cdots \times \mathrm{e}^{\frac{it}{m}\lambda V_{\omega }\left(
y\right) }\ ,  \notag
\end{eqnarray}%
where $\Lambda _{L}$ is the finite box (\ref{eq:def lambda n}) of side
length $2[L]+1$ for $L\in \mathbb{R}^{+}$. Writing now the exponential $%
\mathrm{e}^{\frac{it}{m}\Delta _{\mathrm{d}}}$\ as a power series and using
the definition (\ref{discrete laplacian}) of the discrete Laplacian $\Delta
_{\mathrm{d}}$ we arrive at the upper bound
\begin{equation}
\left\vert \left\langle \mathfrak{e}_{x},\mathrm{e}^{\frac{it}{m}\Delta _{%
\mathrm{d}}}\mathfrak{e}_{y}\right\rangle \right\vert \leq \mathrm{e}^{\frac{%
4d\left\vert t\right\vert }{m}}\left\langle \mathfrak{e}_{x},\mathrm{e}^{-%
\frac{\left\vert t\right\vert }{m}\Delta _{\mathrm{d}}}\mathfrak{e}%
_{y}\right\rangle \ ,\qquad x,y\in \mathfrak{L}\ ,\ t\in \mathbb{R},\ m\in
\mathbb{N}\ .  \label{eq 2 ptt lemme}
\end{equation}%
Therefore, we infer from (\ref{eq 1 ptt lemme})--(\ref{eq 2 ptt lemme}) that%
\begin{equation}
\left\vert \left\langle \mathfrak{e}_{x},\mathrm{e}^{it(\Delta _{\mathrm{d}%
}+\lambda V_{\omega })}\mathfrak{e}_{y}\right\rangle \right\vert \leq
\mathrm{e}^{4d\left\vert t\right\vert }\left\langle \mathfrak{e}_{x},\mathrm{%
e}^{-\left\vert t\right\vert \Delta _{\mathrm{d}}}\mathfrak{e}%
_{y}\right\rangle \ .  \label{ineaulity iodiot}
\end{equation}%
Note that $\Delta _{\mathrm{d}}$ is explicitly given in Fourier space by the
dispersion relation%
\begin{equation*}
E(p):=2\left[ d-\left( \cos (p_{1})+\cdots +\cos (p_{d})\right) \right] \
,\qquad p\in \left[ -\pi ,\pi \right] ^{d}\ .
\end{equation*}%
Thus, explicit computations show that, for all $s\in \mathbb{R}$,%
\begin{equation*}
\left\langle \mathfrak{e}_{x},\mathrm{e}^{s\Delta _{\mathrm{d}}}\mathfrak{e}%
_{y}\right\rangle =\frac{1}{(2\pi )^{d}}\int\nolimits_{[-\pi ,\pi ]^{d}}%
\mathrm{e}^{sE(p)-ip\cdot (x-y)}\mathrm{d}^{d}p\ ,
\end{equation*}%
which, combined with (\ref{ineaulity iodiot}), implies the assertion.
\end{proof}

By (\ref{CAR}) and (\ref{rescaledbis}),
\begin{equation}
\left\Vert \{\tau _{s_{1}}^{(\omega ,\lambda )}(a_{x}^{\ast }),\tau
_{s_{2}}^{(\omega ,\lambda )}(a_{y})\}\right\Vert =\left\vert \left\langle
\mathfrak{e}_{x},\mathrm{e}^{i\left( s_{2}-s_{1}\right) (\Delta _{\mathrm{d}%
}+\lambda V_{\omega })}\mathfrak{e}_{y}\right\rangle \right\vert
\label{bound anticommbis}
\end{equation}%
for every $s_{1},s_{2}\in \mathbb{R}$, $x,y\in \mathfrak{L}$, $\omega \in
\Omega $ and $\lambda \in \mathbb{R}_{0}^{+}$. Hence, for any $\epsilon \in
\mathbb{R}^{+}$ and $t_{0},t\in \mathbb{R}$, $t_{0}<t$, we infer from Lemma %
\ref{bound anticomm} the existence of a finite constant $D\in \mathbb{R}^{+}$
(only depending on $\epsilon ,t_{0},t$) such that
\begin{equation}
\left\Vert \left\{ \tau _{s_{1}}^{(\omega ,\lambda )}\left( a_{x}^{\ast
}\right) ,\tau _{s_{2}}^{(\omega ,\lambda )}\left( a_{y}\right) \right\}
\right\Vert \leq \frac{D}{1+\left\vert x-y\right\vert ^{d+\epsilon }}
\label{eq:anticomm bound 1}
\end{equation}%
for all $s_{1},s_{2}\in \lbrack t_{0},t]$, $x,y\in \mathfrak{L}$, $\omega
\in \Omega $ and $\lambda \in \mathbb{R}_{0}^{+}$. Using this and Theorem %
\ref{tree bound main} we obtain (\ref{tree bound 2}) with a uniform constant
$D<\infty $ not depending on $\omega \in \Omega $ and $\lambda \in \mathbb{R}%
_{0}^{+}$:

\begin{koro}[Uniform tree--decay bounds]
\label{tree bound main copy(1)}\mbox{
}\newline
Let $\rho $ be any arbitrary state on $\mathcal{U}$ and $\tau =\tau
^{(\omega ,\lambda )}$ be the one--parameter Bogoliubov group of
automorphisms defined by (\ref{rescaledbis}) for $\omega \in \Omega $ and $%
\lambda \in \mathbb{R}_{0}^{+}$. Then, for every $\epsilon \in \mathbb{R}%
^{+} $ and $t_{0},t\in \mathbb{R}$, $t_{0}<t$, there is $D=D_{\epsilon
,t_{0},t}\in \mathbb{R}^{+}$ such that the tree--decay bound (\ref{tree
bound 2}) holds for all $\omega \in \Omega $ and $\lambda \in \mathbb{R}%
_{0}^{+}$.
\end{koro}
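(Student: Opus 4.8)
The plan is to feed the purely algebraic identity of Theorem~\ref{tree bound main} into the analytic anti--commutator estimate (\ref{eq:anticomm bound 1}), and then recognise the resulting sum over trees as the definition of $\mathbf{v}_N^{(\epsilon)}$. First I would note that, since $\tau^{(\omega,\lambda)}$ is a Bogoliubov automorphism, each entry of the $N$--fold multi--commutator is a two--operator monomial
\[
\tau_{s_j}^{(\omega,\lambda)}(a_{x_j}^{\ast }a_{x_j+z_j})=a^{\ast }\big((\mathrm{U}_{s_j}^{(\omega,\lambda)})^{\ast }\mathfrak{e}_{x_j}\big)\,a\big((\mathrm{U}_{s_j}^{(\omega,\lambda)})^{\ast }\mathfrak{e}_{x_j+z_j}\big)\ .
\]
Hence in the notation of Theorem~\ref{tree bound main} one has $n_j=1$ for every $j$, so $\overline{\mathrm{N}}=\sum_{j=1}^{N}n_j-(N-1)=1$ and every surviving monomial $\mathfrak{p}_T(\mathbf{x},\mathbf{y})$ contains exactly $2\overline{\mathrm{N}}=2$ operators. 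Applying (\ref{196 bis}) thus writes the multi--commutator as a finite sum over $T\in\mathcal{T}_N$ and $(\mathbf{x},\mathbf{y})\in\mathcal{K}_T$ of such a residual monomial times the product of $N-1$ anti--commutators $\prod_{b\in T}\{a(\mathbf{x}(b)),a(\mathbf{y}(b))\}$ indexed by the bonds of the tree $T$.

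The next step is to use that, by (\ref{CAR}), each anti--commutator is a scalar multiple of $\mathbf{1}$, so the entire product may be pulled out of $\rho$. Because $(\mathrm{U}_{s_j}^{(\omega,\lambda)})^{\ast }$ is unitary, the wave functions entering $\mathfrak{p}_T(\mathbf{x},\mathbf{y})$ are unit vectors, whence $\Vert\mathfrak{p}_T(\mathbf{x},\mathbf{y})\Vert\le 1$ by (\ref{norm cont of a}) and therefore $|\rho(\mathfrak{p}_T(\mathbf{x},\mathbf{y}))|\le 1$ for every state $\rho$. The triangle inequality over the finitely many terms then gives
\[
\left\vert \rho\big([\ldots]^{(N)}\big)\right\vert \le \sum_{T\in\mathcal{T}_N}\sum_{(\mathbf{x},\mathbf{y})\in\mathcal{K}_T}\left\vert \mathrm{m}_T(\mathbf{x},\mathbf{y})\right\vert \prod_{b\in T}\big\Vert \{a(\mathbf{x}(b)),a(\mathbf{y}(b))\}\big\Vert\ ,
\]
and it remains to estimate each factor of the bond product. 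Anti--commutators of two creation or two annihilation operators vanish, so only mixed pairs survive, and these are exactly the objects controlled by (\ref{eq:anticomm bound 1}).

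For a bond $b=\{k,l\}$ the two operators come from vertices $k$ and $l$, so their positions lie in $\{x_k,x_k+z_k\}$ and $\{x_l,x_l+z_l\}$; since $|z_k|=|z_l|=1$, this difference differs from $x_k-x_l$ by a vector of length at most $2$. I would then invoke the elementary bounded--shift comparison, namely that for any fixed shift there is a constant making $(1+|x_k-x_l|^{d+\epsilon})^{-1}$ comparable to $(1+|p_k-p_l|^{d+\epsilon})^{-1}$, so that (\ref{eq:anticomm bound 1}) yields $\Vert\{a(\mathbf{x}(b)),a(\mathbf{y}(b))\}\Vert\le D'(1+|x_k-x_l|^{d+\epsilon})^{-1}$ uniformly in $\omega$ and $\lambda$ (the uniformity being inherited from Lemma~\ref{bound anticomm}). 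Since each $\Omega_j$ has exactly two elements, $|\mathcal{K}_T|=4^{N-1}$, a factor I absorb together with $(D')^{N-1}$ into a single constant $D^{N-1}$. The product over bonds then reproduces $\prod_{\{k,l\}\in T}(1+|x_k-x_l|^{d+\epsilon})^{-1}$, and summing over $T\in\mathcal{T}_N$ is precisely the definition of $\mathbf{v}_N^{(\epsilon)}(x_1,\ldots,x_N)$, giving (\ref{tree bound 2}).

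The only genuinely delicate point is the bookkeeping: I must verify that the tree sum produced by Theorem~\ref{tree bound main} matches, bond by bond, the tree sum defining $\mathbf{v}_N^{(\epsilon)}$, i.e.\ that the $N-1$ surviving anti--commutators are indexed exactly by the bonds $\{k,l\}$ of the \emph{same} tree $T$ and contribute the corresponding distances $|x_k-x_l|$. Everything else --- the unit norm of the residual monomial, the bounded--shift comparison, and the counting of elements of $\mathcal{K}_T$ --- is routine and manifestly uniform in $\omega\in\Omega$ and $\lambda\in\mathbb{R}_0^{+}$, which is where the uniformity of the constant $D=D_{\epsilon,t_0,t}$ claimed in the corollary comes from.
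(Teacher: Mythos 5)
Your proposal is correct and follows essentially the same route as the paper's (very terse) proof: specialize Theorem \ref{tree bound main} to the case $n_j=1$, bound the residual two--operator monomial $\mathfrak{p}_T(\mathbf{x},\mathbf{y})$ in norm by $1$, count $|\mathcal{K}_T|=2^{2(N-1)}$, and estimate each bond's anti--commutator via (\ref{eq:anticomm bound 1}). The only detail you supply that the paper leaves implicit is the bounded--shift comparison converting decay in the actual operator positions (which may be $x_k+z_k$, etc.) into decay in $|x_k-x_l|$, which is indeed routine.
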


\begin{proof}
Choose in Theorem \ref{tree bound main} sets $\bar{\Lambda}_{j},\Lambda _{j}$
containing exactly one element and note that, in this case, $|\mathcal{K}%
_{T}|=2^{2|T|}=2^{2(N-1)}$. Observe also that $\left\Vert \mathfrak{p}%
_{T}\left( \mathbf{x},\mathbf{y}\right) \right\Vert \leq 1$ as the
corresponding vectors $\psi _{x}$ have norm $1$. The assertion then follows
from (\ref{eq:anticomm bound 1}) and Theorem \ref{tree bound main}.
\end{proof}

\section{Proofs of Main Results\label{sect technical proofs}}

\subsection{Preliminary}

For the reader's convenience we start by reminding a few important
definitions and some standard mathematical results used in our proofs.

Recall that $\mathfrak{L}:=\mathbb{Z}^{d}$ with $d\in \mathbb{N}$, and $%
\mathcal{P}_{f}(\mathfrak{L})\subset 2^{\mathfrak{L}}$ is the set of all
finite subsets of $\mathfrak{L}$. For any $\Lambda \in \mathcal{P}_{f}(%
\mathfrak{L})$, $\mathcal{U}_{\Lambda }$ is the CAR $C^{\ast }$--algebra
generated by the identity $\mathbf{1}$ and the annihilation operators $%
\{a_{x}\}_{x\in \Lambda }$. It is isomorphic to the finite dimensional $%
C^{\ast }$--algebra $\mathcal{B}(\bigwedge \mathcal{H}_{\Lambda })$ of all
linear operators on the fermion Fock space $\bigwedge \mathcal{H}_{\Lambda }$%
, where $\mathcal{H}_{\Lambda }:=\oplus _{x\in \Lambda }\mathcal{H}_{x}$ is
the Cartesian product of copies $\mathcal{H}_{x}$, $x\in \Lambda $, of\ the
one--dimensional Hilbert space $\mathcal{H}\equiv \mathbb{C}$. (I.e., the
one--particle Hilbert space $\mathcal{H}_{\Lambda }$ is isomorphic to $%
\mathbb{C}^{\Lambda }$.) The CAR $C^{\ast }$--algebra $\mathcal{U}$ is the
(separable) $C^{\ast }$--algebra defined by the inductive limit of $\{%
\mathcal{U}_{\Lambda }\}_{\Lambda \in \mathcal{P}_{f}(\mathfrak{L})}$. Note
here that $\mathcal{U}_{\Lambda ^{\prime }}\subset \mathcal{U}_{\Lambda }$
whenever $\Lambda ^{\prime }\subset \Lambda $. For any one--particle wave
function $\psi \in \ell ^{2}(\mathfrak{L})$ we define annihilation and
creation operators $a(\psi ),a^{\ast }(\psi )\in \mathcal{U}$\ of a
(spinless) fermion, see (\ref{creation operators}).

For $\omega \in \Omega $ and $\lambda \in \mathbb{R}_{0}^{+}$, the
unperturbed dynamics of the fermion system studied here is given by the
one--parameter group $\tau ^{(\omega ,\lambda )}:=\{\tau _{t}^{(\omega
,\lambda )}\}_{t\in {\mathbb{R}}}$ of Bogoliubov automorphisms on the
algebra $\mathcal{U}$ uniquely defined by the condition (\ref{rescaledbis}),
that is,
\begin{equation}
\tau _{t}^{(\omega ,\lambda )}(a(\psi ))=a(\mathrm{e}^{it(\Delta _{\mathrm{d}%
}+\lambda V_{\omega })}\psi )\ ,\text{\qquad }t\in \mathbb{R},\ \psi \in
\ell ^{2}(\mathfrak{L})\ ,  \label{eq:Bogoliubov}
\end{equation}%
see \cite[Theorem 5.2.5]{BratteliRobinson}. As $\tau _{t}^{(\omega ,\lambda
)}$ is an automorphism of $\mathcal{U}$, by definition, we have in
particular that
\begin{equation}
\tau _{t}^{(\omega ,\lambda )}(B_{1}B_{2})=\tau _{t}^{(\omega ,\lambda
)}(B_{1})\tau _{t}^{(\omega ,\lambda )}(B_{2})\ ,\qquad B_{1},B_{2}\in
\mathcal{U}\ ,\ t\in \mathbb{R}\ .  \label{inequality idiote}
\end{equation}%
Physically, (\ref{eq:Bogoliubov}) means that the fermionic particles do not
experience any mutual force: They interact with each other via the Pauli
exclusion principle only, i.e., they form an ideal lattice fermion system.
From (\ref{norm cont of a}) and the norm--continuity of the unitary group $\{%
\mathrm{e}^{it(\Delta _{\mathrm{d}}+\lambda V_{\omega })}\}_{t\in {\mathbb{R}%
}}$ it follows that the (Bogoliubov) group $\tau ^{(\omega ,\lambda )}$ of
automorphisms is strongly continuous. $(\mathcal{U},\tau ^{(\omega ,\lambda
)})$ is thus a $C^{\ast }$--dynamical system.

For each $\omega \in \Omega $ and $\lambda \in \mathbb{R}_{0}^{+}$, the
generator of the strongly continuous group $\tau ^{(\omega ,\lambda )}$ is
denoted by $\delta ^{(\omega ,\lambda )}$. It is a symmetric unbounded
derivation. This means that the domain $\mathrm{Dom}(\delta ^{(\omega
,\lambda )})$ of $\delta ^{(\omega ,\lambda )}$ is a dense $\ast $%
--subalgebra of ${\mathcal{U}}$ and, for all $B_{1},B_{2}\in \mathrm{Dom}%
(\delta ^{(\omega ,\lambda )})$,
\begin{equation*}
\delta ^{(\omega ,\lambda )}(B_{1})^{\ast }=\delta ^{(\omega ,\lambda
)}(B_{1}^{\ast }),\quad \delta ^{(\omega ,\lambda )}(B_{1}B_{2})=\delta
^{(\omega ,\lambda )}(B_{1})B_{2}+B_{1}\delta ^{(\omega ,\lambda )}(B_{2})\ .
\end{equation*}

Recall that states on the $C^{\ast }$--algebra $\mathcal{U}$ are linear
functionals $\rho \in \mathcal{U}^{\ast }$ which are normalized and
positive, i.e., $\rho (\mathbf{1})=1$ and $\rho (A^{\ast }A)\geq 0$ for all $%
A\in \mathcal{U}$. Thermal equilibrium states of the fermion system under
consideration can be defined, at inverse temperature $\beta \in \mathbb{R}%
^{+}$ and for any $\omega \in \Omega $ and $\lambda \in \mathbb{R}_{0}^{+}$,
through the bounded positive operator%
\begin{equation*}
\mathbf{d}_{\mathrm{fermi}}^{(\beta ,\omega ,\lambda )}:=\frac{1}{1+\mathrm{e%
}^{\beta \left( \Delta _{\mathrm{d}}+\lambda V_{\omega }\right) }}\in
\mathcal{B}(\ell ^{2}(\mathfrak{L}))\ .
\end{equation*}%
Indeed, the so--called \emph{symbol} $\mathbf{d}_{\mathrm{fermi}}^{(\beta
,\omega ,\lambda )}$ uniquely defines a (faithful) \emph{quasi--free} state $%
\varrho ^{(\beta ,\omega ,\lambda )}$ on the CAR algebra $\mathcal{U}$ by
the conditions $\varrho ^{(\beta ,\omega ,\lambda )}(\mathbf{1})=1$ and%
\begin{equation*}
\varrho ^{(\beta ,\omega ,\lambda )}\left( a^{\ast }(f_{1})\ldots a^{\ast
}(f_{m})a(g_{n})\ldots a(g_{1})\right) =\delta _{m,n}\,\mathrm{det}\left(
[\langle g_{k},\mathbf{d}_{\mathrm{fermi}}^{(\beta ,\omega ,\lambda
)}f_{j}\rangle ]_{j,k}\right)
\end{equation*}%
for all $\left\{ f_{j}\right\} _{j=1}^{m},\left\{ g_{j}\right\}
_{j=1}^{n}\subset \ell ^{2}(\mathfrak{L})$ and $m,n\in \mathbb{N}$. $\langle
\cdot ,\cdot \rangle $ is here the scalar product in $\ell ^{2}(\mathfrak{L}%
) $.

The state $\varrho ^{(\beta ,\omega ,\lambda )}\in \mathcal{U}^{\ast }$ is
the unique $(\tau ^{(\omega ,\lambda )},\beta )$--KMS\ state of$\ $the $%
C^{\ast }$--dynamical system $(\mathcal{U},\tau ^{(\omega ,\lambda )})$.
This means that, for every $B_{1},B_{2}\in \mathcal{U}$, the map
\begin{equation*}
t\mapsto F_{B_{1},B_{2}}\left( t\right) :=\varrho ^{(\beta ,\omega ,\lambda
)}(B_{1}\tau _{t}^{(\omega ,\lambda )}(B_{2}))
\end{equation*}%
from ${\mathbb{R}}$ to $\mathbb{C}$ extends uniquely to a continuous map on $%
{\mathbb{R}}+i[0,\beta ]\subset {\mathbb{C}}$ which is holomorphic on ${%
\mathbb{R}}+i(0,\beta )$, such that
\begin{equation*}
F_{B_{1},B_{2}}\left( t+i\beta \right) =\varrho ^{(\beta ,\omega ,\lambda
)}(\tau _{t}^{(\omega ,\lambda )}(B_{2})B_{1})
\end{equation*}%
for all $t\in {\mathbb{R}}$. The latter is named \emph{KMS condition} or
\emph{modular condition} (when $\beta =1$) in the context of von Neumann
algebras.

The KMS condition is usually taken as the mathematical characterization of
thermal equilibriums of $C^{\ast }$--dynamical systems. This definition of
thermal equilibrium states for infinite systems is rather abstract. However,
it can be physically motivated from a maximum entropy principle by observing
that $\varrho ^{(\beta ,\omega ,\lambda )}$ is the unique weak$^{\ast }$%
--limit of Gibbs states $\varrho ^{(\beta ,\omega ,\lambda ,L)}$ (\ref{Gibbs
state H_n})--(\ref{Gibbs state H_nbis}), as $L\rightarrow \infty $. See
Theorem \ref{conv Gibbs}. Moreover, KMS states are stationary and thus, $%
\varrho ^{(\beta ,\omega ,\lambda )}$ is invariant under the dynamics
defined by the (Bogoliubov) group $\tau ^{(\omega ,\lambda )}$ of
automorphisms:
\begin{equation}
\varrho ^{(\beta ,\omega ,\lambda )}\circ \tau _{t}^{(\omega ,\lambda
)}=\varrho ^{(\beta ,\omega ,\lambda )}\ ,\qquad \beta \in \mathbb{R}^{+},\
\omega \in \Omega ,\ \lambda \in \mathbb{R}_{0}^{+},\ t\in \mathbb{R}\ .
\label{stationary}
\end{equation}

\subsection{Series Representation of Dynamics\label{Section existence
dynamics}}

\noindent The assertions of this subsection are similar to \cite[Proposition
5.4.26.]{BratteliRobinson}. Note however that the generator $\delta
^{(\omega ,\lambda )}$ of the (unperturbed) one--parameter group $\tau
^{(\omega ,\lambda )}$ is an \emph{unbounded} symmetric derivation, in
contrast to \cite[Proposition 5.4.26.]{BratteliRobinson}. Here, $\omega \in
\Omega $, $\lambda \in \mathbb{R}_{0}^{+}$ and $\mathbf{A}\in \mathbf{C}%
_{0}^{\infty }$ are arbitrarily fixed. See Sections \ref{Section dynamics}--%
\ref{Section Electromagnetic Fields}.

We start our proofs by giving an explicit expression of the automorphism $%
\tau _{t,s}^{(\omega ,\lambda ,\mathbf{A})}$ of $\mathcal{U}$ in terms of a
series involving multi--commutators. Meanwhile, we give an alternative
characterization of the two--parameter family $\{\tau _{t,s}^{(\omega
,\lambda ,\mathbf{A})}\}_{t\geq s}$ as a solution of an abstract Cauchy
initial value problem. This last observation is very useful in order to
generalize the present results to \emph{interacting} fermion systems.

First, recall that there is a unique (norm--continuous) two--parameter group
$\{\mathrm{U}_{t,s}^{(\omega ,\lambda ,\mathbf{A})}\}_{t\geq s}$ which is norm continuous and
solution of the non--autonomous Cauchy initial value problem (\ref{time
evolution one-particle}), that is,
\begin{equation*}
\forall s,t\in {\mathbb{R}},\ t\geq s:\quad \partial _{t}\mathrm{U}%
_{t,s}^{(\omega ,\lambda ,\mathbf{A})}=-i(\Delta _{\mathrm{d}}^{(\mathbf{A}%
(t,\cdot ))}+\lambda V_{\omega })\mathrm{U}_{t,s}^{(\omega ,\lambda ,\mathbf{%
A})}\ ,\quad \mathrm{U}_{s,s}^{(\omega ,\lambda ,\mathbf{A})}:=\mathbf{1}\ .
\end{equation*}%
(The restriction $t\geq s$ is not essential here and $\mathrm{U}%
_{t,s}^{(\omega ,\lambda ,\mathbf{A})}$ could also be defined for all $%
s,t\in {\mathbb{R}}$.) Indeed, $\Delta _{\mathrm{d}}\in \mathcal{B}(\ell
^{2}(\mathfrak{L}))$ and the map
\begin{equation}
t\mapsto \mathbf{w}_{t}^{\mathbf{A}}:=(\Delta _{\mathrm{d}}^{(\mathbf{A}%
(t,\cdot ))}-\Delta _{\mathrm{d}})\in \mathcal{B}(\ell ^{2}(\mathfrak{L}))
\label{def w}
\end{equation}%
from $\mathbb{R}$ to the set $\mathcal{B}(\ell ^{2}(\mathfrak{L}))$ of
bounded operators acting on $\ell ^{2}(\mathfrak{L})$ is continuously
differentiable for every $\mathbf{A}\in \mathbf{C}_{0}^{\infty }$. Hence, $\{%
\mathrm{U}_{t,s}^{(\omega )}\}_{t\geq s}$ can explicitly be written as the
Dyson--Phillips series%
\begin{eqnarray}
&&\mathrm{U}_{t,s}^{(\omega ,\lambda ,\mathbf{A})}-\mathrm{U}_{t-s}^{(\omega
,\lambda )}  \label{rewritebisbis} \\
&=&\sum\limits_{k\in {\mathbb{N}}}(-i)^{k}\int_{s}^{t}\mathrm{d}s_{1}\cdots
\int_{s}^{s_{k-1}}\mathrm{d}s_{k}\mathrm{U}_{t-s_{1}}^{(\omega ,\lambda )}%
\mathbf{w}_{s_{1}}^{\mathbf{A}}\mathrm{U}_{s_{1}-s_{2}}^{(\omega ,\lambda
)}\cdots \mathrm{U}_{s_{k-1}-s_{k}}^{(\omega ,\lambda )}\mathbf{w}_{s_{k}}^{%
\mathbf{A}}\mathrm{U}_{s_{k}-s}^{(\omega ,\lambda )}  \notag
\end{eqnarray}%
for any $t\geq s$, $\omega \in \Omega $, $\lambda \in \mathbb{R}_{0}^{+}$
and $\mathbf{A}\in \mathbf{C}_{0}^{\infty }$. Since all operators are
bounded, it is easy to check that $\{\mathrm{U}_{t,s}^{(\omega )}\}_{t\geq
s} $ is a family of unitary operators.

We are now in position to represent the Bogoliubov automorphisms $\tau
_{t,s}^{(\omega ,\lambda ,\mathbf{A})}$ defined by (\ref{Cauchy problem 0})
as a Dyson--Phillips series involving the unperturbed dynamics defined by
the one--parameter group $\tau ^{(\omega ,\lambda )}:=\{\tau _{t}^{(\omega
,\lambda )}\}_{t\in {\mathbb{R}}}$, see (\ref{rescaled}) and (\ref%
{rescaledbis}). To this end, for every $\mathbf{A}\in \mathbf{C}_{0}^{\infty
}$, we denote the second quantization of $\mathbf{w}_{t}^{\mathbf{A}}$ by
\begin{eqnarray}
W_{t}^{\mathbf{A}} &=&\sum\limits_{x,y\in \mathfrak{L}}\left[ \exp \left(
-i\int\nolimits_{0}^{1}\left[ \mathbf{A}(t,\alpha y+(1-\alpha )x)\right]
(y-x)\mathrm{d}\alpha \right) -1\right]  \notag \\
&&\qquad \ \ \ \times \langle \mathfrak{e}_{x},\Delta _{\mathrm{d}}\mathfrak{%
e}_{y}\rangle a_{x}^{\ast }a_{y}\ ,  \label{def:W}
\end{eqnarray}%
see (\ref{eq discrete lapla A}), (\ref{eq def W}) and (\ref{def w}). Note
that there is a finite subset $\Lambda \in \mathcal{P}_{f}(\mathfrak{L})$
such that $W_{t}^{\mathbf{A}}\in \mathcal{U}_{\Lambda }$ for all $t\in
\mathbb{R}$ because $\mathbf{A}\in \mathbf{C}_{0}^{\infty }$. We also define
the continuously differentiable map
\begin{equation}
t\mapsto L_{t}^{\mathbf{A}}:=i[W_{t}^{\mathbf{A}},\ \cdot \ ]\in \mathcal{B}%
\left( \mathcal{U}\right)  \label{L new}
\end{equation}%
from $\mathbb{R}$ to the set $\mathcal{B}\left( \mathcal{U}\right) $ of
bounded operators acting on $\mathcal{U}$.

\begin{satz}[Dynamics as a Dyson--Phillips series]
\label{bound incr 1 Lemma copy(4)}\mbox{
}\newline
For any $\omega \in \Omega $, $\lambda \in \mathbb{R}_{0}^{+}$, $\mathbf{A}%
\in \mathbf{C}_{0}^{\infty }$ and $t,s\in {\mathbb{R}}$,\ $t\geq s$,
\begin{equation*}
\tau _{t,s}^{(\omega ,\lambda ,\mathbf{A})}=\tau _{t-s}^{(\omega ,\lambda
)}+\sum\limits_{k\in {\mathbb{N}}}\int_{s}^{t}\mathrm{d}s_{1}\cdots
\int_{s}^{s_{k-1}}\mathrm{d}s_{k}\tau _{s_{k}-s}^{(\omega ,\lambda
)}L_{s_{k}}^{\mathbf{A}}\tau _{s_{k-1}-s_{k}}^{(\omega ,\lambda )}\cdots
\tau _{s_{1}-s_{2}}^{(\omega ,\lambda )}L_{s_{1}}^{\mathbf{A}}\tau
_{t-s_{1}}^{(\omega ,\lambda )}\ .
\end{equation*}
\end{satz}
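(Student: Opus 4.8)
The plan is to show that the right-hand side, which I denote $T_{t,s}$, solves the same initial value problem as $\tau_{t,s}^{(\omega,\lambda,\mathbf{A})}$, and then to invoke uniqueness. First I would check that the series defining $T_{t,s}$ converges in operator norm in $\mathcal{B}(\mathcal{U})$: each automorphism $\tau_u^{(\omega,\lambda)}$ is isometric, and $t\mapsto W_t^{\mathbf{A}}$ is norm-bounded on compact time intervals, being a fixed, finitely supported operator whose coefficients are controlled by $\mathbf{A}$, cf.\ (\ref{def:W}); hence $\|L_u^{\mathbf{A}}\|\le 2\|W_u^{\mathbf{A}}\|\le C$ for $u\in[s,t]$. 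The $k$th summand is then bounded by $C^k$ times the volume $\tfrac{(t-s)^k}{k!}$ of the simplex $s\le s_k\le\cdots\le s_1\le t$, so the series is dominated by $\mathrm{e}^{C(t-s)}$ and converges absolutely and uniformly on compacts; the same estimate legitimizes the term-by-term manipulations below.

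Second, I would identify the generator appearing in the paper's Cauchy problem (\ref{Cauchy problem 1}). Since the frozen group $\{\tau_r^{(\omega,\lambda,\mathbf{A}(t,\cdot))}\}_r$ is the Bogoliubov group associated with $H_0+\mathbf{w}_t^{\mathbf{A}}$, where $H_0:=\Delta_{\mathrm{d}}+\lambda V_\omega$, a direct CAR computation using (\ref{CAR}) gives $L_t^{\mathbf{A}}(a(\psi))=i[W_t^{\mathbf{A}},a(\psi)]=-i\,a(\mathbf{w}_t^{\mathbf{A}}\psi)$, which is exactly the perturbation of $\delta^{(\omega,\lambda)}(a(\psi))$ produced by replacing $H_0$ with $H_0+\mathbf{w}_t^{\mathbf{A}}$. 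Extending by the derivation property yields $\delta_t^{(\omega,\lambda,\mathbf{A})}=\delta^{(\omega,\lambda)}+L_t^{\mathbf{A}}$, a bounded perturbation of the unbounded derivation $\delta^{(\omega,\lambda)}$.

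Third, I would differentiate $T_{t,s}$ in $t$ term by term. In the $k$th summand $t$ enters through the upper limit of the outermost integral and through the rightmost factor $\tau_{t-s_1}^{(\omega,\lambda)}$. The boundary contribution at $s_1=t$ collapses $\tau_{t-s_1}^{(\omega,\lambda)}$ to $\mathbf{1}$ and, after relabeling, reassembles into $T_{t,s}\,L_t^{\mathbf{A}}$, while the contributions from $\partial_t\tau_{t-s_1}^{(\omega,\lambda)}=\tau_{t-s_1}^{(\omega,\lambda)}\delta^{(\omega,\lambda)}$ reassemble into $T_{t,s}\,\delta^{(\omega,\lambda)}$ on $\mathrm{Dom}(\delta^{(\omega,\lambda)})$. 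Hence $\partial_t T_{t,s}=T_{t,s}(\delta^{(\omega,\lambda)}+L_t^{\mathbf{A}})=T_{t,s}\,\delta_t^{(\omega,\lambda,\mathbf{A})}$ with $T_{s,s}=\mathbf{1}$, i.e.\ $T_{t,s}$ satisfies precisely (\ref{Cauchy problem 1}), the problem already known to be solved by $\tau_{t,s}^{(\omega,\lambda,\mathbf{A})}$.

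Finally, uniqueness. The difference $R_{t,s}:=T_{t,s}-\tau_{t,s}^{(\omega,\lambda,\mathbf{A})}$ solves $\partial_t R_{t,s}=R_{t,s}(\delta^{(\omega,\lambda)}+L_t^{\mathbf{A}})$ with $R_{s,s}=0$, and the apparent difficulty is that $\delta^{(\omega,\lambda)}$ is unbounded. I would remove it by conjugating with the free flow: setting $\hat R_{t,s}:=R_{t,s}\circ\tau_{s-t}^{(\omega,\lambda)}$ and using that $\delta^{(\omega,\lambda)}$ commutes with $\tau_r^{(\omega,\lambda)}$, the unbounded terms cancel and one is left with $\partial_t\hat R_{t,s}=\hat R_{t,s}\,(\tau_{t-s}^{(\omega,\lambda)}L_t^{\mathbf{A}}\tau_{s-t}^{(\omega,\lambda)})$, an ODE with a uniformly bounded conjugated generator and zero initial datum, so Gronwall's inequality forces $\hat R_{t,s}\equiv0$ and thus $T_{t,s}=\tau_{t,s}^{(\omega,\lambda,\mathbf{A})}$. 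The main obstacle is exactly this last point — making the differential identities rigorous on $\mathrm{Dom}(\delta^{(\omega,\lambda)})$ and obtaining uniqueness in the presence of the unbounded generator; the boundedness and finite support of $W_t^{\mathbf{A}}$, hence of $L_t^{\mathbf{A}}$, is precisely what makes the conjugation trick reduce everything to a bounded ODE.
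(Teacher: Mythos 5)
Your series manipulations (the convergence estimate, the identification $L_t^{\mathbf{A}}(a(\psi))=-i\,a(\mathbf{w}_t^{\mathbf{A}}\psi)$ of the generator, the term-by-term differentiation of the series, and the conjugation-plus-Gronwall uniqueness argument) are all sound, and the last of these is a clean way to neutralize the unbounded derivation $\delta ^{(\omega ,\lambda )}$. The gap sits at the hinge of the whole argument: you take it as ``already known'' that $\tau _{t,s}^{(\omega ,\lambda ,\mathbf{A})}$ solves the abstract Cauchy problem (\ref{Cauchy problem 1}). In this paper $\tau _{t,s}^{(\omega ,\lambda ,\mathbf{A})}$ is \emph{defined} by the condition (\ref{Cauchy problem 0}), i.e., by second quantization of the one--particle propagator, and the statement that it satisfies (\ref{Cauchy problem 1}) is Corollary \ref{bound incr 1 Lemma copy(6)} --- which is deduced \emph{from} the very theorem you are proving. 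As written, your argument is therefore circular. To close it you would have to verify independently that the Bogoliubov family defined by (\ref{Cauchy problem 0}) obeys (\ref{Cauchy problem 1}) on $\mathrm{Dom}(\delta ^{(\omega ,\lambda )})$; this is plausible (differentiate $a((\mathrm{U}_{t,s}^{(\omega ,\lambda ,\mathbf{A})})^{\ast }\psi )$ using (\ref{time evolution one-particle}) and extend from monomials by the derivation property and density), but it is exactly the kind of unbounded-generator bookkeeping your proof was meant to avoid, and you do not supply it.

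The paper sidesteps the issue entirely: it never differentiates $\tau _{t,s}^{(\omega ,\lambda ,\mathbf{A})}$. Instead it applies the candidate series $\check{\tau}_{t,s}$ to a single annihilation operator, uses the pull-through formula (\ref{extra}) to match the resulting expansion with the \emph{one--particle} Dyson--Phillips series (\ref{rewritebisbis}) for $\mathrm{U}_{t,s}^{(\omega ,\lambda ,\mathbf{A})}$, checks that $\check{\tau}_{t,s}$ is an automorphism, and concludes $\check{\tau}_{t,s}=\tau _{t,s}^{(\omega ,\lambda ,\mathbf{A})}$ from the uniqueness of Bogoliubov automorphisms with a prescribed action on the $a(\psi )$, see \cite[Theorem 5.2.5]{BratteliRobinson}. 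If you reroute your comparison through (\ref{Cauchy problem 0}) rather than (\ref{Cauchy problem 1}), the remainder of your write-up stands.
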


\begin{proof}
Let $\omega \in \Omega $, $\lambda \in \mathbb{R}_{0}^{+}$, $\mathbf{A}\in
\mathbf{C}_{0}^{\infty }$ and define%
\begin{equation}
\check{\tau}_{t,s}^{(\omega ,\lambda ,\mathbf{A})}:=\tau _{t-s}^{(\omega
,\lambda )}+\sum\limits_{k\in {\mathbb{N}}}\int_{s}^{t}\mathrm{d}s_{1}\cdots
\int_{s}^{s_{k-1}}\mathrm{d}s_{k}\tau _{s_{k}-s}^{(\omega ,\lambda
)}L_{s_{k}}^{\mathbf{A}}\tau _{s_{k-1}-s_{k}}^{(\omega ,\lambda )}\cdots
\tau _{s_{1}-s_{2}}^{(\omega ,\lambda )}L_{s_{1}}^{\mathbf{A}}\tau
_{t-s_{1}}^{(\omega ,\lambda )}  \label{rewritebis}
\end{equation}%
for any $t\geq s$. This series is absolutely convergent. Indeed, $\tau
^{(\omega ,\lambda )}:=\{\tau _{t}^{(\omega ,\lambda )}\}_{t\in {\mathbb{R}}%
} $ is a norm--continuous one--parameter group of contractions, i.e.,
\begin{equation*}
\Vert \tau _{t}^{(\omega ,\lambda )}\Vert _{\mathrm{op}}\leq 1\ ,\qquad t\in
\mathbb{R}\ ,
\end{equation*}%
whereas, for any $\mathbf{A}\in \mathbf{C}_{0}^{\infty }$, the map (\ref{L
new}) is continuously differentiable and there is a constant $D\in \mathbb{R}%
^{+}$ such that%
\begin{equation}
\underset{t\in \mathbb{R}}{\sup }\Vert L_{t}^{\mathbf{A}}\Vert _{\mathrm{op}%
}<D\ ,  \label{rewrite new}
\end{equation}%
because $W_{t}^{\mathbf{A}}=0$ for any $t\notin \lbrack t_{0},t_{1}]$, i.e.,
there is no electromagnetic field for times $t\notin \lbrack t_{0},t_{1}]$.
Here, the notation $\Vert \cdot \Vert _{\mathrm{op}}$ stands for the
operator norm. By (\ref{rewritebis})--(\ref{rewrite new}), it follows that
\begin{equation*}
\Vert \check{\tau}_{t,s}^{(\omega ,\lambda ,\mathbf{A})}\Vert _{\mathrm{op}%
}\leq \mathrm{e}^{D(t-s)}\ ,\qquad t,s\in {\mathbb{R}},\ t\geq s\ .
\end{equation*}%
Now, straightforward computations using (\ref{def w}) and (\ref{L new}) show
that the following \textquotedblleft pull through\textquotedblright\ formula
holds:
\begin{equation}
L_{t}^{\mathbf{A}}(a(\psi ))=a(i\mathbf{w}_{t}^{\mathbf{A}}\psi )\ ,\text{%
\qquad }t\in {\mathbb{R}},\ \psi \in \ell ^{2}(\mathfrak{L})\ . \label{extra}
\end{equation}%
We therefore infer from (\ref{rescaledbis}), (\ref{rewritebisbis}) and (\ref%
{rewritebis}) that
\begin{equation}
\check{\tau}_{t,s}^{(\omega ,\lambda ,\mathbf{A})}\left( a\left( \psi
\right) \right) =a((\mathrm{U}_{t,s}^{(\omega ,\lambda ,\mathbf{A})})^{\ast
}(\psi ))\ ,\text{\qquad }t\geq s,\ \psi \in \ell ^{2}(\mathfrak{L})\ ,
\label{automorphism taucheck}
\end{equation}%
for all $\omega \in \Omega $, $\lambda \in \mathbb{R}_{0}^{+}$ and $\mathbf{A%
}\in \mathbf{C}_{0}^{\infty }$. Direct computations show, for all $t\geq s$,
that $\check{\tau}_{t,s}^{(\omega ,\lambda ,\mathbf{A})}$ is an automorphism
of $\mathcal{U}$: Use the fact that, for all $t\in {\mathbb{R}}$, $\tau
_{t}^{(\omega ,\lambda )}$ is an automorphisms of $\mathcal{U}$ and $L_{t}^{%
\mathbf{A}}$ is a bounded symmetric derivation on $\mathcal{U}$, i.e., $%
L_{t}^{\mathbf{A}}(B_{1}^{\ast })=L_{t}^{\mathbf{A}}(B_{1})^{\ast }$ and
\begin{equation*}
L_{t}^{\mathbf{A}}(B_{1}B_{2})=L_{t}^{\mathbf{A}}(B_{1})B_{2}+B_{1}L_{t}^{%
\mathbf{A}}(B_{2})\in \mathcal{U}\ ,\qquad B_{1},B_{2}\in \mathcal{U}\ .
\end{equation*}%
By \cite[Theorem 5.2.5]{BratteliRobinson}, the condition (\ref{automorphism
taucheck}) uniquely defines automorphisms of $\mathcal{U}$. \ As a
consequence, one gets $\check{\tau}_{t,s}^{(\omega ,\lambda ,\mathbf{A}%
)}=\tau _{t,s}^{(\omega ,\lambda ,\mathbf{A})}$, see (\ref{Cauchy problem 0}%
).
\end{proof}

A straightforward consequence of Theorem \ref{bound incr 1 Lemma copy(4)} is
that, for any $\omega \in \Omega $ and $\lambda \in \mathbb{R}_{0}^{+}$, the
family $\{\tau _{t,s}^{(\omega ,\lambda ,\mathbf{A})}\}_{t\geq s}$ satisfies
(\ref{Cauchy problem 1}) with%
\begin{equation}
\delta _{t}^{(\omega ,\lambda ,\mathbf{A})}:=\delta ^{(\omega ,\lambda
)}+i[W_{t}^{\mathbf{A}},\ \cdot \ ]\ ,\text{\qquad }t\in \mathbb{R}\ .
\label{explicit delta}
\end{equation}%
Here, the symmetric derivation $\delta ^{(\omega ,\lambda )}$ is the
(unbounded) generator of the one--parameter group $\tau ^{(\omega ,\lambda
)} $. Indeed, one obtains:

\begin{koro}[Abstract Cauchy initial value problem for $\protect\tau %
_{t,s}^{(\protect\omega ,\protect\lambda ,\mathbf{A})}$]
\label{bound incr 1 Lemma copy(6)}\mbox{
}\newline
For any $\omega \in \Omega $, $\lambda \in \mathbb{R}_{0}^{+}$ and $\mathbf{A%
}\in \mathbf{C}_{0}^{\infty }$, $\{\tau _{t,s}^{(\omega ,\lambda ,\mathbf{A}%
)}\}_{t\geq s}$ satisfies (\ref{Cauchy problem 1}), that is,
\begin{equation*}
\forall t,s\in {\mathbb{R}},\ t\geq s:\quad \partial _{t}\tau
_{t,s}^{(\omega ,\lambda ,\mathbf{A})}=\tau _{t,s}^{(\omega ,\lambda ,%
\mathbf{A})}\circ \delta _{t}^{(\omega ,\lambda ,\mathbf{A})},\quad \tau
_{s,s}^{(\omega ,\lambda ,\mathbf{A})}:=\mathbf{1}\ ,
\end{equation*}%
on the dense subspace $\mathrm{Dom}(\delta ^{(\omega ,\lambda )})\subset
\mathcal{U}$.
\end{koro}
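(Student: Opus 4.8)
The plan is to obtain the evolution equation (\ref{Cauchy problem 1}) by differentiating, term by term in $t$, the Dyson--Phillips series for $\tau _{t,s}^{(\omega ,\lambda ,\mathbf{A})}$ furnished by Theorem \ref{bound incr 1 Lemma copy(4)}. The initial condition $\tau _{s,s}^{(\omega ,\lambda ,\mathbf{A})}=\mathbf{1}$ is immediate: at $t=s$ every iterated integral $\int_{s}^{s}\mathrm{d}s_{1}\cdots $ vanishes and only the zeroth term $\tau _{0}^{(\omega ,\lambda )}=\mathbf{1}$ survives. Denote by $T_{k}(t)$ the $k$th summand of that series, with $T_{0}(t):=\tau _{t-s}^{(\omega ,\lambda )}$, so that $\tau _{t,s}^{(\omega ,\lambda ,\mathbf{A})}=\sum_{k\geq 0}T_{k}(t)$.

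First I would isolate the $t$--dependence of each $T_{k}(t)$, $k\geq 1$: it enters only through the upper endpoint $s_{1}=t$ of the outermost integration and through the rightmost factor $\tau _{t-s_{1}}^{(\omega ,\lambda )}$. Differentiating the endpoint and using $\tau _{0}^{(\omega ,\lambda )}=\mathbf{1}$, the boundary contribution is obtained by setting $s_{1}=t$ in the integrand; after relabelling $s_{2},\dots ,s_{k}$ as $s_{1},\dots ,s_{k-1}$ one recognizes exactly $T_{k-1}(t)\circ L_{t}^{\mathbf{A}}$, with $L_{t}^{\mathbf{A}}$ as in (\ref{L new}). Differentiating instead the factor $\tau _{t-s_{1}}^{(\omega ,\lambda )}$ via $\partial _{t}\tau _{t-s_{1}}^{(\omega ,\lambda )}=\tau _{t-s_{1}}^{(\omega ,\lambda )}\circ \delta ^{(\omega ,\lambda )}$, valid on $\mathrm{Dom}(\delta ^{(\omega ,\lambda )})$, reproduces $T_{k}(t)\circ \delta ^{(\omega ,\lambda )}$. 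Hence $\partial _{t}T_{k}(t)=T_{k-1}(t)\circ L_{t}^{\mathbf{A}}+T_{k}(t)\circ \delta ^{(\omega ,\lambda )}$ for $k\geq 1$, while $\partial _{t}T_{0}(t)=T_{0}(t)\circ \delta ^{(\omega ,\lambda )}$.

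Summing over $k$ and using $\sum_{k\geq 1}T_{k-1}(t)\circ L_{t}^{\mathbf{A}}=\big(\sum_{j\geq 0}T_{j}(t)\big)\circ L_{t}^{\mathbf{A}}=\tau _{t,s}^{(\omega ,\lambda ,\mathbf{A})}\circ L_{t}^{\mathbf{A}}$ together with $\sum_{k\geq 0}T_{k}(t)\circ \delta ^{(\omega ,\lambda )}=\tau _{t,s}^{(\omega ,\lambda ,\mathbf{A})}\circ \delta ^{(\omega ,\lambda )}$, I arrive at
\begin{equation*}
\partial _{t}\tau _{t,s}^{(\omega ,\lambda ,\mathbf{A})}=\tau _{t,s}^{(\omega ,\lambda ,\mathbf{A})}\circ \big(\delta ^{(\omega ,\lambda )}+L_{t}^{\mathbf{A}}\big)=\tau _{t,s}^{(\omega ,\lambda ,\mathbf{A})}\circ \delta _{t}^{(\omega ,\lambda ,\mathbf{A})},
\end{equation*}
the last equality being the definition (\ref{explicit delta}). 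This is precisely (\ref{Cauchy problem 1}).

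The main point requiring care is the justification of the term--by--term differentiation together with the unboundedness of $\delta ^{(\omega ,\lambda )}$, which is why the identity only holds on the dense domain $\mathrm{Dom}(\delta ^{(\omega ,\lambda )})$. Fixing $B\in \mathrm{Dom}(\delta ^{(\omega ,\lambda )})$ and using $\Vert \tau _{t}^{(\omega ,\lambda )}\Vert _{\mathrm{op}}\leq 1$ together with the uniform bound $\sup _{t}\Vert L_{t}^{\mathbf{A}}\Vert _{\mathrm{op}}<D$ from (\ref{rewrite new}), one gets $\Vert T_{k}(t)\Vert _{\mathrm{op}}\leq D^{k}(t-s)^{k}/k!$, so the boundary series $\sum_{k}T_{k-1}(t)\circ L_{t}^{\mathbf{A}}$ converges in operator norm, while $\sum_{k}T_{k}(t)(\delta ^{(\omega ,\lambda )}(B))$ converges in $\mathcal{U}$ because $\delta ^{(\omega ,\lambda )}(B)$ is a fixed vector. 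Both bounds are uniform for $t$ in compact intervals, which legitimizes exchanging $\partial _{t}$ with the summation. One also checks that each $\tau _{t}^{(\omega ,\lambda )}$ and $L_{t}^{\mathbf{A}}$ preserve $\mathrm{Dom}(\delta ^{(\omega ,\lambda )})$ (the latter because $W_{t}^{\mathbf{A}}$ is local, hence in the domain, and $\delta ^{(\omega ,\lambda )}$ is a derivation), so that every intermediate expression remains well defined.
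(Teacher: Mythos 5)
Your proof is correct and follows essentially the paper's route: both arguments start from the Dyson--Phillips series of Theorem \ref{bound incr 1 Lemma copy(4)}, observe that the $t$--dependence sits only in the outermost integration endpoint and in the rightmost factor $\tau _{t-s_{1}}^{(\omega ,\lambda )}$, and differentiate on $\mathrm{Dom}(\delta ^{(\omega ,\lambda )})$. The only organizational difference is that the paper first resums the series into the one--step integral equation $\tau _{t,s}^{(\omega ,\lambda ,\mathbf{A})}(B)=\tau _{t-s}^{(\omega ,\lambda )}(B)+\int_{s}^{t}\tau _{s_{1},s}^{(\omega ,\lambda ,\mathbf{A})}L_{s_{1}}^{\mathbf{A}}\tau _{t-s_{1}}^{(\omega ,\lambda )}(B)\,\mathrm{d}s_{1}$, so that the differentiation becomes a single application of the fundamental theorem of calculus rather than the term--by--term differentiation you (correctly) justify via the uniform bounds $\Vert T_{k}(t)\Vert _{\mathrm{op}}\leq D^{k}(t-s)^{k}/k!$.
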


\begin{proof}
By Theorem \ref{bound incr 1 Lemma copy(4)}, the family $\{\tau
_{t,s}^{(\omega ,\lambda ,\mathbf{A})}\}_{t\geq s}$ obeys the integral
equation
\begin{equation*}
\tau _{t,s}^{(\omega ,\lambda ,\mathbf{A})}\left( B\right) =\tau
_{t-s}^{(\omega ,\lambda )}\left( B\right) +\int_{s}^{t}\tau
_{s_{1},s}^{(\omega ,\lambda )}L_{s_{1}}^{\mathbf{A}}\tau
_{t-s_{1}}^{(\omega ,\lambda )}\left( B\right) \mathrm{d}s_{1}\ ,\qquad B\in
\mathcal{U}\ ,
\end{equation*}%
which directly yields the assertion because $\mathbf{A}\in \mathbf{C}%
_{0}^{\infty }$.
\end{proof}

Recall the notation%
\begin{equation}
W_{t,s}^{\mathbf{A}}\equiv W_{t,s}^{(\omega ,\lambda ,\mathbf{A})}:=\tau
_{t}^{(\omega ,\lambda )}(W_{s}^{\mathbf{A}})\in \mathcal{U},\qquad \omega
\in \Omega ,\ \lambda \in \mathbb{R}_{0}^{+},\ \mathbf{A}\in \mathbf{C}%
_{0}^{\infty },\ t,s\in \mathbb{R}\ ,\   \label{def LA}
\end{equation}%
and the inductive definition (\ref{multi1-0})--(\ref{multi2-0}) of
multi--commutators:
\begin{equation}
{[}B_{1},B_{2}{]}^{(2)}:=[B_{1},B_{2}]:=B_{1}B_{2}-B_{2}B_{1}\ ,\qquad
B_{1},B_{2}\in \mathcal{U}\ ,  \label{multi1}
\end{equation}%
and, for all integers $k>2$,
\begin{equation}
{[}B_{1},B_{2},\ldots ,B_{k+1}{]}^{(k+1)}:={[}B_{1},{[}B_{2},\ldots ,B_{k+1}{%
]}^{(k)}{]}\ ,\qquad B_{1},\ldots ,B_{k+1}\in \mathcal{U}\ .  \label{multi2}
\end{equation}%
Then, using (\ref{inequality idiote}) we rewrite the Dyson--Phillips series
of Theorem \ref{bound incr 1 Lemma copy(4)} as%
\begin{eqnarray}
&&\tau _{t,s}^{(\omega ,\lambda ,\mathbf{A})}\left( B\right) -\tau
_{t-s}^{(\omega ,\lambda )}\left( B\right)  \label{Dyson tau 1} \\
&=&\sum\limits_{k\in {\mathbb{N}}}i^{k}\int_{s}^{t}\mathrm{d}s_{1}\cdots
\int_{s}^{s_{k-1}}\mathrm{d}s_{k}[W_{s_{k}-s,s_{k}}^{\mathbf{A}},\ldots
,W_{s_{1}-s,s_{1}}^{\mathbf{A}},\tau _{t-s}^{(\omega ,\lambda )}(B)]^{(k+1)}
\notag
\end{eqnarray}%
for any $B\in \mathcal{U}$, $\omega \in \Omega $, $\lambda \in \mathbb{R}%
_{0}^{+}$, $\mathbf{A}\in \mathbf{C}_{0}^{\infty }$ and $t\geq s$.

\subsection{Interaction Picture of Dynamics\label{Section interaction
picture}}

In contrast to the two--parameter family $\{\tau _{t,s}^{(\omega ,\lambda ,%
\mathbf{A})}\}_{t\geq s}$,
\begin{equation*}
\{\tau _{t_{0}}^{(\omega ,\lambda )}\circ \tau _{t,t_{0}}^{(\omega ,\lambda ,%
\mathbf{A})}\circ \tau _{-t}^{(\omega ,\lambda )}\}_{t\geq t_{0}}
\end{equation*}%
is a family of \emph{inner} automorphisms of the CAR\ algebra $\mathcal{U}$,
i.e., it can be implemented by conjugation with unitary elements $\mathfrak{V%
}_{t,t_{0}}$\ of $\mathcal{U}$, similar to Remark \ref{Heisenberg Picture
remark}:
\begin{equation*}
\tau _{t_{0}}^{(\omega ,\lambda )}\circ \tau _{t,t_{0}}^{(\omega ,\lambda ,%
\mathbf{A})}\circ \tau _{-t}^{(\omega ,\lambda )}\left( B\right) =\mathfrak{V%
}_{t,t_{0}}B\mathfrak{V}_{t,t_{0}}^{\ast }\ ,\qquad B\in \mathcal{U}\ .
\end{equation*}%
On the other hand, by using two times the stationarity of the KMS state $%
\varrho ^{(\beta ,\omega ,\lambda )}$ w.r.t. the unperturbed dynamics (cf. (%
\ref{stationary})) as well as (\ref{inequality idiote}), we observe that the
time evolution (\ref{time dependent state}) of the state of the fermion
system equals
\begin{eqnarray}
\rho _{t}^{(\beta ,\omega ,\lambda ,\mathbf{A})}\left( B\right) &=&\varrho
^{(\beta ,\omega ,\lambda )}\circ \tau _{t_{0}}^{(\omega ,\lambda )}\circ
\tau _{t,t_{0}}^{(\omega ,\lambda ,\mathbf{A})}\left( B\right)
\label{explanation} \\
&=&\varrho ^{(\beta ,\omega ,\lambda )}\left( \mathfrak{V}_{t,t_{0}}\tau
_{t}^{(\omega ,\lambda ,\mathbf{A})}\left( B\right) \mathfrak{V}%
_{t,t_{0}}^{\ast }\right) =\varrho ^{(\beta ,\omega ,\lambda )}\left(
\mathfrak{U}_{t}^{\ast }B\mathfrak{U}_{t}\right)  \notag
\end{eqnarray}%
for any $t\geq t_{0}$, where%
\begin{equation}
\mathfrak{U}_{t}:=\tau _{-t}^{(\omega ,\lambda )}\left( \mathfrak{V}%
_{t,t_{0}}^{\ast }\right) \ ,\qquad t\geq t_{0}\ .
\label{definition tho chap inv3}
\end{equation}%
This family of unitary elements of $\mathcal{U}$ turns out to be within the
domain $\mathrm{Dom}(\delta ^{(\omega ,\lambda )})$ of the (unbounded)
generator $\delta ^{(\omega ,\lambda )}$ of the one--parameter group $\tau
^{(\omega ,\lambda )}$ of automorphisms. These properties are quite useful
to show in Section \ref{Section Existence} both the existence of the energy
increment (\ref{entropic energy increment}) as well as Theorem \ref{main 1
copy(1)}.

The above heuristics is proven in the following theorem:

\begin{satz}[Interaction picture of dynamics]
\label{Theo int pict}\mbox{
}\newline
For any $\omega \in \Omega $, $\lambda \in \mathbb{R}_{0}^{+}$ and $\mathbf{A%
}\in \mathbf{C}_{0}^{\infty }$, there is a family
\begin{equation*}
\{\mathfrak{U}_{t}\equiv \mathfrak{U}_{t}^{(\omega ,\lambda ,\mathbf{A}%
)}\}_{t\geq t_{0}}\subset \mathrm{Dom}(\delta ^{(\omega ,\lambda )})
\end{equation*}%
of unitary elements of $\mathcal{U}$ such that, for all $\beta \in \mathbb{R}%
^{+}$, $t\geq t_{0}$ and $B\in $ $\mathcal{U}$,%
\begin{equation*}
\rho _{t}^{(\beta ,\omega ,\lambda ,\mathbf{A})}\left( B\right) =\varrho
^{(\beta ,\omega ,\lambda )}\left( \mathfrak{U}_{t}^{\ast }B\mathfrak{U}%
_{t}\right) \ .
\end{equation*}
\end{satz}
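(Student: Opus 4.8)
The plan is to realise the interaction picture by exhibiting an explicit \emph{unitary cocycle} in $\mathcal{U}$ that intertwines the perturbed evolution with the free evolution $\tau^{(\omega,\lambda)}$, and then to feed this cocycle into the formal computation already displayed in (\ref{explanation}). By Corollary \ref{bound incr 1 Lemma copy(6)} the family $\{\tau_{t,s}^{(\omega,\lambda,\mathbf{A})}\}_{t\geq s}$ solves the Cauchy problem (\ref{Cauchy problem 1}) whose generator $\delta_t^{(\omega,\lambda,\mathbf{A})}=\delta^{(\omega,\lambda)}+i[W_t^{\mathbf{A}},\,\cdot\,]$ differs from $\delta^{(\omega,\lambda)}$ by the bounded symmetric derivation induced by the \emph{local} self-adjoint element $W_t^{\mathbf{A}}\in\mathcal{U}$ of (\ref{def:W}); moreover $\sup_t\|W_t^{\mathbf{A}}\|<\infty$ and $W_t^{\mathbf{A}}=0$ for $t\notin[t_0,t_1]$. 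This is precisely the setting of a (time-dependent) bounded perturbation of a $C^{\ast}$-dynamics \cite[Section 5.4]{BratteliRobinson}. I would therefore introduce a norm-convergent Dyson series
\[
\Gamma_{t,s}:=\mathbf{1}+\sum_{k\in\mathbb{N}}i^{k}\int_{s}^{t}\mathrm{d}s_{1}\cdots\int_{s}^{s_{k-1}}\mathrm{d}s_{k}\,\tau_{s_{k}-s}^{(\omega,\lambda)}(W_{s_{k}}^{\mathbf{A}})\cdots\tau_{s_{1}-s}^{(\omega,\lambda)}(W_{s_{1}}^{\mathbf{A}})\in\mathcal{U},
\]
whose convergence is immediate from the above bounds, and whose precise factor ordering is fixed by demanding the intertwining below.

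The first substantive step is to verify the inner-ness asserted before the theorem. One checks directly from the series that $\Gamma_{t,s}$ is unitary and that $B\mapsto\Gamma_{t,s}\tau_{t-s}^{(\omega,\lambda)}(B)\Gamma_{t,s}^{\ast}$ solves the same Cauchy problem (\ref{Cauchy problem 1}) with the same initial value $\mathbf{1}$ at $t=s$; uniqueness (exactly as in the proof of Theorem \ref{bound incr 1 Lemma copy(4)}, via \cite[Theorem 5.2.5]{BratteliRobinson}) then gives $\tau_{t,s}^{(\omega,\lambda,\mathbf{A})}(B)=\Gamma_{t,s}\tau_{t-s}^{(\omega,\lambda)}(B)\Gamma_{t,s}^{\ast}$. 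Putting $\mathfrak{V}_{t,t_0}:=\tau_{t_0}^{(\omega,\lambda)}(\Gamma_{t,t_0})$ and using $\tau_{t_0}^{(\omega,\lambda)}\circ\tau_{t-t_0}^{(\omega,\lambda)}\circ\tau_{-t}^{(\omega,\lambda)}=\mathrm{id}$ yields $\tau_{t_0}^{(\omega,\lambda)}\circ\tau_{t,t_0}^{(\omega,\lambda,\mathbf{A})}\circ\tau_{-t}^{(\omega,\lambda)}=\mathrm{Ad}(\mathfrak{V}_{t,t_0})$, which is the inner-ness statement, and in particular realises $\mathfrak{V}_{t,t_0}$ as a unitary of $\mathcal{U}$.

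With $\mathfrak{U}_t:=\tau_{-t}^{(\omega,\lambda)}(\mathfrak{V}_{t,t_0}^{\ast})=\tau_{t_0-t}^{(\omega,\lambda)}(\Gamma_{t,t_0}^{\ast})$ as in (\ref{definition tho chap inv3}), the identity is then just the computation (\ref{explanation}): starting from $\rho_{t}^{(\beta,\omega,\lambda,\mathbf{A})}=\varrho^{(\beta,\omega,\lambda)}\circ\tau_{t,t_0}^{(\omega,\lambda,\mathbf{A})}$ (see (\ref{time dependent state})), I would insert $\tau_{t_0}^{(\omega,\lambda)}$ on the left and $\tau_{-t}^{(\omega,\lambda)}\circ\tau_{t}^{(\omega,\lambda)}$ in the middle using stationarity (\ref{stationary}) of the KMS state, apply the inner representation just obtained, and then use stationarity once more together with the $\ast$-homomorphism property (\ref{inequality idiote}) to pull $\tau_{-t}^{(\omega,\lambda)}$ through the product $\mathfrak{V}_{t,t_0}\,\tau_{t}^{(\omega,\lambda)}(B)\,\mathfrak{V}_{t,t_0}^{\ast}$. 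This produces exactly $\rho_{t}^{(\beta,\omega,\lambda,\mathbf{A})}(B)=\varrho^{(\beta,\omega,\lambda)}(\mathfrak{U}_{t}^{\ast}B\,\mathfrak{U}_{t})$, for all $B\in\mathcal{U}$ and $t\geq t_0$; for $t\leq t_0$ one has $\Gamma_{t,t_0}=\mathbf{1}$ and the statement is trivial.

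The main obstacle is the regularity claim $\mathfrak{U}_t\in\mathrm{Dom}(\delta^{(\omega,\lambda)})$, and this is where the bulk of the work lies. The key observation is that each $W_s^{\mathbf{A}}$ is a finite sum of monomials $a_x^{\ast}a_y$, and $\delta^{(\omega,\lambda)}(a(\psi))=a(i(\Delta_{\mathrm{d}}+\lambda V_{\omega})\psi)$ is bounded in $\psi$, so every $W_s^{\mathbf{A}}$ lies in $\mathrm{Dom}(\delta^{(\omega,\lambda)})$ with $C':=\sup_s\|\delta^{(\omega,\lambda)}(W_s^{\mathbf{A}})\|<\infty$; since $\tau^{(\omega,\lambda)}$ preserves this domain and commutes with $\delta^{(\omega,\lambda)}$ there, every factor $\tau_{s_i-t_0}^{(\omega,\lambda)}(W_{s_i}^{\mathbf{A}})$ does too. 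Differentiating the series for $\Gamma_{t,t_0}$ term by term via the Leibniz rule, the resulting series converges in norm: writing $C:=\sup_s\|W_s^{\mathbf{A}}\|$, its $k$-th contribution is bounded by $(t-t_0)\,C'\,((t-t_0)C)^{k-1}/(k-1)!$, which sums to $(t-t_0)C'\,\mathrm{e}^{(t-t_0)C}$. Since $\delta^{(\omega,\lambda)}$ is closed and the partial sums lie in its domain, this gives $\Gamma_{t,t_0}\in\mathrm{Dom}(\delta^{(\omega,\lambda)})$; as the domain is a $\ast$-subalgebra preserved by $\tau_{t_0-t}^{(\omega,\lambda)}$, the property transfers to $\mathfrak{U}_t$. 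I expect the delicate bookkeeping to be the uniform-in-time norm estimates on $\delta^{(\omega,\lambda)}(W_s^{\mathbf{A}})$ and the justification of the term-by-term differentiation of the time-ordered operator-valued integral, rather than any conceptual difficulty.
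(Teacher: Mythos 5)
Your proposal follows essentially the same route as the paper's proof: your cocycle $\Gamma_{t,s}$ is, up to the reparametrization $\Gamma_{t,s}=\tau_{-s}^{(\omega,\lambda)}(\mathfrak{V}_{t,s})$, exactly the Dyson series (\ref{Dyson interact}), the domain membership $\mathfrak{U}_{t}\in\mathrm{Dom}(\delta^{(\omega,\lambda)})$ is obtained from the same graph--norm summation of the series, and the final identity is the stationarity computation (\ref{explanation}). The one step to spell out with more care is the identification $\tau_{t,s}^{(\omega,\lambda,\mathbf{A})}(B)=\Gamma_{t,s}\,\tau_{t-s}^{(\omega,\lambda)}(B)\,\Gamma_{t,s}^{\ast}$: because $\delta_{t}^{(\omega,\lambda,\mathbf{A})}$ is unbounded, uniqueness for the Cauchy problem (\ref{Cauchy problem 1}) is not automatic, so one must either verify the action on the generators $a(\psi)$ and invoke \cite[Theorem 5.2.5]{BratteliRobinson} as you indicate, or run the paper's interpolation argument (\ref{uniqueness eq}) combining the forward and backward evolution equations.
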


\begin{proof}
The arguments to prove this theorem are relatively standard for autonomous
perturbations of KMS states, see \cite[Sections 5.4.1.]{BratteliRobinson}.
We adapt them to the non--autonomous case as suggested in \cite[Sections
5.4.4., Proposition 5.4.26.]{BratteliRobinson}. However, in contrast to \cite%
[Sections 5.4.1., 5.4.4.]{BratteliRobinson}, the situation we treat here
requires more care because the symmetric derivation $\delta ^{(\omega
,\lambda )}$ is \emph{unbounded}.

For any $\omega \in \Omega $, $\lambda \in \mathbb{R}_{0}^{+}$ and $\mathbf{A%
}\in \mathbf{C}_{0}^{\infty }$, we define the family $\{\mathfrak{U}%
_{t,s}\}_{t,s\in \mathbb{R}}\subset \mathcal{U}$ by the series
\begin{equation}
\mathfrak{V}_{t,s}\equiv \mathfrak{V}_{t,s}^{(\omega ,\lambda ,\mathbf{A})}:=%
\mathbf{1+}\sum\limits_{k\in {\mathbb{N}}}i^{k}\int_{s}^{t}\mathrm{d}%
s_{1}\cdots \int_{s}^{s_{k-1}}\mathrm{d}s_{k}W_{s_{k},s_{k}}^{\mathbf{A}%
}\cdots W_{s_{1},s_{1}}^{\mathbf{A}}\ ,  \label{Dyson interact}
\end{equation}%
where we recall that $W_{t,s}^{\mathbf{A}}\equiv W_{t,s}^{(\omega ,\lambda ,%
\mathbf{A})}\in \mathcal{U}$ is defined by (\ref{def LA}) for any $t,s\in
\mathbb{R}$. The series is well--defined in the Banach space%
\begin{equation}
\mathcal{Y}:=(\mathrm{Dom}(\delta ^{(\omega ,\lambda )}),\left\Vert \cdot
\right\Vert _{\delta ^{(\omega ,\lambda )}})\ ,  \label{graph space}
\end{equation}%
where $\left\Vert \cdot \right\Vert _{\delta ^{(\omega ,\lambda )}}$ stands
for the graph norm of the closed operator $\delta ^{(\omega ,\lambda )}$. In
particular,
\begin{equation}
\{\mathfrak{V}_{t,s}\}_{t,s\in \mathbb{R}}\subset \mathrm{Dom}(\delta
^{(\omega ,\lambda )})\ .  \label{toto0}
\end{equation}

Indeed, the strongly continuous group $\tau ^{(\omega ,\lambda )}$ on $%
\mathcal{U}$ defines, by restriction, a strongly continuous group on $%
\mathcal{Y}$. For more details, see, e.g., \cite[Section II.5.a, 5.2
Proposition]{EngelNagel}. Observe also from the strong continuity and group
property in $\mathcal{Y}$ of the restriction of $\tau ^{(\omega ,\lambda )}$
to the space $\mathrm{Dom}(\delta ^{(\omega ,\lambda )})$ that%
\begin{equation}
\left\Vert \tau _{t}^{(\omega ,\lambda )}|_{\mathrm{Dom}(\delta ^{(\omega
,\lambda )})}\right\Vert _{\mathcal{B}(\mathcal{Y})}\leq D_{1}e^{D_{^{2}}|t|}
\label{bound semigroup}
\end{equation}%
for some finite constants $D_{1},D_{2}\in \mathbb{R}^{+}$ and all $t\in
\mathbb{R}$. Here, $\mathcal{B}(\mathcal{Y})$ is the Banach space of bounded
operators acting on $\mathcal{Y}$. Moreover, for any $\mathbf{A}\in \mathbf{C%
}_{0}^{\infty }$, $s\mapsto W_{s}^{\mathbf{A}}$ is a smooth, compactly
supported map from $\mathbb{R}$ to $\mathcal{Y}$. Since $\delta ^{(\omega
,\lambda )}$ is a symmetric derivation, it follows that the series (\ref%
{Dyson interact}) absolutely converges in the Banach space $\mathcal{Y}$ and%
\begin{equation*}
\mathfrak{V}_{t,s}=\mathbf{1+}\sum\limits_{k\in {\mathbb{N}}%
}i^{k}\int_{s}^{t}\mathrm{d}s_{k}\cdots \int_{s_{2}}^{t}\mathrm{d}%
s_{1}W_{s_{k},s_{k}}^{\mathbf{A}}\cdots W_{s_{1},s_{1}}^{\mathbf{A}}\ ,
\end{equation*}%
where the r.h.s. of this equation also absolutely converges in $\mathcal{Y}$%
. Therefore, for any $t,s\in \mathbb{R}$, the operator $\mathfrak{V}_{t,s}$
obeys the integral equation
\begin{equation}
\mathfrak{V}_{t,s}=\mathbf{1}+i\int_{s}^{t}\mathfrak{V}%
_{s_{1},s}W_{s_{1},s_{1}}^{\mathbf{A}}\mathrm{d}s_{1}=\mathbf{1}%
+i\int_{s}^{t}W_{s_{1},s_{1}}^{\mathbf{A}}\mathfrak{V}_{t,s_{1}}\mathrm{d}%
s_{1}  \label{Dyson interact +1}
\end{equation}%
in $\mathcal{Y}$. The families $\{\mathfrak{U}_{t,s}\}_{t,s\in \mathbb{R}}$
and $\{W_{t,t}^{\mathbf{A}}\}_{t\in \mathbb{R}}$ are both continuous in $%
\mathcal{Y}$ and $\delta ^{(\omega ,\lambda )}$ is a symmetric derivation.
As a consequence, (\ref{Dyson interact +1}) implies that, for any $t,s\in
\mathbb{R}$,%
\begin{equation}
\partial _{t}\mathfrak{V}_{t,s}=i\mathfrak{V}_{t,s}W_{t,t}^{\mathbf{A}%
}\qquad \text{and}\qquad \partial _{s}\mathfrak{V}_{t,s}=-iW_{s,s}^{\mathbf{A%
}}\mathfrak{V}_{t,s}  \label{Dyson interact diff eq}
\end{equation}%
both in the Banach space $\mathcal{Y}$, and thus in $\mathcal{U}$. Since $%
W_{t,t}^{\mathbf{A}}=(W_{t,t}^{\mathbf{A}})^{\ast }$, by using the
norm--continuity of the map $B\mapsto B^{\ast }$ on $\mathcal{U}$, we
compute from (\ref{Dyson interact diff eq}) that%
\begin{eqnarray*}
\mathbf{1}-\mathfrak{V}_{t,s}^{\ast }\mathfrak{V}_{t,s}
&=&\int_{s}^{t}\partial _{s_{1}}\left\{ \mathfrak{V}_{t,s_{1}}^{\ast }%
\mathfrak{V}_{t,s_{1}}\right\} \mathrm{d}s_{1}=0\ . \\
\mathbf{1}-\mathfrak{V}_{t,s}\mathfrak{V}_{t,s}^{\ast }
&=&\int_{t}^{s}\partial _{s_{1}}\left\{ \mathfrak{V}_{s_{1},s}\mathfrak{V}%
_{s_{1},s}^{\ast }\right\} \mathrm{d}s_{1}=0\ .
\end{eqnarray*}%
In other words, $\{\mathfrak{V}_{t,s}\}_{t,s\in \mathbb{R}}$ is a family of
unitary elements of $\mathrm{Dom}(\delta ^{(\omega ,\lambda )})\subset
\mathcal{U}$, by (\ref{toto0}).

Now, we define the family $\{\mathfrak{W}_{s,t}^{(\omega ,\lambda ,\mathbf{A}%
)}\}_{s,t\in \mathbb{R}}$ of bounded operators acting on the Banach space $%
\mathcal{U}$ by
\begin{equation}
\mathfrak{W}_{s,t}^{(\omega ,\lambda ,\mathbf{A})}\left( B\right) :=\tau
_{-s}^{(\omega ,\lambda )}\left( \mathfrak{V}_{t,s}\tau _{t}^{(\omega
,\lambda )}(B)\mathfrak{V}_{t,s}^{\ast }\right) \ ,\qquad B\in \mathcal{U}\ .
\label{definition tho chap inv}
\end{equation}%
Clearly, for any $B\in \mathcal{U}$, the map%
\begin{equation*}
(s,t)\mapsto \mathfrak{W}_{s,t}^{(\omega ,\lambda ,\mathbf{A})}\left(
B\right) \in \mathcal{U}
\end{equation*}%
from $\mathbb{R}^{2}$ to $\mathcal{U}$ is continuous. Moreover, by
construction, $\mathfrak{W}_{t,t}^{(\omega ,\lambda ,\mathbf{A})}=\mathbf{1}$
and for all $B\in \mathrm{Dom}(\delta ^{(\omega ,\lambda )})$ and $s,t\in
\mathbb{R}$,
\begin{equation*}
\mathfrak{W}_{s,t}^{(\omega ,\lambda ,\mathbf{A})}\left( B\right) \in
\mathrm{Dom}(\delta ^{(\omega ,\lambda )})=\mathrm{Dom}(\delta _{s}^{(\omega
,\lambda ,\mathbf{A})})\ ,
\end{equation*}%
because $\tau _{t}^{(\omega ,\lambda )}$ preserves the (dense) subspace $%
\mathrm{Dom}(\delta ^{(\omega ,\lambda )})\subset \mathcal{U}$. Therefore,
we infer from (\ref{inequality idiote}), (\ref{explicit delta}) and (\ref%
{Dyson interact diff eq}) that%
\begin{equation}
\forall s,t\in {\mathbb{R}}:\quad \partial _{s}\mathfrak{W}_{s,t}^{(\omega
,\lambda ,\mathbf{A})}=-\delta _{s}^{(\omega ,\lambda ,\mathbf{A})}\circ
\mathfrak{W}_{s,t}^{(\omega ,\lambda ,\mathbf{A})},\quad \mathfrak{W}%
_{t,t}^{(\omega ,\lambda ,\mathbf{A})}=\mathbf{1}\ ,  \label{cauchy toto1}
\end{equation}%
whereas
\begin{equation}
\forall s,t\in {\mathbb{R}}:\quad \partial _{t}\mathfrak{W}_{s,t}^{(\omega
,\lambda ,\mathbf{A})}=\mathfrak{W}_{s,t}^{(\omega ,\lambda ,\mathbf{A}%
)}\circ \delta _{t}^{(\omega ,\lambda ,\mathbf{A})},\quad \mathfrak{W}%
_{s,s}^{(\omega ,\lambda ,\mathbf{A})}=\mathbf{1}\ ,  \label{cauchy toto2}
\end{equation}%
both in the strong sense in $\mathrm{Dom}(\delta ^{(\omega ,\lambda
)})\subset \mathcal{U}$. In particular, by Corollary \ref{bound incr 1 Lemma
copy(6)}, the families $\{\tau _{t,s}^{(\omega ,\lambda ,\mathbf{A}%
)}\}_{t\geq s}$ and $\{\mathfrak{W}_{s,t}^{(\omega ,\lambda ,\mathbf{A}%
)}\}_{s,t\in \mathbb{R}}$ satisfy the equality
\begin{equation}
\tau _{t,s}^{(\omega ,\lambda ,\mathbf{A})}\left( B\right) -\mathfrak{W}%
_{s,t}^{(\omega ,\lambda ,\mathbf{A})}\left( B\right) =\int_{s}^{t}\partial
_{s_{1}}\{\tau _{s_{1},s}^{(\omega ,\lambda ,\mathbf{A})}\mathfrak{W}%
_{s_{1},t}^{(\omega ,\lambda ,\mathbf{A})}\left( B\right) \}\mathrm{d}s_{1}=0
\label{uniqueness eq}
\end{equation}%
for any $B\in \mathrm{Dom}(\delta ^{(\omega ,\lambda )})$ and $t\geq s$.
Remark that we use the strong continuity of the family $\{\tau
_{t,s}^{(\omega ,\lambda ,\mathbf{A})}\}_{t\geq s}$ w.r.t. $t\in \mathbb{R}$
to show from Corollary \ref{bound incr 1 Lemma copy(6)} and (\ref{cauchy
toto1}) that%
\begin{equation*}
\partial _{s_{1}}\{\tau _{s_{1},s}^{(\omega ,\lambda ,\mathbf{A})}\mathfrak{W%
}_{s_{1},t}^{(\omega ,\lambda ,\mathbf{A})}\left( B\right) \}=0\ ,
\end{equation*}%
for any $B\in \mathrm{Dom}(\delta ^{(\omega ,\lambda )})$ and $t\geq s$. The
domain $\mathrm{Dom}(\delta ^{(\omega ,\lambda )})$ is dense in $\mathcal{U}$
and both operators $\tau _{t,s}^{(\omega ,\lambda ,\mathbf{A})}$ and $%
\mathfrak{W}_{s,t}^{(\omega ,\lambda ,\mathbf{A})}$ are bounded. As a
consequence, (\ref{uniqueness eq}) yields
\begin{equation}
\tau _{t,s}^{(\omega ,\lambda ,\mathbf{A})}=\mathfrak{W}_{s,t}^{(\omega
,\lambda ,\mathbf{A})}  \label{definition tho chap inv2}
\end{equation}%
for any $\omega \in \Omega $, $\lambda \in \mathbb{R}_{0}^{+}$, $\mathbf{A}%
\in \mathbf{C}_{0}^{\infty }$ and $t\geq s$.

Use now Equation (\ref{definition tho chap inv3}) to define the family $\{%
\mathfrak{U}_{t}\}_{t\geq t_{0}}$. Since, for any $t\in \mathbb{R}$, $\tau
_{-t}^{(\omega ,\lambda )}$ is an automorphism of $\mathcal{U}$ which
preserves the domain $\mathrm{Dom}(\delta ^{(\omega ,\lambda )})$, we deduce
from (\ref{toto0}) and the unitarity of $\mathfrak{V}_{t,s}$ that
\begin{equation*}
\{\mathfrak{U}_{t}\}_{t\geq t_{0}}\subset \mathrm{Dom}(\delta ^{(\omega
,\lambda )})
\end{equation*}%
is a family of unitary elements of $\mathcal{U}$. Note indeed that $\mathrm{%
Dom}(\delta ^{(\omega ,\lambda )})$ is a $\ast $--algebra, since $\delta
^{(\omega ,\lambda )}$ is a symmetric derivation. Moreover, from (\ref{time
dependent state}), (\ref{definition tho chap inv3}), (\ref{definition tho
chap inv}) and (\ref{definition tho chap inv2}) combined with the
stationarity of the KMS state $\varrho ^{(\beta ,\omega ,\lambda )}$ w.r.t.
the unperturbed dynamics (cf. (\ref{stationary})) we arrive at the
assertion, as explained in Equation (\ref{explanation}).
\end{proof}

The proof of Theorem \ref{Theo int pict} gives supplementary information on
the dynamics. This is not used in the present paper, but it can be employed
to uniquely define dynamics for systems of interacting fermions on the
lattice, as discussed at the end of Section \ref{section Dynamics}.

First, by (\ref{definition tho chap inv}), $\{\mathfrak{W}_{s,t}^{(\omega
,\lambda ,\mathbf{A})}\}_{s,t\in \mathbb{R}}$ is a family of bounded
operators acting on the Banach space $\mathcal{U}$ that of course extends $%
\{\tau _{t,s}^{(\omega ,\lambda ,\mathbf{A})}\}_{t\geq s}$ to all $s,t\in
\mathbb{R}$, see (\ref{definition tho chap inv2}). Moreover, it is the
unique \emph{fundamental solution} of a non--autonomous evolution equation.
By fundamental solution, we mean here that the family $\{\mathfrak{W}%
_{s,t}^{(\omega ,\lambda ,\mathbf{A})}\}_{s,t\in \mathbb{R}}$ of bounded
operators acting on $\mathcal{U}$ is strongly continuous, conserves the
domain
\begin{equation*}
\mathrm{Dom}(\delta _{t}^{(\omega ,\lambda ,\mathbf{A})})=\mathrm{Dom}%
(\delta ^{(\omega ,\lambda )})\ ,
\end{equation*}%
satisfies%
\begin{eqnarray*}
\mathfrak{W}_{\cdot ,t}^{(\omega ,\lambda ,\mathbf{A})}(B) &\in &C^{1}(%
\mathbb{R};(\mathrm{Dom}(\delta ^{(\omega ,\lambda )}),\left\Vert \cdot
\right\Vert ))\ , \\
\mathfrak{W}_{s,\cdot }^{(\omega ,\lambda ,\mathbf{A})}(B) &\in &C^{1}(%
\mathbb{R};(\mathrm{Dom}(\delta ^{(\omega ,\lambda )}),\left\Vert \cdot
\right\Vert ))\ ,
\end{eqnarray*}%
for all $B\in \mathrm{Dom}(\delta ^{(\omega ,\lambda )})$, and solves the
abstract Cauchy initial value\ problem (\ref{cauchy toto1}) on $\mathrm{Dom}%
(\delta ^{(\omega ,\lambda )})$:

\begin{proposition}[Evolution equations for $\mathfrak{W}_{s,t}^{(\protect%
\omega ,\protect\lambda ,\mathbf{A})}$]
\label{bound incr 1 Lemma copy(8)}\mbox{
}\newline
For $\omega \in \Omega $, $\lambda \in \mathbb{R}_{0}^{+}$ and $\mathbf{A}%
\in \mathbf{C}_{0}^{\infty }$, $\{\mathfrak{W}_{s,t}^{(\omega ,\lambda ,%
\mathbf{A})}\}_{s,t\in \mathbb{R}}$ has the following properties: \newline
\emph{(i)}\ It satisfies the Chapman--Kolmogorov property
\begin{equation*}
\forall t,r,s\in \mathbb{R}:\quad \mathfrak{W}_{s,t}^{(\omega ,\lambda ,%
\mathbf{A})}=\mathfrak{W}_{s,r}^{(\omega ,\lambda ,\mathbf{A})}\mathfrak{W}%
_{r,t}^{(\omega ,\lambda ,\mathbf{A})}\ .
\end{equation*}%
\emph{(ii)}\ It is the unique fundamental solution of the Cauchy initial
value\ problem
\begin{equation*}
\forall s,t\in {\mathbb{R}}:\quad \partial _{s}\mathfrak{W}_{s,t}^{(\omega
,\lambda ,\mathbf{A})}=-\delta _{s}^{(\omega ,\lambda ,\mathbf{A})}\circ
\mathfrak{W}_{s,t}^{(\omega ,\lambda ,\mathbf{A})},\quad \mathfrak{W}%
_{t,t}^{(\omega ,\lambda ,\mathbf{A})}=\mathbf{1}\ .
\end{equation*}%
\emph{(iii)}\ It solves on $\mathrm{Dom}(\delta ^{(\omega ,\lambda )})$ the
abstract Cauchy initial value\ problem
\begin{equation*}
\forall s,t\in {\mathbb{R}}:\quad \partial _{t}\mathfrak{W}_{s,t}^{(\omega
,\lambda ,\mathbf{A})}=\mathfrak{W}_{s,t}^{(\omega ,\lambda ,\mathbf{A}%
)}\circ \delta _{t}^{(\omega ,\lambda ,\mathbf{A})},\quad \mathfrak{W}%
_{s,s}^{(\omega ,\lambda ,\mathbf{A})}=\mathbf{1}\ .
\end{equation*}
\end{proposition}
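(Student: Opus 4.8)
The plan is to observe that nearly all of the analytic work has already been carried out inside the proof of Theorem \ref{Theo int pict}, so that the three assertions reduce to short algebraic manipulations of two evolution equations. Indeed, the family $\{\mathfrak{W}_{s,t}^{(\omega ,\lambda ,\mathbf{A})}\}$ was constructed there by (\ref{definition tho chap inv}) out of the interaction-picture propagator $\mathfrak{V}_{t,s}$, shown to be strongly continuous, to preserve the dense domain $\mathrm{Dom}(\delta ^{(\omega ,\lambda )})=\mathrm{Dom}(\delta _{t}^{(\omega ,\lambda ,\mathbf{A})})$, and to be continuously differentiable in each variable on that domain. The two Cauchy problems in parts (ii) and (iii) are then literally the identities (\ref{cauchy toto1}) and (\ref{cauchy toto2}), so that only the regularity bookkeeping of the notion of \emph{fundamental solution} (which I would take over from Theorem \ref{Theo int pict}), the Chapman--Kolmogorov identity (i), and the \emph{uniqueness} claim in (ii) remain to be settled.

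For both (i) and the uniqueness in (ii) I would use the same device: differentiating a product of two such evolutions in an intermediate variable and checking that it vanishes. Concretely, for (i) fix $s,t\in \mathbb{R}$ and $B\in \mathrm{Dom}(\delta ^{(\omega ,\lambda )})$, and consider the $\mathcal{U}$--valued map $r\mapsto \mathfrak{W}_{s,r}^{(\omega ,\lambda ,\mathbf{A})}\mathfrak{W}_{r,t}^{(\omega ,\lambda ,\mathbf{A})}(B)$. Since $\mathfrak{W}_{r,t}(B)$ stays in $\mathrm{Dom}(\delta ^{(\omega ,\lambda )})$, applying the forward equation (\ref{cauchy toto2}) to the left factor and the backward equation (\ref{cauchy toto1}) to the right factor gives, by the product rule,
\begin{equation*}
\partial _{r}\big(\mathfrak{W}_{s,r}\mathfrak{W}_{r,t}(B)\big)=\mathfrak{W}_{s,r}\,\delta _{r}^{(\omega ,\lambda ,\mathbf{A})}\mathfrak{W}_{r,t}(B)-\mathfrak{W}_{s,r}\,\delta _{r}^{(\omega ,\lambda ,\mathbf{A})}\mathfrak{W}_{r,t}(B)=0\ .
\end{equation*}
Hence this map is constant; evaluating at $r=t$ (using $\mathfrak{W}_{t,t}=\mathbf{1}$) identifies the constant as $\mathfrak{W}_{s,t}(B)$, which is exactly the Chapman--Kolmogorov relation on the dense domain, and it extends to all of $\mathcal{U}$ by boundedness. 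The uniqueness in (ii) follows by the identical cancellation: if $\tilde{\mathfrak{W}}_{s,t}$ is any other fundamental solution of the backward problem, then $r\mapsto \mathfrak{W}_{s,r}\tilde{\mathfrak{W}}_{r,t}(B)$ has vanishing $r$--derivative, and comparing its values at $r=s$ and $r=t$ yields $\tilde{\mathfrak{W}}_{s,t}=\mathfrak{W}_{s,t}$.

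The same structure is visible already at the level of the propagator $\mathfrak{V}_{t,s}$, which provides an alternative derivation of (i): I would first prove the cocycle identity $\mathfrak{V}_{t,s}=\mathfrak{V}_{r,s}\mathfrak{V}_{t,r}$ from the linear differential equation $\partial _{t}\mathfrak{V}_{t,s}=i\mathfrak{V}_{t,s}W_{t,t}^{\mathbf{A}}$ of (\ref{Dyson interact diff eq}). For fixed $r,s$ both $t\mapsto \mathfrak{V}_{t,s}$ and $t\mapsto \mathfrak{V}_{r,s}\mathfrak{V}_{t,r}$ solve this equation and agree at $t=r$, so they coincide by uniqueness for linear ODEs with the uniformly bounded, continuous generator $W_{t,t}^{\mathbf{A}}$. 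Substituting $\mathfrak{V}_{t,s}=\mathfrak{V}_{r,s}\mathfrak{V}_{t,r}$ into (\ref{definition tho chap inv}) and using the group law $\tau _{-s}^{(\omega ,\lambda )}\circ \tau _{s}^{(\omega ,\lambda )}=\mathrm{id}$ together with the homomorphism property (\ref{inequality idiote}) of $\tau _{r}^{(\omega ,\lambda )}$ reproduces (i) directly.

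The only genuinely delicate point, and thus the step I would treat most carefully, is the justification of these term-by-term differentiations in the graph-norm Banach space $\mathcal{Y}=(\mathrm{Dom}(\delta ^{(\omega ,\lambda )}),\Vert \cdot \Vert _{\delta ^{(\omega ,\lambda )}})$: one must know that $r\mapsto \mathfrak{W}_{r,t}(B)$ is $C^{1}$ as a $\mathcal{Y}$--valued map, that the left factor $\mathfrak{W}_{s,r}$ acts as a \emph{bounded} operator on $\mathcal{Y}$ depending strongly continuously on $r$, and that $\delta _{r}^{(\omega ,\lambda ,\mathbf{A})}=\delta ^{(\omega ,\lambda )}+i[W_{r}^{\mathbf{A}},\,\cdot \,]$ is closed, so that the product rule and the above cancellations are legitimate. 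All of these facts are already contained in, or immediate from, the construction in Theorem \ref{Theo int pict}, in particular the bound (\ref{bound semigroup}) on the restriction of $\tau ^{(\omega ,\lambda )}$ to $\mathcal{Y}$ and the smoothness of $s\mapsto W_{s}^{\mathbf{A}}$; once they are invoked, only the elementary cancellations displayed above remain.
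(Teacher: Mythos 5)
Your proposal is correct and follows exactly the route the paper intends: its own proof is the one-line instruction to use (\ref{definition tho chap inv})--(\ref{cauchy toto2}) together with an argument similar to (\ref{uniqueness eq}), and your intermediate-variable differentiation for both Chapman--Kolmogorov and uniqueness is precisely that argument, with the regularity points you flag already supplied by the proof of Theorem \ref{Theo int pict}.
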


\begin{proof}
Use (\ref{definition tho chap inv})--(\ref{cauchy toto2}) and an argument
similar to (\ref{uniqueness eq}). We omit the details.
\end{proof}

\subsection{Internal Energy Increment and Heat Production\label{Section
Existence}}

Recall that the internal energy increment is defined by (\ref{entropic
energy increment}), that is,
\begin{equation}
\mathbf{S}^{(\omega ,\mathbf{A})}\left( t\right) \equiv \mathbf{S}^{(\beta
,\omega ,\lambda ,\mathbf{A})}\left( t\right) :=\lim_{L\rightarrow \infty
}\left\{ \rho _{t}^{(\beta ,\omega ,\lambda ,\mathbf{A})}(H_{L}^{(\omega
,\lambda )})-\varrho ^{(\beta ,\omega ,\lambda )}(H_{L}^{(\omega ,\lambda
)})\right\}  \label{entropic energy incrementbis}
\end{equation}%
for any $\beta \in \mathbb{R}^{+}$, $\omega \in \Omega $, $\lambda \in
\mathbb{R}_{0}^{+}$, $\mathbf{A}\in \mathbf{C}_{0}^{\infty }$ and $t\in
\mathbb{R}$. To show that it is well--defined and has finite value for all
times, we use the interaction picture of the dynamics described in Theorem %
\ref{Theo int pict}:

\begin{satz}[Existence of the internal energy increment]
\label{theo exist incr sympa}\mbox{
}\newline
For any $\beta \in \mathbb{R}^{+}$, $\omega \in \Omega $, $\lambda \in
\mathbb{R}_{0}^{+}$, $\mathbf{A}\in \mathbf{C}_{0}^{\infty }$ and $t\geq
t_{0}$,
\begin{equation*}
\mathbf{S}^{(\omega ,\mathbf{A})}\left( t\right) =-i\varrho ^{(\beta ,\omega
,\lambda )}\left( \mathfrak{U}_{t}^{\ast }\delta ^{(\omega ,\lambda )}\left(
\mathfrak{U}_{t}\right) \right) \in \mathbb{R}
\end{equation*}%
with $\{\mathfrak{U}_{t}\}_{t\geq t_{0}}\subset \mathrm{Dom}(\delta
^{(\omega ,\lambda )})$ being defined in Theorem \ref{Theo int pict}.
\end{satz}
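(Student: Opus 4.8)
The plan is to feed the interaction-picture representation of the state from Theorem \ref{Theo int pict} into the definition (\ref{entropic energy incrementbis}) of $\mathbf{S}^{(\omega ,\mathbf{A})}$, and then to recognize the finite-volume commutator with $H_{L}^{(\omega ,\lambda )}$ as an approximation of the generator $\delta ^{(\omega ,\lambda )}$. First I would invoke Theorem \ref{Theo int pict}, which gives $\rho _{t}^{(\beta ,\omega ,\lambda ,\mathbf{A})}(B)=\varrho ^{(\beta ,\omega ,\lambda )}(\mathfrak{U}_{t}^{\ast }B\,\mathfrak{U}_{t})$ for every $B\in \mathcal{U}$, in particular for the self-adjoint local observable $B=H_{L}^{(\omega ,\lambda )}\in \mathcal{U}_{\Lambda _{L}}$. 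Using that $\mathfrak{U}_{t}$ is unitary, so $\mathfrak{U}_{t}^{\ast }\mathfrak{U}_{t}=\mathbf{1}$,
\begin{equation*}
\rho _{t}^{(\beta ,\omega ,\lambda ,\mathbf{A})}(H_{L}^{(\omega ,\lambda )})-\varrho ^{(\beta ,\omega ,\lambda )}(H_{L}^{(\omega ,\lambda )})=\varrho ^{(\beta ,\omega ,\lambda )}\big(\mathfrak{U}_{t}^{\ast }[H_{L}^{(\omega ,\lambda )},\mathfrak{U}_{t}]\big)\ ,
\end{equation*}
so that the existence of the limit and its value are governed entirely by the behavior of $[H_{L}^{(\omega ,\lambda )},\mathfrak{U}_{t}]$ as $L\rightarrow \infty $.

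The core of the argument is to establish
\begin{equation*}
\lim_{L\rightarrow \infty }\Vert [H_{L}^{(\omega ,\lambda )},\mathfrak{U}_{t}]+i\,\delta ^{(\omega ,\lambda )}(\mathfrak{U}_{t})\Vert =0\ .
\end{equation*}
I would exploit that $\Delta _{\mathrm{d}}$ is nearest-neighbor, so that $h:=\Delta _{\mathrm{d}}+\lambda V_{\omega }$ has range one and the bounded derivations $i[H_{L}^{(\omega ,\lambda )},\,\cdot\,]$ coincide with $\delta ^{(\omega ,\lambda )}$ on every generator $a_{z}$ with $z$ in the bulk of $\Lambda _{L}$; the two derivations differ only through terms localized near $\partial \Lambda _{L}$. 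Since $\mathfrak{U}_{t}=\tau _{-t}^{(\omega ,\lambda )}(\mathfrak{V}_{t,t_{0}}^{\ast })$ lies in $\mathrm{Dom}(\delta ^{(\omega ,\lambda )})$ and is the limit, in the graph norm of $\delta ^{(\omega ,\lambda )}$, of the Dyson series (\ref{Dyson interact}), whose terms inherit spatial decay from the one-particle kernel bound of Lemma \ref{bound anticomm}, these boundary contributions vanish in norm. Inserting this limit under the functional $\varrho ^{(\beta ,\omega ,\lambda )}$ (of norm one) and using $\Vert \mathfrak{U}_{t}^{\ast }\Vert =1$ then yields
\begin{equation*}
\mathbf{S}^{(\omega ,\mathbf{A})}\left( t\right) =\lim_{L\rightarrow \infty }\varrho ^{(\beta ,\omega ,\lambda )}\big(\mathfrak{U}_{t}^{\ast }[H_{L}^{(\omega ,\lambda )},\mathfrak{U}_{t}]\big)=-i\,\varrho ^{(\beta ,\omega ,\lambda )}\big(\mathfrak{U}_{t}^{\ast }\delta ^{(\omega ,\lambda )}(\mathfrak{U}_{t})\big)\ .
\end{equation*}
The hard part is precisely this norm convergence for the \emph{non-local} element $\mathfrak{U}_{t}$: the naive estimate $\Vert [H_{L}^{(\omega ,\lambda )},\mathfrak{U}_{t}-B]\Vert \leq 2\Vert H_{L}^{(\omega ,\lambda )}\Vert \,\Vert \mathfrak{U}_{t}-B\Vert $ for local approximants $B$ is worthless because $\Vert H_{L}^{(\omega ,\lambda )}\Vert \rightarrow \infty $, so the finite range of $h$ and the decay of the Dyson terms must be used in an essential way.

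It remains to show that the value is real, for which I would use only that $\delta ^{(\omega ,\lambda )}$ is a symmetric derivation and that $\mathfrak{U}_{t}\in \mathrm{Dom}(\delta ^{(\omega ,\lambda )})$ is unitary. Applying $\delta ^{(\omega ,\lambda )}$ to $\mathfrak{U}_{t}^{\ast }\mathfrak{U}_{t}=\mathbf{1}$ gives
\begin{equation*}
\delta ^{(\omega ,\lambda )}(\mathfrak{U}_{t}^{\ast })\,\mathfrak{U}_{t}=-\,\mathfrak{U}_{t}^{\ast }\,\delta ^{(\omega ,\lambda )}(\mathfrak{U}_{t})\ ,
\end{equation*}
and combining this with $\delta ^{(\omega ,\lambda )}(\mathfrak{U}_{t})^{\ast }=\delta ^{(\omega ,\lambda )}(\mathfrak{U}_{t}^{\ast })$ and $\varrho ^{(\beta ,\omega ,\lambda )}(C^{\ast })=\overline{\varrho ^{(\beta ,\omega ,\lambda )}(C)}$ shows that $\overline{\varrho ^{(\beta ,\omega ,\lambda )}(\mathfrak{U}_{t}^{\ast }\delta ^{(\omega ,\lambda )}(\mathfrak{U}_{t}))}=-\varrho ^{(\beta ,\omega ,\lambda )}(\mathfrak{U}_{t}^{\ast }\delta ^{(\omega ,\lambda )}(\mathfrak{U}_{t}))$, whence $-i\,\varrho ^{(\beta ,\omega ,\lambda )}(\mathfrak{U}_{t}^{\ast }\delta ^{(\omega ,\lambda )}(\mathfrak{U}_{t}))\in \mathbb{R}$.

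As an independent check of both existence and value, one can instead compute the two-point differences $\rho _{t}^{(\beta ,\omega ,\lambda ,\mathbf{A})}(a_{x}^{\ast }a_{y})-\varrho ^{(\beta ,\omega ,\lambda )}(a_{x}^{\ast }a_{y})$ from quasi-freeness using the one-particle propagator $\mathrm{U}_{t,t_{0}}^{(\omega ,\lambda ,\mathbf{A})}$, reducing $\mathbf{S}^{(\omega ,\mathbf{A})}(t)$ to
\begin{equation*}
\mathbf{S}^{(\omega ,\mathbf{A})}\left( t\right) =\lim_{L\rightarrow \infty }\mathrm{Tr}_{\ell ^{2}(\Lambda _{L})}\big[h\big(\mathrm{U}_{t,t_{0}}^{(\omega ,\lambda ,\mathbf{A})}\mathbf{d}_{\mathrm{fermi}}^{(\beta ,\omega ,\lambda )}(\mathrm{U}_{t,t_{0}}^{(\omega ,\lambda ,\mathbf{A})})^{\ast }-\mathbf{d}_{\mathrm{fermi}}^{(\beta ,\omega ,\lambda )}\big)\big]\ ,
\end{equation*}
a trace that is finite because $\mathrm{U}_{t,t_{0}}^{(\omega ,\lambda ,\mathbf{A})}-\mathrm{U}_{t-t_{0}}^{(\omega ,\lambda )}$ is compactly supported in space (cf. (\ref{rewritebisbis}) and the compact support of $\mathbf{w}_{t}^{\mathbf{A}}$). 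This provides a concrete route to existence that corroborates the generator expression obtained above.
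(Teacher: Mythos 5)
Your argument follows the paper's own proof essentially verbatim: both use Theorem \ref{Theo int pict} to rewrite the energy difference as $\varrho ^{(\beta ,\omega ,\lambda )}\bigl(\mathfrak{U}_{t}^{\ast }[H_{L}^{(\omega ,\lambda )},\mathfrak{U}_{t}]\bigr)$ and then identify $\delta ^{(\omega ,\lambda )}(\mathfrak{U}_{t})$ as the norm limit of $i[H_{L}^{(\omega ,\lambda )},\mathfrak{U}_{t}]$ (the paper phrases this as an explicit computation via the Dyson series (\ref{Dyson interact}) and the pull-through formula (\ref{extra}), which is the same locality-plus-decay mechanism you describe). The explicit realness argument and the quasi-free one-particle cross-check are consistent additions but do not change the route.
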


\begin{proof}
$\mathfrak{U}_{t}\in \mathrm{Dom}(\delta ^{(\omega ,\lambda )})$ and, by explicit computations using Equations (\ref{definition tho chap inv3})--(\ref{Dyson interact}) together with the "pull through" formula (\ref{extra}),
\begin{equation*}
\delta ^{(\omega ,\lambda )}\left( \mathfrak{U}_{t}\right)
=\lim_{L\rightarrow \infty }\left\{ i[H_{L}^{(\omega ,\lambda )},\mathfrak{U}%
_{t}]\right\} \in \mathcal{U}\ ,
\end{equation*}%
whereas one obviously has%
\begin{equation*}
\varrho ^{(\beta ,\omega ,\lambda )}\left( \mathfrak{U}_{t}^{\ast
}[H_{L}^{(\omega ,\lambda )},\mathfrak{U}_{t}]\right) =\rho _{t}^{(\beta
,\omega ,\lambda ,\mathbf{A})}(H_{L}^{(\omega ,\lambda )})-\varrho ^{(\beta
,\omega ,\lambda )}(H_{L}^{(\omega ,\lambda )})\ ,
\end{equation*}%
by Theorem \ref{Theo int pict}. We obtain the assertion by combining (\ref%
{entropic energy incrementbis}) with these two equalities and the continuity
of states.
\end{proof}

Therefore, $\mathbf{S}^{(\omega ,\mathbf{A})}$ is a map from $\mathbb{R}$ to
$\mathbb{R}$. Now, by the Pusz--Woronowicz theorem (see, e.g., \cite[Theorem
5.3.22]{BratteliRobinson}), it is well--known that $(\tau ,\beta )$--KMS
states $\varrho $ are \emph{passive states}, that is,%
\begin{equation*}
-i\varrho (U^{\ast }\delta (U))\geq 0
\end{equation*}%
for all unitaries $U$ both in the domain of definition of the generator $\delta $ of the group  $\tau $ of automorphisms
and in the connected component of the identity of the group of all unitary elements of the CAR algebra with the norm topology.
The latter together with Equations (\ref{definition tho chap inv3})--(\ref{Dyson interact}), Theorems %
\ref{Theo int pict} and \ref{theo exist incr sympa} directly implies the
positivity of the internal energy increment $\mathbf{S}^{(\omega ,\mathbf{A}%
)}$:

\begin{koro}[Positivity of the internal energy increment]
\label{coro heat production1}\mbox{
}\newline
For any $\beta \in \mathbb{R}^{+}$, $\omega \in \Omega $, $\lambda \in
\mathbb{R}_{0}^{+}$, $\mathbf{A}\in \mathbf{C}_{0}^{\infty }$ and all $t\geq
t_{0}$, $\mathbf{S}^{(\omega ,\mathbf{A})}\left( t\right) \geq 0$.
\end{koro}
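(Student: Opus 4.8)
The plan is to reduce the assertion to the passivity of KMS states, which is precisely the Pusz--Woronowicz input quoted just above the corollary. By Theorem \ref{theo exist incr sympa} we already dispose of the closed expression
\[
\mathbf{S}^{(\omega ,\mathbf{A})}(t)=-i\,\varrho ^{(\beta ,\omega ,\lambda )}\bigl(\mathfrak{U}_t^{\ast }\,\delta ^{(\omega ,\lambda )}(\mathfrak{U}_t)\bigr)\ ,
\]
where, by Theorem \ref{Theo int pict}, $\mathfrak{U}_t$ is a unitary element of $\mathcal{U}$ lying in $\mathrm{Dom}(\delta ^{(\omega ,\lambda )})$. Since $\varrho ^{(\beta ,\omega ,\lambda )}$ is the $(\tau ^{(\omega ,\lambda )},\beta )$--KMS state of the $C^{\ast }$--dynamical system $(\mathcal{U},\tau ^{(\omega ,\lambda )})$, the Pusz--Woronowicz theorem yields $-i\,\varrho ^{(\beta ,\omega ,\lambda )}(U^{\ast }\delta ^{(\omega ,\lambda )}(U))\geq 0$ for every unitary $U$ that lies both in $\mathrm{Dom}(\delta ^{(\omega ,\lambda )})$ and in the norm--connected component of $\mathbf{1}$ in the unitary group of $\mathcal{U}$. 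It therefore suffices to verify that $\mathfrak{U}_t$ meets these two hypotheses, and then $\mathbf{S}^{(\omega ,\mathbf{A})}(t)\geq 0$ follows immediately.

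The domain requirement is furnished directly by Theorem \ref{Theo int pict}, so the only point genuinely needing an argument is that $\mathfrak{U}_t$ belongs to the connected component of the identity. First I would read off from the Dyson--Phillips series (\ref{Dyson interact}) that $\mathfrak{V}_{t_0,t_0}=\mathbf{1}$ (all integrals collapse), so that the defining relation (\ref{definition tho chap inv3}) gives $\mathfrak{U}_{t_0}=\tau _{-t_0}^{(\omega ,\lambda )}(\mathbf{1})=\mathbf{1}$. Then I would exhibit the family $\{\mathfrak{U}_{t'}\}_{t'\in [t_0,t]}$ itself as a norm--continuous path of unitaries joining $\mathbf{1}$ to $\mathfrak{U}_t$. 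Continuity would be established by the splitting
\[
\mathfrak{U}_{t'}-\mathfrak{U}_{t''}=\tau _{-t'}^{(\omega ,\lambda )}\bigl(\mathfrak{V}_{t',t_0}^{\ast }-\mathfrak{V}_{t'',t_0}^{\ast }\bigr)+\bigl(\tau _{-t'}^{(\omega ,\lambda )}-\tau _{-t''}^{(\omega ,\lambda )}\bigr)(\mathfrak{V}_{t'',t_0}^{\ast })\ ,
\]
bounding the first summand by $\Vert \mathfrak{V}_{t',t_0}^{\ast }-\mathfrak{V}_{t'',t_0}^{\ast }\Vert$ (each $\tau _{s}^{(\omega ,\lambda )}$ being an isometric $\ast $--automorphism), which tends to $0$ by the norm--continuity in its endpoint of the absolutely convergent series (\ref{Dyson interact}), and controlling the second summand via the strong continuity of $\tau ^{(\omega ,\lambda )}$ applied to the \emph{fixed} element $\mathfrak{V}_{t'',t_0}^{\ast }$.

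I expect this connectedness verification to be the only delicate step. The mild obstacle is that $\tau ^{(\omega ,\lambda )}$ is merely \emph{strongly} continuous, not norm continuous, so $t'\mapsto \tau _{-t'}^{(\omega ,\lambda )}(\mathfrak{V}_{t',t_0}^{\ast })$ cannot be treated as a composition of two norm--continuous maps; one must separate the $t'$--dependence of the automorphism from that of its argument, as in the displayed splitting. An equally clean alternative would be to rescale the vector potential by $\eta \in [0,1]$ and use the resulting family of unitaries $\mathfrak{V}_{t,t_0}^{(\omega ,\lambda ,\eta \mathbf{A})}$ as the deformation to $\mathbf{1}$; this route sidesteps the strong--continuity subtlety altogether, since the deformation parameter $\eta $ never enters the automorphism $\tau _{-t}^{(\omega ,\lambda )}$. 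Everything else is a direct appeal to Theorems \ref{Theo int pict} and \ref{theo exist incr sympa} combined with Pusz--Woronowicz.
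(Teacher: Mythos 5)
Your proposal follows exactly the paper's route: express $\mathbf{S}^{(\omega ,\mathbf{A})}(t)=-i\varrho ^{(\beta ,\omega ,\lambda )}(\mathfrak{U}_t^{\ast }\delta ^{(\omega ,\lambda )}(\mathfrak{U}_t))$ via Theorems \ref{Theo int pict} and \ref{theo exist incr sympa} and invoke passivity of KMS states (Pusz--Woronowicz). The paper disposes of the hypotheses of passivity with a bare reference to (\ref{definition tho chap inv3})--(\ref{Dyson interact}), whereas you correctly spell out the one nontrivial point --- the norm--continuous path $t'\mapsto \mathfrak{U}_{t'}$ from $\mathbf{1}$ to $\mathfrak{U}_t$, with the splitting needed because $\tau ^{(\omega ,\lambda )}$ is only strongly continuous --- so your argument is sound and, if anything, more complete.
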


Moreover, for any $\beta \in \mathbb{R}^{+}$, $\omega \in \Omega $, $\lambda
\in \mathbb{R}_{0}^{+}$, $\mathbf{A}\in \mathbf{C}_{0}^{\infty }$ and $t\geq
t_{0}$, we also infer from \cite[Theorem 1.1]{JaksicPillet} and Theorem \ref%
{Theo int pict} that
\begin{equation*}
-i\varrho ^{(\beta ,\omega ,\lambda )}\left( \mathfrak{U}_{t}^{\ast }\delta
^{(\omega ,\lambda )}\left( \mathfrak{U}_{t}\right) \right) =\beta ^{-1}%
\mathrm{S}(\rho _{t}^{(\beta ,\omega ,\lambda ,\mathbf{A})}|\varrho ^{(\beta
,\omega ,\lambda )})
\end{equation*}%
with $\mathrm{S}$ being the relative entropy defined by (\ref{relative
entropy general0}). See also (\ref{relative entropy general}) and recall
that $\mathrm{S}=S_{\mathcal{U}}$. By Definition \ref{Heat production
definition}, we thus recover the heat production $\mathbf{Q}^{(\omega ,%
\mathbf{A})}$ from Theorem \ref{theo exist incr sympa}:

\begin{koro}[Heat production as internal energy increment]
\label{Theorem entropy production}\mbox{
}\newline
For any $\beta \in \mathbb{R}^{+}$, $\omega \in \Omega $, $\lambda \in
\mathbb{R}_{0}^{+}$ and $\mathbf{A}\in \mathbf{C}_{0}^{\infty }$, $\mathbf{S}%
^{(\omega ,\mathbf{A})}=\mathbf{Q}^{(\omega ,\mathbf{A})}$.
\end{koro}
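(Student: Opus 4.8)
The plan is to establish $\mathbf{S}^{(\omega ,\mathbf{A})}=\mathbf{Q}^{(\omega ,\mathbf{A})}$ as a short chain of three equalities, each already supplied by a result proven above, so that no fresh computation is required. Since $\mathbf{A}(t,x)=0$ for $t\leq t_{0}$, both maps vanish on that range: there the state $\rho _{t}^{(\beta ,\omega ,\lambda ,\mathbf{A})}$ coincides with the reference state $\varrho ^{(\beta ,\omega ,\lambda )}$, whose relative entropy with itself is zero, exactly as recorded in the proof of Theorem \ref{main 1 copy(1)} where $\mathbf{S}^{(\omega ,\mathbf{A})}=\mathbf{Q}^{(\omega ,\mathbf{A})}=0$ for $t\leq t_{0}$. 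It therefore suffices to treat $t\geq t_{0}$.

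First I would invoke Theorem \ref{theo exist incr sympa}, which already expresses the internal energy increment as the entropy--production functional
\[ \mathbf{S}^{(\omega ,\mathbf{A})}(t) = -i\,\varrho ^{(\beta ,\omega ,\lambda )}\big(\mathfrak{U}_{t}^{\ast }\delta ^{(\omega ,\lambda )}(\mathfrak{U}_{t})\big), \]
where $\{\mathfrak{U}_{t}\}_{t\geq t_{0}}\subset \mathrm{Dom}(\delta ^{(\omega ,\lambda )})$ is the family of unitaries constructed in Theorem \ref{Theo int pict}. The two features that make the right--hand side meaningful and usable are precisely the domain membership $\mathfrak{U}_{t}\in \mathrm{Dom}(\delta ^{(\omega ,\lambda )})$ and the intertwining relation $\rho _{t}^{(\beta ,\omega ,\lambda ,\mathbf{A})}(B)=\varrho ^{(\beta ,\omega ,\lambda )}(\mathfrak{U}_{t}^{\ast }B\,\mathfrak{U}_{t})$, both delivered by Theorem \ref{Theo int pict}.

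Next I would apply the Jak\v{s}i\'{c}--Pillet identity \cite[Theorem 1.1]{JaksicPillet}. For the $(\tau ^{(\omega ,\lambda )},\beta )$--KMS state $\varrho ^{(\beta ,\omega ,\lambda )}$ and a unitary $\mathfrak{U}_{t}$ lying in $\mathrm{Dom}(\delta ^{(\omega ,\lambda )})$ and in the connected component of the identity of the unitary group of $\mathcal{U}$, this theorem identifies the entropy--production functional with $\beta ^{-1}$ times Araki's relative entropy of the perturbed state with respect to the KMS state, namely
\[ -i\,\varrho ^{(\beta ,\omega ,\lambda )}\big(\mathfrak{U}_{t}^{\ast }\delta ^{(\omega ,\lambda )}(\mathfrak{U}_{t})\big) = \beta ^{-1}\,\mathrm{S}\big(\rho _{t}^{(\beta ,\omega ,\lambda ,\mathbf{A})}\,\big|\,\varrho ^{(\beta ,\omega ,\lambda )}\big). \]
Comparing with Definition \ref{Heat production definition}, the right--hand side is exactly $\mathbf{Q}^{(\omega ,\mathbf{A})}(t)$, so chaining the three equalities yields the claim for all $t\geq t_{0}$, and hence for all $t\in \mathbb{R}$.

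The entire conceptual content of the corollary thus lies in verifying the hypotheses of \cite[Theorem 1.1]{JaksicPillet}, and this is the only place I expect any subtlety: one must know that each $\mathfrak{U}_{t}$ is unitary, belongs to $\mathrm{Dom}(\delta ^{(\omega ,\lambda )})$, is homotopic to the identity within the unitary group of $\mathcal{U}$ in the norm topology, and that the relative entropy is finite. All of these are outputs of the construction in Theorem \ref{Theo int pict} and the finiteness established in Theorem \ref{theo exist incr sympa}, whose delicacy stems entirely from the \emph{unboundedness} of the derivation $\delta ^{(\omega ,\lambda )}$. Granting those inputs, the corollary is immediate and requires no further estimates.
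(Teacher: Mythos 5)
Your proposal matches the paper's own argument: the paper derives the corollary by combining Theorem \ref{theo exist incr sympa} (which gives $\mathbf{S}^{(\omega ,\mathbf{A})}(t)=-i\varrho ^{(\beta ,\omega ,\lambda )}(\mathfrak{U}_{t}^{\ast }\delta ^{(\omega ,\lambda )}(\mathfrak{U}_{t}))$) with the Jak\v{s}i\'{c}--Pillet entropy--production identity \cite[Theorem 1.1]{JaksicPillet} applied to the unitaries from Theorem \ref{Theo int pict}, exactly as you do. Your explicit treatment of $t\leq t_{0}$ and your listing of the hypotheses of the Jak\v{s}i\'{c}--Pillet theorem are harmless elaborations of the same route.
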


Finally, Theorems \ref{Theo int pict} and \ref{theo exist incr sympa} also
yield a simple and convenient expression of the \emph{total} energy
increment (\ref{lim_en_incr full})--(\ref{electro free energy}) delivered to
the system by the electromagnetic field at time $t\in \mathbb{R}$:

\begin{satz}[Total energy increment and electromagnetic work]
\label{coro heat production2}\mbox{
}\newline
For any $\beta \in \mathbb{R}^{+}$, $\omega \in \Omega $, $\lambda \in
\mathbb{R}_{0}^{+}$, $\mathbf{A}\in \mathbf{C}_{0}^{\infty }$ and $t\geq
t_{0}$,
\begin{equation*}
\mathbf{S}^{(\omega ,\mathbf{A})}\left( t\right) +\mathbf{P}^{(\omega ,%
\mathbf{A})}\left( t\right) =\int_{t_{0}}^{t}\rho _{s}^{(\beta ,\omega
,\lambda ,\mathbf{A})}\left( \partial _{s}W_{s}^{\mathbf{A}}\right) \mathrm{d%
}s\ .
\end{equation*}
\end{satz}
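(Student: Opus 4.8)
The plan is to establish a finite--volume version of the work--energy identity and then pass to the thermodynamic limit $L\to\infty$, the existence and finiteness of $\mathbf{S}^{(\omega,\mathbf{A})}(t)+\mathbf{P}^{(\omega,\mathbf{A})}(t)$ being already guaranteed by Theorem \ref{theo exist incr sympa}. Fix $L\in\mathbb{R}^{+}$ large enough that the (fixed, finite) support of $W_{s}^{\mathbf{A}}$ lies well inside $\Lambda_{L}$, and set $\mathcal{E}_{L}(t):=\rho_{t}^{(\beta,\omega,\lambda,\mathbf{A})}(H_{L}^{(\omega,\lambda)}+W_{t}^{\mathbf{A}})$. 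By the definition (\ref{lim_en_incr full}) of the total energy increment, $\mathcal{E}_{L}(t)-\varrho^{(\beta,\omega,\lambda)}(H_{L}^{(\omega,\lambda)})\to\mathbf{S}^{(\omega,\mathbf{A})}(t)+\mathbf{P}^{(\omega,\mathbf{A})}(t)$ as $L\to\infty$, so it suffices to evaluate this limit.

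The key computation is the time derivative of $\mathcal{E}_{L}$. Since $H_{L}^{(\omega,\lambda)}$ and $W_{t}^{\mathbf{A}}$ are local, hence lie in $\mathrm{Dom}(\delta^{(\omega,\lambda)})=\mathrm{Dom}(\delta_{t}^{(\omega,\lambda,\mathbf{A})})$, the non--autonomous evolution equation (\ref{Cauchy problem 1}) gives $\partial_{t}\rho_{t}^{(\beta,\omega,\lambda,\mathbf{A})}(B)=\rho_{t}^{(\beta,\omega,\lambda,\mathbf{A})}(\delta_{t}^{(\omega,\lambda,\mathbf{A})}(B))$ for frozen $B$, and the product rule yields $\partial_{t}\mathcal{E}_{L}(t)=\rho_{t}^{(\beta,\omega,\lambda,\mathbf{A})}(\delta_{t}^{(\omega,\lambda,\mathbf{A})}(H_{L}^{(\omega,\lambda)}+W_{t}^{\mathbf{A}}))+\rho_{t}^{(\beta,\omega,\lambda,\mathbf{A})}(\partial_{t}W_{t}^{\mathbf{A}})$. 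Using the explicit form (\ref{explicit delta}) of $\delta_{t}^{(\omega,\lambda,\mathbf{A})}$ and $[W_{t}^{\mathbf{A}},W_{t}^{\mathbf{A}}]=0$, the first argument equals $\delta^{(\omega,\lambda)}(H_{L}^{(\omega,\lambda)})+\delta^{(\omega,\lambda)}(W_{t}^{\mathbf{A}})+i[W_{t}^{\mathbf{A}},H_{L}^{(\omega,\lambda)}]$. Because $\Delta_{\mathrm{d}}+\lambda V_{\omega}$ is nearest--neighbour and $L$ is large, the local commutator stabilises to $\delta^{(\omega,\lambda)}(W_{t}^{\mathbf{A}})=i[H_{L}^{(\omega,\lambda)},W_{t}^{\mathbf{A}}]$; the last two terms cancel, leaving $\partial_{t}\mathcal{E}_{L}(t)=\rho_{t}^{(\beta,\omega,\lambda,\mathbf{A})}(\delta^{(\omega,\lambda)}(H_{L}^{(\omega,\lambda)}))+\rho_{t}^{(\beta,\omega,\lambda,\mathbf{A})}(\partial_{t}W_{t}^{\mathbf{A}})$.

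Integrating from $t_{0}$ to $t$ and using $W_{t_{0}}^{\mathbf{A}}=0$ together with $\rho_{t_{0}}^{(\beta,\omega,\lambda,\mathbf{A})}=\varrho^{(\beta,\omega,\lambda)}$ (so that $\mathcal{E}_{L}(t_{0})=\varrho^{(\beta,\omega,\lambda)}(H_{L}^{(\omega,\lambda)})$), I obtain $\mathcal{E}_{L}(t)-\varrho^{(\beta,\omega,\lambda)}(H_{L}^{(\omega,\lambda)})=\int_{t_{0}}^{t}\rho_{s}^{(\beta,\omega,\lambda,\mathbf{A})}(\delta^{(\omega,\lambda)}(H_{L}^{(\omega,\lambda)}))\,\mathrm{d}s+\int_{t_{0}}^{t}\rho_{s}^{(\beta,\omega,\lambda,\mathbf{A})}(\partial_{s}W_{s}^{\mathbf{A}})\,\mathrm{d}s$. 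The second integral is independent of $L$ for $L$ large, since its integrand is a fixed local observable, and it already is the claimed right--hand side; the left--hand side converges to $\mathbf{S}^{(\omega,\mathbf{A})}(t)+\mathbf{P}^{(\omega,\mathbf{A})}(t)$. Hence the theorem reduces to the vanishing $\lim_{L\to\infty}\rho_{s}^{(\beta,\omega,\lambda,\mathbf{A})}(\delta^{(\omega,\lambda)}(H_{L}^{(\omega,\lambda)}))=0$, uniformly for $s\in[t_{0},t]$, so that dominated convergence kills the first integral.

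This last step is the main obstacle, because $\delta^{(\omega,\lambda)}(H_{L}^{(\omega,\lambda)})$ is a boundary observable supported near $\partial\Lambda_{L}$ whose norm grows like $L^{d-1}$. I would control it via the interaction picture of Theorem \ref{Theo int pict}, writing $\rho_{s}^{(\beta,\omega,\lambda,\mathbf{A})}(\delta^{(\omega,\lambda)}(H_{L}^{(\omega,\lambda)}))=\varrho^{(\beta,\omega,\lambda)}(\mathfrak{U}_{s}^{\ast}\delta^{(\omega,\lambda)}(H_{L}^{(\omega,\lambda)})\mathfrak{U}_{s})$ and subtracting $\varrho^{(\beta,\omega,\lambda)}(\delta^{(\omega,\lambda)}(H_{L}^{(\omega,\lambda)}))=0$, which holds by stationarity (\ref{stationary}) as $H_{L}^{(\omega,\lambda)}\in\mathrm{Dom}(\delta^{(\omega,\lambda)})$. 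Unitarity of $\mathfrak{U}_{s}$ then gives $\rho_{s}^{(\beta,\omega,\lambda,\mathbf{A})}(\delta^{(\omega,\lambda)}(H_{L}^{(\omega,\lambda)}))=\varrho^{(\beta,\omega,\lambda)}(\mathfrak{U}_{s}^{\ast}[\delta^{(\omega,\lambda)}(H_{L}^{(\omega,\lambda)}),\mathfrak{U}_{s}])$, whose modulus is at most $\Vert[\delta^{(\omega,\lambda)}(H_{L}^{(\omega,\lambda)}),\mathfrak{U}_{s}]\Vert$. The family $\{\mathfrak{U}_{s}\}_{s\in[t_{0},t]}$ is quasi--local and concentrated near the support of $\mathbf{A}$: from its Dyson--Phillips series it is assembled from the local operators $W_{r}^{\mathbf{A}}$ transported by $\tau^{(\omega,\lambda)}$, whose spatial spreading obeys the rapid decay of Lemma \ref{bound anticomm}. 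Pairing this decay against the $O(L^{d-1})$ boundary bonds constituting $\delta^{(\omega,\lambda)}(H_{L}^{(\omega,\lambda)})$ forces the commutator norm to zero, uniformly on the compact interval $[t_{0},t]$; equivalently, since every state and observable occurring here is quasi--free, one may evaluate $\rho_{s}^{(\beta,\omega,\lambda,\mathbf{A})}(\delta^{(\omega,\lambda)}(H_{L}^{(\omega,\lambda)}))$ through two--point functions and use that the symbol of $\rho_{s}^{(\beta,\omega,\lambda,\mathbf{A})}$ differs from that of $\varrho^{(\beta,\omega,\lambda)}$ only by a kernel decaying away from the field's support. This supplies the required vanishing and completes the proof.
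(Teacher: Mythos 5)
Your argument is logically sound and reaches the correct identity, but it follows a genuinely different route from the paper. The paper never introduces the finite--volume total energy $\rho _{t}^{(\beta ,\omega ,\lambda ,\mathbf{A})}(H_{L}^{(\omega ,\lambda )}+W_{t}^{\mathbf{A}})$ in this proof at all: it works entirely in the interaction picture, starting from the identity $\mathbf{S}^{(\omega ,\mathbf{A})}(t)=-i\varrho ^{(\beta ,\omega ,\lambda )}(\mathfrak{U}_{t}^{\ast }\delta ^{(\omega ,\lambda )}(\mathfrak{U}_{t}))=-i\varrho ^{(\beta ,\omega ,\lambda )}(\mathfrak{V}_{t,t_{0}}\delta ^{(\omega ,\lambda )}(\mathfrak{V}_{t,t_{0}}^{\ast }))$ of Theorem \ref{theo exist incr sympa}, differentiating this scalar function of $t$ via $\partial _{t}\mathfrak{V}_{t,t_{0}}=i\mathfrak{V}_{t,t_{0}}W_{t,t}^{\mathbf{A}}$, and then splitting $\delta ^{(\omega ,\lambda )}(W_{t,t}^{\mathbf{A}})=\partial _{t}W_{t,t}^{\mathbf{A}}-\tau _{t}^{(\omega ,\lambda )}(\partial _{t}W_{t}^{\mathbf{A}})$ into a total time derivative (which integrates to $-\mathbf{P}^{(\omega ,\mathbf{A})}(t)$) plus the work density. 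Because the cocycle $\mathfrak{V}_{t,t_{0}}$ is a genuine element of $\mathrm{Dom}(\delta ^{(\omega ,\lambda )})$, no boundary observable of diverging norm ever appears, and no quasi--locality estimate is needed. What your route buys is a more physically transparent energy--balance reading (the missing heat flux through $\partial \Lambda _{L}$ vanishes in the thermodynamic limit); what it costs is precisely that extra estimate.

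That cost is not negligible, and it is the one place where your write--up is a sketch rather than a proof: you must show $\lim_{L\rightarrow \infty }\rho _{s}^{(\beta ,\omega ,\lambda ,\mathbf{A})}(\delta ^{(\omega ,\lambda )}(H_{L}^{(\omega ,\lambda )}))=0$ uniformly on $[t_{0},t]$, beating the $\mathcal{O}(L^{d-1})$ growth of the number of boundary bonds. Your reduction to $\vert \varrho ^{(\beta ,\omega ,\lambda )}(\mathfrak{U}_{s}^{\ast }[\delta ^{(\omega ,\lambda )}(H_{L}^{(\omega ,\lambda )}),\mathfrak{U}_{s}])\vert$ via stationarity is correct, and the needed bond--wise bound $\Vert \lbrack a_{x}^{\ast }a_{x+z},\mathfrak{U}_{s}]\Vert \leq D(1+\mathrm{dist}(x,\mathrm{supp}\,\mathbf{A})^{d+\epsilon })^{-1}$ does follow from the Dyson--Phillips series (\ref{Dyson interact}), the commutator expansion (\ref{eq:FroehlichMerkliUeltschi}) and Lemma \ref{bound anticomm}, after which summing $\mathcal{O}(L^{d-1})$ bonds at distance $\geq L-R_{0}$ yields $\mathcal{O}(L^{-1-\epsilon })$. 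So the step is closable with tools already in the paper, but as written it is asserted rather than proved, and it is the load--bearing step of your entire argument; either carry it out explicitly or adopt the paper's cocycle computation, which sidesteps it.
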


\begin{proof}
The proof is an extension of the one of \cite[Lemma 5.4.27.]%
{BratteliRobinson} to the \emph{unbounded} symmetric derivation $\delta
^{(\omega ,\lambda )}$.

By (\ref{definition tho chap inv3}) and the stationarity of the KMS state $%
\varrho ^{(\beta ,\omega ,\lambda )}$ w.r.t. the unperturbed dynamics (cf. (%
\ref{stationary})), we first observe that, for any $\beta \in \mathbb{R}^{+}$%
, $\omega \in \Omega $, $\lambda \in \mathbb{R}_{0}^{+}$, $\mathbf{A}\in
\mathbf{C}_{0}^{\infty }$ and $t\geq t_{0}$,%
\begin{equation}
\varrho ^{(\beta ,\omega ,\lambda )}\left( \mathfrak{U}_{t}^{\ast }\delta
^{(\omega ,\lambda )}\left( \mathfrak{U}_{t}\right) \right) =\varrho
^{(\beta ,\omega ,\lambda )}\left( \mathfrak{V}_{t,t_{0}}\delta ^{(\omega
,\lambda )}\left( \mathfrak{V}_{t,t_{0}}^{\ast }\right) \right)
\label{thm exp value1}
\end{equation}%
with the unitary elements $\mathfrak{V}_{t,t_{0}}$ being defined by (\ref%
{Dyson interact}).

The maps%
\begin{equation*}
t\mapsto \mathfrak{V}_{t,t_{0}}\qquad \text{and}\qquad t\mapsto \delta
^{(\omega ,\lambda )}\left( \mathfrak{V}_{t,t_{0}}^{\ast }\right)
\end{equation*}%
from $\mathbb{R}$ to $\mathrm{Dom}(\delta ^{(\omega ,\lambda )})$ are
continuously differentiable in the Banach spaces $\mathcal{Y}$\ and $%
\mathcal{U}$, respectively. See (\ref{graph space}) and (\ref{Dyson interact
diff eq}). Therefore, the map%
\begin{equation*}
t\mapsto \partial _{t}\left\{ \varrho ^{(\beta ,\omega ,\lambda )}\left(
\mathfrak{V}_{t,t_{0}}\delta ^{(\omega ,\lambda )}\left( \mathfrak{V}%
_{t,t_{0}}^{\ast }\right) \right) \right\}
\end{equation*}%
from $\mathbb{R}$ to $\mathbb{R}$ is also continuously differentiable and,
from (\ref{Dyson interact diff eq}) and the fact that $\delta ^{(\omega
,\lambda )}$ is a symmetric derivation, we compute that, for all $t\in
\mathbb{R}$,
\begin{equation}
\partial _{t}\left\{ \varrho ^{(\beta ,\omega ,\lambda )}\left( \mathfrak{V}%
_{t,t_{0}}\delta ^{(\omega ,\lambda )}\left( \mathfrak{V}_{t,t_{0}}^{\ast
}\right) \right) \right\} =-i\varrho ^{(\beta ,\omega ,\lambda )}\left(
\mathfrak{V}_{t,t_{0}}\{\delta ^{(\omega ,\lambda )}(W_{t,t}^{\mathbf{A}})\}%
\mathfrak{V}_{t,t_{0}}^{\ast }\right) \ .  \label{Dyson interact diff eq4}
\end{equation}%
On the other hand, using again (\ref{Dyson interact diff eq}) we observe that%
\begin{equation*}
\partial _{t}\left\{ \mathfrak{V}_{t,t_{0}}W_{t,t}^{\mathbf{A}}\mathfrak{V}%
_{t,t_{0}}^{\ast }\right\} =\mathfrak{V}_{t,t_{0}}(\partial _{t}W_{t,t}^{%
\mathbf{A}})\mathfrak{V}_{t,t_{0}}^{\ast }
\end{equation*}%
for any $t\in \mathbb{R}$, which, combined with the identity%
\begin{equation*}
\delta ^{(\omega ,\lambda )}(W_{t,t}^{\mathbf{A}})=\partial _{t}W_{t,t}^{%
\mathbf{A}}-\tau _{t}^{(\omega ,\lambda )}(\partial _{t}W_{t}^{\mathbf{A}})\
,
\end{equation*}%
yields%
\begin{equation*}
\mathfrak{V}_{t,t_{0}}\{\delta ^{(\omega ,\lambda )}(W_{t,t}^{\mathbf{A}})\}%
\mathfrak{V}_{t,t_{0}}^{\ast }=\partial _{t}\left\{ \mathfrak{V}%
_{t,t_{0}}W_{t,t}^{\mathbf{A}}\mathfrak{V}_{t,t_{0}}^{\ast }\right\} -%
\mathfrak{V}_{t,t_{0}}\tau _{t}^{(\omega ,\lambda )}(\partial _{t}W_{t}^{%
\mathbf{A}})\mathfrak{V}_{t,t_{0}}^{\ast }\ .
\end{equation*}%
Using this equality together with (\ref{Dyson interact diff eq4}) we thus
find that, for any $t\in \mathbb{R}$,
\begin{eqnarray}
&& \partial _{t}\left\{ \varrho ^{(\beta ,\omega ,\lambda )}\left( \mathfrak{V}%
_{t,t_{0}}\delta ^{(\omega ,\lambda )}\left( \mathfrak{V}_{t,t_{0}}^{\ast
}\right) \right) \right\} \label{thm exp value2} \\=-i\varrho ^{(\beta ,\omega ,\lambda )}&&\left(
\partial _{t}\left\{ \mathfrak{V}_{t,t_{0}}W_{t,t}^{\mathbf{A}}\mathfrak{V}%
_{t,t_{0}}^{\ast }\right\} \right) +i\varrho ^{(\beta ,\omega ,\lambda )}\left( \mathfrak{V}_{t,t_{0}}\tau
_{t}^{(\omega ,\lambda )}(\partial _{t}W_{t}^{\mathbf{A}})\mathfrak{V}%
_{t,t_{0}}^{\ast }\right) \ .  \notag
\end{eqnarray}%
Now, for $t\in \mathbb{R}$, we use Equations (\ref{time dependent state}), (%
\ref{stationary}), (\ref{definition tho chap inv}), (\ref{definition tho
chap inv2}), (\ref{thm exp value1}) and (\ref{thm exp value2}) to arrive at
\begin{eqnarray*}
\partial _{t}\left\{ \varrho ^{(\beta ,\omega ,\lambda )}\left( \mathfrak{U}%
_{t}^{\ast }\delta ^{(\omega ,\lambda )}\left( \mathfrak{U}_{t}\right)
\right) \right\} &=&-i\varrho ^{(\beta ,\omega ,\lambda )}\left( \partial
_{t}\left\{ \mathfrak{V}_{t,t_{0}}W_{t,t}^{\mathbf{A}}\mathfrak{V}%
_{t,t_{0}}^{\ast }\right\} \right) \\
&&+i\rho _{t}^{(\beta ,\omega ,\lambda ,\mathbf{A})}\left( \partial
_{t}W_{t}^{\mathbf{A}}\right) \ .
\end{eqnarray*}%
We next integrate this last equality by using $\mathfrak{V}_{t_{0},t_{0}}=%
\mathfrak{U}_{t_{0}}=\mathbf{1}$ to get
\begin{eqnarray}
\varrho ^{(\beta ,\omega ,\lambda )}\left( \mathfrak{U}_{t}^{\ast }\delta
^{(\omega ,\lambda )}\left( \mathfrak{U}_{t}\right) \right)
&=&i\int_{t_{0}}^{t}\rho _{s}^{(\beta ,\omega ,\lambda ,\mathbf{A})}\left(
\partial _{s}W_{s}^{\mathbf{A}}\right) \mathrm{d}s  \label{eq sympa cool} \\
&&-i\rho _{t}^{(\beta ,\omega ,\lambda ,\mathbf{A})}\left( W_{t}^{\mathbf{A}%
}\right) +i\varrho ^{(\beta ,\omega ,\lambda )}\left( W_{t_{0}}^{\mathbf{A}%
}\right)  \notag
\end{eqnarray}%
for any $\beta \in \mathbb{R}^{+}$, $\omega \in \Omega $, $\lambda \in
\mathbb{R}_{0}^{+}$, $\mathbf{A}\in \mathbf{C}_{0}^{\infty }$ and $t\geq
t_{0}$. The assertion then follows from (\ref{eq sympa cool}) combined with (%
\ref{electro free energy}) and Theorem \ref{theo exist incr sympa}.
\end{proof}

Following the terminology of \cite[Section 5.4.4.]{BratteliRobinson} with
their definition of $L^{P}$, Theorem \ref{coro heat production2} means that
the total energy increment (\ref{lim_en_incr full}) is equal to the \emph{%
work} performed on the system by the electromagnetic field at time $t\geq
t_{0}$. Moreover, Theorem \ref{coro heat production2} leads to the real
analyticity of the internal energy increment w.r.t. to the field strength $%
\eta \in \mathbb{R}$:

\begin{koro}[Real analyticity of the internal energy increment]
\label{coro heat production1 copy(1)}\mbox{
}\newline
For any $\beta \in \mathbb{R}^{+}$, $\omega \in \Omega $, $\lambda \in
\mathbb{R}_{0}^{+}$, $\mathbf{A}\in \mathbf{C}_{0}^{\infty }$ and $t\geq
t_{0}$, $\mathbf{S}^{(\omega ,\eta \mathbf{A})}\left( t\right) $ is a real
analytic function of $\eta \in \mathbb{R}$.
\end{koro}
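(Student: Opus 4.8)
The plan is to reduce the whole statement to the analyticity of the Dyson series \eqref{Dyson interact} in the coupling parameter. By Theorem~\ref{theo exist incr sympa} combined with the identity \eqref{thm exp value1}, replacing $\mathbf{A}$ by $\eta\mathbf{A}$ gives
\[
\mathbf{S}^{(\omega,\eta\mathbf{A})}(t)=-i\,\varrho^{(\beta,\omega,\lambda)}\!\left(\mathfrak{V}_{t,t_{0}}^{(\omega,\lambda,\eta\mathbf{A})}\,\delta^{(\omega,\lambda)}\big((\mathfrak{V}_{t,t_{0}}^{(\omega,\lambda,\eta\mathbf{A})})^{\ast}\big)\right),
\]
where $\mathfrak{V}_{t,t_{0}}^{(\omega,\lambda,\eta\mathbf{A})}$ is the series \eqref{Dyson interact} with $W_{s,s}^{\mathbf{A}}$ replaced by $W_{s,s}^{\eta\mathbf{A}}$. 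Since $\varrho^{(\beta,\omega,\lambda)}$ is a bounded functional, multiplication is jointly continuous on $\mathcal{U}$, the adjoint is isometric, and $\delta^{(\omega,\lambda)}$ is bounded from the graph--norm space $\mathcal{Y}$ of \eqref{graph space} into $\mathcal{U}$, it suffices to prove that $\eta\mapsto\mathfrak{V}_{t,t_{0}}^{(\omega,\lambda,\eta\mathbf{A})}$ is a real--analytic (indeed entire) map from $\mathbb{R}$ into $\mathcal{Y}$.

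First I would check that $\eta\mapsto W_{s}^{\eta\mathbf{A}}$ is an entire $\mathcal{U}$--valued function. By \eqref{def:W}, and because $\mathbf{A}\in\mathbf{C}_{0}^{\infty}$ and $\Delta_{\mathrm{d}}$ is finite--range, $W_{s}^{\eta\mathbf{A}}$ is a \emph{finite} sum, supported in a fixed $\Lambda\in\mathcal{P}_{f}(\mathfrak{L})$, of the entire scalar functions $\eta\mapsto\exp(-i\eta c_{x,y}(s))-1$ (with $c_{x,y}(s)\in\mathbb{R}$ the phase integral of \eqref{def:W}) times the fixed monomials $\langle\mathfrak{e}_{x},\Delta_{\mathrm{d}}\mathfrak{e}_{y}\rangle a_{x}^{\ast}a_{y}$. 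Hence it is entire in $\eta$, with $\Vert W_{s}^{\eta\mathbf{A}}\Vert$ bounded on compact sets of $\eta$ uniformly in $s$, and vanishing for $s\notin[t_{0},t_{1}]$; the same holds for $\partial_{s}W_{s}^{\eta\mathbf{A}}$. Passing to $W_{s,s}^{\eta\mathbf{A}}=\tau_{s}^{(\omega,\lambda)}(W_{s}^{\eta\mathbf{A}})$, the bound \eqref{bound semigroup} on the restriction of $\tau_{s}^{(\omega,\lambda)}$ to $\mathcal{Y}$ shows that $\eta\mapsto W_{s,s}^{\eta\mathbf{A}}$ is entire as a $\mathcal{Y}$--valued map, again with norm locally bounded in $\eta$ uniformly in $s\in[t_{0},t_{1}]$.

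Next I would insert these into the Dyson series. Each summand of \eqref{Dyson interact} is a multiple time--integral of a product of the $\mathcal{Y}$--valued entire functions $W_{s_{j},s_{j}}^{\eta\mathbf{A}}$; since $\mathcal{Y}$ is a Banach $\ast$--algebra (because $\delta^{(\omega,\lambda)}$ is a symmetric derivation), each summand is entire in $\eta$ with $\mathcal{Y}$--values, and the $k$--th term is bounded in $\mathcal{Y}$--norm by $(D_{\eta})^{k}(t-t_{0})^{k}/k!$ with $D_{\eta}:=\sup_{s}\Vert W_{s,s}^{\eta\mathbf{A}}\Vert_{\mathcal{Y}}$ locally bounded in $\eta$. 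The series therefore converges absolutely in $\mathcal{Y}$, uniformly for $\eta$ in compacts, and defines an entire $\mathcal{Y}$--valued function $\eta\mapsto\mathfrak{V}_{t,t_{0}}^{(\omega,\lambda,\eta\mathbf{A})}=\sum_{k\geq0}\eta^{k}V_{k}$ with $V_{k}\in\mathcal{Y}$. For real $\eta$ the adjoint acts term by term, $(\mathfrak{V}_{t,t_{0}}^{(\omega,\lambda,\eta\mathbf{A})})^{\ast}=\sum_{k}\eta^{k}V_{k}^{\ast}$ (the graph norm is $\ast$--invariant); applying the bounded map $\delta^{(\omega,\lambda)}:\mathcal{Y}\to\mathcal{U}$, the continuous product in $\mathcal{U}$, and the bounded functional $\varrho^{(\beta,\omega,\lambda)}$ turns the displayed formula for $\mathbf{S}^{(\omega,\eta\mathbf{A})}(t)$ into a Cauchy product of two $\mathcal{Y}$--convergent series, i.e.\ a power series in $\eta$ that converges locally uniformly on all of $\mathbb{R}$. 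This yields the claimed real analyticity.

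The main obstacle is the second and third steps: one must verify that the Dyson series converges not merely in the $C^{\ast}$--norm but in the stronger graph norm $\Vert\cdot\Vert_{\delta^{(\omega,\lambda)}}$, \emph{with} convergence uniform in $\eta$ on compact sets, so that $\delta^{(\omega,\lambda)}$ may be applied under the sum. This is precisely the $\mathcal{Y}$--valued analysis already carried out in the proof of Theorem~\ref{Theo int pict} (smooth, compactly supported dependence of $s\mapsto W_{s}^{\mathbf{A}}$ into $\mathcal{Y}$, together with \eqref{bound semigroup}); here it only has to be tracked with the extra parameter $\eta$, which enters solely through the entire scalar coefficients of $W_{s}^{\eta\mathbf{A}}$ and hence causes no new difficulty. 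As an alternative route one could instead start from Theorem~\ref{coro heat production2}, writing $\mathbf{S}^{(\omega,\eta\mathbf{A})}(t)=\int_{t_{0}}^{t}\rho_{s}^{(\beta,\omega,\lambda,\eta\mathbf{A})}(\partial_{s}W_{s}^{\eta\mathbf{A}})\,\mathrm{d}s-\rho_{t}^{(\beta,\omega,\lambda,\eta\mathbf{A})}(W_{t}^{\eta\mathbf{A}})$ and using the interaction--picture formula $\rho_{s}^{(\beta,\omega,\lambda,\eta\mathbf{A})}(B)=\varrho^{(\beta,\omega,\lambda)}(\mathfrak{V}_{s,t_{0}}^{(\omega,\lambda,\eta\mathbf{A})}\tau_{s}^{(\omega,\lambda)}(B)(\mathfrak{V}_{s,t_{0}}^{(\omega,\lambda,\eta\mathbf{A})})^{\ast})$; the $\eta$--analyticity of $\mathfrak{V}_{s,t_{0}}^{(\omega,\lambda,\eta\mathbf{A})}$ is again the crux, and the final integration of a locally uniformly analytic integrand over the compact interval $[t_{0},t]$ preserves analyticity.
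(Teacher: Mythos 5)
Your argument is correct, but your primary route is genuinely different from the paper's. The paper proves this corollary exactly via the ``alternative route'' you sketch at the end: it starts from Theorem~\ref{coro heat production2}, writes $\mathbf{S}^{(\omega ,\eta \mathbf{A})}(t)=\int_{t_{0}}^{t}\rho _{s}^{(\beta ,\omega ,\lambda ,\eta \mathbf{A})}(\partial _{s}W_{s}^{\eta \mathbf{A}})\,\mathrm{d}s-\mathbf{P}^{(\omega ,\eta \mathbf{A})}(t)$, expands the states $\rho _{s}^{(\beta ,\omega ,\lambda ,\eta \mathbf{A})}$ through the multi--commutator Dyson--Phillips series (\ref{Dyson tau 1}), and then only needs that $\eta \mapsto W_{s}^{\eta \mathbf{A}}$ and $\eta \mapsto \partial _{s}W_{s}^{\eta \mathbf{A}}$ are entire $\mathcal{U}$--valued maps; that whole expansion lives in the $C^{\ast }$--norm, so no graph--norm considerations arise. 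Your main route instead goes through Theorem~\ref{theo exist incr sympa} and the cocycle $\mathfrak{V}_{t,t_{0}}$, which forces you to establish $\eta$--analyticity of the Dyson series (\ref{Dyson interact}) in the Banach $\ast$--algebra $\mathcal{Y}$ of (\ref{graph space}) so that the unbounded derivation $\delta ^{(\omega ,\lambda )}$ may be applied term by term. This is legitimate: the $\mathcal{Y}$--valued estimates you need are precisely those already carried out in the proof of Theorem~\ref{Theo int pict}, and $\eta$ enters only through the entire scalar coefficients $\exp (-i\eta c_{x,y}(s))-1$ of the finitely supported perturbation, so local uniformity in $\eta$ is automatic and the Cauchy--product argument closes the proof. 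The trade--off is that the paper's route is shorter because the unbounded generator never appears, whereas yours is more self--contained relative to the representation $\mathbf{S}^{(\omega ,\eta \mathbf{A})}(t)=-i\varrho ^{(\beta ,\omega ,\lambda )}(\mathfrak{V}_{t,t_{0}}\delta ^{(\omega ,\lambda )}(\mathfrak{V}_{t,t_{0}}^{\ast }))$ and does not invoke the integrated work identity; both establish entirety (infinite analyticity radius), which is what the corollary asserts.
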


\begin{proof}
Use Theorem \ref{coro heat production2} and write the terms $\mathbf{P}%
^{(\omega ,\eta \mathbf{A})}\left( t\right) $ and
\begin{equation*}
\int_{t_{0}}^{t}\rho _{s}^{(\beta ,\omega ,\lambda ,\eta \mathbf{A})}\left(
\partial _{s}W_{s}^{\eta \mathbf{A}}\right) \mathrm{d}s
\end{equation*}%
as Dyson--Phillips series in terms of multi--commutators, see (\ref{time
dependent state}) and (\ref{Dyson tau 1}). Observe finally that both maps
\begin{equation*}
\eta \mapsto W_{s}^{\eta \mathbf{A}}\in \mathcal{U}\text{\qquad and\qquad }%
\eta \mapsto \partial _{s}W_{s}^{\eta \mathbf{A}}\in \mathcal{U}
\end{equation*}%
are real analytic with infinite analyticity radius.
\end{proof}

\subsection{Behavior of the Internal Energy Increment at Small Fields\label%
{section Energy Increments as Power Series}}

We study here the asymptotic behavior of $\mathbf{S}^{(\omega ,\eta \mathbf{A%
}_{l})}\equiv \mathbf{S}^{(\beta ,\omega ,\lambda ,\eta \mathbf{A}_{l})}$ at
small field strength $\eta \in \mathbb{R}$ and large space scale $l\in
\mathbb{R}^{+}$. In fact, in view of Corollary \ref{Theorem entropy
production} saying that $\mathbf{S}^{(\omega ,\eta \mathbf{A}_{l})}=\mathbf{Q%
}^{(\omega ,\eta \mathbf{A}_{l})}$, we prove here Theorem \ref{Thm Heat
production as power series}. Recall that $\mathbf{A}_{l}\in \mathbf{C}%
_{0}^{\infty }$ is defined by (\ref{rescaled vector potential}), that is,
\begin{equation}
\mathbf{A}_{l}(t,x):=\mathbf{A}(t,l^{-1}x)\ ,\quad t\in \mathbb{R},\ x\in
\mathbb{R}^{d}\ ,  \label{26bis}
\end{equation}%
for any $\mathbf{A}\in \mathbf{C}_{0}^{\infty }$ and $l\in \mathbb{R}^{+}$.

Using Equations (\ref{discrete laplacian}), (\ref{time dependent state}), (%
\ref{def H loc}), (\ref{stationary}) and (\ref{Dyson tau 1}) we first
observe that%
\begin{eqnarray}
&&\rho _{t}^{(\beta ,\omega ,\lambda ,\eta \mathbf{A}_{l})}(H_{L}^{(\omega
,\lambda )})-\rho _{t_{0}}^{(\beta ,\omega ,\lambda ,\eta \mathbf{A}%
_{l})}(H_{L}^{(\omega ,\lambda )})  \notag \\
&=&\sum\limits_{x\in \Lambda _{L}}\sum\limits_{z\in \mathfrak{L},|z|\leq
1}\langle \mathfrak{e}_{x},\left( \Delta _{\mathrm{d}}+\lambda V_{\omega
}\right) \mathfrak{e}_{x+z}\rangle \mathbf{1}[x+z\in \Lambda
_{L}]\sum\limits_{k\in {\mathbb{N}}}i^{k}  \label{incr lim 3} \\
&&\times \int_{t_{0}}^{t}\mathrm{d}s_{1}\int_{t_{0}}^{s_{1}}\mathrm{d}%
s_{2}\cdots \int_{t_{0}}^{s_{k-1}}\mathrm{d}s_{k}  \notag \\
&&\varrho ^{(\beta ,\omega ,\lambda )}\left( [W_{s_{k}-t_{0},s_{k}}^{\eta
\mathbf{A}_{l}},\ldots ,W_{s_{1}-t_{0},s_{1}}^{\eta \mathbf{A}_{l}},\tau
_{t-t_{0}}^{(\omega ,\lambda )}(a_{x}^{\ast }a_{x+z})]^{(k+1)}\right)  \notag
\end{eqnarray}%
for any $L,l,\beta \in \mathbb{R}^{+}$, $\omega \in \Omega $, $\lambda \in
\mathbb{R}_{0}^{+}$, $\eta \in \mathbb{R}$, $\mathbf{A}\in \mathbf{C}%
_{0}^{\infty }$ and $t\geq t_{0}$. Recall that the time--dependent
electromagnetic perturbation $W_{t,s}^{\mathbf{A}}$ is defined by (\ref{def
LA}). See also (\ref{multi1})--(\ref{multi2}) for the precise definition of
multi--commutators.

Therefore, in order to write $\mathbf{S}^{(\omega ,\eta \mathbf{A}_{l})}$ in
terms of multi--commutators, we prove the following lemma by using
tree--decay bounds:

\begin{lemma}[Bounds on multi--commutators]
\label{bound incr 1 Lemma}\mbox{
}\newline
For any $\mathbf{A}\in \mathbf{C}_{0}^{\infty }$, there is $\eta _{0}\in
\mathbb{R}^{+}$ such that, for any $l,\varepsilon \in \mathbb{R}^{+}$, there
is a ball%
\begin{equation}
B(0,R):=\{x\in \mathfrak{L}:|x|\leq R\}  \label{ball}
\end{equation}%
of radius $R\in \mathbb{R}^{+}$ centered at $0$ such that, for all $|\eta
|\in \lbrack 0,\eta _{0}]$, $\beta \in \mathbb{R}^{+}$, $\omega \in \Omega $%
, $\lambda \in \mathbb{R}_{0}^{+}$ and $t_{0}\leq s_{1},\ldots ,s_{k}\leq t$%
,
\begin{multline*}
\sum\limits_{x\in \Lambda _{L}\backslash B_{R}}\sum\limits_{z\in \mathfrak{L}%
,|z|\leq 1}\sum\limits_{k\in {\mathbb{N}}}\frac{\left( t-t_{0}\right) ^{k}}{%
k!} \\
\left\vert \varrho ^{(\beta ,\omega ,\lambda )}\left(
[W_{s_{k}-t_{0},s_{k}}^{\eta \mathbf{A}_{l}},\ldots
,W_{s_{1}-t_{0},s_{1}}^{\eta \mathbf{A}_{l}},\tau _{t-t_{0}}^{(\omega
,\lambda )}(a_{x}^{\ast }a_{x+z})]^{(k+1)}\right) \right\vert \leq
\varepsilon \ .
\end{multline*}
\end{lemma}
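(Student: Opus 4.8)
The plan is to reduce the whole sum to the uniform tree--decay bound of Corollary~\ref{tree bound main copy(1)} and then to exploit that the observable $\tau _{t-t_{0}}^{(\omega ,\lambda )}(a_{x}^{\ast }a_{x+z})$ sits far from the (fixed, $l$--dependent) spatial support of the perturbations $W_{s_{i}-t_{0},s_{i}}^{\eta \mathbf{A}_{l}}$ as soon as $|x|>R$. First I would record from (\ref{def:W}) that, since $\langle \mathfrak{e}_{y},\Delta _{\mathrm{d}}\mathfrak{e}_{y'}\rangle $ vanishes unless $|y-y'|\leq 1$ and the phase factor differs from $1$ only on the support of $\mathbf{A}_{l}$, each $W_{s}^{\eta \mathbf{A}_{l}}$ is a finite sum of two--operator monomials $a_{y}^{\ast }a_{y'}$ with $|y-y'|\leq 1$, with coefficients bounded by $D|\eta |$ uniformly in all parameters (cf.\ (\ref{rewrite new})), and with $y$ confined to a box $B(0,Cl)$ of volume $V_{l}=\mathcal{O}(l^{d})$ determined by $\mathbf{A}$ and $l$. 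Expanding the $(k+1)$--fold multi--commutator multilinearly over these monomials and applying Corollary~\ref{tree bound main copy(1)} (which also covers the diagonal term $y=y'$, as the anticommutator estimate (\ref{eq:anticomm bound 1}) holds for all sites) yields, for each choice of monomials,
\begin{equation*}
\big| \varrho ^{(\beta ,\omega ,\lambda )}( [\,\cdots ,\tau _{t-t_{0}}^{(\omega ,\lambda )}(a_{x}^{\ast }a_{x+z})\,]^{(k+1)}) \big| \leq (D|\eta |)^{k}\,D^{k}\,\mathbf{v}_{k+1}^{(\epsilon )}(y_{1},\ldots ,y_{k},x)
\end{equation*}
with a constant $D$ independent of $\omega ,\lambda ,\beta $, by Lemma~\ref{bound anticomm}.

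The decisive structural point is that in every tree $T\in \mathcal{T}_{k+1}$ the vertex $k+1$ (carrying the far site $x$) is a leaf, by the recursive definition (\ref{def.tree}); hence the product defining $\mathbf{v}_{k+1}^{(\epsilon )}$ contains exactly one factor $(1+|x-y_{j_{0}}|^{d+\epsilon })^{-1}$ involving $x$. For $|x|>R$ with $R\geq 2Cl$ and $|y_{j_{0}}|\leq Cl$ one has $\{x:|x|>R\}\subseteq \{x:|x-y_{j_{0}}|>R-Cl\}$, so summing over $x\in \Lambda _{L}\backslash B(0,R)$ gives a tail $\delta (R):=\sum_{|w|>R-Cl}(1+|w|^{d+\epsilon })^{-1}$, uniform in $y_{j_{0}}$ and tending to $0$ as $R\rightarrow \infty $ because $d+\epsilon >d$. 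The remaining positions $y_{1},\ldots ,y_{k}\in B(0,Cl)$ are summed over the reduced tree on $\{1,\ldots ,k\}$ obtained by deleting the leaf: fixing a root, each of the $k-1$ remaining edges contributes a convergent factor $C_{\epsilon }:=\sum_{w\in \mathfrak{L}}(1+|w|^{d+\epsilon })^{-1}<\infty $, while the root contributes the volume $V_{l}$. Together with $|\mathcal{T}_{k+1}|=k!$ (from (\ref{def.tree})) and the sum over the $y'$--neighbours and over $z$ (at most $(2d+1)^{2}$ choices, absorbed into $D$), this bounds the $x,z,y$--sums of $\mathbf{v}_{k+1}^{(\epsilon )}$ by $k!\,\delta (R)\,V_{l}\,C_{\epsilon }^{k-1}$.

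Collecting these estimates, the left--hand side is dominated by $\delta (R)\,V_{l}\,C_{\epsilon }^{-1}\sum_{k\geq 1}[\,(t-t_{0})\,D|\eta |\,C_{\epsilon }\,]^{k}$, the factorial from the number of trees exactly cancelling the $1/k!$ of the time simplex. Choosing $\eta _{0}\equiv \eta _{0,\mathbf{A}}$ small enough that $(t-t_{0})D\eta _{0}C_{\epsilon }<1$ turns this into a convergent geometric series whose sum is finite and uniform in $\beta ,\omega ,\lambda $; and since $W_{s}^{\eta \mathbf{A}_{l}}=0$ for $s\notin [t_{0},t_{1}]$, only times $s_{i}\in [t_{0},t_{1}]$ contribute, so in the application to (\ref{series naive}) the effective length is the fixed $t_{1}-t_{0}$ and the choice of $\eta_0$ stays uniform in $t$. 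For fixed $l$ the prefactor $\delta (R)V_{l}$ times this finite geometric sum is then made smaller than the prescribed $\varepsilon $ by taking $R$ large, which is the claimed bound. I expect the main obstacle to be precisely this taming of the factorially many trees against the $1/k!$ of the Dyson simplex — the raison d'\^{e}tre of the tree--decay bounds — together with checking that $D$, $\delta(R)$ and $C_\epsilon$ are genuinely uniform in $\omega ,\lambda $ and $\beta $; uniformity in $\omega ,\lambda $ comes from the $\omega ,\lambda $--independent (\ref{eq:anticomm bound 1}), and in $\beta $ from the fact that the estimate uses only normalization of the state.
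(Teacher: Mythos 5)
Your proof is correct and follows essentially the same route as the paper's: decompose each $W_{s}^{\eta \mathbf{A}_{l}}$ into hopping monomials with coefficients of order $\left\vert \eta \right\vert $ supported in a region of volume $\mathcal{O}(l^{d})$, invoke the uniform tree--decay bound of Corollary \ref{tree bound main copy(1)}, play the $k!$ trees against the $1/k!$ of the simplex by taking $\eta _{0}$ small (uniformly in $l$, $\beta $, $\omega $, $\lambda $, and in $t$ via the compact time support of $\mathbf{A}$), and finally take $R$ large so that the single tree edge linking the far site $x$ to the support of $\mathbf{A}_{l}$ produces a small tail. Your explicit observation that the vertex carrying $x$ is a leaf of every $T\in \mathcal{T}_{k+1}$ (so that exactly one factor involves $x$ and deleting it leaves a single connected tree contributing one volume factor) is exactly the structural fact the paper's estimate (\ref{tototree1}) uses implicitly.
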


\begin{proof}
We first need to bound the $(k+1)$--fold multi--commutator
\begin{equation*}
\lbrack W_{s_{k}-t_{0},s_{k}}^{\mathbf{A}},\ldots ,W_{s_{1}-t_{0},s_{1}}^{%
\mathbf{A}},\tau _{t-t_{0}}^{(\omega ,\lambda )}(a_{x}^{\ast
}a_{x+z})]^{(k+1)}
\end{equation*}%
for any $k\in {\mathbb{N}}$, $x\in \Lambda _{L}$ and $z\in \mathfrak{L}$ so
that $|z|\leq 1$. This is done by using tree--decay bounds as explained in
Section \ref{section Tree--decay Bounds}. Indeed, by (\ref{26bis}), for any $%
l\in \mathbb{R}^{+}$ and $\mathbf{A}\in \mathbf{C}_{0}^{\infty }$, there
exists a finite subset $\widetilde{\Lambda }_{l}\in \mathcal{P}_{f}(%
\mathfrak{L})$ such that $\mathbf{A}_{l}(t,x)=0$ for all $x\in \mathfrak{L}%
\backslash \widetilde{\Lambda }_{l}$ and $t\in \mathbb{R}$. Then, we infer
from (\ref{def:W}) and (\ref{def LA}) that, for all $l\in \mathbb{R}^{+}$, $%
x,y\in \mathfrak{L}$, $\mathbf{A}\in \mathbf{C}_{0}^{\infty }$ and $t,\eta
\in \mathbb{R}$, there are constants $D_{x,y}^{\eta \mathbf{A}_{l}}(t)\in
\mathbb{C}$ such that%
\begin{equation}
W_{s_{1},s_{2}}^{\eta \mathbf{A}_{l}}=\sum\limits_{x\in \widetilde{\Lambda }%
_{l}}\sum\limits_{z\in \mathfrak{L},|z|\leq 1}D_{x,x+z}^{\eta \mathbf{A}%
_{l}}(s_{2})\tau _{s_{1}}^{(\omega ,\lambda )}\left( a_{x}^{\ast
}a_{x+z}\right)  \label{eq:W sum}
\end{equation}%
for any $\omega \in \Omega $, $\lambda \in \mathbb{R}_{0}^{+}$ and $%
s_{1},s_{2}\in \mathbb{R}$. Here, the constants $D_{x,y}^{\eta \mathbf{A}%
_{l}}(t)$ are always of order $\eta $:%
\begin{equation}
\underset{t\in \mathbb{R}\ ,\ x,y\in \mathfrak{L}}{\sup }\left\vert
D_{x,y}^{\eta \mathbf{A}_{l}}\left( t\right) \right\vert \leq K_{\eta }
\label{eq:W sumbisbis0}
\end{equation}%
with%
\begin{equation}
K_{\eta }:=\left\Vert \Delta _{\mathrm{d}}\right\Vert _{\mathrm{op}%
}\left\vert \exp \left\{ i\left\vert \eta \right\vert \underset{\left(
t,x\right) \in \mathbb{R}\times \mathbb{R}^{d}\ ,\ z\in \mathfrak{L},|z|\leq
1}{\max }\left\vert \left[ \mathbf{A}(t,x)\right] \left( z\right)
\right\vert \right\} -1\right\vert =\mathcal{O}\left( \left\vert \eta
\right\vert \right) \ .  \label{eq:W sumbisbis}
\end{equation}%
(Recall that $\left\Vert \cdot \right\Vert _{\mathrm{op}}$ is the operator
norm.) Therefore, using Corollary \ref{tree bound main copy(1)} we deduce
that, for every $\epsilon \in \mathbb{R}^{+}$, $\mathbf{A}\in \mathbf{C}%
_{0}^{\infty }$ and $t>t_{0}$, there is a constant $D\in \mathbb{R}^{+}$
such that, for any $k\in {\mathbb{N}}$, $L,l,\beta \in \mathbb{R}^{+}$, $%
\omega \in \Omega $, $\lambda \in \mathbb{R}_{0}^{+}$, $\eta \in \mathbb{R}$%
, $s_{1},\ldots ,s_{k}\in \lbrack t_{0},t]$ and $R>R_{l}$,
\begin{eqnarray}
\sum\limits_{x\in \Lambda _{L}\backslash B_{R}}\sum\limits_{z\in \mathfrak{%
L},|z|\leq 1} && \left\vert \varrho ^{(\beta ,\omega ,\lambda )}\left(
[W_{s_{k}-t_{0},s_{k}}^{\eta \mathbf{A}_{l}},\ldots
,W_{s_{1}-t_{0},s_{1}}^{\eta \mathbf{A}_{l}},\tau _{t-t_{0}}^{(\omega
,\lambda )}(a_{x}^{\ast }a_{x+z})]^{(k+1)}\right) \right\vert  \notag \\
&\leq &|\widetilde{\Lambda }_{l}|\ |\mathcal{T}_{k+1}|\ \left[
\sum\limits_{x\in \mathfrak{L},|x|\geq R-R_{l}}\frac{K_{\eta }D}{%
1+|x|^{d+\epsilon }}\right] \left[ \sum\limits_{x\in \mathfrak{L}}\frac{%
K_{\eta }D}{1+|x|^{d+\epsilon }}\right] ^{k-1}\ ,  \label{tototree1}
\end{eqnarray}%
with $B(0,R)$ being the ball (\ref{ball}) of radius $R\in \mathbb{R}^{+}$
centered at $0$ and where $|\widetilde{\Lambda }_{l}|$ is the volume of the
finite subset $\widetilde{\Lambda }_{l}\in $ $\mathcal{P}_{f}(\mathfrak{L})$
with radius%
\begin{equation}
R_{l}:=\max \left\{ \left\vert x\right\vert :x\in \widetilde{\Lambda }%
_{l}\right\} \in \mathbb{R}^{+}\ ,\qquad l\in \mathbb{R}^{+}\ .
\label{eq:W sumbis}
\end{equation}%
Note that there exists a finite constant $D\in \mathbb{R}^{+}$ such that $%
R_{l}\leq lD$ for all $l\in \mathbb{R}^{+}$.

From (\ref{def:W}) and (\ref{def LA}) it follows that $W_{t,s}^{\mathbf{A}%
}=0 $ for any $t\geq t_{1}$, where $t_{1}$ is the time when the
electromagnetic potential is switched off. Therefore, without loss of
generality (w.l.o.g.) we only consider times $t\in (t_{0},t_{1}]$ with $%
t_{1}>t_{0}$. Thus, take $\eta _{0}\in \mathbb{R}^{+}$ sufficiently small to
imply
\begin{equation*}
\sum\limits_{x\in \mathfrak{L}}\frac{K_{\eta }D}{1+|x|^{d+\epsilon }}\leq
\sum\limits_{x\in \mathfrak{L}}\frac{K_{\eta _{0}}D}{1+|x|^{d+\epsilon }}%
\leq \frac{1}{2\left( t_{1}-t_{0}\right) }
\end{equation*}%
for all $|\eta |\in \lbrack 0,\eta _{0}]$. Then, using $|\mathcal{T}%
_{k+1}|=k!$ and the upper bound (\ref{tototree1}) we arrive at%
\begin{eqnarray}
\sum\limits_{x\in \Lambda _{L}\backslash B_{R}}\sum\limits_{z\in \mathfrak{%
L},|z|\leq 1}&&\left\vert \varrho ^{(\beta ,\omega ,\lambda )}\left(
[W_{s_{k}-t_{0},s_{k}}^{\eta \mathbf{A}_{l}},\ldots
,W_{s_{1}-t_{0},s_{1}}^{\eta \mathbf{A}_{l}},\tau _{t-t_{0}}^{(\omega
,\lambda )}(a_{x}^{\ast }a_{x+z})]^{(k+1)}\right) \right\vert  \notag \\
&\leq &\frac{k!}{2^{k-1}\left( t_{1}-t_{0}\right) ^{k-1}}|\widetilde{\Lambda
}_{l}|\sum\limits_{x\in \mathfrak{L},|x|\geq R-R_{l}}\frac{K_{\eta }D}{%
1+|x|^{d+\epsilon }}  \label{tototree2}
\end{eqnarray}%
for all $|\eta |\in \lbrack 0,\eta _{0}]$ and any $L,l,\beta \in \mathbb{R}%
^{+}$, $\omega \in \Omega $, $\lambda \in \mathbb{R}_{0}^{+}$, $k\in \mathbb{%
N}$, $t\in (t_{0},t_{1}]$ and $s_{1},\ldots ,s_{k}\in \lbrack t_{0},t]$.
Therefore, we get the assertion from (\ref{tototree2}) by choosing $R\in
\mathbb{R}^{+}$ such that
\begin{equation*}
2\left( t_{1}-t_{0}\right) |\widetilde{\Lambda }_{l}|\sum\limits_{x\in
\mathfrak{L},|x|\geq R-R_{l}}\frac{K_{\eta _{0}}D}{1+|x|^{d+\epsilon }}\leq
\varepsilon
\end{equation*}%
for some fixed arbitrarily chosen parameter $\varepsilon \in \mathbb{R}^{+}$.
\end{proof}

For any $\mathbf{A}\in \mathbf{C}_{0}^{\infty }$, this lemma implies the
existence of a constant $\eta _{0}\in \mathbb{R}^{+}$ such that, for all $%
|\eta |\in \lbrack 0,\eta _{0}]$, $l,\beta \in \mathbb{R}^{+}$, $\omega \in
\Omega $, $\lambda \in \mathbb{R}_{0}^{+}$ and $t\geq t_{0}$, the limit (\ref%
{entropic energy incrementbis}) equals%
\begin{eqnarray}
\mathbf{S}^{(\omega ,\eta \mathbf{A}_{l})}\left( t\right)
&=&\sum\limits_{k\in {\mathbb{N}}}\sum\limits_{x,z\in \mathfrak{L},|z|\leq
1}\langle \mathfrak{e}_{x},\left( \Delta _{\mathrm{d}}+\lambda V_{\omega
}\right) \mathfrak{e}_{x+z}\rangle i^{k}\int_{t_{0}}^{t}\mathrm{d}%
s_{1}\cdots \int_{t_{0}}^{s_{k-1}}\mathrm{d}s_{k}  \notag \\
&&\varrho ^{(\beta ,\omega ,\lambda )}\left( [W_{s_{k}-t_{0},s_{k}}^{\eta
\mathbf{A}_{l}},\ldots ,W_{s_{1}-t_{0},s_{1}}^{\eta \mathbf{A}_{l}},\tau
_{t-t_{0}}^{(\omega ,\lambda )}(a_{x}^{\ast }a_{x+z})]^{(k+1)}\right) \ . \label{energy increment bis}
\end{eqnarray}%
This series is absolutely convergent, by Lemma \ref{bound incr 1 Lemma}.
This proves Theorem \ref{Thm Heat production as power series} (i) because of
Corollary \ref{Theorem entropy production}.

By Corollary \ref{coro heat production1 copy(1)}, recall that, for any $%
l,\beta \in \mathbb{R}^{+}$, $\omega \in \Omega $, $\lambda \in \mathbb{R}%
_{0}^{+}$, $\mathbf{A}\in \mathbf{C}_{0}^{\infty }$ and $t\geq t_{0}$, $%
\mathbf{S}^{(\omega ,\eta \mathbf{A}_{l})}\left( t\right) $ is a real
analytic function of $\eta \in \mathbb{R}$. Now, we use (\ref{energy
increment bis}) to bound the Taylor coefficients of the function $\eta
\mapsto \mathbf{S}^{(\omega ,\eta \mathbf{A}_{l})}\left( t\right) $ at $\eta
=0$, i.e., we prove Theorem \ref{Thm Heat production as power series} (ii):

\begin{lemma}[Analytic norm of the internal energy increment]
\label{bound incr 1 Lemma copy(1)}\mbox{
}\newline
For any $\mathbf{A}\in \mathbf{C}_{0}^{\infty }$, there exist $\eta
_{1},D,\varepsilon \in \mathbb{R}^{+}$ that depend on $\mathbf{A}$ such
that, for all $l,\beta \in \mathbb{R}^{+}$, $\omega \in \Omega $, $\lambda
\in \mathbb{R}_{0}^{+}$ and $t\geq t_{0}$,
\begin{equation*}
\underset{m=0}{\overset{\infty }{\sum }}\frac{\eta _{1}^{m}}{m!}\ \underset{%
\eta \in \lbrack -\varepsilon ,\varepsilon ]}{\sup }\left\vert \partial
_{\eta }^{m}\mathbf{S}^{(\omega ,\eta \mathbf{A}_{l})}\left( t\right)
\right\vert \leq Dl^{d}\ .
\end{equation*}
\end{lemma}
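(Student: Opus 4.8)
The plan is to deduce the claimed bound on the analytic norm from a single uniform sup-bound on a complex disc, via Cauchy estimates. By Corollary \ref{coro heat production1 copy(1)} the map $\eta\mapsto\mathbf{S}^{(\omega,\eta\mathbf{A}_l)}(t)$ is real analytic, and the Dyson--Phillips series underlying that corollary (together with (\ref{Dyson tau 1})) continues it to an analytic function of complex $\eta$. It therefore suffices to produce $R_0\equiv R_{0,\mathbf{A}}\in\mathbb{R}^+$ and $D'\equiv D'_{\mathbf{A}}\in\mathbb{R}^+$ with $|\mathbf{S}^{(\omega,\eta\mathbf{A}_l)}(t)|\leq D'l^d$ for all complex $|\eta|<R_0$, uniformly in $l,\beta\in\mathbb{R}^+$, $\omega\in\Omega$, $\lambda\in\mathbb{R}_0^+$ and $t\geq t_0$. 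Indeed, fixing $\varepsilon,\eta_1\in\mathbb{R}^+$ with $\varepsilon+\eta_1<R_0$, Cauchy's estimate on the disc of radius $R_0-\varepsilon$ centred at any real $\eta\in[-\varepsilon,\varepsilon]$ gives $\sup_{\eta\in[-\varepsilon,\varepsilon]}|\partial_\eta^m\mathbf{S}^{(\omega,\eta\mathbf{A}_l)}(t)|\leq D'l^d\,m!\,(R_0-\varepsilon)^{-m}$, whence $\sum_m\frac{\eta_1^m}{m!}\sup_{[-\varepsilon,\varepsilon]}|\partial_\eta^m\mathbf{S}^{(\omega,\eta\mathbf{A}_l)}(t)|\leq D'l^d\sum_m(\eta_1/(R_0-\varepsilon))^m=:Dl^d$.

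For the uniform disc bound I would not use (\ref{energy increment bis}) directly but the representation of Theorem \ref{coro heat production2}, namely $\mathbf{S}^{(\omega,\eta\mathbf{A}_l)}(t)=\int_{t_0}^t\rho_s^{(\beta,\omega,\lambda,\eta\mathbf{A}_l)}(\partial_sW_s^{\eta\mathbf{A}_l})\,\mathrm{d}s-\rho_t^{(\beta,\omega,\lambda,\eta\mathbf{A}_l)}(W_t^{\eta\mathbf{A}_l})$, since here the driving operators $\partial_sW_s^{\eta\mathbf{A}_l}$ and $W_t^{\eta\mathbf{A}_l}$ are supported in the finite set $\widetilde{\Lambda}_l$ (of volume $\mathcal{O}(l^d)$) with coefficients of order $\mathcal{O}(\eta)$ that are \emph{independent of} $\lambda$. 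Expanding $\rho_s=\varrho^{(\beta,\omega,\lambda)}\circ\tau_{s,t_0}^{(\omega,\lambda,\eta\mathbf{A}_l)}$ by the multi--commutator series (\ref{Dyson tau 1}), the order-$k$ term ($k\geq1$) is a $\varrho^{(\beta,\omega,\lambda)}$--expectation of a $(k+1)$--fold multi--commutator built from $k$ copies of $W^{\eta\mathbf{A}_l}$ and one copy of $\tau^{(\omega,\lambda)}(\partial_sW_s^{\eta\mathbf{A}_l})$ or $\tau^{(\omega,\lambda)}(W_t^{\eta\mathbf{A}_l})$, all with seeds in $\widetilde{\Lambda}_l$. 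I would bound these exactly as in the proof of Lemma \ref{bound incr 1 Lemma}, estimates (\ref{eq:W sum})--(\ref{tototree2}): expand each factor via (\ref{eq:W sum}), pick up a factor $\mathcal{O}(|\eta|)$ per $W$ (the bound (\ref{eq:W sumbisbis}) on $K_\eta$ extending to complex $\eta$), and apply the uniform tree--decay bound of Corollary \ref{tree bound main copy(1)}. Summing the weight $\mathbf{v}_{k+1}^{(\epsilon)}$ over vertex positions pins one vertex in $\widetilde{\Lambda}_l$, giving a \emph{single} volume factor $|\widetilde{\Lambda}_l|\leq Dl^d$, while the remaining $k$ positions sum freely to $(\sum_{x\in\mathfrak{L}}(1+|x|^{d+\epsilon})^{-1})^k$; the factor $|\mathcal{T}_{k+1}|=k!$ is absorbed by the $1/k!$ from the time--ordered integral $\int_{t_0}^t\mathrm{d}s_1\cdots\int_{t_0}^{s_{k-1}}\mathrm{d}s_k$. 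Since $W_s^{\mathbf{A}}=0$ for $s\geq t_1$, the time variable is effectively bounded by $t_1$, and what remains is a geometric series $D'l^d\sum_k(\mathrm{const}\cdot|\eta|\,(t_1-t_0))^{k+1}$, convergent and $\leq D'l^d$ once $|\eta|<R_0$ with $R_0$ fixed so the ratio is below $1$. Uniformity in $\beta$ (including the $k=0$ term $\varrho^{(\beta,\omega,\lambda)}(\partial_sW_s^{\eta\mathbf{A}_l})$) follows because $0\leq\mathbf{d}_{\mathrm{fermi}}^{(\beta,\omega,\lambda)}\leq\mathbf{1}$, so the two--point functions entering $\varrho^{(\beta,\omega,\lambda)}$ are bounded by $1$.

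The routine combinatorics (tree reduction, the single--volume $l^d$ scaling rather than $l^{dk}$, and the $k!/k!$ cancellation producing geometric decay in $\eta$) are already available through Theorem \ref{tree bound main} and Corollary \ref{tree bound main copy(1)}. The one genuine obstacle is \emph{uniformity in the disorder strength} $\lambda$. If one works from (\ref{energy increment bis}), the external sum carries the prefactor $\langle\mathfrak{e}_x,(\Delta_{\mathrm{d}}+\lambda V_\omega)\mathfrak{e}_x\rangle=2d+\lambda\omega(x)$ on its diagonal ($z=0$) entries, and estimating $|2d+\lambda\omega(x)|$ against the tree--decay weight yields only a bound of order $\lambda l^d$, which is not uniform in $\lambda$. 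The resolution is to keep all $\lambda$--dependence inside the dynamics $\tau^{(\omega,\lambda)}$ and the state $\varrho^{(\beta,\omega,\lambda)}$, where it is controlled uniformly by Corollary \ref{tree bound main copy(1)} — ultimately by Lemma \ref{bound anticomm}, whose decay is $\lambda$--independent because $V_\omega$ is diagonal — while the $\lambda$--free operators $\partial_sW_s^{\eta\mathbf{A}_l}$ and $W_t^{\eta\mathbf{A}_l}$ carry the $\eta$-- and $l$--dependence. Equivalently, one may rewrite the external sum of (\ref{energy increment bis}) as the formal Hamiltonian and use $\delta^{(\omega,\lambda)}(W_{t,t}^{\mathbf{A}})=\partial_tW_{t,t}^{\mathbf{A}}-\tau_t^{(\omega,\lambda)}(\partial_tW_t^{\mathbf{A}})$ to trade the $\lambda$--dependent prefactor for the $\lambda$--free operator $\partial_tW_t^{\mathbf{A}}$, recovering the representation of Theorem \ref{coro heat production2}.
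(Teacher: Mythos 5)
Your proposal is correct, but it takes a genuinely different route from the paper's. The paper argues directly on the multi--commutator series (\ref{energy increment bis}): it differentiates each term $m$ times in $\eta$ (the coefficients $D^{\eta \mathbf{A}_{l}}_{x,y}(s)$ of (\ref{eq:W sum}) are entire in $\eta$ with explicitly controlled Taylor coefficients), bounds the result as in the derivation of (\ref{tototree2}), and then sums over $k$ and $m$, using that $\partial_{\eta}$ is closed with respect to uniform convergence. You instead establish one uniform bound $|\mathbf{S}^{(\omega ,\eta \mathbf{A}_{l})}(t)|\leq D^{\prime }l^{d}$ on a complex disc and invoke Cauchy's estimates; this delivers the whole analytic norm in one stroke, at the modest price of checking that the tree--decay machinery survives complexification of $\eta$ --- which it does, since the $\eta$--dependence sits entirely in the entire coefficients $D^{\eta \mathbf{A}_{l}}_{x,y}$ while the state and the group entering Corollary \ref{tree bound main copy(1)} are $\eta$--independent (for complex $\eta$ the perturbed $\rho_{t}$ is no longer a state, so your use of the tree bounds rather than a naive norm estimate on the Dyson series is genuinely needed there). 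The more substantive divergence is your choice of representation: you estimate $\mathbf{S}=\int_{t_{0}}^{t}\rho _{s}(\partial _{s}W_{s}^{\eta \mathbf{A}_{l}})\,\mathrm{d}s-\rho _{t}(W_{t}^{\eta \mathbf{A}_{l}})$ from Theorem \ref{coro heat production2} rather than (\ref{energy increment bis}), and your reason is a real point: the diagonal ($z=0$) matrix elements $\langle \mathfrak{e}_{x},(\Delta _{\mathrm{d}}+\lambda V_{\omega })\mathfrak{e}_{x}\rangle =2d+\lambda \omega (x)$ do appear in (\ref{energy increment bis}) but not in the displayed estimate of the paper's proof, and reinstating them term by term costs a factor $2d+\lambda$, which is not uniform in $\lambda \in \mathbb{R}_{0}^{+}$ as the statement requires. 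Exploiting instead the cancellation encoded in $\delta ^{(\omega ,\lambda )}(W_{t,t}^{\mathbf{A}})=\partial _{t}W_{t,t}^{\mathbf{A}}-\tau _{t}^{(\omega ,\lambda )}(\partial _{t}W_{t}^{\mathbf{A}})$ --- i.e.\ exactly the integration by parts behind Theorem \ref{coro heat production2}, after which the driving operators carry $\lambda$--free coefficients and all $\lambda$--dependence is absorbed into $\tau ^{(\omega ,\lambda )}$ and $\varrho ^{(\beta ,\omega ,\lambda )}$, where Lemma \ref{bound anticomm} controls it uniformly --- is therefore not merely an alternative but a tightening of the argument on the point of $\lambda$--uniformity.
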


\begin{proof}
Similar to the derivation of (\ref{tototree2}),\ for any $\mathbf{A}\in
\mathbf{C}_{0}^{\infty }$, there are constants $\eta _{1},D,\varepsilon \in
\mathbb{R}^{+}$ such that, for any $L,l,\beta \in \mathbb{R}^{+}$, $\omega
\in \Omega $, $\lambda \in \mathbb{R}_{0}^{+}$, $k\in \mathbb{N}$, $t\in
(t_{0},t_{1}]$ and $s_{1},\ldots ,s_{k}\in \lbrack t_{0},t]$,%
\begin{multline*}
\sum\limits_{x,z\in \mathfrak{L},|z|\leq 1}\underset{m=0}{\overset{\infty }{%
\sum }}\frac{\eta _{1}^{m}}{m!}\ \underset{\eta \in \lbrack -\varepsilon
,\varepsilon ]}{\sup }\left\vert \partial _{\eta }^{m}\left\{ \varrho
^{(\beta ,\omega ,\lambda )}\left( \left[ W_{s_{k}-t_{0},s_{k}}^{\eta
\mathbf{A}_{l}},\ldots \right. \right. \right. \right. \\
\left. \left. \left. \left. \ldots ,W_{s_{1}-t_{0},s_{1}}^{\eta \mathbf{A}%
_{l}},\tau _{t-t_{0}}^{(\omega ,\lambda )}(a_{x}^{\ast }a_{x+z})\right]
^{(k+1)}\right) \right\} \right\vert \leq \frac{Dl^{d}k!}{2^{k-1}\left(
t_{1}-t_{0}\right) ^{k-1}}\ .
\end{multline*}%
Now, use (\ref{energy increment bis}) together with fact that the $\eta $%
--derivative $\partial _{\eta }$ is a closed operator w.r.t. to the norm of
uniform convergence to arrive at the assertion.
\end{proof}

\appendix

\section{Relative Entropy -- Thermodynamic Limit\label{Section appendix}}

We give in the first subsection a concise account on the relative entropy in
$C^{\ast }$--algebras. In the second subsection we show that the properties
of the infinite fermion system result from features of the finite volume
one, at large volume.

\subsection{Quantum Relative Entropy\label{Section Quantum Relative Entropy}}

\subsubsection{Spacial Derivative Operator}

Although the relative entropy\ can be defined for states on general $C^{\ast
}$--algebras, it is natural to start with the special case of von Neumann
algebras, which are (generally) non--commutative analogues of the algebra of
bounded measurable functions. The definition of quantum relative entropy
also requires the concept of \emph{spacial derivative} operator. The latter
has been first introduced by Connes \cite{Connes} as a generalization of the
relative modular operator. It is the non--commutative analogue of the
Radon--Nikodym derivative of two measures defined as follows.

Let $\rho \in \mathfrak{M}^{\ast }$ be any normal state of a von Neumann
algebra $\mathfrak{M}$ acting on a Hilbert space $\mathcal{H}$. We denote
the so--called \emph{lineal} of $\rho $ by
\begin{equation}
\mathcal{D}_{\rho }:=\left\{ \psi \in \mathcal{H}:\left\langle \psi
,bb^{\ast }\psi \right\rangle _{\mathcal{H}}\leq D_{\psi }\rho \left(
bb^{\ast }\right) \text{ for all }b\in \mathfrak{M}\text{ and some }D_{\psi
}\in \mathbb{R}^{+}\right\} \ .  \label{lineal}
\end{equation}%
Similar to \cite[Lemma 2]{Connes} which is restricted to faithful states,
this subspace of $\mathcal{H}$ is dense in $\mathrm{supp}\left( \rho \right)
$. Here, by abuse of notation, $\mathrm{supp}\left( \rho \right) $ is
defined to be either the smallest projection $\mathrm{P}$ such that $\rho (%
\mathrm{P})=1$ or the range of this projection $\mathrm{P}$.

Let $(\mathcal{H}_{\rho },\pi _{\rho },\Psi _{\rho })$ be the GNS
representation of the state $\rho $. For any $\psi \in \mathcal{D}_{\rho }$,
there is a bounded operator $R_{\rho }(\psi ):\mathcal{H}_{\rho }\rightarrow
\mathcal{H}$ such that
\begin{equation}
R_{\rho }(\psi )\pi _{\rho }\left( b\right) \Psi _{\rho }=b\psi \ ,\qquad
b\in \mathfrak{M}\ .  \label{R operator}
\end{equation}%
Clearly, for any $b\in \mathfrak{M}$, $bR_{\rho }(\psi )=R_{\rho }(\psi )\pi
_{\rho }\left( b\right) $. This yields%
\begin{equation*}
\Theta _{\rho }(\psi ,\tilde{\psi}):=R_{\rho }(\psi )R_{\rho }(\tilde{\psi}%
)^{\ast }\in \mathfrak{M}^{\prime }\ ,\qquad \psi ,\tilde{\psi}\in \mathcal{D%
}_{\rho }\ .
\end{equation*}

Let $\varpi $ be a fixed normal state on $\mathfrak{M}^{\prime }$. For any $%
\psi ,\tilde{\psi}\in \mathcal{D}_{\rho }$ and $\psi _{\bot },\tilde{\psi}%
_{\bot }\in \mathcal{D}_{\rho }^{\bot }$, we define the quadratic form $q$ by%
\begin{equation}
q_{\varpi ,\rho }(\psi +\psi _{\bot },\tilde{\psi}+\tilde{\psi}_{\bot
}):=\varpi \left( \Theta _{\rho }(\psi ,\tilde{\psi})\right) \ .
\label{quadratic form}
\end{equation}%
Similar to what it is done in \cite[Lemmata 5 and 6]{Connes}, where the
state $\rho $ is faithful, $q_{\varpi ,\rho }$ is a positive densely defined
quadratic form. Moreover, it is closable. In particular, by \cite[Theorem
VIII.15]{ReedSimonI}, there is a unique positive self--adjoint operator $%
\partial _{\rho }\varpi $ acting on $\mathcal{H}$ such that the domain $%
\mathrm{Dom}\left( q\right) $ is a core for $\left( \partial _{\rho }\varpi
\right) ^{1/2}$and
\begin{equation*}
q_{\varpi ,\rho }\left( \psi ,\psi \right) =\left\langle \left( \partial
_{\rho }\varpi \right) \psi ,\psi \right\rangle _{\mathcal{H}}<\infty \
,\qquad \psi \in \mathrm{Dom}\left( q\right) \ .
\end{equation*}%
Let $\mathrm{supp}\left( \partial _{\rho }\varpi \right) $ be the orthogonal
projection on the range of $\partial _{\rho }\varpi $. By \cite[Eq. (4.4)]%
{OhyaPetz},
\begin{equation}
\mathrm{supp}\left( \partial _{\rho }\varpi \right) =\mathrm{supp}\left(
\varpi \right) \mathrm{supp}\left( \rho \right) \ .  \label{support 0}
\end{equation}

$\partial _{\rho }\varpi $\ is named the \emph{spacial derivative} operator
and can be seen as a \emph{non--commutative Radon--Nikodym derivative}, see
\cite{Connes}. For instance, at fixed state $\rho $, it is additive in $%
\varpi $. Since $\mathfrak{M}$ and $\mathfrak{M}^{\prime }$ have symmetric
roles, the spatial derivative $\partial _{\varpi }\rho $ can be defined as
well and one finds that
\begin{equation}
\partial _{\varpi }\rho =\left( \partial _{\rho }\varpi \right) ^{-1}\ ,
\label{truc cool}
\end{equation}%
under the convention that, for any operator $B$, $B^{-1}\equiv 0$ on the
subspace where $B=0$. Moreover, as it is explained in \cite[Chapter 4]%
{OhyaPetz}, for faithful states, $\partial _{\rho }\varpi $ is nothing else
than the \emph{relative modular operator} $\mathbf{\Delta }\left( \varpi
,\rho \right) $.

\subsubsection{Relative Entropy for States on $C^{\ast }$--Algebras}

Let $\mathcal{X}$ be a $C^{\ast }$--algebra and $\rho _{2}\in \mathcal{X}%
^{\ast }$ be any reference state with GNS representation $(\mathcal{H}_{\rho
_{2}},\pi _{\rho _{2}},\Psi _{\rho _{2}})$. Let $\tilde{\rho}_{2}\in
\mathfrak{M}^{\ast }$ be the normal state of the von Neumann algebra $%
\mathfrak{M}:=\pi _{\rho _{2}}\left( \mathcal{X}\right) ^{\prime \prime }$
that is defined by extension from $\rho _{2}\in \mathcal{X}^{\ast }$. Take
any state $\rho _{1}\in \mathcal{X}^{\ast }$ which is \emph{quasi--contained}
in $\rho _{2}$, that is, there exists a normal state $\tilde{\rho}_{1}\in
\mathfrak{M}^{\ast }$ such that
\begin{equation*}
\tilde{\rho}_{1}\left( \pi _{\rho _{2}}\left( B\right) \right) =\rho
_{1}\left( B\right) \ ,\qquad B\in \mathcal{X}\ .
\end{equation*}%
Then, by \cite[Theorems 2.4.21 and 2.5.31]{BratteliRobinsonI}, there is $%
\Psi _{\rho _{1}}\in \mathcal{H}_{\rho _{2}}$ such that
\begin{equation}
\tilde{\rho}_{1}\left( \pi _{\rho _{2}}\left( B\right) \right) =\left\langle
\Psi _{\rho _{1}},\pi _{\rho _{2}}\left( B\right) \Psi _{\rho
_{1}}\right\rangle _{\mathcal{H}_{\rho _{2}}}\ ,\qquad B\in \mathcal{X}\ .
\label{vector representing}
\end{equation}%
Moreover, $\Psi _{\rho _{1}}\in \mathcal{H}_{\rho _{2}}$ induces a vector
state $\tilde{\rho}_{1}^{\prime }$\ on the commutant $\mathfrak{M}^{\prime }$
of $\mathfrak{M}$. Then, from (\ref{lineal}) and (\ref{R operator}), observe
that $\mathcal{D}_{\tilde{\rho}_{1}^{\prime }}=\mathfrak{M}\Psi _{\rho _{1}}$%
,%
\begin{equation}
R_{\rho }(b\Psi _{\rho _{1}})\pi _{\rho }\left( b^{\prime }\right) \Psi
_{\rho }=b\left( b^{\prime }\Psi _{\rho _{1}}\right) \ ,\qquad b^{\prime
}\in \mathfrak{M}^{\prime }\ ,\ b\in \mathfrak{M}\ ,  \label{R operatorbis}
\end{equation}%
and the spacial derivative operator $\partial _{\tilde{\rho}_{1}^{\prime }}%
\tilde{\rho}_{2}$ is a well--defined positive self--adjoint operator acting
on $\mathcal{H}_{\rho _{2}}$. By (\ref{support 0}), its support, seen as an
orthogonal projection, equals
\begin{equation}
\mathrm{supp}\left( \partial _{\tilde{\rho}_{1}^{\prime }}\tilde{\rho}%
_{2}\right) =\mathrm{supp}\left( \tilde{\rho}_{1}^{\prime }\right) \ .
\label{definition support}
\end{equation}

Then, Araki's definition of relative entropy takes the following form:
\begin{equation}
S_{\mathcal{X}}\left( \rho _{1}|\rho _{2}\right) :=-\left\langle \ln
(\partial _{\tilde{\rho}_{1}^{\prime }}\tilde{\rho}_{2})\Psi _{\rho
_{1}},\Psi _{\rho _{1}}\right\rangle _{\mathcal{H}_{\rho _{2}}}=-\rho
_{1}(\ln (\partial _{\tilde{\rho}_{1}^{\prime }}\tilde{\rho}_{2}))\in
\mathbb{R}_{0}^{+}\ ,  \label{relative entropy general}
\end{equation}%
see \cite[Eq. (5.1)]{OhyaPetz}. This definition is sound because of (\ref%
{definition support}) and
\begin{equation*}
\Psi _{\rho _{1}}=\mathrm{supp}\left( \tilde{\rho}_{1}^{\prime }\right) \Psi
_{\rho _{1}}.
\end{equation*}
If the state $\rho _{1}\in \mathcal{U}^{\ast }$ is \emph{not}
quasi--contained in $\rho _{2}$, then the relative entropy of $\rho _{1}$
w.r.t. $\rho _{2}$ is by definition infinite, i.e., $S_{\mathcal{X}}\left(
\rho _{1}|\rho _{2}\right) :=+\infty $. However, this case never appears in
this paper. By the Uhlmann monotonicity theorem \cite[Theorem 5.3]{OhyaPetz}%
, note that this definition does not depend on the choice of the vector $%
\Psi _{\rho _{1}}\in \mathcal{H}_{\rho _{2}}$ representing $\tilde{\rho}_{1}$
via (\ref{vector representing}).

The \emph{quantum} relative entropy $S_{\mathcal{X}}$ is the analogue of the
relative entropy defined for probability measures on a Polish space. Compare
formally (\ref{truc cool}) and (\ref{relative entropy general}) with \cite[%
Eq. (6.2.8)]{Dembo}. The positivity of the relative entropy as well as the
equivalence relation between the two assertions $S_{\mathcal{X}}\left( \rho
_{1}|\rho _{2}\right) =0$ and $\rho _{1}=\rho _{2}$ both follow from \cite[%
Theorem 5.5]{OhyaPetz}. However, like for probability measures, neither $S_{%
\mathcal{X}}$ nor its symmetric version is a metric.

\subsubsection{Relative Entropy for States on Full Matrix Algebras}

In the case where $\mathcal{X}$ is a full matrix algebra $\mathcal{B}(%
\mathbb{C}^{n})$ for some $n\in \mathbb{N}$, the relative entropy $S_{%
\mathcal{X}}$ has a simple explicit expression. Note that any finite
dimensional $C^{\ast }$--algebra is isomorphic to a direct sum of full
matrix algebras and Lemma \ref{local AC-conductivity lemma copy(2)} has a
straighforward generalization to that case.

We denote by $\mathrm{tr}$ the normalized trace of $\mathcal{B}(\mathbb{C}%
^{n})$. For any state $\rho \in \mathcal{B}(\mathbb{C}^{n})^{\ast }$, there
is a unique adjusted density matrix $\mathrm{d}_{\rho }\in \mathcal{B}(%
\mathbb{C}^{n})$, that is, $\mathrm{d}_{\rho }\geq 0$, $\mathrm{tr}\left(
\mathrm{d}_{\rho }\right) =1$ and $\rho (A)=\mathrm{tr}\left( \mathrm{d}%
_{\rho }A\right) $ for all $A\in \mathcal{B}(\mathbb{C}^{n})$, see \cite[%
Lemma 3.1 (i)]{Araki-Moriya}. Then, by using an explicit GNS representation
of $\rho _{2}$ one can explicitly compute the spatial derivative operator $%
\partial _{\tilde{\rho}_{1}^{\prime }}\tilde{\rho}_{2}$ and, under the
convention $x\ln x|_{x=0}:=0$, one explicitly finds the relative entropy $S_{%
\mathcal{B}(\mathbb{C}^{n})}$ of any state $\rho _{1}\in \mathcal{B}(\mathbb{%
C}^{n})^{\ast }$ w.r.t. $\rho _{2}\in \mathcal{B}(\mathbb{C}^{n})^{\ast }$:

\begin{lemma}[Relative entropy - Finite dimensional case]
\label{local AC-conductivity lemma copy(2)}\mbox{
}\newline
Let $n\in \mathbb{N}$. For any state $\rho _{1},\rho _{2}\in \mathcal{B}(%
\mathbb{C}^{n})^{\ast }$, the relative entropy $S_{\mathcal{B}(\mathbb{C}%
^{n})}$ defined by (\ref{relative entropy general}) is equal to%
\begin{equation*}
S_{\mathcal{B}(\mathbb{C}^{n})}\left( \rho _{1}|\rho _{2}\right) =\left\{
\begin{array}{lll}
\rho _{1}\left( \ln \mathrm{d}_{\rho _{1}}-\ln \mathrm{d}_{\rho _{2}}\right)
\in \mathbb{R}_{0}^{+} & , & \qquad \text{if \ }\mathrm{supp}\left( \rho
_{2}\right) \geq \mathrm{supp}\left( \rho _{1}\right) \ . \\
+\infty & , & \qquad \text{otherwise}\ .%
\end{array}%
\right.
\end{equation*}
\end{lemma}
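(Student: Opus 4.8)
The plan is to reduce to the faithful case and then compute the spatial derivative operator $\partial _{\tilde{\rho}_1'}\tilde{\rho}_2$ of (\ref{relative entropy general}) completely explicitly, identifying it with a relative modular operator on a matrix Hilbert space. First I would settle the support dichotomy. In finite dimensions $\rho _1$ is quasi--contained in $\rho _2$ precisely when $\mathrm{supp}(\rho _1)\leq \mathrm{supp}(\rho _2)$: if this fails, there is $B\in \mathcal{B}(\mathbb{C}^n)$ with $\pi _{\rho _2}(B)=0$ but $\rho _1(B)\neq 0$, so no normal extension $\tilde{\rho}_1$ exists and the value is $+\infty $ by the convention stated just after (\ref{relative entropy general}). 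When $\mathrm{supp}(\rho _2)\geq \mathrm{supp}(\rho _1)$ I would compress everything to the range of $\mathrm{supp}(\rho _2)$, on which $\rho _2$ is faithful and $\mathrm{d}_{\rho _2}$ is invertible; moreover, since $\Psi _{\rho _1}=\mathrm{supp}(\tilde{\rho}_1')\Psi _{\rho _1}$ and $\mathrm{supp}(\partial _{\tilde{\rho}_1'}\tilde{\rho}_2)=\mathrm{supp}(\tilde{\rho}_1')$ by (\ref{definition support}), the computation of $\ln (\partial _{\tilde{\rho}_1'}\tilde{\rho}_2)$ contributes only on $\mathrm{supp}(\rho _1)$, where $\mathrm{d}_{\rho _1}$ is also invertible; off that subspace the convention $x\ln x|_{x=0}:=0$ together with the generalized--inverse convention for $\mathrm{d}_{\rho _1}^{-1}$ (as fixed after (\ref{truc cool})) produces no contribution.

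For the faithful case I would build an explicit GNS representation of $\rho _2$. Take $\mathcal{H}_{\rho _2}=\mathcal{B}(\mathbb{C}^n)$ with inner product $\langle A,B\rangle :=\mathrm{tr}(A^*B)$, representation $\pi _{\rho _2}(A)\xi :=A\xi $ (left multiplication) and cyclic vector $\Psi _{\rho _2}:=\mathrm{d}_{\rho _2}^{1/2}$, which satisfies $\langle \Psi _{\rho _2},\pi _{\rho _2}(A)\Psi _{\rho _2}\rangle =\mathrm{tr}(\mathrm{d}_{\rho _2}A)=\rho _2(A)$. Here $\mathfrak{M}=\pi _{\rho _2}(\mathcal{B}(\mathbb{C}^n))''$ is the algebra of left multiplications $L_A:\xi \mapsto A\xi $, and $\mathfrak{M}'$ is the algebra of right multiplications $R_C:\xi \mapsto \xi C$. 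A one--line computation shows that $\tilde{\rho}_1$ is represented by $\Psi _{\rho _1}:=\mathrm{d}_{\rho _1}^{1/2}$, since $\langle \Psi _{\rho _1},\pi _{\rho _2}(A)\Psi _{\rho _1}\rangle =\mathrm{tr}(\mathrm{d}_{\rho _1}A)$, and that the induced vector state on $\mathfrak{M}'$ is $\tilde{\rho}_1'(R_C)=\mathrm{tr}(\mathrm{d}_{\rho _1}C)$.

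The heart of the argument is to evaluate $\partial _{\tilde{\rho}_1'}\tilde{\rho}_2$ directly from its defining data (\ref{lineal})--(\ref{quadratic form}), now with $\mathfrak{M}'$ playing the role of the acting algebra, $\tilde{\rho}_1'$ the role of $\rho $, and $\tilde{\rho}_2$ the role of $\varpi $. In finite dimensions the lineal is all of $\mathcal{H}_{\rho _2}$ and all positivity, closability and domain conditions are automatic, so the only work is algebraic. Using that $\Psi _{\rho _1}$ is cyclic and separating for $\mathfrak{M}'$ (on $\mathrm{supp}(\rho _1)$), I would show from (\ref{R operator}) that $R_{\tilde{\rho}_1'}(\psi )$ is left multiplication by $\psi \,\mathrm{d}_{\rho _1}^{-1/2}$, whence $\Theta _{\tilde{\rho}_1'}(\psi ,\tilde{\psi})=L_{\psi \,\mathrm{d}_{\rho _1}^{-1}\tilde{\psi}^*}$ and therefore $q_{\tilde{\rho}_2,\tilde{\rho}_1'}(\psi ,\psi )=\mathrm{tr}(\mathrm{d}_{\rho _2}\,\psi \,\mathrm{d}_{\rho _1}^{-1}\psi ^*)$. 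Matching this quadratic form against $\langle (\partial _{\tilde{\rho}_1'}\tilde{\rho}_2)\psi ,\psi \rangle $ via cyclicity of the trace identifies the spatial derivative as the relative modular operator $\mathbf{\Delta}=L_{\mathrm{d}_{\rho _2}}R_{\mathrm{d}_{\rho _1}^{-1}}:\psi \mapsto \mathrm{d}_{\rho _2}\,\psi \,\mathrm{d}_{\rho _1}^{-1}$, in agreement with the remark after (\ref{truc cool}).

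Finally, since left and right multiplications commute and $\mathrm{d}_{\rho _1},\mathrm{d}_{\rho _2}>0$ on the relevant support, $\ln \mathbf{\Delta}=L_{\ln \mathrm{d}_{\rho _2}}-R_{\ln \mathrm{d}_{\rho _1}}$, so $(\ln \mathbf{\Delta})\Psi _{\rho _1}=(\ln \mathrm{d}_{\rho _2})\mathrm{d}_{\rho _1}^{1/2}-\mathrm{d}_{\rho _1}^{1/2}\ln \mathrm{d}_{\rho _1}$. Substituting into (\ref{relative entropy general}) and expanding $\langle (\ln \mathbf{\Delta})\Psi _{\rho _1},\Psi _{\rho _1}\rangle $ with cyclicity of $\mathrm{tr}$ gives $\mathrm{tr}(\mathrm{d}_{\rho _1}\ln \mathrm{d}_{\rho _2})-\mathrm{tr}(\mathrm{d}_{\rho _1}\ln \mathrm{d}_{\rho _1})$, so that $S_{\mathcal{B}(\mathbb{C}^n)}(\rho _1|\rho _2)=\rho _1(\ln \mathrm{d}_{\rho _1}-\ln \mathrm{d}_{\rho _2})$, which is the claimed expression and is manifestly non--negative. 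I expect the main obstacle to be the bookkeeping of the dualization in the spatial derivative (keeping straight which state lives on $\mathfrak{M}$ and which on $\mathfrak{M}'$, and the adjoints of left/right multiplications) together with the clean two--step reduction to the doubly faithful situation that legitimizes writing $\mathrm{d}_{\rho _1}^{-1/2}$; once the operator $\mathbf{\Delta}$ is correctly identified, the remaining trace manipulations are routine.
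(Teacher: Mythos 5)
Your computation in the faithful case is exactly the paper's argument: the standard (Hilbert--Schmidt) GNS representation with $\Psi _{\rho _{2}}=\mathrm{d}_{\rho _{2}}^{1/2}$, $\Psi _{\rho _{1}}=\mathrm{d}_{\rho _{1}}^{1/2}$, identification of the quadratic form (\ref{quadratic form}) as $\psi \mapsto \mathrm{tr}(\mathrm{d}_{\rho _{2}}\psi \mathrm{d}_{\rho _{1}}^{-1}\psi ^{\ast })$, hence $\partial _{\tilde{\rho}_{1}^{\prime }}\tilde{\rho}_{2}$ equal to left multiplication by $\mathrm{d}_{\rho _{2}}$ composed with right multiplication by $\mathrm{d}_{\rho _{1}}^{-1}$, followed by taking logarithms and the trace. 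That part is correct and matches the paper step for step.

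There is, however, a genuine flaw in your treatment of the $+\infty $ branch. You claim that when $\mathrm{supp}(\rho _{2})\ngeq \mathrm{supp}(\rho _{1})$ there exists $B\in \mathcal{B}(\mathbb{C}^{n})$ with $\pi _{\rho _{2}}(B)=0$ but $\rho _{1}(B)\neq 0$, so that $\rho _{1}$ fails to be quasi--contained in $\rho _{2}$. This cannot happen: $\mathcal{B}(\mathbb{C}^{n})$ is simple, so the kernel of the $\ast $--representation $\pi _{\rho _{2}}$ is a two--sided ideal and hence trivial, and since $\mathfrak{M}=\pi _{\rho _{2}}(\mathcal{B}(\mathbb{C}^{n}))$ is finite dimensional, $\tilde{\rho}_{1}:=\rho _{1}\circ \pi _{\rho _{2}}^{-1}$ is always a normal extension. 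In other words, on a full matrix algebra \emph{every} state is quasi--contained in every other, and the convention ``$S=+\infty $ if not quasi--contained'' stated after (\ref{relative entropy general}) is never triggered. The divergence must instead be read off from the formula (\ref{relative entropy general}) itself: the spatial derivative you computed, $\psi \mapsto \mathrm{d}_{\rho _{2}}\psi \mathrm{d}_{\rho _{1}}^{-1}$, annihilates the component $(\mathbf{1}-\mathrm{supp}(\rho _{2}))\,\mathrm{d}_{\rho _{1}}^{1/2}$ of $\Psi _{\rho _{1}}$, and this component is nonzero precisely when $\mathrm{supp}(\rho _{2})\ngeq \mathrm{supp}(\rho _{1})$; since $\ln $ of the spectral value $0$ contributes $-\ln 0=+\infty $ to $-\langle \ln (\partial _{\tilde{\rho}_{1}^{\prime }}\tilde{\rho}_{2})\Psi _{\rho _{1}},\Psi _{\rho _{1}}\rangle $ (the convention $x\ln x|_{x=0}:=0$ does not apply here, as there is no compensating factor of the spectral measure), one gets $S=+\infty $. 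So the dichotomy in the statement is correct, but for a different reason than the one you give; your reduction ``compress to $\mathrm{supp}(\rho _{2})$'' silently discards exactly the part of $\Psi _{\rho _{1}}$ that produces the infinity, which is why your faithful--case computation alone cannot see it.
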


\begin{proof}
We give the proof for completeness and because it is instructive. Take two
states $\rho _{1},\rho _{2}\in \mathcal{B}(\mathbb{C}^{n})^{\ast }$. If $%
\rho _{1}$ is not quasi--contained in $\rho _{2}$ then clearly, $\mathrm{supp%
}\left( \rho _{2}\right) \ngeq \mathrm{supp}\left( \rho _{1}\right) $ and $%
S_{\mathcal{B}(\mathbb{C}^{n})}\left( \rho _{1}|\rho _{2}\right) =+\infty $.

Assume w.l.o.g. that $\rho _{2}$ is faithful. (Otherwise, one has to take a
subspace of $\mathcal{B}(\mathbb{C}^{n})$.) In particular, any state $\rho
_{1}$ is quasi--contained in $\rho _{2}$. The GNS representation $(\mathcal{H%
}_{\rho _{2}},\pi _{\rho _{2}},\Psi _{\rho _{2}})$ of $\rho _{2}$ is, in
this case, explicitly given as follows: $\mathcal{H}_{\rho _{2}}$
corresponds to the linear space $\mathcal{B}(\mathbb{C}^{n})$ endowed with
the Hilbert--Schmidt scalar product
\begin{equation}
\left\langle A,B\right\rangle _{\mathcal{H}_{\rho _{2}}}:=\mathrm{Trace}_{%
\mathbb{C}^{n}}(A^{\ast }B)\ ,\qquad A,B\in \mathcal{B}(\mathbb{C}^{n})\ .
\label{trace h at}
\end{equation}%
It is convenient to define left and right multiplication operators on $%
\mathcal{B}(\mathbb{C}^{n})$: For any $A\in \mathcal{B}(\mathbb{C}^{n})$ we
define the linear operators $\underrightarrow{A}$ and $\underleftarrow{A}$
acting on $\mathcal{B}(\mathbb{C}^{n})$ by
\begin{equation}
B\mapsto \underrightarrow{A}B:=AB\qquad \text{and}\qquad B\mapsto
\underleftarrow{A}B:=BA\ .  \label{left multi}
\end{equation}%
The representation $\pi _{\rho _{2}}$ is the left multiplication, i.e.,
\begin{equation*}
\pi _{\rho _{2}}\left( A\right) :=\underrightarrow{A}\ ,\qquad A\in \mathcal{%
B}(\mathbb{C}^{n})\ .
\end{equation*}%
The cyclic vector of the GNS representation of $\rho _{2}$ is defined by
using the density matrix $\mathrm{D}_{\rho _{2}}\in \mathcal{B}(\mathbb{C}%
^{n})$ of $\rho _{2}$ as
\begin{equation}
\Psi _{\rho _{2}}:=\mathrm{D}_{\rho _{2}}^{1/2}\in \mathcal{H}_{\rho _{2}}\ .
\label{defnition vaccum at}
\end{equation}%
The GNS representation $(\mathcal{H}_{\rho _{2}},\pi _{\rho _{2}},\Psi
_{\rho _{2}})$ is known in the literature as the \emph{standard
representation} of the state $\rho _{2}$. See \cite[Section 5.4]%
{DerezinskiFruboes2006}.

Let the \textquotedblleft left\textquotedblright\ and \textquotedblleft
right\textquotedblright\ von Neumann algebras be respectively defined by%
\begin{equation*}
\underrightarrow{\mathfrak{M}}:=\left\{ \underrightarrow{A}\ :A\in \mathcal{B%
}(\mathbb{C}^{n})\right\} =\pi _{\rho _{2}}\left( \mathcal{B}(\mathbb{C}%
^{n})\right)
\end{equation*}%
and
\begin{equation*}
\underleftarrow{\mathfrak{M}}:=\left\{ \underleftarrow{A}\ :A\in \mathcal{B}(%
\mathbb{C}^{n})\right\} =\underrightarrow{\mathfrak{M}}^{\prime }\ .
\end{equation*}%
For any state $\rho _{1}\in \mathcal{B}(\mathbb{C}^{n})^{\ast }$, there is $%
\Psi _{\rho _{1}}:=\mathrm{D}_{\rho _{1}}^{1/2}\in \mathcal{H}_{\rho _{2}}$
such that
\begin{equation}
\rho _{1}\left( B\right) =\left\langle \Psi _{\rho _{1}},\pi _{\rho
_{2}}\left( B\right) \Psi _{\rho _{1}}\right\rangle _{\mathcal{H}_{\rho
_{2}}}\ ,\qquad B\in \mathcal{B}(\mathbb{C}^{n})\ .  \label{state1}
\end{equation}%
In fact, $\mathrm{D}_{\rho _{1}}\in \mathcal{B}(\mathbb{C}^{n})$ is the
density matrix of $\rho _{1}$. Moreover, $\Psi _{\rho _{1}}\in \mathcal{H}%
_{\rho _{2}}$ induces a vector state $\rho _{1}^{\prime }$\ on the commutant
$\underleftarrow{\mathfrak{M}}$ of $\underrightarrow{\mathfrak{M}}$:
\begin{equation}
\rho _{1}^{\prime }(\underleftarrow{A}):=\langle \Psi _{\rho _{1}},%
\underleftarrow{A}\Psi _{\rho _{1}}\rangle _{\mathcal{H}_{\rho _{2}}}=\rho
_{1}(A)\ ,\qquad \underleftarrow{A}\in \underleftarrow{\mathfrak{M}}\ .
\label{state2}
\end{equation}%
Its GNS representation is obviously given by $\mathcal{H}_{\rho _{1}^{\prime
}}:=\mathrm{supp}\left( \rho _{1}\right) \mathcal{H}_{\rho _{2}}$ endowed
with the scalar product (\ref{trace h at}), $\pi _{\rho _{1}^{\prime }}:=%
\mathbf{1}_{\underleftarrow{\mathfrak{M}}}$ and $\Psi _{\rho _{1}^{\prime
}}=\Psi _{\rho _{1}}=\mathrm{D}_{\rho _{1}}^{1/2}$. Moreover, $\mathcal{D}%
_{\rho _{1}^{\prime }}=\underrightarrow{\mathfrak{M}}\Psi _{\rho _{1}}$ and,
for any $\underrightarrow{A}\in \underrightarrow{\mathfrak{M}}$, the bounded
operator $R_{\rho _{1}^{\prime }}(\underrightarrow{A}\Psi _{\rho _{1}})$
defined by (\ref{R operator}) equals in this case $\underrightarrow{A}$, see
(\ref{R operatorbis}). Note that $\Psi _{\rho _{2}}\in \mathcal{H}_{\rho
_{2}}$ induces a vector state $\rho _{2}^{\prime }$\ on the commutant $%
\underrightarrow{\mathfrak{M}}$ of $\underleftarrow{\mathfrak{M}}$:
\begin{equation*}
\rho _{2}^{\prime }(\underrightarrow{A}):=\langle \Psi _{\rho _{2}},%
\underrightarrow{A}\Psi _{\rho _{2}}\rangle _{\mathcal{H}_{\rho _{2}}}=\rho
_{2}(A)\ ,\qquad \underrightarrow{A}\in \underrightarrow{\mathfrak{M}}\ .
\end{equation*}%
Then, using the cyclicity of the trace we obtain that the quadratic form $q_{%
\tilde{\rho}_{2},\rho _{1}^{\prime }}$ defined by (\ref{quadratic form})
equals%
\begin{equation*}
q_{\tilde{\rho}_{2},\rho _{1}^{\prime }}(\psi +\psi _{\bot },\tilde{\psi}+%
\tilde{\psi}_{\bot })=\langle \tilde{\psi},\underleftarrow{\mathrm{D}_{\rho
_{1}}^{-1}}\underrightarrow{\mathrm{D}_{\rho _{2}}}\psi \rangle _{\mathcal{H}%
_{\rho _{2}}}
\end{equation*}%
for any $\psi ,\tilde{\psi}\in \mathcal{D}_{\rho _{1}^{\prime }}$ and $\psi
_{\bot },\tilde{\psi}_{\bot }\in \mathcal{D}_{\rho _{1}^{\prime }}^{\bot }$.
In particular, the spatial derivative $(\partial _{\rho _{1}^{\prime }}%
\tilde{\rho}_{2})$ on the subspace $\mathrm{supp}\left( \rho _{1}\right) =%
\underrightarrow{\mathfrak{M}}\Psi _{\rho _{1}}$ is equal to
\begin{equation*}
\partial _{\rho _{1}^{\prime }}\rho _{2}=\underleftarrow{\mathrm{D}_{\rho
_{1}}^{-1}}\ \underrightarrow{\mathrm{D}_{\rho _{2}}}\ .
\end{equation*}%
Since $\underleftarrow{\mathfrak{M}}=\underrightarrow{\mathfrak{M}}^{\prime
} $, we observe that, on the subspace $\mathrm{supp}\left( \rho _{1}\right) =%
\underrightarrow{\mathfrak{M}}\Psi _{\rho _{1}}$,
\begin{equation*}
\ln \left( \partial _{\rho _{1}^{\prime }}\rho _{2}\right) =\ln
\underrightarrow{\mathrm{D}_{\rho _{2}}}-\ln \underleftarrow{\mathrm{D}%
_{\rho _{1}}}=\underrightarrow{\ln \mathrm{D}_{\rho _{2}}}-\underleftarrow{%
\ln \mathrm{D}_{\rho _{1}}}\ .
\end{equation*}%
By combining this equality with (\ref{relative entropy general}), (\ref%
{state1}) and (\ref{state2}), we arrive at
\begin{equation*}
S_{\mathcal{B}(\mathbb{C}^{n})}\left( \rho _{1}|\rho _{2}\right) =\mathrm{%
Trace}_{\mathbb{C}^{n}}\left( \mathrm{D}_{\rho _{1}}\left( \ln \mathrm{D}%
_{\rho _{1}}-\ln \mathrm{D}_{\rho _{2}}\right) \right) \in \mathbb{R}%
_{0}^{+}\ .
\end{equation*}
\end{proof}

\subsection{Infinite System as Thermodynamic Limit\label{Section finite
volume system}}

We present here the infinite system considered above as the thermodynamic
limit of finite volume systems. The aim is to show that all properties of
the infinite model result from the corresponding ones of the finite volume
system, at large volume.

\subsubsection{Finite Volume Free Fermion Systems on the Lattice}

First, fix $L\in \mathbb{R}^{+}$ and recall that $\Lambda _{L}$ is the box (%
\ref{eq:def lambda n}) of side length $2[L]+1$. Let%
\begin{equation*}
\lbrack \Delta _{\mathrm{d}}^{(L)}(\psi )](x):=2d\psi
(x)-\sum\limits_{|z|=1,x+z\in \Lambda _{L}}\psi (x+z)\ ,\text{\qquad }x\in
\Lambda _{L},\ \psi \in \ell ^{2}(\Lambda _{L})\ ,
\end{equation*}%
be, up to a minus sign, the discrete Laplacian restricted to the box $%
\Lambda _{L}$. For any $\omega \in \Omega $, we denote by $V_{\omega }^{(L)}$
the restriction of $V_{\omega }$ to $\ell ^{2}(\Lambda _{L})\subset \ell
^{2}(\mathfrak{L})$:%
\begin{equation*}
V_{\omega }^{(L)}(\mathfrak{e}_{x}):=\mathbf{1}\left[ x\in \Lambda _{L}%
\right] V_{\omega }(\mathfrak{e}_{x})\ ,\qquad x\in \mathfrak{L}\ .
\end{equation*}%
Recall that $\left\{ \mathfrak{e}_{x}\right\} _{x\in \mathfrak{L}}$ is the
canonical orthonormal basis $\mathfrak{e}_{x}(y)\equiv \delta _{x,y}$ of $%
\ell ^{2}(\mathfrak{L})$. Then, for any $\omega \in \Omega $ and $\lambda
\in \mathbb{R}_{0}^{+}$, define the bounded self--adjoint operator
\begin{equation}
h_{L}^{(\omega ,\lambda )}:=\Delta _{\mathrm{d}}^{(L)}+\lambda V_{\omega
}^{(L)}\in \mathcal{B}(\ell ^{2}(\Lambda _{L}))\ .
\label{one-particle hamil restriction}
\end{equation}%
Obviously, this operator can also be extended to a bounded operator $\tilde{h%
}_{L}^{(\omega ,\lambda )}$ on $\ell ^{2}(\mathfrak{L})$ by defining%
\begin{equation*}
\tilde{h}_{L}^{(\omega ,\lambda )}(\mathfrak{e}_{x}):=\left\{
\begin{array}{lll}
h_{L}^{(\omega ,\lambda )}(\mathfrak{e}_{x}) &  & \text{for }x\in \Lambda
_{L}\ . \\
0 &  & \text{for }x\in \mathfrak{L}\backslash \Lambda _{L}\ .%
\end{array}%
\right.
\end{equation*}

Since $\mathcal{U}_{\Lambda _{L}}$ is isomorphic to the algebra of all
bounded linear operators on the fermion Fock space
\begin{equation*}
\mathcal{F}:=\bigwedge (\ell ^{2}(\Lambda _{L}))\ ,
\end{equation*}%
the Hamiltonian (\ref{def H loc}), that is,
\begin{equation}
H_{L}^{(\omega ,\lambda )}=\sum\limits_{x,y\in \Lambda _{L}}\langle
\mathfrak{e}_{x},h_{L}^{(\omega ,\lambda )}\mathfrak{e}_{y}\rangle
a_{x}^{\ast }a_{y}\in \mathcal{U}_{\Lambda _{L}}\ ,
\label{one-particle hamil restrictionbis}
\end{equation}%
can be seen as the \emph{second quantization} of $h_{L}^{(\omega ,\lambda )}$
for all $\omega \in \Omega $ and $\lambda \in \mathbb{R}_{0}^{+}$. It is
well--known in this case that the one--parameter (Bogoliubov) group $\tau
^{(\omega ,\lambda ,L)}:=\{\tau _{t}^{(\omega ,\lambda ,L)}\}_{t\in {\mathbb{%
R}}}$ of automorphisms uniquely defined by the condition
\begin{equation*}
\tau _{t}^{(\omega ,\lambda ,L)}(a(\psi ))=a(\mathrm{e}^{it\tilde{h}%
_{L}^{(\omega ,\lambda )}}(\psi ))\ ,\text{\qquad }t\in \mathbb{R},\ \psi
\in \ell ^{2}(\mathfrak{L})\ ,
\end{equation*}%
(cf. \cite[Theorem 5.2.5]{BratteliRobinson}) satisfies
\begin{equation*}
\tau _{t}^{(\omega ,\lambda ,L)}(B)=\mathrm{e}^{itH_{L}^{(\omega ,\lambda
)}}B\mathrm{e}^{-itH_{L}^{(\omega ,\lambda )}}\ ,\text{\qquad }B\in \mathcal{%
U}\ ,
\end{equation*}%
for each $L\in \mathbb{R}^{+}$ and all $\omega \in \Omega $ and $\lambda \in
\mathbb{R}_{0}^{+}$.

Let $\varrho ^{(\beta ,\omega ,\lambda ,L)}$ be the unique $(\tau ^{(\omega
,\lambda ,L)},\beta )$--KMS state for any $\omega \in \Omega $ and $\lambda
\in \mathbb{R}_{0}^{+}$ at fixed inverse temperature $\beta \in \mathbb{R}%
^{+}$. It is again well--known that this state is directly related with the
Gibbs state $\mathfrak{g}^{(\beta ,\omega ,\lambda ,L)}$ associated with the
Hamiltonian $H_{L}^{(\omega ,\lambda )}$ and defined by
\begin{equation}
\mathfrak{g}^{(\beta ,\omega ,\lambda ,L)}\left( B\right) :=\mathrm{Trace}_{%
\mathcal{F}}\left( B\frac{\mathrm{e}^{-\beta H_{L}^{(\omega ,\lambda )}}}{%
\mathrm{Trace}_{\mathcal{F}}(\mathrm{e}^{-\beta H_{L}^{(\omega ,\lambda )}})}%
\right) \ ,\text{\qquad }B\in \mathcal{U}_{\Lambda _{L}}\ ,
\label{Gibbs state H_n}
\end{equation}%
for any $L,\beta \in \mathbb{R}^{+}$, $\omega \in \Omega $ and $\lambda \in
\mathbb{R}_{0}^{+}$. Indeed,
\begin{equation}
\varrho ^{(\beta ,\omega ,\lambda ,L)}(B_{1}B_{2})=\mathfrak{g}^{(\beta
,\omega ,\lambda ,L)}(B_{1})\mathrm{tr}(B_{2})\ ,\text{\qquad }B_{1}\in
\mathcal{U}_{\Lambda _{L}}\ ,\;B_{2}\in \mathcal{U}_{\mathfrak{L}\backslash
\Lambda _{L}}\ ,  \label{Gibbs state H_nbis}
\end{equation}%
where $\mathrm{tr}$ is the normalized trace (state) on $\mathcal{U}$. Note
that $\mathrm{tr}$ is also named \emph{tracial state} and satisfies a
product property, see \cite[Section 4.2]{Araki-Moriya}. Here, $\mathcal{U}_{%
\mathfrak{L}\backslash \Lambda _{L}}\subset \mathcal{U}$ is the $C^{\ast }$%
--algebra generated by $\left\{ a_{x}\right\} _{x\in \mathfrak{L}\backslash
\Lambda _{L}}$ and the identity. In particular,
\begin{equation*}
\varrho ^{(\beta ,\omega ,\lambda ,L)}(B)=\mathfrak{g}^{(\beta ,\omega
,\lambda ,L)}(B)\ ,\text{\qquad }B\in \mathcal{U}_{\Lambda _{L}}\ .
\end{equation*}

Let $\mathbf{A}\in \mathbf{C}_{0}^{\infty }$. For any sufficiently large $%
L\in \mathbb{R}^{+}$, $W_{t}^{\mathbf{A}}\in \mathcal{U}_{\Lambda _{L}}$.
Therefore, consider the following finite dimensional initial value problem
on the space $\mathcal{B}(\mathcal{U}_{\Lambda _{L}})$ of bounded operators
on $\mathcal{U}_{\Lambda _{L}}$ for any sufficiently large $L\in \mathbb{R}%
^{+}$:
\begin{equation}
\forall s,t\in {\mathbb{R}},\ t\geq s:\ \partial _{t}\tau _{t,s}^{(\omega
,\lambda ,\mathbf{A},L)}=\tau _{t,s}^{(\omega ,\lambda ,\mathbf{A},L)}\circ
\delta _{t}^{(\omega ,\lambda ,\mathbf{A},L)},\quad \tau _{s,s}^{(\omega
,\lambda ,\mathbf{A},L)}:=\mathbf{1}\ ,  \label{cauchy bis}
\end{equation}%
with $\mathbf{1}$ being here the identity in $\mathcal{U}_{\Lambda _{L}}$.
Here, the infinitesimal generator $\delta _{t}^{(\omega ,\lambda ,\mathbf{A}%
,L)}$ of $\tau _{t,s}^{(\omega ,\lambda ,\mathbf{A},L)}$\ equals
\begin{equation}
\delta _{t}^{(\omega ,\lambda ,\mathbf{A},L)}(\cdot ):=i[H_{L}^{(\omega
,\lambda )}+W_{t}^{\mathbf{A}},\ \cdot \ ]  \label{cauchy bisbis}
\end{equation}%
and is of course a bounded operator acting on $\mathcal{U}_{\Lambda _{L}}$.
Therefore, using the Dyson--Phillips series one shows, analogously to
Section \ref{Section existence dynamics}, the existence of a strongly
continuous two--parameter (quasi--free) family $\{\tau _{t,s}^{(\omega
,\lambda ,\mathbf{A},L)}\}_{t\geq s}$ of automorphisms of the finite
dimensional $C^{\ast }$--algebra $\mathcal{U}_{\Lambda _{L}}$ satisfying (%
\ref{cauchy bis}). See, e.g., \cite[Sections 5.4.2., Proposition 5.4.26.]%
{BratteliRobinson} which, for the finite--volume dynamics, gives similar
results to Theorems \ref{bound incr 1 Lemma copy(4)}, \ref{Theo int pict},
and Proposition \ref{bound incr 1 Lemma copy(8)}.

\subsubsection{Heat Production and Internal Energy Increment}

Similar to Definition \ref{Heat production definition}, for any $L,\beta \in
\mathbb{R}^{+}$, $\omega \in \Omega $, $\lambda \in \mathbb{R}_{0}^{+}$ and $%
\mathbf{A}\in \mathbf{C}_{0}^{\infty }$, the heat production $\mathbf{Q}%
^{(\omega ,\mathbf{A},L)}\equiv \mathbf{Q}^{(\beta ,\omega ,\lambda ,\mathbf{%
A},L)}$ in the finite volume fermion system is defined, for any $t\geq t_{0}$%
, by%
\begin{equation}
\mathbf{Q}^{(\omega ,\mathbf{A},L)}\left( t\right) :=\beta ^{-1}S_{\mathcal{U%
}_{\Lambda }}\left( \mathfrak{g}^{(\beta ,\omega ,\lambda ,L)}\circ \tau
_{t,t_{0}}^{(\omega ,\lambda ,\mathbf{A},L)}|\mathfrak{g}^{(\beta ,\omega
,\lambda ,L)}\right) \in \left[ 0,\infty \right] \ .
\label{Heat production definition finite-volume}
\end{equation}%
Here, $S_{\mathcal{U}_{\Lambda }}$ is the quantum relative entropy defined
by (\ref{relative entropy finite dimensional}).

Like (\ref{entropic energy increment})--(\ref{electro free energy}), the
internal energy increment $\mathbf{S}^{(\omega ,\mathbf{A},L)}\equiv \mathbf{%
S}^{(\beta ,\omega ,\lambda ,\mathbf{A},L)}$ and the electromagnetic
potential energy $\mathbf{P}^{(\omega ,\mathbf{A},L)}\equiv \mathbf{P}%
^{(\beta ,\omega ,\lambda ,\mathbf{A},L)}$ in the finite volume fermion
system are respectively defined by%
%TCIMACRO{%
%\TeXButton{internal energy poten energy}{\begin{eqnarray*}
%\mathbf{S}^{(\omega ,\mathbf{A},L)}\left( t\right)  &:=&\mathfrak{g}^{(\beta
%,\omega ,\lambda ,L)}(\tau _{t,t_{0}}^{(\omega ,\lambda ,\mathbf{A},L)}(H_{L}^{(\omega ,\lambda )}))-\mathfrak{g}^{(\beta ,\omega ,\lambda
%,L)}(H_{L}^{(\omega ,\lambda )})\ , \\
%\mathbf{P}^{(\omega ,\mathbf{A},L)}\left( t\right)  &:=&\mathfrak{g}^{(\beta
%,\omega ,\lambda ,L)}(\tau _{t,t_{0}}^{(\omega ,\lambda ,\mathbf{A},L)}(W_{t}^{\mathbf{A}})) \ ,
%\end{eqnarray*}}}%
%BeginExpansion
\begin{eqnarray*}
\mathbf{S}^{(\omega ,\mathbf{A},L)}\left( t\right)  &:=&\mathfrak{g}^{(\beta
,\omega ,\lambda ,L)}(\tau _{t,t_{0}}^{(\omega ,\lambda ,\mathbf{A},L)}(H_{L}^{(\omega ,\lambda )}))-\mathfrak{g}^{(\beta ,\omega ,\lambda
,L)}(H_{L}^{(\omega ,\lambda )})\ , \\
\mathbf{P}^{(\omega ,\mathbf{A},L)}\left( t\right)  &:=&\mathfrak{g}^{(\beta
,\omega ,\lambda ,L)}(\tau _{t,t_{0}}^{(\omega ,\lambda ,\mathbf{A},L)}(W_{t}^{\mathbf{A}})) \ ,
\end{eqnarray*}%
%EndExpansion
for any $L,\beta \in \mathbb{R}^{+}$, $\omega \in \Omega $, $\lambda \in
\mathbb{R}_{0}^{+}$, $\mathbf{A}\in \mathbf{C}_{0}^{\infty }$ and $t\geq
t_{0}$. Using \cite[Lemma 5.4.27]{BratteliRobinson} one also obtains that
\begin{equation}
\mathbf{S}^{(\omega ,\mathbf{A},L)}\left( t\right) +\mathbf{P}^{(\omega ,%
\mathbf{A},L)}\left( t\right) =\int_{t_{0}}^{t}\mathfrak{g}^{(\beta ,\omega
,\lambda ,L)}(\tau _{s,t_{0}}^{(\omega ,\lambda ,\mathbf{A},L)}(\partial
_{s}W_{s}^{\mathbf{A}}))\mathrm{d}s  \label{workbis}
\end{equation}%
with
\begin{equation*}
\mathfrak{g}^{(\beta ,\omega ,\lambda ,L)}(\tau _{s,t_{0}}^{(\omega ,\lambda
,\mathbf{A},L)}(\partial _{s}W_{s}^{\mathbf{A}}))
\end{equation*}%
being, as in (\ref{work}), the infinitesimal work of the electromagnetic
field at time $t\in \mathbb{R}$ on the finite volume fermion system.

Similar to Theorem \ref{main 1 copy(1)} the internal energy increment and
the heat production also coincide at finite volume:

\begin{satz}[Heat production as internal energy increment]
\label{main 1 copy(1)bis}\mbox{
}\newline
For any $L,\beta \in \mathbb{R}^{+}$, $\omega \in \Omega $, $\lambda \in
\mathbb{R}_{0}^{+}$, $\mathbf{A}\in \mathbf{C}_{0}^{\infty }$ and all $t\geq
t_{0}$,
\begin{equation*}
\mathbf{Q}^{(\omega ,\mathbf{A},L)}\left( t\right) =\mathbf{S}^{(\omega ,%
\mathbf{A},L)}\left( t\right) \in \mathbb{R}_{0}^{+}\ \ .
\end{equation*}
\end{satz}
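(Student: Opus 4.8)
The plan is to exploit the fact that at finite volume everything collapses to a density-matrix computation on the full matrix algebra $\mathcal{U}_{\Lambda_L}\cong\mathcal{B}(\mathcal{F})$ with $\mathcal{F}=\bigwedge(\ell^2(\Lambda_L))$, so that the explicit formula of Lemma \ref{local AC-conductivity lemma copy(2)} applies directly. Write $\mathfrak{g}$, $H_L$ and $\tau_{t,t_0}$ for $\mathfrak{g}^{(\beta,\omega,\lambda,L)}$, $H_L^{(\omega,\lambda)}$ and $\tau_{t,t_0}^{(\omega,\lambda,\mathbf{A},L)}$. First I would record that $\tau_{t,t_0}$ is \emph{inner}: its generator (\ref{cauchy bisbis}) is the bounded inner derivation $i[H_L+W_t^{\mathbf{A}},\,\cdot\,]$, so the Dyson--Phillips construction of Section \ref{Section existence dynamics}, applied in this finite-dimensional setting, yields a norm-continuous unitary propagator $U_{t,t_0}\in\mathcal{U}_{\Lambda_L}$ solving $\partial_t U_{t,t_0}=-i(H_L+W_t^{\mathbf{A}})U_{t,t_0}$ with $U_{t_0,t_0}=\mathbf{1}$; differentiating shows that the Cauchy problem (\ref{cauchy bis})--(\ref{cauchy bisbis}) is solved by $\tau_{t,t_0}(B)=U_{t,t_0}^*\,B\,U_{t,t_0}$ for all $B\in\mathcal{U}_{\Lambda_L}$.

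Next I would pass to density matrices. The Gibbs state (\ref{Gibbs state H_n}) has density matrix $\mathrm{D}_{\mathfrak{g}}=\mathrm{e}^{-\beta H_L}/Z$ with $Z=\mathrm{Trace}_{\mathcal{F}}(\mathrm{e}^{-\beta H_L})$, which is strictly positive and hence faithful. By cyclicity of the trace, the evolved state $\rho_1:=\mathfrak{g}\circ\tau_{t,t_0}$ obeys $\rho_1(B)=\mathrm{Trace}_{\mathcal{F}}\big(U_{t,t_0}\mathrm{D}_{\mathfrak{g}}U_{t,t_0}^*\,B\big)$, so $\mathrm{D}_{\rho_1}=U_{t,t_0}\mathrm{D}_{\mathfrak{g}}U_{t,t_0}^*$, again faithful. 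In particular $\mathrm{supp}(\mathfrak{g})=\mathbf{1}\geq\mathrm{supp}(\rho_1)$, so the finite branch of Lemma \ref{local AC-conductivity lemma copy(2)} gives $\mathbf{Q}^{(\omega,\mathbf{A},L)}(t)=\beta^{-1}\rho_1\big(\ln\mathrm{D}_{\rho_1}-\ln\mathrm{D}_{\mathfrak{g}}\big)\in\mathbb{R}_0^+$.

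The computation then uses $\ln\mathrm{D}_{\mathfrak{g}}=-\beta H_L-(\ln Z)\mathbf{1}$ and $\ln\mathrm{D}_{\rho_1}=U_{t,t_0}(\ln\mathrm{D}_{\mathfrak{g}})U_{t,t_0}^*$. By cyclicity and unitarity the von Neumann entropy term is invariant, $\rho_1(\ln\mathrm{D}_{\rho_1})=\mathrm{Trace}_{\mathcal{F}}(\mathrm{D}_{\mathfrak{g}}\ln\mathrm{D}_{\mathfrak{g}})=\mathfrak{g}(\ln\mathrm{D}_{\mathfrak{g}})$, whereas $\rho_1(\ln\mathrm{D}_{\mathfrak{g}})=\mathfrak{g}(\tau_{t,t_0}(\ln\mathrm{D}_{\mathfrak{g}}))=-\beta\,\mathfrak{g}(\tau_{t,t_0}(H_L))-\ln Z$. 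Subtracting, the additive constant $\ln Z$ cancels and
\[
\beta\,\mathbf{Q}^{(\omega,\mathbf{A},L)}(t)=\rho_1\big(\ln\mathrm{D}_{\rho_1}-\ln\mathrm{D}_{\mathfrak{g}}\big)=\beta\Big(\mathfrak{g}(\tau_{t,t_0}(H_L))-\mathfrak{g}(H_L)\Big)=\beta\,\mathbf{S}^{(\omega,\mathbf{A},L)}(t).
\]
Dividing by $\beta$ yields $\mathbf{Q}^{(\omega,\mathbf{A},L)}(t)=\mathbf{S}^{(\omega,\mathbf{A},L)}(t)$, and membership in $\mathbb{R}_0^+$ follows from nonnegativity of the relative entropy.

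The computation is routine finite-dimensional linear algebra, so there is no genuine analytic obstacle here, in contrast to the infinite-volume Theorem \ref{main 1 copy(1)}. The only points requiring care are: (a) fixing the correct adjoint convention in the unitary implementation $\tau_{t,t_0}(B)=U_{t,t_0}^*BU_{t,t_0}$ so that it matches the sign in (\ref{cauchy bis})--(\ref{cauchy bisbis}); and (b) keeping track of the distinction between the trace $\mathrm{Trace}_{\mathcal{F}}$ used in (\ref{Gibbs state H_n}) and the normalized trace $\mathrm{tr}$ underlying the adjusted density matrices of Lemma \ref{local AC-conductivity lemma copy(2)}. The latter merely shifts each $\ln\mathrm{D}$ by the same constant $\ln(\dim\mathcal{F})\,\mathbf{1}$, which drops out of the difference exactly as $\ln Z$ does, so it does not affect the result.
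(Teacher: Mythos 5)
Your proposal is correct and follows essentially the same route as the paper: both reduce to the finite-dimensional relative entropy formula of Lemma \ref{local AC-conductivity lemma copy(2)}, use the invariance of the von Neumann entropy term $\rho_1(\ln\mathrm{d}_{\rho_1})$ under the automorphism $\tau_{t,t_0}^{(\omega,\lambda,\mathbf{A},L)}$ (which you justify concretely via the inner unitary implementation and cyclicity of the trace, where the paper simply invokes automorphism invariance), and then expand $-\rho_1(\ln\mathrm{d}_{\mathfrak{g}})$ using the explicit Gibbs density matrix so that the normalization constants cancel. The computation and the handling of the normalized versus unnormalized trace are both correct.
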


\begin{proof}
The arguments follow those of \cite{frohlichschwartzmerkli}. Note first that
\begin{equation}
\mathfrak{g}^{(\beta ,\omega ,\lambda ,L)}\circ \tau _{t,t_{0}}^{(\omega
,\lambda ,\mathbf{A},L)}\in \mathcal{U}_{\Lambda _{L}}^{\ast }
\label{passivity ohm2}
\end{equation}%
is a state with adjusted density matrix. Its von Neumann entropy is equal,
up to a minus sign, to
\begin{equation}
S_{\mathcal{U}_{\Lambda }}(\mathfrak{g}^{(\beta ,\omega ,\lambda ,L)}\circ
\tau _{t,t_{0}}^{(\omega ,\lambda ,\mathbf{A},L)}|\mathrm{tr})=S_{\mathcal{U}%
_{\Lambda }}(\mathfrak{g}^{(\beta ,\omega ,\lambda ,L)}|\mathrm{tr})
\label{passivity ohm3}
\end{equation}%
for all $t\geq t_{0}$ because $\tau _{t,t_{0}}^{(\omega ,\lambda ,\mathbf{A}%
,L)}$ is an automorphism on $\mathcal{U}_{\Lambda _{L}}$. Recall that we
denote by $\mathrm{tr}$ the normalized trace on $\mathcal{U}_{\Lambda }$
and, by finite dimensionality, the relative entropy equals (\ref{relative
entropy finite dimensional}), see also Lemma \ref{local AC-conductivity
lemma copy(2)}. Using (\ref{relative entropy finite dimensional}), (\ref%
{Gibbs state H_n}) and (\ref{passivity ohm3}), we directly derive the
equality%
\begin{equation*}
\mathbf{S}^{(\omega ,\mathbf{A},L)}\left( t\right) =\beta ^{-1}S_{\mathcal{U}%
_{\Lambda }}\left( \mathfrak{g}^{(\beta ,\omega ,\lambda ,L)}\circ \tau
_{t,t_{0}}^{(\omega ,\lambda ,\mathbf{A},L)}|\mathfrak{g}^{(\beta ,\omega
,\lambda ,L)}\right) =:\mathbf{Q}^{(\omega ,\mathbf{A},L)}\left( t\right) \ .
\end{equation*}
\end{proof}

Therefore, similar to Theorem \ref{Thm Heat production as power series} (i),
it is straightforward to write the heat production in terms of
multi--commutators: For any $L,\beta \in \mathbb{R}^{+}$, $\omega \in \Omega
$, $\lambda \in \mathbb{R}_{0}^{+}$, $\mathbf{A}\in \mathbf{C}_{0}^{\infty }$
and $t\geq t_{0}$,%
\begin{equation}
\mathbf{Q}^{(\omega ,\mathbf{A},L)}\left( t\right) =\sum\limits_{k\in {%
\mathbb{N}}}\int_{t_{0}}^{t}\mathrm{d}s_{1}\cdots \int_{t_{0}}^{s_{k-1}}%
\mathrm{d}s_{k}\ \mathbf{u}_{k}^{(\omega ,\mathbf{A},L)}\left( s_{1},\ldots
,s_{k},t\right) \ ,  \label{incr lim 3bis}
\end{equation}%
with the finite volume heat energy coefficient $\mathbf{u}_{k}^{(\omega ,%
\mathbf{A},L)}\equiv \mathbf{u}_{k}^{(\beta ,\omega ,\lambda ,\mathbf{A},L)}$
defined by%
\begin{align}
\mathbf{u}_{k}^{(\omega ,\mathbf{A},L)}\left( s_{1},\ldots ,s_{k},t\right) &
:=\sum\limits_{x,y\in \Lambda _{L},|x-y|\leq 1}i^{k}\langle \mathfrak{e}%
_{x},h_{L}^{(\omega ,\lambda )}\mathfrak{e}_{y}\rangle
\label{incr lim 3bis+1} \\
& \times \mathfrak{g}^{(\beta ,\omega ,\lambda ,L)}\left(
[W_{s_{k}-t_{0},s_{k}}^{(\mathbf{A},L)},\ldots ,W_{s_{1}-t_{0},s_{1}}^{(%
\mathbf{A},L)},\tau _{t-t_{0}}^{(\omega ,\lambda ,L)}(a_{x}^{\ast
}a_{y})]^{(k+1)}\right)  \notag
\end{align}%
for any $k\in {\mathbb{N}}$, $L,\beta \in \mathbb{R}^{+}$, $\omega \in
\Omega $, $\lambda \in \mathbb{R}_{0}^{+}$, $\mathbf{A}\in \mathbf{C}%
_{0}^{\infty }$, $t\geq t_{0}$ and $s_{1},\ldots ,s_{k}\in \lbrack t_{0},t]$%
. Similar to the definition (\ref{def LA}) of $W_{t,s}^{\mathbf{A}}$, note
that we use above the notation
\begin{equation*}
W_{t,s}^{(\mathbf{A},L)}\equiv W_{t,s}^{(\omega ,\lambda ,\mathbf{A}%
,L)}:=\tau _{t}^{(\omega ,\lambda ,L)}(W_{s}^{\mathbf{A}})\in \mathcal{U}
\end{equation*}%
for any $t,s\in \mathbb{R}$ and $\mathbf{A}\in \mathbf{C}_{0}^{\infty }$.
Theorem \ref{Thm Heat production as power series} (ii) also holds at finite
volume, uniformly w.r.t. $L\in \mathbb{R}^{+}$.

\subsubsection{Thermodynamic Limit of the Finite Volume System}

We first summarize well--known results on the infinite volume dynamics and
thermal state:

\begin{satz}[Infinite volume dynamics and thermal state]
\label{conv Gibbs}\mbox{
}\newline
Let $\beta \in \mathbb{R}^{+}$, $\omega \in \Omega $ and $\lambda \in
\mathbb{R}_{0}^{+}$. Then: \newline
\emph{(i)} For any $t\in \mathbb{R}$, the localized (quasi--free)
automorphism $\tau _{t}^{(\omega ,\lambda ,L)}$ strongly converges to $\tau
_{t}^{(\omega ,\lambda )}$, as $L\rightarrow \infty $.\newline
\emph{(ii)} The $(\tau ^{(\omega ,\lambda ,L)},\beta )$--KMS state $\varrho
^{(\beta ,\omega ,\lambda ,L)}$ converges to the $(\tau ^{(\omega ,\lambda
)},\beta )$--KMS state $\varrho ^{(\beta ,\omega ,\lambda )}$ in the weak$%
^{\ast }$--topology, as $L\rightarrow \infty $.
\end{satz}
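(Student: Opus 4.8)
The plan is to prove the two assertions separately, reducing each to the corresponding statement on the one--particle Hilbert space $\ell ^{2}(\mathfrak{L})$ and then transporting it to the CAR algebra $\mathcal{U}$ via quasi--freeness. Throughout I write $h:=\Delta _{\mathrm{d}}+\lambda V_{\omega }$ and recall that $\tilde{h}_{L}^{(\omega ,\lambda )}$ leaves both $\ell ^{2}(\Lambda _{L})$ and its orthogonal complement invariant, acting there as $h_{L}^{(\omega ,\lambda )}$ and as $0$ respectively, and that $\sup _{L}\Vert \tilde{h}_{L}^{(\omega ,\lambda )}\Vert \leq 4d+\lambda$ uniformly in $\omega ,\lambda$. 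For $\psi \in \ell ^{2}(\mathfrak{L})$ of finite support, once $\Lambda _{L}$ contains the support together with its nearest neighbors one has $\tilde{h}_{L}^{(\omega ,\lambda )}\psi =h\psi$; hence $\tilde{h}_{L}^{(\omega ,\lambda )}\to h$ strongly on the dense set of finitely supported vectors, and therefore on all of $\ell ^{2}(\mathfrak{L})$ by uniform boundedness.

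For \emph{(i)} I would upgrade this to strong convergence of the unitary groups using the Duhamel identity, which gives
\[
\left\Vert \mathrm{e}^{it\tilde{h}_{L}^{(\omega ,\lambda )}}\psi -\mathrm{e}^{ith}\psi \right\Vert \leq \int_{0}^{t}\left\Vert (h-\tilde{h}_{L}^{(\omega ,\lambda )})\mathrm{e}^{ish}\psi \right\Vert \mathrm{d}s\ ,
\]
where the integrand is uniformly bounded and tends to $0$ pointwise (the set $\{\mathrm{e}^{ish}\psi :s\in \lbrack 0,t]\}$ is compact and $\tilde{h}_{L}^{(\omega ,\lambda )}\to h$ strongly), so dominated convergence applies. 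Transporting to $\mathcal{U}$, the norm identity \eqref{norm cont of a} yields
\[
\left\Vert \tau _{t}^{(\omega ,\lambda ,L)}(a(\psi ))-\tau _{t}^{(\omega ,\lambda )}(a(\psi ))\right\Vert =\left\Vert (\mathrm{e}^{it\tilde{h}_{L}^{(\omega ,\lambda )}}-\mathrm{e}^{ith})\psi \right\Vert _{2}\to 0\ ,
\]
so the two automorphisms agree in the limit on every generator $a(\psi )$ and $a^{\ast }(\psi )$. Since both are $\ast $--automorphisms, hence contractions, a telescoping estimate extends strong convergence from the generators to arbitrary monomials, thus to the dense $\ast $--subalgebra $\mathcal{U}_{0}$ of \eqref{simple}, and a standard $\varepsilon /3$ argument using $\Vert \tau _{t}^{(\omega ,\lambda ,L)}\Vert =\Vert \tau _{t}^{(\omega ,\lambda )}\Vert =1$ then gives strong convergence on all of $\mathcal{U}$, proving \emph{(i)}.

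For \emph{(ii)} the key observation is that both KMS states are quasi--free and that, by the block invariance just noted, the two--point symbol of $\varrho ^{(\beta ,\omega ,\lambda ,L)}$ on all of $\ell ^{2}(\mathfrak{L})$ equals $(\mathbf{1}+\mathrm{e}^{\beta \tilde{h}_{L}^{(\omega ,\lambda )}})^{-1}$: on $\ell ^{2}(\Lambda _{L})$ this is the Fermi--Dirac symbol of the finite--volume Gibbs state in \eqref{Gibbs state H_nbis}, while on the complement $\tilde{h}_{L}^{(\omega ,\lambda )}=0$ reproduces the value $1/2=\mathrm{tr}(a_{x}^{\ast }a_{x})$ of the tracial state. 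Since the Fermi function $f(x):=(1+\mathrm{e}^{\beta x})^{-1}$ is bounded and continuous, strong convergence of the uniformly bounded self--adjoint operators $\tilde{h}_{L}^{(\omega ,\lambda )}\to h$ upgrades to strong convergence $f(\tilde{h}_{L}^{(\omega ,\lambda )})\to f(h)=\mathbf{d}_{\mathrm{fermi}}^{(\beta ,\omega ,\lambda )}$; I would prove this by the Weierstrass theorem, approximating $f$ uniformly on the common spectral interval $[-(4d+\lambda ),4d+\lambda ]$ by polynomials and using that products of uniformly bounded, strongly convergent operators converge strongly. In particular all matrix elements $\langle g_{k},\mathbf{d}_{\mathrm{fermi}}^{(\beta ,\omega ,\lambda ,L)}f_{j}\rangle$ converge.

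To finish \emph{(ii)}, the quasi--free determinant formula recalled in the Preliminary subsection expresses each correlation $\varrho ^{(\beta ,\omega ,\lambda ,L)}(a^{\ast }(f_{1})\cdots a^{\ast }(f_{m})a(g_{n})\cdots a(g_{1}))$ as $\delta _{m,n}\,\mathrm{det}([\langle g_{k},\mathbf{d}_{\mathrm{fermi}}^{(\beta ,\omega ,\lambda ,L)}f_{j}\rangle ]_{j,k})$, a continuous function of the symbol; convergence of the matrix elements therefore forces convergence of every such correlation to the corresponding one of $\varrho ^{(\beta ,\omega ,\lambda )}$. These correlations span the dense subalgebra $\mathcal{U}_{0}$, so weak$^{\ast }$ convergence holds on $\mathcal{U}_{0}$, and the same $\varepsilon /3$ argument (now using $\Vert \varrho ^{(\beta ,\omega ,\lambda ,L)}\Vert =\Vert \varrho ^{(\beta ,\omega ,\lambda )}\Vert =1$) extends it to all of $\mathcal{U}$. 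The one genuinely analytic point, and the step I would treat most carefully, is the strong functional--calculus convergence $f(\tilde{h}_{L}^{(\omega ,\lambda )})\to f(h)$ together with its unitary--group analogue used in \emph{(i)}; everything else is bookkeeping resting on quasi--freeness, the uniform operator bound, and density of $\mathcal{U}_{0}$.
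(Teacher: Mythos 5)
Your argument is correct and is precisely the standard one--particle reduction (strong convergence of $\tilde{h}_{L}^{(\omega ,\lambda )}$ to $\Delta _{\mathrm{d}}+\lambda V_{\omega }$, Duhamel for the unitary groups, polynomial functional calculus for the Fermi symbol, then transport to $\mathcal{U}$ by quasi--freeness and density of $\mathcal{U}_{0}$) that the paper itself does not spell out but delegates entirely to \cite[Chapters 5.2 and 5.3]{BratteliRobinson}. In effect you have supplied the proof the citation stands for, and there is nothing to object to.
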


\begin{proof}
See \cite[Chapters 5.2 and 5.3]{BratteliRobinson}.
\end{proof}

Then, from Equation (\ref{workbis}), Theorems \ref{conv Gibbs} and Lebes%
%TCIMACRO{\TeXButton{\-}{\-}}%
%BeginExpansion
\-%
%EndExpansion
gue's dominated convergence theorem, it is clear that the energy increments $%
\mathbf{S}^{(\omega ,\mathbf{A})}$ and $\mathbf{P}^{(\omega ,\mathbf{A})}$
respectively defined by (\ref{entropic energy increment}) and (\ref{electro
free energy}) result from the finite volume energy increments $\mathbf{S}%
^{(\omega ,\mathbf{A},L)}$ and $\mathbf{P}^{(\omega ,\mathbf{A},L)}$:

\begin{koro}[Energy increments as thermodynamic limits]
\mbox{
}\newline
For any $\beta \in \mathbb{R}^{+}$, $\omega \in \Omega $, $\lambda \in
\mathbb{R}_{0}^{+}$ and all $t\geq t_{0}$,
\begin{equation*}
\mathbf{S}^{(\omega ,\mathbf{A})}\left( t\right) =\lim_{L\rightarrow \infty }%
\mathbf{S}^{(\omega ,\mathbf{A},L)}\left( t\right) \qquad \text{and}\qquad
\mathbf{P}^{(\omega ,\mathbf{A})}\left( t\right) =\lim_{L\rightarrow \infty }%
\mathbf{P}^{(\omega ,\mathbf{A},L)}\left( t\right) \ .
\end{equation*}
\end{koro}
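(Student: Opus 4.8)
The plan is to avoid comparing $\mathbf{S}^{(\omega ,\mathbf{A})}$ with its finite-volume analogue directly, since both are built from the energy observables $H_L^{(\omega ,\lambda )}$ whose expectations grow like $L^{d}$; instead I would route the argument through the two work identities \eqref{work} and \eqref{workbis}, combined with the strong convergence of the perturbed dynamics and the weak$^{\ast }$ convergence of the equilibrium states from Theorem~\ref{conv Gibbs}. Throughout, abbreviate $\varrho ^{(L)}:=\varrho ^{(\beta ,\omega ,\lambda ,L)}$ and recall that $\mathfrak{g}^{(\beta ,\omega ,\lambda ,L)}$ agrees with $\varrho ^{(L)}$ on $\mathcal{U}_{\Lambda _{L}}$, and that both $W_{s}^{\mathbf{A}}$ and its time derivative $\partial _{s}W_{s}^{\mathbf{A}}$ are fixed local elements of $\mathcal{U}_{0}$ (with compact support independent of $L$ and norms continuous in $s$), so that they lie in $\mathcal{U}_{\Lambda _{L}}$ for all large $L$.

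First I would prove that the perturbed finite-volume dynamics converges strongly to the infinite-volume one on local observables: for every fixed $B\in \mathcal{U}_{0}$ and $t\geq s$,
\[
\lim_{L\to \infty }\big\|\tau _{t,s}^{(\omega ,\lambda ,\mathbf{A},L)}(B)-\tau _{t,s}^{(\omega ,\lambda ,\mathbf{A})}(B)\big\|=0 .
\]
Both families admit the Dyson--Phillips expansion of Theorem~\ref{bound incr 1 Lemma copy(4)} (and its finite-volume counterpart near \eqref{cauchy bis}) with the \emph{same} bounded perturbation $L_{s}^{\mathbf{A}}=i[W_{s}^{\mathbf{A}},\,\cdot \,]$, independent of $L$. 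Each summand is a composition of the uniformly bounded operators $L_{s_{j}}^{\mathbf{A}}$ with the unperturbed groups $\tau _{r}^{(\omega ,\lambda ,L)}$, which converge strongly to $\tau _{r}^{(\omega ,\lambda )}$ by Theorem~\ref{conv Gibbs}(i). Since strong convergence is preserved under composition with a fixed bounded operator and under the isometric groups $\tau _{r}^{(\omega ,\lambda ,L)}$, each summand converges in norm, and the uniform bound $\|L_{s}^{\mathbf{A}}\|_{\mathrm{op}}\leq D$ dominates the $k$th term by $(t-s)^{k}D^{k}\|B\|/k!$, so the limit passes through the series.

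Next, for the electromagnetic potential energy I would split, for $L$ large enough that $W_{t}^{\mathbf{A}}\in \mathcal{U}_{\Lambda _{L}}$,
\begin{multline*}
\varrho ^{(L)}\!\big(\tau _{t,t_{0}}^{(\omega ,\lambda ,\mathbf{A},L)}(W_{t}^{\mathbf{A}})\big)-\rho _{t}^{(\beta ,\omega ,\lambda ,\mathbf{A})}(W_{t}^{\mathbf{A}})\\
=\varrho ^{(L)}\!\big(\tau _{t,t_{0}}^{(\omega ,\lambda ,\mathbf{A},L)}(W_{t}^{\mathbf{A}})-\tau _{t,t_{0}}^{(\omega ,\lambda ,\mathbf{A})}(W_{t}^{\mathbf{A}})\big)+\big(\varrho ^{(L)}-\varrho ^{(\beta ,\omega ,\lambda )}\big)\!\big(\tau _{t,t_{0}}^{(\omega ,\lambda ,\mathbf{A})}(W_{t}^{\mathbf{A}})\big),
\end{multline*}
where I used $\rho _{t}^{(\beta ,\omega ,\lambda ,\mathbf{A})}=\varrho ^{(\beta ,\omega ,\lambda )}\circ \tau _{t,t_{0}}^{(\omega ,\lambda ,\mathbf{A})}$ from \eqref{time dependent state}. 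The first term is bounded by the norm in the previous display and tends to $0$; the second tends to $0$ by the weak$^{\ast }$ convergence of Theorem~\ref{conv Gibbs}(ii) applied to the fixed element $\tau _{t,t_{0}}^{(\omega ,\lambda ,\mathbf{A})}(W_{t}^{\mathbf{A}})\in \mathcal{U}$. This gives $\mathbf{P}^{(\omega ,\mathbf{A},L)}(t)\to \mathbf{P}^{(\omega ,\mathbf{A})}(t)$.

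Finally, I would isolate $\mathbf{S}$ using the work identities. Rewriting \eqref{workbis} as $\mathbf{S}^{(\omega ,\mathbf{A},L)}(t)=I_{L}(t)-\mathbf{P}^{(\omega ,\mathbf{A},L)}(t)$ with $I_{L}(t):=\int_{t_{0}}^{t}\varrho ^{(L)}(\tau _{s,t_{0}}^{(\omega ,\lambda ,\mathbf{A},L)}(\partial _{s}W_{s}^{\mathbf{A}}))\,\mathrm{d}s$, the same splitting as above (now with $\partial _{s}W_{s}^{\mathbf{A}}$ in place of $W_{t}^{\mathbf{A}}$) shows that the integrand converges pointwise in $s$ to $\rho _{s}^{(\beta ,\omega ,\lambda ,\mathbf{A})}(\partial _{s}W_{s}^{\mathbf{A}})$; since it is bounded by the integrable function $s\mapsto \|\partial _{s}W_{s}^{\mathbf{A}}\|$ on the compact interval $[t_{0},t]$, Lebesgue's dominated convergence theorem gives $I_{L}(t)\to \int_{t_{0}}^{t}\rho _{s}^{(\beta ,\omega ,\lambda ,\mathbf{A})}(\partial _{s}W_{s}^{\mathbf{A}})\,\mathrm{d}s$. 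Combining this with the $\mathbf{P}$ convergence and the infinite-volume identity \eqref{work} (Theorem~\ref{coro heat production2}) yields
\[
\lim_{L\to \infty }\mathbf{S}^{(\omega ,\mathbf{A},L)}(t)=\int_{t_{0}}^{t}\rho _{s}^{(\beta ,\omega ,\lambda ,\mathbf{A})}(\partial _{s}W_{s}^{\mathbf{A}})\,\mathrm{d}s-\mathbf{P}^{(\omega ,\mathbf{A})}(t)=\mathbf{S}^{(\omega ,\mathbf{A})}(t).
\]
The main obstacle is precisely the $\mathbf{S}$ statement: a term-by-term comparison fails because the energies $H_{L}^{(\omega ,\lambda )}$ are unbounded in $L$, so the decisive idea is the work-identity detour, after which the only real care needed is the standard---but genuine---combination of the strong operator convergence of Step~1 with the weak$^{\ast }$ state convergence of Theorem~\ref{conv Gibbs}(ii), legitimate because all the relevant operators are uniformly norm-bounded.
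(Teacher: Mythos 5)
Your proposal is correct and follows essentially the same route the paper indicates for this corollary: the paper justifies it in one sentence by appealing to the work identity (\ref{workbis}), Theorem \ref{conv Gibbs} and Lebesgue's dominated convergence theorem, which is exactly the detour through $\mathbf{S}^{(\omega ,\mathbf{A},L)}+\mathbf{P}^{(\omega ,\mathbf{A},L)}$ that you carry out. Your additional Step~1 (norm convergence of the perturbed dynamics on local observables via the Dyson--Phillips series with the $L$--independent perturbation $L_{s}^{\mathbf{A}}$) is a detail the paper leaves implicit, and you supply it correctly.
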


By combining this with Theorems \ref{main 1 copy(1)} and \ref{main 1
copy(1)bis} we show the same property for the heat production:

\begin{koro}[Heat production as thermodynamic limit]
\label{coro Thm limit heat prioduction}\mbox{
}\newline
For any $\beta \in \mathbb{R}^{+}$, $\omega \in \Omega $, $\lambda \in
\mathbb{R}_{0}^{+}$ and all $t\geq t_{0}$,
\begin{equation*}
\mathbf{Q}^{(\omega ,\mathbf{A})}\left( t\right) =\lim_{L\rightarrow \infty }%
\mathbf{Q}^{(\omega ,\mathbf{A},L)}\left( t\right) \ .
\end{equation*}
\end{koro}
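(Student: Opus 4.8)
The plan is to avoid confronting the relative-entropy limit head-on and instead route the entire argument through the internal energy increment, whose thermodynamic limit has already been established just above. The observation that makes this work is that at both finite and infinite volume the heat production has been identified with a concrete energy difference, so the awkward object (the relative entropy) never has to be limited directly. First I would invoke the first law of thermodynamics (Theorem \ref{main 1 copy(1)}), which identifies the infinite-volume heat production with the internal energy increment,
\[
\mathbf{Q}^{(\omega ,\mathbf{A})}\left( t\right) =\mathbf{S}^{(\omega ,\mathbf{A})}\left( t\right) \ .
\]

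Next I would apply the immediately preceding corollary on energy increments as thermodynamic limits, which expresses the infinite-volume internal energy increment as the limit of its finite-volume counterparts,
\[
\mathbf{S}^{(\omega ,\mathbf{A})}\left( t\right) =\lim_{L\rightarrow \infty }\mathbf{S}^{(\omega ,\mathbf{A},L)}\left( t\right) \ .
\]
Finally, Theorem \ref{main 1 copy(1)bis} supplies the finite-volume analogue of the first law, $\mathbf{S}^{(\omega ,\mathbf{A},L)}\left( t\right) =\mathbf{Q}^{(\omega ,\mathbf{A},L)}\left( t\right)$, valid for every $L\in \mathbb{R}^{+}$. Chaining the three equalities then closes the argument:
\[
\mathbf{Q}^{(\omega ,\mathbf{A})}\left( t\right) =\mathbf{S}^{(\omega ,\mathbf{A})}\left( t\right) =\lim_{L\rightarrow \infty }\mathbf{S}^{(\omega ,\mathbf{A},L)}\left( t\right) =\lim_{L\rightarrow \infty }\mathbf{Q}^{(\omega ,\mathbf{A},L)}\left( t\right) \ .
\]

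The genuinely substantive content lies not in this corollary but in the ingredients it assembles: each version of the first law trades an abstract relative-entropy quantity for a transparent energy difference, and the energy-increment corollary rests on the work identity (\ref{workbis}), Theorem \ref{conv Gibbs}, and Lebesgue's dominated convergence theorem. Once those are granted, the only real obstacle is bookkeeping — keeping the parameters $(\beta ,\omega ,\lambda ,\mathbf{A},t)$ consistent across the finite- and infinite-volume objects and making sure the finite-volume relative entropy (\ref{relative entropy finite dimensional}) defining $\mathbf{Q}^{(\omega ,\mathbf{A},L)}$ is compatible, via Lemma \ref{local AC-conductivity lemma copy(2)}, with the infinite-volume definition along the increasing net $\{\mathcal{U}_{\Lambda _{L}}\}_{L\in \mathbb{R}^{+}}$. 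I do not expect any analytic difficulty beyond what the cited results already absorb.
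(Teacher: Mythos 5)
Your proposal is correct and is exactly the argument the paper uses: the corollary is stated as a direct consequence of combining Theorem \ref{main 1 copy(1)} (infinite-volume first law, $\mathbf{Q}^{(\omega ,\mathbf{A})}=\mathbf{S}^{(\omega ,\mathbf{A})}$), Theorem \ref{main 1 copy(1)bis} (its finite-volume analogue), and the preceding corollary giving $\mathbf{S}^{(\omega ,\mathbf{A})}$ as the thermodynamic limit of $\mathbf{S}^{(\omega ,\mathbf{A},L)}$. Nothing further is needed.
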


By Theorem \ref{Thm Heat production as power series}, recall that, for any $%
\mathbf{A}\in \mathbf{C}_{0}^{\infty }$, there is a constant $\eta _{0}\in
\mathbb{R}^{+}$ such that, for all $|\eta |\in \lbrack 0,\eta _{0}]$, $\beta
\in \mathbb{R}^{+}$, $\omega \in \Omega $, $\lambda \in \mathbb{R}_{0}^{+}$
and $t\geq t_{0}$,%
\begin{equation}
\mathbf{Q}^{(\omega ,\eta \mathbf{A})}\left( t\right) =\sum\limits_{k\in {%
\mathbb{N}}}\int_{t_{0}}^{t}\mathrm{d}s_{1}\cdots \int_{t_{0}}^{s_{k-1}}%
\mathrm{d}s_{k}\ \mathbf{u}_{k}^{(\omega ,\eta \mathbf{A})}\left(
s_{1},\ldots ,s_{k},t\right) \ .  \label{incr lim 3bis0}
\end{equation}%
Here, the heat energy coefficient $\mathbf{u}_{k}^{(\omega ,\mathbf{A}%
)}\equiv \mathbf{u}_{k}^{(\beta ,\omega ,\lambda ,\mathbf{A})}$ is defined,
for any $k\in {\mathbb{N}}$, $\beta \in \mathbb{R}^{+}$, $\omega \in \Omega $%
, $\lambda \in \mathbb{R}_{0}^{+}$, $\mathbf{A}\in \mathbf{C}_{0}^{\infty }$%
, $t\geq t_{0}$ and $s_{1},\ldots ,s_{k}\in \lbrack t_{0},t]$, by
\begin{multline*}
\mathbf{u}_{k}^{(\omega ,\mathbf{A})}\left( s_{1},\ldots ,s_{k},t\right)
:=\sum\limits_{x,y\in \mathfrak{L},|x-y|\leq 1}i^{k}\langle \mathfrak{e}%
_{x},\left( \Delta _{\mathrm{d}}+\lambda V_{\omega }\right) \mathfrak{e}%
_{y}\rangle \\
\times \varrho ^{(\beta ,\omega ,\lambda )}\left( [W_{s_{k}-t_{0},s_{k}}^{%
\mathbf{A}},\ldots ,W_{s_{1}-t_{0},s_{1}}^{\mathbf{A}},\tau
_{t-t_{0}}^{(\omega ,\lambda )}(a_{x}^{\ast }a_{y})]^{(k+1)}\right)
\end{multline*}%
with $W_{t,s}^{\mathbf{A}}:=\tau _{t}^{(\omega ,\lambda )}(W_{s}^{\mathbf{A}%
})\in \mathcal{U}$ for any $t,s\in \mathbb{R}$, see (\ref{def LA}). The
series (\ref{incr lim 3bis0}) absolutely converges for the above range of
parameters.

Then, by combining these last series with (\ref{incr lim 3bis})--(\ref{incr
lim 3bis+1}), Theorem \ref{conv Gibbs} and Corollary \ref{coro Thm limit
heat prioduction} one directly obtains the following result:

\begin{satz}[Taylor coefficients of $\mathbf{Q}^{(\protect\omega ,\protect%
\eta \mathbf{A})}$ as thermodynamic limit]
\label{conv Gibbs copy(1)}\mbox{
}\newline
For any $\beta \in \mathbb{R}^{+}$, $\omega \in \Omega $, $\lambda \in
\mathbb{R}_{0}^{+}$, $\mathbf{A}\in \mathbf{C}_{0}^{\infty }$, $t\geq t_{0}$
and $m\in \mathbb{N}$,%
\begin{eqnarray*}
\partial _{\eta }^{m}\mathbf{Q}^{(\omega ,\eta \mathbf{A})}\left( t\right)
|_{\eta =0} &=&\underset{L\rightarrow \infty }{\lim }\partial _{\eta }^{m}%
\mathbf{Q}^{(\omega ,\eta \mathbf{A,}L)}\left( t\right) |_{\eta =0} \\
&=&\sum\limits_{k\in {\mathbb{N}}}\int_{t_{0}}^{t}\mathrm{d}s_{1}\cdots
\int_{t_{0}}^{s_{k-1}}\mathrm{d}s_{k} \\
&&\qquad \qquad \underset{L\rightarrow \infty }{\lim }\left\{ \partial
_{\eta }^{m}\mathbf{u}_{k}^{(\omega ,\eta \mathbf{A},L)}\left( s_{1},\ldots
,s_{k},t\right) |_{\eta =0}\right\} \ ,
\end{eqnarray*}%
where the above series is absolutly convergent.
\end{satz}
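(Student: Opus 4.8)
The plan is to exploit the two absolutely convergent multi--commutator series at hand: \eqref{incr lim 3bis0} for the infinite system and \eqref{incr lim 3bis}--\eqref{incr lim 3bis+1} for the finite--volume one. Fix $\beta\in\mathbb{R}^{+}$, $\omega\in\Omega$, $\lambda\in\mathbb{R}_{0}^{+}$, $\mathbf{A}\in\mathbf{C}_{0}^{\infty}$, $t\geq t_{0}$ and $m\in\mathbb{N}$. First I would record that, by Corollary \ref{coro heat production1 copy(1)} and the finite--volume version of Theorem \ref{Thm Heat production as power series}(ii) (valid uniformly in $L$), both $\eta\mapsto\mathbf{Q}^{(\omega,\eta\mathbf{A})}(t)$ and $\eta\mapsto\mathbf{Q}^{(\omega,\eta\mathbf{A},L)}(t)$ are real analytic with Taylor coefficients at $\eta=0$ bounded uniformly in $L$. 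In particular, the $m$--fold $\eta$--derivative at $0$ of each series may be taken term by term, since Lemma \ref{bound incr 1 Lemma copy(1)} (and its $L$--uniform finite--volume analogue) guarantees that the differentiated series converge absolutely and uniformly on a fixed neighbourhood of $\eta=0$.

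The core of the argument is the interchange of $\lim_{L\rightarrow\infty}$ with the summation and the time integrals in the finite--volume series. For fixed $k$ and fixed $s_{1},\ldots,s_{k}$, the coefficient $\partial_{\eta}^{m}\mathbf{u}_{k}^{(\omega,\eta\mathbf{A},L)}(s_{1},\ldots,s_{k},t)|_{\eta=0}$ is a finite algebraic expression in the matrix elements of $h_{L}^{(\omega,\lambda)}$, in the smooth (in $\eta$) local perturbations $W_{s_{i}-t_{0},s_{i}}^{(\eta\mathbf{A},L)}$ and in the evolved monomials $\tau_{t-t_{0}}^{(\omega,\lambda,L)}(a_{x}^{\ast}a_{x+z})$, evaluated in the state $\mathfrak{g}^{(\beta,\omega,\lambda,L)}$. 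By Theorem \ref{conv Gibbs}(i) the dynamics $\tau^{(\omega,\lambda,L)}$ converges strongly to $\tau^{(\omega,\lambda)}$, so the $\eta$--derivatives at $0$ of these bounded, uniformly normed operators converge to their infinite--volume counterparts; by Theorem \ref{conv Gibbs}(ii) the states converge weak$^{\ast}$ to $\varrho^{(\beta,\omega,\lambda)}$; and the matrix elements of $h_{L}^{(\omega,\lambda)}$ converge to those of $\Delta_{\mathrm{d}}+\lambda V_{\omega}$. Strong convergence of a fixed finite product together with weak$^{\ast}$ convergence of the state yields convergence of each expectation, the residual $x,y$--summation being controlled by the $L$--uniform tree--decay bounds of Corollary \ref{tree bound main copy(1)}. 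Consequently $\partial_{\eta}^{m}\mathbf{u}_{k}^{(\omega,\eta\mathbf{A},L)}|_{\eta=0}$ converges coefficientwise as $L\rightarrow\infty$.

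To pass this convergence through $\sum_{k}\int\cdots$ I would invoke dominated convergence: the estimates underlying Lemma \ref{bound incr 1 Lemma} and Lemma \ref{bound incr 1 Lemma copy(1)}, which rest entirely on the uniform tree--decay bounds, furnish a sequence summable in $k$ and integrable over the time simplex that dominates $|\partial_{\eta}^{m}\mathbf{u}_{k}^{(\omega,\eta\mathbf{A},L)}|_{\eta=0}|$ independently of $L$. This yields
\[
\lim_{L\rightarrow\infty}\partial_{\eta}^{m}\mathbf{Q}^{(\omega,\eta\mathbf{A},L)}(t)|_{\eta=0}=\sum_{k\in\mathbb{N}}\int_{t_{0}}^{t}\mathrm{d}s_{1}\cdots\int_{t_{0}}^{s_{k-1}}\mathrm{d}s_{k}\,\lim_{L\rightarrow\infty}\bigl\{\partial_{\eta}^{m}\mathbf{u}_{k}^{(\omega,\eta\mathbf{A},L)}(s_{1},\ldots,s_{k},t)|_{\eta=0}\bigr\},
\]
which is the second asserted equality. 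The first equality then follows because the right--hand side above coincides, by the coefficientwise limit of the previous paragraph, with the term--by--term $\eta$--derivative of the infinite--volume series \eqref{incr lim 3bis0}, i.e. with $\partial_{\eta}^{m}\mathbf{Q}^{(\omega,\eta\mathbf{A})}(t)|_{\eta=0}$; this is consistent with (and could alternatively be deduced from) the function--level limit $\mathbf{Q}^{(\omega,\eta\mathbf{A})}(t)=\lim_{L}\mathbf{Q}^{(\omega,\eta\mathbf{A},L)}(t)$ of Corollary \ref{coro Thm limit heat prioduction} via a Vitali/Cauchy--estimate argument using the $L$--uniform analytic bounds. I expect the only delicate point to be verifying that the dominating bound is genuinely independent of $L$ --- equivalently, that the finite--volume tree--decay estimates carry the same constants as Corollary \ref{tree bound main copy(1)}; the rest is a routine double application of Theorem \ref{conv Gibbs} and dominated convergence.
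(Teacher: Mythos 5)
Your proposal is correct and follows essentially the same route as the paper, which derives the theorem by combining the finite-- and infinite--volume multi--commutator series with Theorem \ref{conv Gibbs} and Corollary \ref{coro Thm limit heat prioduction}, and justifies the interchange of limits by the same $L$--uniform tree--decay estimates used in Lemma \ref{bound incr 1 Lemma copy(1)} (the paper explicitly omits these details, noting only that the arguments are similar to that lemma and that the analytic bounds of Theorem \ref{Thm Heat production as power series}~(ii) hold at finite volume uniformly in $L$). The one point you flag as delicate --- that the finite--volume tree--decay constants are independent of $L$ --- is indeed the substance of what the paper leaves implicit, and it holds because the proof of Lemma \ref{bound anticomm} via the Trotter--Kato formula applies verbatim to $\tilde{h}_{L}^{(\omega,\lambda)}$ with $L$--independent constants.
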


\begin{proof}
The proof uses similar arguments to those showing Lemma \ref{bound incr 1
Lemma copy(1)}. We omit the details.
\end{proof}

\bigskip

\noindent \textit{Acknowledgments:} We would like to thank Volker Bach,
Horia Cornean, Abel Klein and Peter M\"{u}ller for relevant references and
interesting discussions as well as important hints. JBB and WdSP are also
very grateful to the organizers of the Hausdorff Trimester Program entitled
\textquotedblleft \textit{Mathematical challenges of materials science and
condensed matter physics}\textquotedblright\ for the opportunity to work
together on this project at the Hausdorff Research Institute for Mathematics
in Bonn. This research is supported by the ICTP South American Institute for Fundamental Research, 
a grant of the {}\textquotedblleft Inneruniversit{\"{a}%
}re Forschungsf{\"{o}rde%
%TCIMACRO{\TeXButton{\-}{\-}}%
%BeginExpansion
\-%
%EndExpansion
rung}\textquotedblright {} of the Johannes Gutenberg University, 
by the agency FAPESP under the grant 2013/13215-5 as well as by the Basque Government through the grant IT641-13 
and the BERC 2014-2017 program and 
by the Spanish Ministry of Economy and Competitiveness MINECO: BCAM Severo Ochoa accreditation SEV-2013-0323, MTM2010-16843.

\frenchspacing
\bibliographystyle{plain}

\end{document}